\newif\ifappendix
\appendixtrue
\documentclass[
acmsmall
,screen=true
]{acmart}

\makeatletter
\def\@acmplainindent{0pt}
\def\@acmdefinitionindent{0pt}
\def\@proofindent{\noindent}
\makeatother

\usepackage{bbold}
\usepackage{mathtools}
\usepackage{stmaryrd,mathpartir,amsthm}
\usepackage{supertabular}
\usepackage{mathpartir}
\usepackage{graphicx}
\usepackage[T1]{fontenc}
\usepackage{cleveref}
\usepackage{enumitem}
\usepackage{cancel}
\usepackage{wrapfig}
\usepackage{colortbl}
\usepackage{subcaption}
\usepackage{xspace}
\usepackage{csquotes}
\usepackage{thm-restate}
\usepackage{tabularx}
\usepackage{array}
\usepackage{longtable}
\usepackage{trimclip}
\usepackage[disable]{todonotes}

\graphicspath{ {./} }

\AtBeginDocument{
  }

\setcopyright{rightsretained}
\acmDOI{10.1145/3808318}
\acmYear{2026}
\acmJournal{PACMPL}
\acmVolume{10}
\acmNumber{PLDI}
\acmArticle{240}
\acmMonth{6}
\acmSubmissionID{pldi26main-p501-p}
\received{2025-11-14}
\received[accepted]{2026-04-03}

\ccsdesc[500]{Theory of computation~Verification by model checking}
\ccsdesc[500]{Theory of computation~Formal languages and automata theory}
\ccsdesc[500]{Theory of computation~Models of computation}

\keywords{network verification, quantitative verification, weighted automata, \netkat}

\author{Emmanuel {Su\'arez Acevedo}}
\orcid{0009-0002-5515-6099}
\email{es2278@cornell.edu}
\affiliation{
  \institution{Cornell University}
  \city{Ithaca}
  \country{USA}
}

\author{Tiago Ferreira}
\orcid{0000-0002-6942-0228}
\affiliation{
  \institution{University College London}
  \city{London}
  \country{United Kingdom}
}
\email{t.ferreira@ucl.ac.uk}

\author{Kevin Batz}
\email{ksb239@cornell.edu}
\orcid{0000-0001-8705-2564}
\affiliation{
  \city{Ithaca}
  \institution{Cornell University}
  \country{USA}
}

\author{Oliver B{\o}ving}
\email{oembo@dtu.dk}
\orcid{0009-0009-4702-2876}
\affiliation{
  \institution{Technical University of Denmark}
  \city{Kongens Lyngby}
  \country{Denmark}
}

\author{Nate Foster}
\orcid{0000-0002-6557-684X}
\affiliation{
  \institution{EPFL}
  \city{Lausanne}
  \country{Switzerland}
}
\affiliation{
  \institution{Jane Street}
  \city{New York}
  \country{USA}
}
\email{nate.foster@epfl.ch}

\author{Alexandra Silva}
\orcid{0000-0001-5014-9784}
\affiliation{
  \institution{Cornell University}
  \city{Ithaca}
  \country{USA}
}
\email{alexandra.silva@cornell.edu}

\definecolor{lightgray}{RGB}{245,245,245}
\definecolor{headergray}{RGB}{220,220,220}

\definecolor{commentgreen}{HTML}{008000}

\definecolor{latencygreen}{RGB}{27,158,119}
\definecolor{failureorange}{RGB}{217,95,2}
\definecolor{bandwidthpurple}{RGB}{117,112,179}
\definecolor{cyclepink}{RGB}{231,41,138}
\definecolor{tunnelgreen}{HTML}{66a61e}
\definecolor{tunnelred}{HTML}{e7298a}
\definecolor{tunnelgold}{HTML}{e6ab02}
\definecolor{tunnelbrown}{HTML}{a6761d}
\definecolor{tunnelblack}{HTML}{666666}

\newtheorem{theorem}{Theorem}

\newtheorem{definition}{Definition}
\newtheorem{lemma}{Lemma}
\newtheorem{corollary}{Corollary}
\newtheorem{example}{Example}

\newtheorem{remark}{Remark}
\newtheorem{notation}{Notation}

\newcommand{\repeattheorem}[1]{
  \begingroup
  \renewcommand*{\thetheorem}{\ref{#1}}
  \expandafter\expandafter\expandafter\theorem
  \csname rtheorem@#1\endcsname
  \endtheorem
  \endgroup
}
\NewEnviron{rtheorem}[1]{
  \global\expandafter\xdef\csname rtheorem@#1\endcsname{
    \unexpanded\expandafter{\BODY}
  }
  \expandafter\theorem\BODY\unskip\label{#1}\endtheorem
}

\newcommand{\repeatlemma}[1]{
  \begingroup
  \renewcommand*{\thelemma}{\ref{#1}}
  \expandafter\expandafter\expandafter\lemma
  \csname rlemma@#1\endcsname
  \endlemma
  \endgroup
}
\NewEnviron{rlemma}[1]{
  \global\expandafter\xdef\csname rlemma@#1\endcsname{
    \unexpanded\expandafter{\BODY}
  }
  \expandafter\lemma\BODY\unskip\label{#1}\endlemma
}

\newcommand{\repeatproposition}[1]{
  \begingroup
  \renewcommand*{\theproposition}{\ref{#1}}
  \expandafter\expandafter\expandafter\proposition
  \csname rproposition@#1\endcsname
  \endproposition
  \endgroup
}
\NewEnviron{rproposition}[1]{
  \global\expandafter\xdef\csname rproposition@#1\endcsname{
    \unexpanded\expandafter{\BODY}
  }
  \expandafter\proposition\BODY\unskip\label{#1}\endproposition
}

\newcommand{\repeatcorollary}[1]{
  \begingroup
  \renewcommand*{\thecorollary}{\ref{#1}}
  \expandafter\expandafter\expandafter\corollary
  \csname rcorollary@#1\endcsname
  \endproposition
  \endgroup
}
\NewEnviron{rcorollary}[1]{
  \global\expandafter\xdef\csname rcorollary@#1\endcsname{
    \unexpanded\expandafter{\BODY}
  }
  \expandafter\corollary\BODY\unskip\label{#1}\endcorollary
}

\renewenvironment{definition}{\begin{olddefn}}{\unskip\nobreak\hfill$\triangle$\par\medskip \end{olddefn}}

\DeclareMathAlphabet\mathbfcal{OMS}{cmsy}{b}{n}

\newcommand{\appendixref}[1][]{
  \ifappendix
      \ifx\relax#1\relax
        the appendix
      \else
        \Cref{#1}
      \fi
  \else
    the appendix~\cite{extended}
  \fi
}

\newcommand{\kat}{\textsf{KAT}\xspace}
\newcommand{\netkat}{\textsf{NetKAT}\xspace}
\newcommand{\probnetkat}{\textsf{ProbNetKAT}\xspace}
\newcommand{\wnetkat}{\textsf{wNetKAT}\xspace}
\newcommand{\iwnetkat}[1]{{#1}-\wnetkat}
\newcommand{\swnetkat}{\iwnetkat{$\semi$}}
\newcommand{\wgkat}{\textsf{wGKAT}\xspace}
\newcommand{\gkat}{\textsf{GKAT}\xspace}
\newcommand{\kawt}{\textsf{KAWT}\xspace}

\usepackage{tikz}
\usetikzlibrary{automata, positioning, arrows.meta, calc, decorations.pathmorphing, matrix, fit}
\definecolor{darkgreen}{rgb}{0.0, 0.5, 0.0}
\tikzstyle{smallstate}=[inner sep=1.2pt,draw,circle,fill]
\tikzstyle{input-edge}=[-{Stealth[length=2mm,width=1.5mm]}, shorten >=0.8mm]
\tikzstyle{output-edge}=[-Implies, double distance=1pt, shorten <=0.8mm]
\tikzstyle{transition-edge}=[-{Stealth[length=2mm,width=1.5mm]},shorten >=0.8mm, shorten <=1.2mm]
\tikzset{every loop/.style={transition-edge, looseness=20}}
\tikzstyle{output}=[draw=none, node distance=.6cm, inner sep=0pt,yshift=1pt]

\newcommand{\cpo}{\mathcal{D}}
\newcommand{\cpodom}{\ensuremath{D}}
\newcommand{\cpoa}{\ensuremath{d}}
\newcommand{\cpoord}[1][]{\ensuremath{\sqsubseteq_{#1}}}
\newcommand{\cposup}[1][]{\ensuremath{\bigsqcup\limits_{\substack{#1}}}}

\newcommand{\cpobot}{\bot}
\newcommand{\cpochain}{\mathcal{C}}
\newcommand{\cposubset}{\ensuremath{S}}
\newcommand{\cpoendo}{\ensuremath{f}}

\newcommand{\lfp}{\ensuremath{\mathsf{lfp}~}}

\newcommand{\mon}{\mathcal{M}}
\newcommand{\mona}{\ensuremath{s}}
\newcommand{\mondom}{\ensuremath{M}}
\newcommand{\semimul}[1][]{\ensuremath{\cdot_{#1}}}
\newcommand{\mulid}[1][]{\ensuremath{\mathbb{1}_{#1}}}

\newcommand{\semi}{\ensuremath{\mathcal{S}}}
\newcommand{\semia}{\ensuremath{s}}
\newcommand{\semib}{\ensuremath{r}}
\newcommand{\semidom}{\ensuremath{S}}
\newcommand{\semiadd}[1][]{\ensuremath{+_{#1}}}

\newcommand{\addid}[1][]{\ensuremath{\mathbb{0}_{#1}}}
\newcommand{\semisum}[1]{\ensuremath{\sum{{#1}}}}
\newcommand{\semisumtop}[2]{\ensuremath{\sum{{#1}}^{#2}}}

\newcommand{\semistar}[1]{\ensuremath{{#1}^{*}}}

\newcommand{\laddid}[1][]{\addid{#1}}
\newcommand{\lsemimul}[1][]{\semimul{#1}}
\newcommand{\lsemiadd}[1][]{\semiadd{#1}}
\newcommand{\lsemiprod}[1][]{\ensuremath{\mathbin{\times_{#1}}}}
\newcommand{\lsemisum}[1]{\semisum{#1}}
\newcommand{\lsemisumtop}[2]{\semisumtop{#1}{#2}}

\newcommand{\semiord}[1][]{\ensuremath{\mathrel{\preceq}_{#1}}}
\newcommand{\semiordrev}[1][]{\ensuremath{\mathrel{\succeq}_{#1}}}
\newcommand{\semiordrevstrict}[1][]{\ensuremath{\mathrel{\succ}_{#1}}}

\newcommand{\semisup}[1][]{\ensuremath{\bigsqcup\limits_{\substack{#1}}}}
\newcommand{\semichain}{\mathcal{C}}

\newcommand{\indexset}{\ensuremath{I}}
\newcommand{\indexsetb}{\ensuremath{J}}
\newcommand{\indexx}{\ensuremath{i}}
\newcommand{\enum}{\mathsf{enum}}

\DeclareMathOperator*{\BigAdd}{\bigoplus}
\newcommand{\llb}{\llbracket}
\newcommand{\rrb}{\rrbracket}
\newcommand{\llp}{\llparenthesis}
\newcommand{\rrp}{\rrparenthesis}
\newcommand{\bind}{\gg\!=}
\newcommand{\unit}{\eta}
\newcommand*{\longeq}{:\Longleftrightarrow}
\newcommand{\eeq}{~{}={}~}
\newcommand{\ttriangleq}{~{}\triangleq{}~}
\newcommand{\nats}{\mathbb{N}}
\newcommand{\reals}{\ensuremath{\mathbb{R}}}
\newcommand{\posreals}{\ensuremath{\reals_{\geq 0}}}
\newcommand{\posrealsinf}{\ensuremath{\posreals^{\infty}}}
\newcommand{\mylam}[1]{\ensuremath{\lambda #1.\,}}
\newcommand{\setsof}[1]{\ensuremath{2^{#1}}}
\let\oldsum\sum
\renewcommand{\sum}[1]{\ensuremath{\oldsum\limits_{\substack{#1}}}}

\newcommand{\probsum}[2]{\ensuremath{#1 \uplus #2}}

\newcommand{\boolSemiring}{\ensuremath{2}}
\newcommand{\probSemiring}{\ensuremath{\mathsf{Prob}}}
\newcommand{\nonnegRealsSemiring}{\realsSemiring}
\newcommand{\realsSemiring}{\ensuremath{\mathsf{Real}}}

\newcommand{\PublicSecurity}{\ensuremath{0}}
\newcommand{\LowSecurity}{\ensuremath{\text{L}}}
\newcommand{\MedSecurity}{\ensuremath{\text{M}}}
\newcommand{\HighSecurity}{\ensuremath{\text{H}}}

\newcommand{\wta}{\ensuremath{\semib}}
\newcommand{\wtb}{\ensuremath{\semia}}

\newcommand{\histories}{\ensuremath{\mathsf{H}}}
\newcommand{\hista}{\ensuremath{h}}
\newcommand{\histb}{\ensuremath{h'}}
\newcommand{\histc}{\ensuremath{h''}}
\newcommand{\cons}{{::}}
\newcommand{\hcons}[2]{#1\cons#2}
\newcommand{\hempty}{\langle\rangle}

\newcommand{\packets}{\mathsf{Pk}}
\newcommand{\pkt}{\ensuremath{\pi}}
\newcommand{\pkta}{\ensuremath{\alpha}}
\newcommand{\pktb}{\ensuremath{\beta}}
\newcommand{\pktc}{\ensuremath{\gamma}}
\newcommand{\pktd}{\ensuremath{\xi}}

\newcommand{\updatepkt}[3]{\ensuremath{#1 [ #2 \coloneq #3 ]}}

\newcommand{\ctest}[1]{\ensuremath{#1 ?}}
\newcommand{\cass}[1]{\ensuremath{#1 !}}
\newcommand{\Test}{\ensuremath{\packets^?}}
\newcommand{\Assign}{\ensuremath{\packets^!}}

\newcommand{\oldctestA}{\ensuremath{\alpha}}
\newcommand{\oldctestB}{\ensuremath{\beta}}
\newcommand{\oldctestC}{\ensuremath{\gamma}}

\newcommand{\oldcassA}{\ensuremath{\pi}}

\newcommand{\oldcass}[1]{\ensuremath{\oldcassA_{#1}}}
\newcommand{\oldTest}{\ensuremath{\mathsf{At}}}

\newcommand{\GS}{\ensuremath{\mathsf{GS}}}
\newcommand{\gsA}{\ensuremath{x}}
\newcommand{\gsB}{\ensuremath{y}}
\newcommand{\gsC}{\ensuremath{z}}

\newcommand{\wtinga}{\ensuremath{m}}
\newcommand{\wtingb}{\ensuremath{m^\prime}}
\newcommand{\supp}{\ensuremath{\mathsf{supp}}}
\newcommand{\mass}[1]{\ensuremath{|#1|}}

\newcommand{\pols}{\ensuremath{\mathsf{Pol}}}
\newcommand{\pola}{\ensuremath{p}}
\newcommand{\polb}{\ensuremath{q}}
\newcommand{\pol}[1]{\ensuremath{\pola_{#1}}}

\newcommand{\linka}{\ensuremath{l}}

\newcommand{\topoa}{\linka}

\newcommand{\tests}{\ensuremath{\mathsf{Pred}}}
\newcommand{\testa}{\ensuremath{t}}
\newcommand{\testb}{\ensuremath{u}}

\newcommand{\histsa}{\ensuremath{a}}
\newcommand{\histsb}{\ensuremath{b}}
\newcommand{\proba}{\ensuremath{r}}
\newcommand{\eventa}{\ensuremath{A}}

\newcommand{\measurea}{\ensuremath{\mu}}

\newcommand{\markova}{\ensuremath{P}}
\newcommand{\borelsets}{\mathcal{B}}
\newcommand{\discreteprobmonad}{\mathcal{D}}
\newcommand{\probmonad}{\mathcal{M}}
\newcommand{\measures}[1]{\discreteprobmonad(#1)}

\newcommand{\fields}{\mathsf{F}}
\newcommand{\fielda}{\ensuremath{f}}

\newcommand{\values}{\mathsf{Val}}
\newcommand{\vala}{\ensuremath{n}}

\newcommand{\concreteval}{\ensuremath{v}}

\newcommand{\fieldsw}{\ensuremath{\mathsf{sw}}}
\newcommand{\fielddst}{\ensuremath{\mathsf{dst}}}
\newcommand{\fieldpt}{\ensuremath{\mathsf{pt}}}
\newcommand{\fieldvid}{\ensuremath{\mathsf{vid}}}
\newcommand{\switcha}{\ensuremath{S}}
\newcommand{\hosta}{\ensuremath{H}}
\newcommand{\millis}{\ensuremath{\mathsf{ms}}}
\newcommand{\mbps}{\ensuremath{\mathsf{Mbps}}}

\newcommand{\consttrue}{\ensuremath{\mathsf{TRUE}}}

\newcommand{\red}{\ensuremath{{\downarrow}}}
\newcommand{\rpols}{\ensuremath{\mathsf{Pol}^\red}}

\newcommand{\DROP}{\ensuremath{\mathsf{drop}}}
\newcommand{\SKIP}{\ensuremath{\mathsf{skip}}}
\newcommand{\TRUE}{\ensuremath{\mathsf{true}}}
\newcommand{\FALSE}{\ensuremath{\mathsf{false}}}
\newcommand{\AND}[2]{\ensuremath{#1 \, \wedge \, #2}}
\newcommand{\OR}[2]{\ensuremath{#1 \, \vee \, #2}}
\newcommand{\EQ}[2]{\ensuremath{#1 = #2}}
\newcommand{\NOT}[1]{\ensuremath{\neg #1}}
\newcommand{\ASSN}[2]{\ensuremath{#1 \leftarrow #2}}
\newcommand{\SEQ}[2]{\ensuremath{#1 \SEQN #2}}
\newcommand{\SEQN}{\ensuremath{\, ;}}
\newcommand{\ITER}[1]{\ensuremath{#1 ^{*}}}
\newcommand{\DUP}{\ensuremath{\mathsf{dup}}}
\newcommand{\ADD}[2]{\ensuremath{#1 \ADDN #2}}
\newcommand{\ADDN}{\ensuremath{\oplus}}
\newcommand{\WEIGH}[2]{\ensuremath{#1 \odot #2}}

\newcommand{\IFN}{\mathsf{if}}
\newcommand{\THENN}{\mathsf{then}}
\newcommand{\ELSEN}{\mathsf{else}}

\newcommand{\SUM}[2]{\ensuremath{\BigAdd\limits_{\substack{#1}} #2}}
\newcommand{\SUMTOP}[3]{\ensuremath{\BigAdd\limits_{\substack{#1}}^{#2} #3}}
\newcommand{\NOTEQ}[2]{\ensuremath{#1 \neq #2}}
\newcommand{\IF}[3][\testA]{\ensuremath{\IFN \, #1 \, \THENN \, #2 \, \ELSEN \, #3}}
\newcommand{\WHILE}[2][\testA]{\ensuremath{\mathsf{while} \, #1 \, \mathsf{do} \, #2 }}
\newcommand{\NFOLD}[2][n]{\ensuremath{{#2}^{(#1)}}}
\newcommand{\ALTNFOLD}[2][n]{\ensuremath{{#2}^{\langle #1 \rangle}}}

\newcommand{\PROBCHOICE}[3][\probA]{\ensuremath{#2 \oplus_{#1} #3}}
\newcommand{\PAR}[2]{\ensuremath{#1 \,\&\, #2}}

\newcommand{\transpol}[1]{#1}

\newcommand{\approxpol}[3][]{\approxn[#3]{\left[ #2 \right]}}
\newcommand{\altapproxpol}[3][]{\approxn[#3]{\left( #2 \right)}}
\newcommand{\altaltapproxpol}[3][]{\approxn[#3]{\langle #2 \rangle}}
\newcommand{\approxn}[2][n]{#2_{#1}}

\newcommand{\WNKA}{\textsf{WNKA}\xspace}
\newcommand{\wnkapath}{\ensuremath{\mathsf{Runs}}}
\newcommand{\wnka}{\ensuremath{\mathcal{A}}}

\newcommand{\wnkastates}{\ensuremath{Q}}
\newcommand{\wnkastatea}{\ensuremath{q}}
\newcommand{\wnkastateb}{\ensuremath{q'}}
\newcommand{\wnkastatec}{\ensuremath{q''}}
\newcommand{\wnkainit}{\ensuremath{\iota}}
\newcommand{\wnkatrans}[2]{\ensuremath{\delta_{#1 #2}}}
\newcommand{\wnkaout}[2]{\ensuremath{\lambda_{#1 #2}}}
\newcommand{\wnkatransext}[1]{\ensuremath{\delta_{#1}^*}}
\newcommand{\wnkaoutext}[1]{\ensuremath{\lambda_{#1}^*}}

\newcommand{\wfastartstate}{\ensuremath{q_\iota}}
\newcommand{\wfaendstate}{\ensuremath{q_\lambda}}
\newcommand{\wfastates}{\wnkastates'}
\newcommand{\wfastatea}{q}
\newcommand{\wfastateb}{q'}
\newcommand{\wfainit}{\ensuremath{I}}
\newcommand{\wfatrans}[1]{\ensuremath{\Delta_{#1}}}
\newcommand{\wfaout}{\ensuremath{\Lambda}}
\newcommand{\wfatransext}[1]{\ensuremath{\Delta_{#1}^*}}

\newcommand{\weightingsfunctor}[1][\semi]{\ensuremath{\mathbfcal{W}_{#1}}}
\newcommand{\weightings}[2][\semi]{\weightingsfunctor[#1](#2)}
\newcommand{\sem}[1]{\llb #1 \rrb}
\newcommand{\altsem}[1]{\llp #1 \rrp}

\newcommand{\eword}{\ensuremath{\varepsilon}}

\newcommand{\wlang}[1]{\llbracket #1 \rrbracket}

\newcommand{\concatword}[2]{{#1}\cdot{#2}}
\newcommand{\wrun}{\rho}
\newcommand{\wruns}[2]{\mathsf{Runs}_{#1 \to #2}}
\newcommand{\wrunscfree}[2]{\mathsf{Runs}_{#1 \to #2}^{\cancel{\circlearrowleft}}}
\newcommand{\idmatrix}{\mathsf{id}}
\newcommand{\runweight}{\textnormal{weight}}

\newcommand{\semithresh}{\semib}
\newcommand{\autsafe}[1]{\textnormal{{$#1$}-SAFE}}
\newcommand{\autreach}[1]{\textnormal{{$#1$}-REACHABLE}}

\newcommand{\edge}{\ensuremath{\shortrightarrow}}
\newcommand{\fieldnode}{\ensuremath{\mathsf{node}}}

\newcommand{\failure}[1]{\textcolor{failureorange}{\ensuremath{#1\%}}}
\newcommand{\latency}[1]{\textcolor{latencygreen}{\ensuremath{#1 \millis}}}
\newcommand{\bandwidth}[1]{\textcolor{bandwidthpurple}{\ensuremath{#1 \ \mbps}}}
\newcommand{\abilene}{\ensuremath{\mathit{abilene}}}

\newcommand{\seattle}{\ensuremath{\mathsf{SEA}}}
\newcommand{\bay}{\ensuremath{\mathsf{BAY}}}
\newcommand{\la}{\ensuremath{\mathsf{LA}}}
\newcommand{\denver}{\ensuremath{\mathsf{DEN}}}
\newcommand{\kansas}{\ensuremath{\mathsf{KAN}}}
\newcommand{\houston}{\ensuremath{\mathsf{HOU}}}
\newcommand{\atlanta}{\ensuremath{\mathsf{ATL}}}
\newcommand{\chicago}{\ensuremath{\mathsf{CHI}}}
\newcommand{\nyc}{\ensuremath{\mathsf{NYC}}}
\newcommand{\dc}{\ensuremath{\mathsf{DC}}}
\newcommand{\indiana}{\ensuremath{\mathsf{IND}}}
\newcommand{\relstr}{\ensuremath{\mathsf{rel}}}
\newcommand{\bandstr}{\ensuremath{\mathsf{band}}}
\newcommand{\vidstr}{\ensuremath{\mathsf{vid}}}
\newcommand{\defaultstr}{\ensuremath{\mathsf{default}}}
\newcommand{\fieldtid}{\ensuremath{\mathsf{tid}}}

\title{Weighted NetKAT}
\subtitle{A Programming Language For Quantitative Network Verification}

\begin{document}
\begin{abstract}
We introduce weighted \netkat, a domain-specific language for modeling and verifying {\em quantitative} network properties. The language is parametric on a {\em semiring}, enabling the treatment of a wide range of quantities in a uniform way. We provide a denotational semantics and an equivalent operational semantics, the latter based on a novel model of \emph{weighted \netkat automata} ($\mathsf{WNKA}$) capturing the stateful behavior of our language. With $\mathsf{WNKA}$, we obtain a class of generic decision procedures for reasoning about \emph{quantitative safety and reachability} in a fully automatic way, even in the presence of possibly unbounded iteration. We demonstrate the applicability of our framework in a case study using Internet2's Abilene network as the underlying topology.

\end{abstract}
\maketitle
\section{Introduction}
\label{sec:introduction}

The field of network verification has emerged as a significant success story for the programming languages community in recent years. The idea is to see the {\em network as a program}, and model the topology of a network and the configurations of its devices as programs in a domain-specific language, which can then be analyzed to verify properties of interest. This basic approach has been applied successfully at scale in industry, where it has shown to improve the correctness and reliability of networks by catching errors at design time~\cite{switchv,secguru}.

Among the numerous network verification frameworks that have been proposed, \netkat~\cite{netkat}
stands out for its strong theoretical foundations based on Kleene Algebra with Tests (\kat)~\cite{Kozen97}.
Indeed, the deep connection between \netkat and finite automata has been instrumental in
facilitating production-grade~\cite{mark,netkat-google} scalable verification based on automata-theoretic methods~\cite{coalg-netkat,katch,netkat-learning}.

However, \netkat has a critical limitation: its semantics only captures the packet-forwarding behavior of the network. Hence, it can be used to capture basic properties involving the paths that packets take (e.g., reachability, isolation, forwarding loops, etc.). But in many situations, network operators need to reason about richer {\em quantitative} properties, such as bandwidth, latency, reliability, or security, that cannot be gleaned from topologies and device configurations, but are important for applications such as traffic engineering, fault tolerance, and security.

This paper presents weighted \netkat (\wnetkat), a new framework for modeling and reasoning about such quantitative network properties. \wnetkat enriches \netkat with new syntactic constructs for assigning and manipulating weights and a semantics that assigns a weight to each execution. At a technical level, we model weights as elements of a semiring---intuitively, semirings arise in networking as their operations model both alternative (e.g., sum) and joint (e.g., product) use of information, across all possible paths in the topology.

Although extending \netkat with weights may appear straightforward initially, there are numerous challenges that arise in the design of the language and in formulating the semantics correctly. For the latter, one has to carefully restrict the semiring to ensure that iteration can be computed and, more interestingly, the operational semantics of the language requires a new automaton model that captures both the presence of weights (very much in line with classical weighted automata) but also correctly accounts for the idiosyncrasies of \netkat semantics. At the level of the expressiveness of the language one has to take into account that the weights needed to compute the relevant quantities might be associated with different parts of the network (e.g. a switch, a link, a port) and therefore the new syntactic constructs need to offer that flexibility. 

We provide a thorough formalization of the language and its metatheory including a denotational semantics, language model, operational model using \emph{\wnetkat automata}, and theorems that equate these different models. The soundness of our automata construction then enables the verification of \wnetkat policies at the level of  automata. At time of verification the semiring parametricity shines: in different contexts, the nature of the relevant weights varies. For latency we might want to use integers whereas for security we might want to use an ordered set of permission levels. Moreover, the way these quantities need to be combined to propagate through the network to yield the answer to a verification question also varies (e.g., worst-case latency or best-case reliability). 

We focus on two classes of quantitative properties in our verification quest: $\wta$-safety (``Do all paths in the network have weight \emph{at most} $r$?'') and $\wta$-reachability (``Does there exist a path in the network with weight \emph{at least} $r$?''). We provide algorithms to decide these verification questions and then illustrate their applicability in a case study. In particular, although \wnetkat cannot precisely model the kind of quality of service (QoS) properties that depend on flow-level interactions such as congestion, these verification questions encompass a broad range of network performance characteristics. For example, many quantitative aspects of networks---such as reliability (from historical packet loss) or security (whether a link is trusted)---do not depend on modeling dynamic packet-processing behavior at all. Increased bandwidths and packet-processing rates in networks have also recently enabled network performance to be modeled at a coarser granularity that does not need to directly consider queueing or packet-level congestion~\cite{b4}.

In summary, this paper makes the following contributions:
\begin{itemize}[leftmargin=*]
\item We develop \wnetkat, a semiring-based framework 
    to facilitate reasoning about quantitative network behaviors  such as bandwidth constraints, latency measurements, and reliability metrics.

\item We provide a comprehensive formal treatment of the semantics of \wnetkat,
    including denotational (\Cref{sec:syntax-semantics}),
    language-theoretic (\Cref{sec:lang-model}), and
    operational (\Cref{sec:wnetkat-automata}) models. The latter is based on a new automaton model---\wnetkat automata. 
    We develop a sound translation, akin to the classical Thompson construction, from \wnetkat expressions to \wnetkat automata.

\item We establish the exact computation of \wnetkat expressions---in particular, providing the first computable semantics for probabilistic extensions of \netkat. This enables algorithms for verifying quantitative network properties: $\wta$-safety and $\wta$-reachability (\Cref{sec:decidability}).

\item We showcase the applicability of \wnetkat in the setting of Internet2's
    Abilene backbone network, providing worst-/best-case network guarantees over a range of practical network phenomena with automatically generated concrete witnesses and/or counterexamples
        (\Cref{sec:case-studies}).
\end{itemize}
We also show that \wnetkat subsumes the original semantics of \netkat as well as the guarded fragment of \probnetkat~\cite{probnetkat,mcnetkat}. We include proofs of all formal claims in \appendixref.

\section{Quantitative Network Verification with \wnetkat}
\label{sec:model-quant-prop}

In this section, we give an overview of the \emph{quantitative network
verification} enabled by \wnetkat. First, we briefly recap modeling networks with \netkat, after which we discuss the
challenges of quantitative network behavior.
Finally, we describe the verification of quantitative properties, namely $\wta$-safety and $\wta$-reachability, through a computable semantics based on \wnetkat automata.

\subsection{Background: Encoding Networks in \wnetkat}
\label{sec:encode-network}

\begin{figure}[t]
    \begin{tabular}{lcr}
    \begin{minipage}{.5\textwidth}
    \includegraphics[scale=0.25]{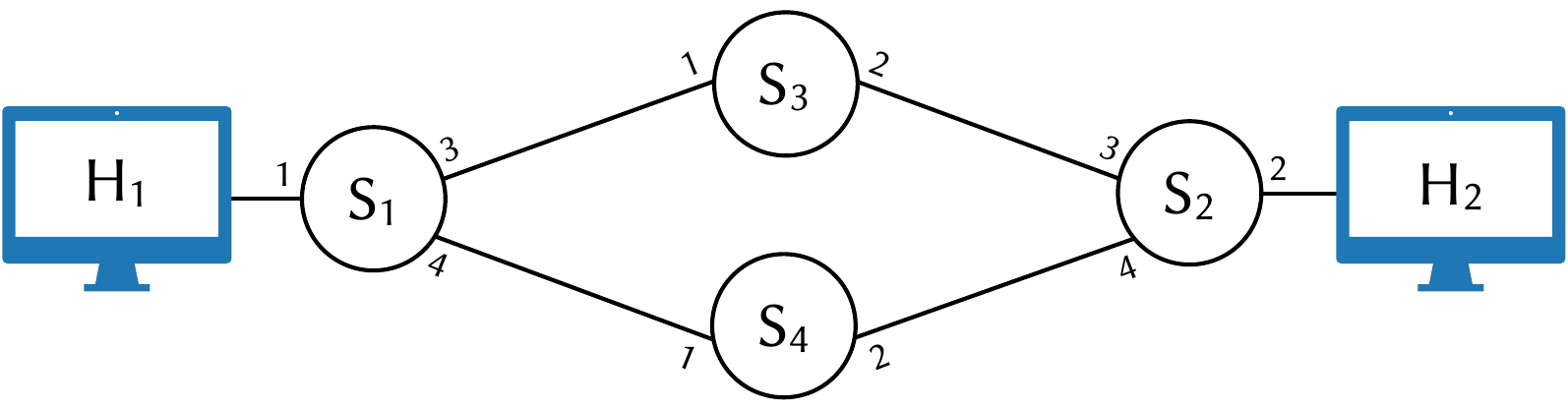}
    \end{minipage}
    &
    \vrule
    &
    \(
    \begin{array}{rcl}
    \pola &\triangleq&
        \IF[\EQ{\fieldsw}{\switcha_1}]
           {\pola_1}
           {\dots}
    \\
    \linka &\triangleq&
        \IF[\EQ{\fieldsw}{\switcha_1}]
           {\linka_1}
           {\dots}
    \\
    \mathit{in,out} &\triangleq&
        \OR{(\AND{\EQ{\fieldsw}{\switcha_1}}{\EQ{\fieldpt}{1}})}{} \\
        && \quad (\AND{\EQ{\fieldsw}{\switcha_2}}{\EQ{\fieldpt}{2}})
    \\
    \mathit{net} &\triangleq& \mathit{in} \SEQN
                              \ITER{(\SEQ{\pola}{\SEQ{\topoa}{\DUP}})} \SEQN
                              \mathit{out}
    \end{array}
    \)
    \end{tabular}
    \caption{Sample network and its encoding in \wnetkat.}
    \label{fig:topo}
\end{figure}

\wnetkat is a conservative extension of \netkat~\cite{netkat}, a domain-specific language for modeling a networks' forwarding policies. When disregarding quantitative aspects, modeling in \wnetkat is thus analogous to modeling networks in \netkat. Let us illustrate this by means of an example.

Consider the network in \Cref{fig:topo} (left), consisting of two \emph{hosts} and four \emph{switches}. These hosts and switches are connected via links at designated \emph{ports}, giving rise to the network's topology. Every switch operates according to a forwarding table. For instance, if $\switcha_1$ receives a packet destined for $\hosta_2$, then $\switcha_1$ will send the packet either via port $3$ or port $4$. 

In \wnetkat, we model networks as intuitive programs, which are called \emph{policies}. More specifically, a policy models how the \emph{fields} of a packet that is being sent through the network are modified over time. In our example, a packet consists of the fields $\fieldsw$ (holding the switch the packet is currently at), $\fieldpt$ (holding the port the packet is currently at), and $\fielddst$ (holding the packet's destination). The high-level structure of the policy modeling our example network is depicted in  \Cref{fig:topo} (right). Consider the top-level policy $\mathit{net}$ and let us go over each of its components separately: 
\[
\mathit{in} \SEQN
\ITER{(\SEQ{\pola}{\SEQ{\topoa}{\DUP}})} \SEQN
\mathit{out}~.
\]
$\mathit{in}$ and $\mathit{out}$ are predicates specifying the network's ingress/egress points: a packet can enter/leave the network at port $1$ of $\switcha_1$ or at port $2$ of $\switcha_2$. The expression $\ITER{(\SEQ{\pola}{\SEQ{\topoa}{\DUP}})}$ then models the iterative behavior of the network and is intuitively to be read as follows: \enquote{$\pola$} look up in the current switch's forwarding table at which port the packet is to be placed next, then \enquote{$\topoa$} send the packet via the corresponding link, then 
\enquote{$\DUP$} log the current packet's state in a history, and \enquote{$\ITER{(\ldots)}$} repeat.

Both $\pola$ and $\topoa$ are basically case distinctions on the current packet's switch. For instance, if the packet is currently at switch $\switcha_1$, then the corresponding sub-policies are given by
\begin{equation}\label{eq:enc1}
	\begin{array}{rcl}
		\pola_1 &\triangleq& \IF[\EQ{\fielddst}{\hosta_2}]
		{(\ADD{\ASSN{\fieldpt}{3}}
			{\ASSN{\fieldpt}{4}})}
		{\IF[\EQ{\fielddst}{\hosta_1}]
			{\ASSN{\fieldpt}{1}}
			{\DROP}}
		\medskip \\
		\linka_1 &\triangleq& \IF[\EQ{\fieldpt}{3}]
		{(\SEQ{\ASSN{\fieldsw}{\switcha_3}}
			{\ASSN{\fieldpt}{1}})}
		{\IF[\EQ{\fieldpt}{4}]
			{(\SEQ{\ASSN{\fieldsw}{\switcha_4}}
				{\ASSN{\fieldpt}{1}})}
			{\EQ{\fieldpt}{1}}}
	\end{array}
\end{equation}
$\pola_1$ branches on the packet's destination: If the destination is $\hosta_2$, then the packet is forwarded via port $3$ or $4$, which is modeled via \wnetkat's \emph{choice operator} $\ADDN$. Similarly, $\linka_1$ branches on the current port, and modifies the fields $\fieldsw$ and $\fieldpt$ according to the network's topology.

\subsection{Modeling Quantitative Network Behavior in \wnetkat}
\label{sec:quant-prop}
\begin{figure}[t]
\vspace{-.2cm}
    \includegraphics[scale=0.3]{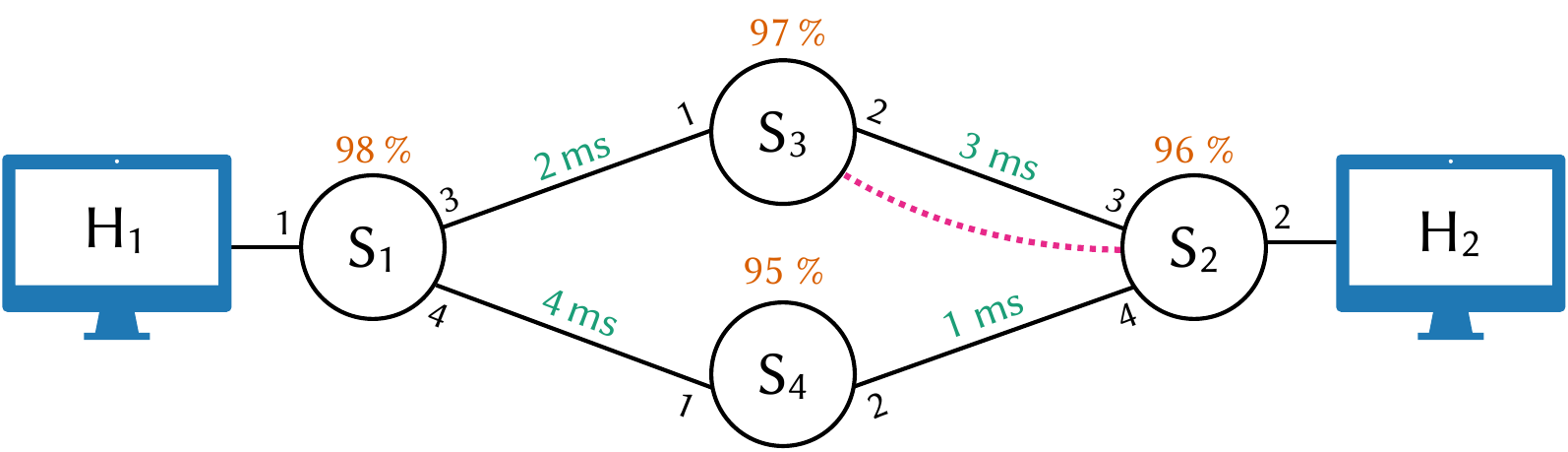}
    \caption{
        Sample network with links now weighted by \textcolor{latencygreen}{latency} and
        switches weighted by \textcolor{failureorange}{reliability}, and an optional
        retry \textcolor{cyclepink}{link} (dotted line) in the network.
    }
    \label{fig:weighted-topo}
    \vspace{-0.5cm}
\end{figure}

We have just exemplified how to model a network's packet forwarding behavior in \wnetkat when disregarding quantitative aspects. Let us now consider \wnetkat's novel and generic capability of \emph{modeling various quantitative aspects of networks}. Again, we proceed example-driven.

Consider the network in \Cref{fig:weighted-topo}. The topology and packet-forwarding behavior of this network coincides with the one from \Cref{fig:topo}. Both the switches and the links are now annotated with \emph{quantitative information}: the switches are annotated by \textcolor{failureorange}{success/failure rates}, i.e., the probability of succeeding in forwarding a packet. Links are annotated by \textcolor{latencygreen}{latencies}, i.e., the time it takes for packet to be sent via a particular link. \wnetkat enables modeling these aspects in a natural manner.

\begin{figure}[t]
    \centering
    \raggedright
    \textbf{Arctic semiring} to model
    \textbf{Latency} or \textbf{Information Leakage}
    \hfill
    ${(\nats \cup \{\infty,-\infty\},\, {\max},\, {+},\, {-\infty},\, 0)}$
    \\[.6ex]
	{
		\scriptsize
		\begin{tabular}{@{}lll@{}}
			\textbf{Weighting $\WEIGH{\wta}{\pola}$:} &
                \textbf{Choice $\ADD{\pola}{\polb}$:} &
                \textbf{Interpretation of $\sem{\pola}(\hista)$:} \\
			\parbox{.31\linewidth}{Policy $\pola$ has latency $\wta$ \millis\ (or reveals $\wta$ bits of information)} &
            \parbox{.32\linewidth}{Choose policy with worse latency (or leakage)} &
            \parbox{.32\linewidth}{\emph{Worst-case} latency (or information leakage) of network paths}
			\\[.6ex]
		\end{tabular}
	}

	\vspace{-.6ex}

    \textbf{Probabilistic-union semiring}
    to model \textbf{Failure Rates} \hfill
    ${([0,1] \cup \{-\infty\},\, {\max},\, \probsum{}{},\, -\infty,\, 0)}$
    \\
    \hfill
    {\footnotesize
    where $\probsum{\proba_1}{\proba_2}$ is the \emph{probabilistic union}
    $\proba_1 + \proba_2 - \proba_1 \cdot \proba_2$
    } \\[.6ex]
	{
		\scriptsize
		\begin{tabular}{@{}lll@{}}
			\textbf{Weighting $\WEIGH{\wta}{\pola}$:} &
                \textbf{Choice $\ADD{\pola}{\polb}$:} &
                \textbf{Interpretation of $\sem{\pola}(\hista)$:} \\
			\parbox{.31\linewidth}{Policy $\pola$ has a failure rate of $\wta$} &
            \parbox{.32\linewidth}{Choose policy with higher failure rate} &
            \parbox{.32\linewidth}{\emph{Worst-case} failure rate of network paths}
		\end{tabular}
	}

	\vspace{-.6ex}

    \textbf{Tropical semiring}
    to model
	\textbf{Confidentiality} or \textbf{Cost}
    \hfill
    ${(\nats \cup \{\infty\},\, {\min},\, {+},\, {\infty},\, 0)}$
    \\[.6ex]
	{
		\scriptsize
		\begin{tabular}{@{}lll@{}}
			\textbf{Weighting $\WEIGH{\wta}{\pola}$:} &
                \textbf{Choice $\ADD{\pola}{\polb}$:} &
                \textbf{Interpretation of $\sem{\pola}(\hista)$:} \\
            \parbox{.31\linewidth}{Policy $\pola$ reveals $\wta$ bits of information
            (or has cost $\wta$)
            } &
            \parbox{.32\linewidth}{Choose whichever policy reveals less bits of information
            (or has cheaper cost)} &
            \parbox{.32\linewidth}{\emph{Best-case} confidentiality (or cost) of network paths}
		\end{tabular}
	}

	\vspace{-.6ex}

	\textbf{Bottleneck semiring}
    to model \textbf{Network Bandwidth}
    \hfill
    $(\nats \cup \{\infty,-\infty\},\, \max,\, \min,\, -\infty,\, \infty)$
    \\[.6ex]
	{
		\scriptsize
		\begin{tabular}{@{}lll@{}}
			\textbf{Weighting $\WEIGH{\wta}{\pola}$:} &
                \textbf{Choice $\ADD{\pola}{\polb}$:} &
                \textbf{Interpretation of $\sem{\pola}(\hista)$:} \\
			\parbox{.31\linewidth}{Restrict bandwidth of policy $\pola$ to $\wta \ \mbps$} &
            \parbox{.32\linewidth}{Choose policy with higher bandwidth} &
            \parbox{.32\linewidth}{\emph{Best-case} bandwidth of network paths}
		\end{tabular}
	}

	\vspace{-.6ex}

    \textbf{Viterbi semiring} to model \textbf{Reliability}
    \hfill
    ${([0,1],\, {\max},\, \,{\cdot}\,,\, {0},\, {1})}$ \\[.6ex]
	{
		\scriptsize
		\begin{tabular}{@{}lll@{}}
			\textbf{Weighting $\WEIGH{\wta}{\pola}$:} &
                \textbf{Choice $\ADD{\pola}{\polb}$:} &
                \textbf{Interpretation of $\sem{\pola}(\hista)$:} \\
			\parbox{.31\linewidth}{Policy $\pola$ has a success rate of $\wta$} &
            \parbox{.32\linewidth}{Choose policy with higher success rate} &
            \parbox{.32\linewidth}{\emph{Best-case} reliability of network paths}
		\end{tabular}
	}

	\vspace{-.6ex}

    \textbf{Security semiring} to model \textbf{Security Levels}
    \hfill
    $(\PublicSecurity < \LowSecurity < \MedSecurity < \HighSecurity,\, \max,\,
        \min,\, \PublicSecurity,\, \HighSecurity)$ \\[.6ex]
	{
		\scriptsize
		\begin{tabular}{@{}lll@{}}
			\textbf{Weighting $\WEIGH{\wta}{\pola}$:} &
                \textbf{Choice $\ADD{\pola}{\polb}$:} &
                \textbf{Interpretation of $\sem{\pola}(\hista)$:} \\
			\parbox{.31\linewidth}{Policy $\pola$ has security level $\wta$} &
            \parbox{.32\linewidth}{Choose policy with higher security level} &
            \parbox{.32\linewidth}{\emph{Best-case} security level of network paths}
		\end{tabular}
	}

	\vspace{-.6ex}

    \textbf{Why semiring}
    to model \textbf{Resource Tracking}
    \hfill
    ${(\text{propositional \emph{positive} DNF}, {\lor}, {\land}, 0, 1)}$
    \\[.6ex]
	{
		\scriptsize
		\begin{tabular}{@{}lll@{}}
			\textbf{Weighting $\WEIGH{\wta}{\pola}$:} &
                \textbf{Choice $\ADD{\pola}{\polb}$:} &
                \textbf{Interpretation of $\sem{\pola}(\hista)$:} \\
			\parbox{.31\linewidth}{Policy $\pola$ uses resource $\wta$} &
            \parbox{.32\linewidth}{Use resources from $\pola$ or resources from $\polb$} &
            \parbox{.32\linewidth}{Resources used by network paths}
		\end{tabular}
	}

	\vspace{-.6ex}

    \textbf{Boolean semiring} to model
    \textbf{NetKAT}~\cite{netkat}
    \hfill
    ${(\{0,1\},\, {\lor},\, {\land},\, 0,\, 1)}$ \\[.6ex]
	{
		\scriptsize
		\begin{tabular}{@{}lll@{}}
			\textbf{Weighting $\WEIGH{\wta}{\pola}$:} &
                \textbf{Choice $\ADD{\pola}{\polb}$:} &
                \textbf{Interpretation of $\sem{\pola}(\hista)$:} \\
            \parbox{.31\linewidth}{$\WEIGH{1}{\pola} = \pola$ and
            $\WEIGH{0}{\pola} = \DROP$} &
            \parbox{.32\linewidth}{Nondeterministic choice between $\pola$ and $\polb$} &
            \parbox{.32\linewidth}{All possible network paths}
		\end{tabular}
	}

	\vspace{-.6ex}

    \textbf{Real Numbers semiring}
    to model
    \textbf{ProbNetKAT}~\cite{probnetkat}
    \hfill
    ${(\reals^{\geq 0} \cup \{\infty\},\, +,\, \cdot,\, 0,\, 1)}$
    \\[.6ex]
	{
		\scriptsize
		\begin{tabular}{@{}lll@{}}
			\textbf{Weighting $\WEIGH{\wta}{\pola}$:} &
                \textbf{Choice $\ADD{\pola}{\polb}$:} &
                \textbf{Interpretation of $\sem{\pola}(\hista)$:} \\
			\parbox{.31\linewidth}{$\pola$ has probability $\wta$} &
            \parbox{.32\linewidth}{Probabilistic choice between $\pola$ and $\polb$} &
            \parbox{.32\linewidth}{Probability of each network path}
		\end{tabular}
	}
    \caption{Instances of \wnetkat: examples of semirings ${(\semidom,\, \semiadd,\, \semimul,\, \addid,\, \mulid)}$ and their use in networking.}
	\label{tab:weighted-netkat-instances}\vspace{-5pt}
\end{figure}

The key idea in modeling this behavior is to introduce a \emph{weighting operation}
\(
	 \WEIGH{\wta}{\pola}~,
\)
where $\pola$ is a policy and $\wta$ is an element from a fixed \emph{semiring}. Intuitively, this operation says \enquote{execute policy $\pola$ with weight $\wta$.} \wnetkat is parametric in that fixed semiring, rendering it a generic language for modeling all kinds of quantitative aspects. \Cref{tab:weighted-netkat-instances} provides an overview of different semirings and what aspects they are capable of modeling. Details on semirings and their operations are provided in \Cref{sec:syntax-semantics}. Let us, for now, gain some intuition for our example network.

With the appropriate semiring (Tropical or Arctic, depending on whether we are interested in best- or worst-case behavior), we can weight policies by \textcolor{latencygreen}{latencies}. To model the latencies attached to the {\em links} in \Cref{fig:weighted-topo}, we extend, e.g.,  the policy $\linka_1$ as follows:
\[
\begin{array}{rcl}
	\linka_1 &\triangleq& \IFN \ \EQ{\fieldpt}{3} \
	\THENN \ \WEIGH{\latency{2}}
	{(\SEQ{\ASSN{\fieldsw}{\switcha_3}}
		{\ASSN{\fieldpt}{1}})} \
	\ELSEN \
	 \\ &&
	\IFN \ \EQ{\fieldpt}{4} \
	\THENN \
	\WEIGH{\latency{4}}
	{(\SEQ{\ASSN{\fieldsw}{\switcha_4}}
		{\ASSN{\fieldpt}{1}})} \
	\ELSEN \
	{\EQ{\fieldpt}{1}}
\end{array}
\]
Alternatively, by choosing the Viterbi semiring, we model a {\em switch's} forwarding success rate (i.e., its reliability) by weighting the  policy encoding its forwarding table. For, e.g., $\switcha_1$ and $\switcha_2$, we have:
\[
\begin{array}{rcl}
	\pola_1 &\triangleq& \WEIGH{\failure{98}}
	{\IF[\EQ{\fielddst}{\hosta_2}]
		{(\ADD{\ASSN{\fieldpt}{3}}
			{\ASSN{\fieldpt}{4}})}
		{\dots}} \\
	\pola_2 &\triangleq& \WEIGH{\failure{96}}
	{\IF[\EQ{\fielddst}{\hosta_2}]
		{\ASSN{\fieldpt}{2}}
		{\dots}}
\end{array}
\]

Note how the construct $ \WEIGH{\wta}{\pola}$ can be placed in different parts of the network policy to model quantities associated with different components (in the above links and switches). We also emphasize that the choice of the semiring determines the interplay of weighting $\odot$ and choice $\ADDN$, which influences whether we are modeling best- or worst-case behavior. Recall that switch $\switcha_1$ may \emph{choose} between forwarding the packet via port $3$ or $4$ if the destination is $\hosta_2$ (modeled by $\ADDN$ in $\pola_1$ from \eqref{eq:enc1} on page \pageref{eq:enc1}). Regarding, e.g., latencies it is thus natural to distinguish between the best- and the worst-case latency of a packet. Which of these cases we actually model depends on the semiring: The Tropical semiring resolves choices in a latency-minimizing manner (since $\ADDN$ is interpreted as a minimum) and the Arctic semiring resolves them in a maximizing manner (since $\ADDN$ maximizes).

\subsection{From Modeling to Verification}
Our first goal was to design a language that inherits the modeling aspects from classic \netkat while enabling to model quantitative aspects in a generic and natural way. Our ultimate goal, however, is to obtain \emph{effective procedures to fully automate quantitative reasoning about networks}.

Developing these effective procedures is challenging. In \Cref{sec:syntax-semantics}, we will present a semiring-valued denotational semantics 
$
	\sem{\pola} \colon \packets \to (\histories \to \semi)
$ 
where $\packets$ denotes the finite set of packets, $\histories$ denotes the countably infinite set of histories (think: \enquote{traces} of packets) and $\semi$ denotes the chosen semiring. Intuitively, $\sem{\pola}(\pkt)(\hista)$ is the weight (e.g., worst-case latency) $\pola$ associates with the history $\hista$ on input packet $\pkt$. While the denotational semantics provides us with a ground truth for assigning meanings to (weighted) policies, it is not immediately amenable to automation---\wnetkat features unbounded iteration, thus denotationally it does not provide a finitary executable description.

Therefore, in the second part of our paper, we develop the novel notion of (finite-state) \wnetkat Automata (\WNKA, for short). The idea is to \emph{compile}---in an algorithmic manner---every \wnetkat policy $\pola$ to a \WNKA $\wnka_\pola$, which accepts a \emph{weighted language over guarded strings}, i.e.,
\[
	\sem{\mathcal{A}_\pola} \colon \GS \to \semi,\quad \text{where the set of guarded strings is $\GS \cong \packets \cdot (\packets \cdot \DUP)^* \cdot \packets$}~.
\]
Guarded strings represent complete packet traces;
intuitively, we can think of a guarded string $\gsA$ as the concatenation of an input packet $\pkt$ and an output history $\hista$, i.e., morally $\gsA = \pkt \cdot \hista$, so that 
\[
	\underbrace{\sem{\wnka_\pola}(\pkt \cdot \hista)}_{\text{operational semantics enabling effective reasoning}} 
	\eeq \qquad 
	\underbrace{\vphantom{\wnka_\pola}\sem{\pola}(\pkt)(\hista)}_{\text{denotational semantics}}~.
\]
The $\mathsf{WNKA}$ $\mathcal{A}_\pola$ \emph{does} provide us with the finitary description required for the algorithmic verification of quantitative network aspects. To produce these automata models we have devised a Thompson-like construction specific to \wnetkat (\Cref{tab:thompson}). Like the classic Thompson construction for regular languages and NFAs, it operates recursively on the structure of a policy. However, as we will see in \Cref{sec:wnetkat-automata}, this construction is far from being a trivial generalization as we cannot rely in  $\varepsilon$-transitions when composing sub-automata (obtained recursively) and we must therefore employ a rather complex on-the-fly epsilon-elimination procedure in a weighted setting. The construction is further complicated by \wnetkat's so-called \emph{carry-on packet semantics}: unlike traditional regular expressions, \netkat is \emph{stateful}, and the output packet of a transition is carried on to the next state. In \wnetkat this carry-on packet \emph{additionally} incurs a weight that needs to be accounted for in the Thompson construction. This carry-on packet is also the reason why we cannot directly use classical weighted automata for the operational semantics and need to introduce a new automaton model. Let us now consider the algorithmic problems $\mathsf{WNKA}$s enable us to tackle effectively.

\subsection{Automatic Reasoning in \wnetkat}
\begin{figure}[t]
\centering
\begin{tikzpicture}[
    >=stealth,
    node distance=5cm,
    every node/.style={font=\small, align=center},
    box/.style={
        draw, rounded corners, minimum width=2cm, minimum height=1cm, inner sep=2pt
    }
]

\coordinate (left) at (0,0.65);
\coordinate (midleft) at (3.5,0.65);
\coordinate (midright) at (8,0.65);
\coordinate (right) at (11.5,0.65);

\node (net) at (left) {Network};
\node[box] (pol) at (midleft) {$\wnetkat$ \\[2pt]  policy };
\node[box] (wnka) at (midright) {$\wnetkat$  \\[2pt]  automaton
};
\node[box] (verif) at (right) {Decision \\[2pt] Procedures};

\draw[->, thick]
    (net) -- node[above]{encoded}
    node[below]{as}
    (pol);

\draw[->, thick]
    (pol) --
    node[above]{compile}
    node[below]{(\Cref{thm:sem-comp})}
    (wnka);

\draw[->, thick]
    (wnka) --
    (verif);

\end{tikzpicture}
\caption{Quantitative verification pipeline of \wnetkat instantiated with a semiring $\semi$.
    }
\label{fig:pipeline}
\end{figure}

Our verification pipeline is depicted in \Cref{fig:pipeline}. With $\mathsf{WNKA}$s, we tackle the following problems:
\begin{enumerate}
	\item \textbf{${\wta}$-Safety}: Do \emph{{all}}---out of possibly infinitely many---traces through the network satisfy a given \emph{{u}pp{er bound}} $\wta$ on, e.g., latency or the overall probability of failure?
	\item \textbf{$\wta$-Reachability}: Does \emph{there exist} a trace through the network satisfying a given  \emph{lower bound} $\wta$ on, e.g., a confidentiality measure or the probability of successfully transmitting a packet?
    \item \textbf{Computing Weights}: Given an input packet $\pkt$ and a history $\hista$, what is the weight $\sem{\pola}(\pkt)(\hista)$ policy $\pola$ assigns to $\hista$ on $\pkt$?
\end{enumerate}
Our corresponding decision procedures are as generic as $\wnetkat$'s modeling capabilities:
In \Cref{sec:decidability}, we provide sufficient conditions on the semiring in order for the above problems to be decidable, and we provide corresponding generic decision procedures based on our $\mathsf{WNKA}$s. The rest of this section is devoted to examples for the above problems.

\subsubsection{Deciding Safety}
The verification question of $\wta$-safety naturally arises when upper-bounding worst-case quantities associated with a network:
\begin{example}
	\label{ex:r-safe}
	We can decide that all traffic in the network from \Cref{fig:weighted-topo}
	has a latency of at most $5\millis$.  For that, we encode the network in \wnetkat (Arctic semiring) and check whether the resulting policy $\pola$ satisfies $5$-safety: For all input packets $\pkt$ and all histories $\hista$, we have $\sem{\pola}(\pkt)(\hista) \leq 5$. 
\end{example}
Moreover, if the safety property is violated, our decision procedure provides
a \emph{witness}: a trace with a weight greater than the safety threshold $\wta$. For
example, suppose a network provider was considering adding the retry \textcolor{cyclepink}{link} (dotted line) to
\Cref{fig:weighted-topo}. In this case, the network would no longer be $5$-safe
as a packet might be repeatedly forwarded back and forth between switches
$\switcha_3$ and $\switcha_2$ (accumulating an unbounded latency). Our decision
procedure identifies this updated network as violating the safety property and
provides a witness including the new link.

\subsubsection{Deciding Reachability}
The question of $\wta$-reachability is more natural to ask in settings that model
\emph{best-case} analyses, such as modeling reliability with the Viterbi
semiring. In this setting, $\wta$-reachability corresponds to finding reliable paths
between nodes.

\begin{example}
	\label{ex:r-reachable}
	We verify, for the network from \Cref{fig:weighted-topo}, that host $\hosta_2$ is reachable from $\hosta_1$ with a
	reliability of at least $90\%$ by deciding if the policy
	\[
	\EQ{\fieldsw}{\hosta_1} \SEQN \mathit{net} \SEQN \EQ{\fieldsw}{\hosta_2}
	\]
	is $0.9$-reachable (in the setting of \wnetkat instantiated with the
	Viterbi semiring).
	If the property is satisfied, we provide a sample trace, i.e., an input
	packet and output history such that the associated weight is
	$\geq 0.9$. In this particular instance, the property is satisfied and the
	sample trace corresponds to the network path
	$\switcha_1 \edge \switcha_3 \edge \switcha_2$ with
	a $91.26\%$ reliability.
\end{example}

While  the network in \Cref{fig:weighted-topo} is simple, we can reason about much more complicated
networks in \wnetkat. In \Cref{sec:case-studies}, we model Internet2's Abilene
backbone network in \wnetkat. Abilene features several \emph{nodes} across
major cities in the United States; we use the network to showcase the
verification of quantitative network behavior in a real-world setting.

In the rest of the paper, we make precise the notions covered throughout this
section and then tie everything back to these examples of quantitative
network verification with a case study over the Abilene network. We first provide a formal
definition of the syntax and semantics of \wnetkat (\Cref{sec:syntax-semantics}),
followed by a language model (\Cref{sec:lang-model}) and an
operational semantics based on \wnetkat automata
(\Cref{sec:wnetkat-automata}). Finally, we describe the verification of
$\wta$-safety and $\wta$-reachability through decision procedures at the level
of \wnetkat automata (\Cref{sec:decidability}), and apply them in the real-world setting of
Internet2's Abilene backbone network (\Cref{sec:case-studies}).

\begin{figure}[t]
\renewcommand{\arraystretch}{1.5}
\begin{tabularx}{\linewidth}{>{\raggedright\arraybackslash}X >{\raggedright\arraybackslash}X}
\rowcolor{lightgray}
A \emph{semiring} is a structure $\semi \eeq (\semidom,\, \semiadd,\, \semimul, \, \addid,\, \mulid)$,
	where $\semidom$ is a set equipped with two binary operations $\semiadd,\semimul\colon\semidom\times\semidom \to \semidom$, and constants $\addid,\mulid\in \semidom$ satisfying:
	\begin{enumerate}[leftmargin=*]
		\item $(\semidom,\, \semiadd,\, \addid)$ is a commutative monoid,
		\item $(\semidom,\, \semimul,\, \mulid)$ is a monoid,
		\item \emph{multiplication} distributes over \emph{addition}.
		\item multiplying with $\addid$ is \emph{annihilating}.
		\end{enumerate}
&
An \emph{$\omega$-continuous semiring} is a structure $(\semi, \semiord)$:
\begin{enumerate}[leftmargin=*]
	\item $\semi$ is a semiring,
	\item $(\semidom,\, \semiord)$ is an $\omega$-complete partial order,
	\item $\semiord$ is positive, i.e., $\addid$ is least element of $\semiord$,
	\item both $\semiadd$ and $\semimul$ are $\omega$-continuous,
	\item $\semi$ admits countable sums, defined as:
\end{enumerate}
\begin{center}
$\displaystyle \semisum {i\in\nats} \semia_i \eeq \underbrace{\cposup[n\in\nats] \semisumtop{i=0}{n} \semia_i}_{\mathclap{\text{supremum of partial sums}}}$
\end{center}

\medskip
\end{tabularx} 
\caption{Definition of ($\omega$-continuous) semirings. Semirings offer an
algebraic basis to formalize weights and relevant operations, whereas
$\omega$-continuity captures the existence of adequate countable sums,
essential for the semantics of iteration.
\ifappendix
All details are provided in \appendixref[appendix:monoids-semirings].
\fi
}\label{fig:semiring}
\end{figure}

\section{\wnetkat: Syntax and Semantics}
\label{sec:syntax-semantics}
\wnetkat is parameterized by an \emph{$\omega$-continuous semiring}, a class of semirings admitting \emph{countably infinite} sums (see \Cref{fig:semiring}). We restrict to $\omega$-continuous semirings as unbounded iteration $\ITER{(-)}$ is naturally captured by an infinite choice---the semantics of which then needs to make use of these countable sums to be well-defined. We discuss this further in \Cref{sec:semantics}.

We fix an $\omega$-continuous semiring $\semi$ throughout this section and the rest of the paper. To emphasize the instantiation with a given semiring $\semi$, we sometimes refer to \swnetkat (and write just \wnetkat when the semiring is clear from context).

\subsection{Syntax}
\label{sec:syntax}
The syntax of \wnetkat is shown in \Cref{fig:syntax-semantics} (left).
 We fix a \emph{finite} set of
\emph{(packet header) values} $\values$ that each of the finitely many
\emph{fields} $\fields$ can take. A \emph{packet} $\pkt$ in the set
$\packets$ is thus a finite function of type $\fields \to \values$,
assigning a value to each field. We usually write $\pkt.\fielda$
instead of $\pkt(\fielda)$ and denote packets in record notation
$\{\fielda_1=\vala_1, \dots , \fielda_k = \vala_k\}$. A \emph{(packet) history}
$\hista = \hcons{\pkt}{\hbar}$ is a \emph{non-empty} list of packets, where
$\pkt$ is the \emph{head packet} and $\hbar$ is the (possibly empty) \emph{tail}.
We sometimes abuse notation, and write $\hcons{\pkt}{\hista}$
instead of $\hcons{\pkt}{\hbar}$.
Finally, elements $\wta$ of the semiring $\semi$ are called \emph{weights}.

\noindent\emph{Predicates} $\testa$ are Boolean combinations of $\FALSE$, $\TRUE$, and \emph{tests} of the form $\EQ{\fielda}{\vala}$. They act as filters: If  the current packet does not satisfy $\testa$, the packet is dropped.
 The \emph{modification} $\ASSN{\fielda}{\vala}$ assigns the value $\vala$ to the field $\fielda$ in the current packet. \emph{Sequential composition} $\SEQ{\pola}{\polb}$ first executes $\pola$ and then executes $\polb$. \emph{Weighting} $\WEIGH{\wta}{\pola}$ weights the execution of $\pola$ by the semiring element $\wta$. The \emph{choice} $\ADD{\pola}{\polb}$ executes either $\pola$ or $\polb$. The primitive $\DUP$ can be understood as a logging command for keeping track of a packet's trajectory through a network.  \emph{Iteration} $\ITER{\pola}$ is, intuitively, a countably infinite choice between terminating or keeping iterating $\pola$, i.e., schematically, $\ITER{\pola}$ is equivalent to
\[
	\ADD{\SKIP\quad}{\quad\ADD{\pola\quad}{\quad\ADD{\SEQ{\pola}{\pola}\quad}{\quad\ADD{\SEQ{\SEQ{\pola}{\pola}}{\pola}\quad}{\quad\ldots} }}}~,
\]
where we write $\SKIP$ ($\DROP$) instead of $\TRUE$ ($\FALSE$) to emphasize the behavior of the predicate when used as a policy. Finally, we can also encode usual control-flow structures in \wnetkat:
\[
    \begin{array}{rcl}
        \SKIP &\triangleq& \TRUE \\
        \DROP &\triangleq& \FALSE \\
    \end{array}
    \qquad\qquad\qquad
    \begin{array}{rcl}
		\IF[\testa]{\pola}{\polb} &\triangleq& 
		\ADD{\SEQ{\testa}{\pola}}{\SEQ{\NOT{\testa}}{\polb}} \\
		\WHILE[\testa]{\pola} &\triangleq&
		\SEQ{\ITER{(\SEQ{\testa}{\pola})}}{\NOT{\testa}}
	\end{array}
\]
These encodings are justified by the semantics of \wnetkat, which we go over next.

\subsection{Semantics}
\label{sec:semantics}
\begin{figure*}[t]
	\begin{minipage}[t]{.535\textwidth}
		\textbf{Syntax}
		\[
		\begin{array}{r@{\ \ }r@{~}c@{~}l@{\ \ }l}
			\textrm{Values} & \values \ni \vala & ::=
			\mathrlap{\concreteval_1 \mid \ldots \mid \concreteval_n} & \\
			\textrm{Fields} & \fields\ni\fielda & ::=  & \mathrlap{\fielda_1 \mid \ldots \mid \fielda_k} \\
			\textrm{Packets} & \packets \ni \pkt,\pkta,\pktb,\pktc & ::= & \mathrlap{\{\fielda_1=\vala_1, \dots , \fielda_k = \vala_k\}} \\
            \textrm{Histories} & \histories \ni \hista& ::= & \mathrlap{\hcons{\pkt}{\hbar}}\\
                               & \hbar & ::= & \mathrlap{\hempty \mid \hcons{\pkt}{\hbar}} \\
			\textrm{Weights} & \semi \ni \wta,\wtb \\
			\textrm{Predicates} & \tests \ni \testa, \testb &
			::= & \FALSE                       & \textit{False/Drop} \\
			& & \mid & \TRUE                  & \textit{True/Skip} \\
			& & \mid & \EQ{\fielda}{\vala}       & \textit{Test} \\
			& & \mid & \OR{\testa}{\testb} & \textit{Disjunction} \\
			& & \mid & \AND{\testa}{\testb}   & \textit{Conjunction} \\
			& & \mid & \NOT{\testa}           & \textit{Negation} \\

			\textrm{Policies} & \pols \ni \pola,\polb &\
			::= & \testa                         & \textit{Filter} \\
			& & \mid & \ASSN{\fielda}{\vala}      & \textit{Modification} \\
			& & \mid & \DUP                   & \textit{Duplication} \\
			& & \mid & \SEQ{\pola}{\polb}     & \textit{Seq.\ Comp.} \\
			& & \mid & \WEIGH{\wta}{\pola}        & \textit{Weighting} \\
			& & \mid & \ADD{\pola}{\polb}   & \textit{Choice} \\
			& & \mid & \ITER{\pola}           & \textit{Iteration} \\
		\end{array}\]
	\end{minipage}\hfill\vrule\hfill\begin{minipage}[t]{.46\textwidth}
		\ \ \textbf{Semantics}
		\[
		\def\arraystretch{1.2}
		\begin{array}{r@{~~}c@{~~}l}
            \sem{\testa}_{\tests} &\colon& \packets \to \boolSemiring \\
            \sem{ \FALSE }_{\tests} (\pkt) & {}={} & 0 \\
            \sem{ \TRUE }_{\tests} (\pkt) & {}={}  & 1 \\
            \sem{ \EQ{\fielda}{\vala} }_{\tests} (\pkt) & {}={} &
                \left[ \pkt.\fielda = \vala \right]  \\
            \sem{\OR{\testa}{\testb}}_{\tests}(\pkt) & {}={} &
                \sem{\testa}_{\tests}(\pkt) \vee \sem{\testb}_{\tests}(\pkt) \\
            \sem{ \AND{\testa}{\testb} }_{\tests} (\pkt) &=&
                \sem{\testa}_{\tests}(\pkt) \wedge \sem{\testb}_{\tests}(\pkt) \\
            \sem{ \NOT{\testa} }_{\tests} (\pkt) & {}={} &
                \left[ \sem{\testa}_{\tests}(\pkt) = 0 \right ]
            \medskip \\

			\sem{\pola} &\colon& {\histories} \to \weightings{\histories} \\
            \sem{ \testa  } (\hcons{\pkt}{\hbar}) &{}={}&
                \left[ \sem{\testa}_{\tests}(\pkt) = 1 \right] \semimul \eta(\hcons{\pkt}{\hbar}) \\
			\sem{ \ASSN{\fielda}{\vala}  } (\hcons{\pkt}{\hbar}) &{}={}&
			\unit\left(\hcons{\updatepkt{\pkt}{\fielda}{\vala}}{\hbar\right}) \\
            \sem{ \DUP } (\hcons{\pkt}{\hbar}) & {}={} &
			\unit(\hcons{\hcons{\pkt}{\pkt}}{\hbar}) \\
			\sem{ \SEQ{\pola}{\polb} } (\hista) &=&
			\sem{ \pola } (\hista) \bind \sem{ \polb } \\
			\sem{ \WEIGH{\wta}{\pola} } (\hista) & {}={} &
			\wta \semimul \sem{ \pola } (\hista) \\
			\sem{ \ADD{\pola}{\polb} } (\hista) & {}={}&
			\sem{ \pola }(\hista) \semiadd \sem{ \polb } (\hista) \\
			\sem{ \ITER{\pola} } (\hista) & {}= {} &
			\semisum{n \in \nats} \sem{ \NFOLD{\pola} } (\hista) \\
			\multicolumn{3}{l}
			{\text{where }
				~ $\NFOLD[0]{\pola} = \SKIP$ ~
				\text{ and } ~
				$\NFOLD[n + 1]{\pola} = \SEQ{\pola}{\NFOLD{\pola}}$}
		\end{array}
		\]
	\end{minipage}
	\hrule
	\caption{Syntax and Semantics of \swnetkat, where $\semi = (\semidom,\, \semiadd,\, \semimul,\, \addid,\, \mulid)$ is an $\omega$-continuous semiring. We assume that the operators bind stronger in the order: $\NOT{}{}$, $\AND{}{}$, 	$\OR{}{}$, $\SEQ{}{}$, $\WEIGH{}{}$, $\ADD{}{}$. }
	\label{fig:syntax-semantics}
\end{figure*}

The semantics of \wnetkat is shown in \Cref{fig:syntax-semantics} (right). In what follows, we first introduce so-called \emph{weightings}---the central semantic objects. We then detail the semantics of the individual constructs and state various desirable properties such as a fixed point characterization of iteration.

\subsubsection{Weightings}
 Intuitively, a policy $\pola$ takes as input a history $\hista$ and produces a set of output histories $\hista'$, where \emph{each output history is weighted by some element from $\semi$}. To formalize an appropriate semantic domain, we introduce the following central objects:
\begin{definition}[Weightings]
	We define \emph{weightings over  a set $X$} as the tuple
			$(\weightings{X}, \, \unit, \, \bind)$,
	where:
	\begin{enumerate}
		\item $\weightings{X} \triangleq  \{ \wtinga \colon X \rightarrow \semi ~|~ \supp(\wtinga) ~	\text{is countable}   \}$ with $\supp(\wtinga) \triangleq \{ x \in X ~|~ \wtinga(x) \neq \addid\}$,
        \item $\unit \colon X \to \weightings{X}$ is the \emph{unit}, defined as $\unit(x) \triangleq \mylam{y} [x=y]$.\footnote{Here, and throughout the paper, we use Iverson bracket notation: $[P] = \mulid$ if the proposition $P$ holds and $[P] = \addid$ otherwise.}
		\item $\bind \colon \weightings{X} \to (X \to \weightings{X}) \to \weightings{X}$ is the \emph{bind}, defined as (using infix notation)
		\[
			\wtinga \bind f \quad{}\triangleq{}\quad
			\mylam{y} {\semisum{x \in\supp(\wtinga)} \wtinga(x) \semimul f(x)(y)}
		\]
	\end{enumerate}
	This sum is well-defined as $\supp(\wtinga)$ is countable and $\omega$-continuous semirings admit countable sums.
\end{definition}
We often call {weightings} the elements of $\weightings{X}$ and denote them by $\wtinga,\wtingb$, and variations thereof. $\supp(\wtinga)$ is called the \emph{support} of $\wtinga$.
It is easy to check that the monad axioms
hold for the operations as defined above, and therefore $(\weightings{X}, \, \unit, \, \bind)$ forms a monad.

We lift the operations and the order of the semiring $\semi$ pointwise to
$\weightings{X}$, i.e., for $\wta\in\semi$, $\wtinga_1,\wtinga_2 \in \weightings{X}$, and
$\{\wtinga_\indexx\}_{\indexx\in\indexset} \colon \indexset \to\weightings{X}$ a family in
$\weightings{X}$ indexed by $\indexset$, we define:
\[
\begin{array}{rcl}
	\wta \semimul \wtinga &\triangleq& \mylam{x} \wta \semimul \wtinga(x) \\
	\wtinga \semimul \wta &\triangleq& \mylam{x} \wtinga(x) \semimul \wta \\
	\wtinga_1 \semiord \wtinga_2 &\text{iff}&
	\forall x \colon
	\wtinga_1(x) \semiord \wtinga_2(x) \\
\end{array}
\qquad
\begin{array}{rcl}
	\addid &\triangleq& \mylam{x} \addid \\
	\wtinga_1 \semiadd \wtinga_2 &\triangleq&
	\mylam{x} \wtinga_1(x) \semiadd \wtinga_2(x) \\
	\semisum{i\in I} \wtinga_i &\triangleq& \mylam{x} \semisum{i\in I} \wtinga_i(x) \\
\end{array}
\]
Finally, as weightings and their associated operations are lifted from
$\omega$-continuous semirings, they satisfy many expected commutativity, associativity, and distributivity properties
\ifappendix
(see \Cref{sec:weightings-props})
\fi.

\subsubsection{The Denotational Semantics of Policies}
Intuitively, a policy $\pola$ takes as input a history $\hista$ and produces a set of output histories $\hista'$, each output being weighted by some element from $\semi$. The notion of weightings formalizes this:  the semantics $\sem{\pola}$ of a policy $\pola$ is of type
 $\histories \to \weightings{\histories}$, i.e., each input history $\hista$ is mapped to a weighting $\sem{\pola}(\hista)$ of (output) histories, and $\sem{\pola}(\hista)(\hista')$ is the weight $\pola$ assigns to the output history $\hista'$ when executed on the input history $\hista$. The set of all histories produced by $\pola$ on input $\hista$ is $\supp(\sem{\pola}(\hista))$, i.e., all histories to which $\sem{\pola}(\hista)$ assigns a non-$\addid$ weight. It is useful to note that, operationally, only the \emph{head} of the input $\hista$ is relevant for the execution of $\pola$ in the sense that for all packets $\pkt$ and all histories $\hista,\hista'$, we have
 \[
 	\sem{\pola}(\hcons{\pkt}{\hempty})(\hista')
 	~{}={}~
 	\sem{\pola}(\hcons{\pkt}{\hista})
 	(\hcons{\hista'}{\hista})~.
 \]

Predicates, modification, and duplication produce at most one output history (behaving analogously
to \netkat~\cite{netkat}). We embed their semantics into \wnetkat via the unit $\unit$ of
$\weightings{\histories}$. Sequential composition, weighting, choice, and iteration yield \wnetkat's
generic capabilities for modeling \emph{quantitative} aspects of networks and require a more
involved treatment.

\smallskip \noindent\textbf{Predicates.} Recall that predicates $\testa$ are Boolean combinations of
$\FALSE$, $\TRUE$, and tests $\EQ{\fielda}{\vala}$. Intuitively, if the head packet $\pkt$ satisfies
$\testa$, then executing $\testa$ does not alter the input history---it is effectless in the sense
that it simply outputs the input history. Otherwise, i.e., if $\pkt$ does not satisfy $\testa$,
then $\pkt$ is dropped. Semantically, this is captured as follows:
$\sem{\testa}(\hcons{\pkt}{\hbar})(\hista') = [\text{$\pkt$ satisfies $\testa$ and
$\hcons{\pkt}{\hbar} = \hista'$}]$.

\smallskip \noindent\textbf{Modification.} The policy $\ASSN{\fielda}{\vala}$ sets the field $\fielda$ of the input history's head packet to $\vala$. To capture this, we use $\updatepkt{\pkt}{\fielda}{\vala}$ to denote the packet obtained from $\pkt$ by updating the value of the field $\fielda$ to the value $\vala$
and define $\sem{ \ASSN{\fielda}{\vala}  } (\hcons{\pkt}{\hbar})
	\eeq
	\unit\left(\hcons{\updatepkt{\pkt}{\fielda}{\vala}}{\hbar\right})$, which is $\mulid$ only when $\hista' =  \hcons{\updatepkt{\pkt}{\fielda}{\vala}}{\hbar}$.

\smallskip \noindent\textbf{Duplication. } $\DUP$ is intended to be a logging statement for keeping track of a packet's trajectory through a network. Its semantics makes this explicit: $\sem{ \DUP } (\hcons{\pkt}{\hbar}) = \unit(\hcons{\hcons{\pkt}{\pkt}}{\hbar})$.

\smallskip \noindent\textbf{Weighting and Choice. } All constructs considered so far produce only $\{\addid,\mulid\}$-valued weightings. Weighting and choice bring the capability of modeling \emph{quantitative} aspects of networks to \wnetkat. Let us consider an example to illustrate how these constructs act in concert.
\begin{example}
    Let $\semi = {(\nats \cup \{\infty,-\infty\},\, {\max_{\nats}},\, {+_{\nats}},\, {-\infty},\, 0)}$ be the Arctic semiring and consider the simple policies
	\[
		\pola_1 \eeq \WEIGH{3}{\ASSN{\fielda}{1}}
		\qquad\text{and}\qquad
		\pola_2 \eeq \WEIGH{5}{\ASSN{\fielda}{2}}~.
	\]
	We have
	$
	\sem{\pola_1}(\hcons{\pkt}{\hbar})(\hista') = 3 +_\nats \sem{{\ASSN{\fielda}{1}}}(\hcons{\pkt}{\hbar})(\hista')
	$, where $\sem{{\ASSN{\fielda}{1}}}(\hcons{\pkt}{\hbar})(\hista') = 0$ if $\hista' =  \hcons{\updatepkt{\pkt}{\fielda}{1}}{\hbar}$ and $ {-\infty}$ otherwise. Hence, $
	\sem{\pola_1}(\hcons{\pkt}{\hbar})(\hista') = 3$  when $\hista' =  \hcons{\updatepkt{\pkt}{\fielda}{1}}{\hbar}$ enabling us to model that
	\begin{center}\emph{
		\enquote{Modifying the input packet by setting $\fielda$ to $1$ incurs a cost (latency) of $3$}
	}
	\end{center}
	and similarly for $\pola_2$. 	Let us now combine $\pola_1$ and $\pola_2$ via a \emph{choice}, i.e., let $\pola = \ADD{\pola_1}{\pola_2}$. We have
	\[
		\sem{\pola}(\hcons{\pkt}{\hbar})
		\eeq
		\textstyle \max_\nats(\sem{\pola_1}(\hcons{\pkt}{\hbar}), \sem{\pola_2}(\hcons{\pkt}{\hbar}))
		\eeq \mylam{\hcons{\pkt'}{\hbar'}}
		\begin{cases}
			3 & \text{if $\pkt' = \updatepkt{\pkt}{\fielda}{1}$ and $\hbar = \hbar'$} \\
			5 & \text{if $\pkt' = \updatepkt{\pkt}{\fielda}{2}$ and $\hbar = \hbar'$} \\
			-\infty & \text{otherwise}~.
		\end{cases}
	\]
	Here, because the assignments in each summand of $\pola$ are different the result of the \emph{semiring} addition $\max_\nats (\sem{\pola_1}(\hcons{\pkt}{\hbar}), \sem{\pola_2}(\hcons{\pkt}{\hbar}))$ will be of the shape $\max_\nats (n, -\infty)$ for $n \in \{3, 5\}$, producing two different output histories in the support.
	Now consider a small change in $\pola_2$ using \emph{the same output history}:
	\[
	\pola_1 \eeq \WEIGH{3}{\ASSN{\fielda}{1}}
	\qquad\text{and}\qquad
	\pola_2 \eeq \WEIGH{5}{\colorbox{headergray}{\ASSN{\fielda}{1}}}~.
	\]
	How does $\pola = \ADD{\pola_1}{\pola_2}$ behave now? We have $\sem{\pola}(\hcons{\pkt}{\hbar})(\hista)
	=
	\max_\nats ( \sem{\pola_1}(\hcons{\pkt}{\hbar})(\hista) , \sem{\pola_2}(\hcons{\pkt}{\hbar})(\hista) ) = \max_\nats (3, 5)$ (or vice-versa),
    which is $5$ if $\hista = \hcons{\updatepkt{\pkt}{\fielda}{1}}{\hbar}$.
    i.e., the choice is resolved in a \emph{cost maximizing manner}. This emphasizes how the semiring operations determine the interplay of weighting and choice. If, e.g., instead of the Arctic semiring, we were to choose the \emph{Tropical} semiring $\semi ={(\nats\cup\{\infty\},\, {\min_\nats},\, {+_\nats},\, {\infty},\, 0)}$, the choice is resolved in a \emph{cost minimizing manner} and $\sem{\pola}(\hcons{\pkt}{\hbar})(\hista)=3$.

	\end{example}
\smallskip \noindent\textbf{Sequential Composition.}
The weightings produced by sequentially composing policies is naturally captured by the bind $\bind$ operation of the monad $\weightings{\histories}$ of weightings. We have
\[
	\sem{\SEQ{\pola}{\polb}}(\hista)(\hista')
	\eeq
	(\sem{ \pola } (\hista) \bind \sem{ \polb })(\hista')
	\eeq
	\semisum{\hista'' \in\supp(\sem{\pola}(\hista))} \sem{\pola}(\hista)(\hista'') \semimul \sem{\polb}(\hista'')(\hista')~.
\]
This is intuitive: The weight $\SEQ{\pola}{\polb}$ assigns to $\hista'$ on input $\hista$ is obtained by summing over all \emph{intermediate outputs} $\hista''$ that $\pola$ produces on $\hista$. For each such $\hista''$, we multiply the weight $\pola$ assigns to $\hista''$ on input $\hista$ by the weight weight $\polb$ assigns to $\hista'$ on input $\hista''$, which captures the sequential behavior of $\SEQ{\pola}{\polb}$.
Note that the semantics of conjunction of two predicates coincide with their
sequencing, i.e., $\sem{\AND{\testa}{\testb}}_{\tests} = \sem{\SEQ{\testa}{\testb}}$; we sometimes
use the two combinators interchangeably for predicates.

\smallskip \noindent\textbf{Iteration.}  Recall that $\ITER{\pola}$ is, intuitively, a countably infinite choice between terminating or keeping iterating $\pola$. Semantically, this behavior is captured by the countable sum
\[
	\sem{\ITER{\pola}}(\hista)
	\eeq
	\semisum{n \in \nats} \sem{ \NFOLD{\pola} } (\hista)
	\eeq
	\ADD{\sem{\SKIP}(\hista)}
	{
		\ADD{\sem{\pola}(\hista)}
		{\ADD{
			\sem{\SEQ{\pola}{\pola}}(\hista)
		}{
		\ldots
		}
		}
	}~.
\]

As a sanity check for this definition,  we establish the usual least fixed characterization
from \kat and \netkat\cite{netkat}, i.e., $\sem{\ITER{\pola}} $ is the least solution of
$
	\sem{\ITER{\pola}} \eeq \sem{\ADD{\SKIP}{\SEQ{\pola}{\ITER{\pola}}}}
$
\ifappendix
(see \Cref{sec:iter-lfp})
\fi.

\section{Language Model}
\label{sec:lang-model}

In this section, we define the language model of \wnetkat, in which each policy $p$ is assigned a {\em weighting of guarded strings}, generalizing \netkat's languages of guarded strings.

\subsection{Reduced Syntax}

\begin{figure}[t]
\[
\begin{array}{rcl}
\multicolumn{3}{l}{\textbf{Complete Tests}}\\
\Test \ni   \ctest{\pkt} & \triangleq  & \AND{\AND{\EQ{\fielda_1}{\pkt.\fielda_1}}{\dots}}{\EQ{\fielda_k}{\pkt.\fielda_k}} \\
\multicolumn{3}{l}{\textbf{Complete Assignments}}\\
\Assign \ni  \cass{\pkt} \ \ & \triangleq  & \SEQ{\SEQ{\ASSN{\fielda_1}{\pkt.\fielda_1}}{\;\,\dots\,}}{\,\ASSN{\fielda_k}{\pkt.\fielda_k}}
\end{array}
\quad \vrule \quad
\begin{array}{rcl}
\multicolumn{3}{l}{  \textbf{Reduced Policies}}\\
\rpols \ni \pola, \polb	&    ::= & \ctest{\pkt}  \ |\ \cass{\pkt} \ |\  \DUP \ |\ \SEQ{\pola}{\polb} \\
      &|&  \WEIGH{\wta}{\pola} \ |\  \ADD{\pola}{\polb} \ |\  \ITER{\pola}
    \end{array}
\]
\caption{Reduced \wnetkat Syntax.}
\label{fig:reduced-syntax}
\end{figure}
First, we restrict all \wnetkat policies to a reduced syntax, see \Cref{fig:reduced-syntax}, without loss of expressivity. At the core of our reduced syntax are \emph{complete tests} and \emph{complete assignments}.
A \emph{complete test} is a conjunction of tests $\AND{\fielda_1 = \vala_1}{\AND{\dots}{\fielda_k = \vala_k}}$, covering all  $\fielda_i \in \fields$. In particular, note that this conjunction over all fields means that complete tests precisely match one and only one packet: $\pkt \triangleq \{ \fielda_1 = \vala_1, \dots, \fielda_k = \vala_k\}$. As such the complete test matching packet $\pkt$ is labelled $\ctest{\pkt}$, and the set of all complete tests denoted $\Test$. A complete test is often called an {\em atom} as complete tests are precisely the minimal nonzero elements of the Boolean algebra generated by basic tests $\fielda_i = \vala_i$.
Dually, a \emph{complete assignment} is an expression $\cass{\pkt} \triangleq \SEQ{\fielda_1 \gets \vala_1}{\SEQ{\dots}{\fielda_k \gets \vala_k}}$. We call $\Assign$ the set of all complete assignments. It is easy to see that there are isomorphisms between $\Assign$, $\packets$, and $\Test$. Hence, we often use simply $\pkt \in \packets$ to represent the respective complete test or assignment.

Note that every reduced policy $p\in\rpols$ is itself a standard \wnetkat policy (i.e., $\rpols \subset \pols$). Most importantly, every policy can be converted to a semantically equivalent \emph{reduced} policy.
Therefore, from now on, we \emph{assume w.l.o.g. that all policies are reduced}.

\subsection{Guarded Strings: Basic Notation and Operations}
Guarded strings appeared originally in the work of Kaplan~\cite{Kaplan69} and later played a prominent role in the work of Kozen~\cite{Kozen97} to reason about program (trace) equivalence.  Formally, guarded strings are elements of the set $\GS \subset \rpols$, defined inductively as:
\[
    \GS \triangleq \bigcup_{i \in \nats} \GS^i \qquad
    \GS^0 \triangleq \{ \ctest{\pkta} \SEQN \cass{\pktb} \mid \pkta,\pktb \in \packets \} \qquad
    \GS^{i+1} \triangleq \{ \gsA \SEQN \DUP \SEQN \cass{\pktc} \mid \gsA \in \GS^i, \pktc \in \packets \}
\]
Guarded strings encompass the minimal nonzero elements of the standard model of \netkat and represent \emph{complete packet traces}. This is analogous to the language models of KA(T) in which expressions are interpreted as regular sets of minimal nonzero (join-irreducible) terms.
For convenience, we will exploit the above mentioned isomorphisms of complete tests and assignments, and represent guarded strings as regular strings over packets ($\GS \cong \packets \cdot (\packets \cdot \DUP)^* \cdot \packets$).
A crucial difference, however, between guarded and regular strings is that the concatenation operation captures the consistency of state between two sequentially composed programs. Given two guarded strings $\pkta \gsA \pktb, \pktc \gsB \pktd \in \GS$ their \emph{guarded concatenation} (see \Cref{fig:language} (right)) $\pkta \gsA \pktb \diamond \pktc \gsB \pktd$ is a \emph{partial} operation. The final state $\pktb$ of the first string has to be compatible with the initial state $\pktc$ of the second string, that is $\pktb = \pktc$ for the concatenation to be defined: $\pkta \gsA \pktb \diamond \pktc \gsB \pktd  = \pkta \gsA \gsB \pktd$. Note that guarded concatenation can be lifted to a {\em total} operation over weightings of guarded strings (see \Cref{fig:language}).

\subsection{Language Model for \wnetkat}
We now have all the ingredients to define the language model of \wnetkat as a class of functions $G \colon
\pols \rightarrow \weightings{\GS}$. We remark that \netkat models were given by regular sets of guarded strings, or equivalently,
functions $G \colon \pols \rightarrow 2^\GS$ so we are generalizing the Boolean semiring underlying sets to an arbitrary $\omega$-continuous semiring (note that $2^\GS \cong \weightings[\boolSemiring]{\GS}$).
\begin{figure}[t]
\begin{tabular}{l || l}
		\begin{tabular}{l  l}
		\toprule
		\toprule
            $\boldsymbol{\pola \in \pols}$ & $\boldsymbol{G(\pola)(\gsA) \in \semi}$ \\
		\midrule
		$\ctest{\pkt}$ \quad \quad& $[\gsA = \pkt \, \pkt]$ \\[1ex]
		$\cass{\pkt}$ \quad \quad& $[\exists \pkta \in \packets.\; \gsA = \pkta \, \pkt]$ \\[1ex]
		$\DUP$ \quad \quad& $[\exists \pkta \in \packets.\; \gsA = \pkta \, \pkta \, \DUP \, \pkta]$ \\[1ex]
		$\ADD{\pola_1}{\pola_2}$ \quad \quad& $G(\pola_1)(\gsA) \semiadd G(\pola_2)(\gsA)$ \\[1ex]
		$\SEQ{\pola_1}{\pola_2}$ \quad \quad& $(G(\pola_1) \diamond G(\pola_2))(x)$ \\[1ex]
		$\WEIGH{\wta}{\pola_1}$ \quad \quad& $\wta \semimul G(\pola_1)(\gsA)$ \\[1ex]
                $\ITER{\pola_1}$ \quad \quad& $\semisum{\vala \in \nats} G(\NFOLD{\pola_1})(\gsA)$ \\[1ex]
		\bottomrule
		\bottomrule
	\end{tabular}
	&
	\begin{tabular}{l}
		\textbf{Guarded Strings}\\[.5ex]
	$\GS \cong \packets \cdot \ITER{(\packets \cdot \DUP)}\cdot {\packets}$\\
		\bottomrule
	\textbf{Guarded Concatenation: }
	$
\diamond \colon {\GS}^2 \rightharpoonup {\GS}$\\[.5ex]
$
\pkta \gsA \pktb \diamond \pktc \gsB \pktd = \begin{cases} \pkta \gsA \gsB \pktd  & \pktb = \pktc \\
\text{undefined} &  \pktb \neq \pktc \\
\end{cases}
$
\\
	\bottomrule
\textbf{Lifted Guarded Concatenation}\\[1ex]
$
\diamond \colon \weightings{\GS}^2 \rightarrow \weightings{\GS}$\\\\
$(m_1 \diamond m_2)(x) \triangleq \semisum{\scriptsize\begin{array}{c} x_i \in \supp(m_i)\\ x = x_1 \diamond x_2\end{array}} m_1(x_1) \semimul m_2(x_2)
$
	\end{tabular}
	\end{tabular}
	\caption{Language Model.}
    \label{fig:language}
\end{figure}
\begin{definition}[Language Model]\label{def:lang}
	Let $\pola$ be a \wnetkat policy. We define the \emph{weighted language} $G(\pola)$ of $\pola$ as
    the weighting $G(\pola) \colon \weightings{\GS}$ given by the table in \Cref{fig:language} (left), where we write $G(\pola)(\gsA)$ for the weight attributed to $\gsA \in \GS$ by $\pola$.
\end{definition}

\begin{rlemma}{thm:deno-lang}[Denotational--Language Correspondence]
    For all $\hista \in \histories$ and $\pola \in \pols$:
    \[\llb \pola \rrb (\hista) \eeq \semisum{\gsA \in \supp(G(\pola))} G(\pola)(\gsA) \semimul \llb \gsA \rrb (\hista)\]
\end{rlemma}
The reader familiar with formal power series~\cite{BerstelReutenauer2010} might notice the similarity between the inductive definition $G(p)$ and {\em rational functions}. There is a crucial difference with the presence of $\diamond$, but we will show that, for $\omega$-continuous semirings, such functions can be recognized by a special finite  automaton (\Cref{sec:wnetkat-automata}). Furthermore, the language model enables reasoning about the behavior of \wnetkat policies on \emph{complete traces} of the network, rather than separate input/output histories.

\begin{example}
    \label{ex:language1}
Consider the \wnetkat policy $\ITER{(\WEIGH{3}{\DUP})}$ over the semiring of the extended naturals $\mathbb{N}^\infty = (\mathbb{N} \cup \{\infty\}, +, \cdot, 0, 1)$.
    Then over a single $\pkta \in \packets$, $G(\pola)(\pkta \, \NFOLD{(\pkta \, \DUP)} \, \pkta) = 3^n$.
\end{example}

Note how in \Cref{ex:language1} the star allows for the unbounded sequencing of zero or more $\DUP$s, each accruing a weight multiplier of $3$. This is because, if we intuitively unfold our example policy, we would see that the non-zero output weight of the guarded string $ \gsA = \pkta \, \NFOLD{(\pkta \, \DUP)} \, \pkta$ is produced entirely by the subexpression $\NFOLD{(\WEIGH{3}{\DUP})}$:
\[
    \underbrace{\ITER{(\WEIGH{3}{\DUP})}}_{\mathclap{\gsA \;\mapsto\; 3^n}} \equiv \SUM{n \in \nats} \NFOLD{(\WEIGH{3}{\DUP})} \equiv \underbrace{\SKIP}_{\mathclap{\gsA \;\mapsto\; 0}} \ADDN \underbrace{(\WEIGH{3}{\DUP})}_{\mathclap{\gsA \;\mapsto\; 0}} \ADDN \underbrace{\NFOLD[2]{(\WEIGH{3}{\DUP})}}_{\mathclap{\gsA \;\mapsto\; 0}} \ADDN \cdots \ADDN \underbrace{\NFOLD{(\WEIGH{3}{\DUP})}}_{\mathclap{\gsA \;\mapsto\; 3^n}} \ADDN \cdots
\]
This is by no accident: unlike common string concatenation, one cannot construct longer guarded strings through the concatenation of $\DUP$-free strings. For instance: $\pkta \, \pkta \diamond \pkta \, \pkta = \pkta \, \pkta$. It is the $\DUP$ construct that entirely determines the length of guarded strings, preventing their concatenation from collapsing the intermediate state. After all, the purpose of $\DUP$ from a denotational perspective is precisely to \emph{duplicate} the current packet, freezing it and producing a longer history as its output. By attributing a weight to $\DUP$, we are associating the weight (be it cost, latency, reliability, etc.) of taking another \emph{hop} in our network, thus allowing us to reason about its paths. We can, however, also reason about the overall input-output behavior of a network, without restricting to specific network paths. In that case, we consider only $\DUP$-free policies.

\begin{example}
    \label{ex:language2}
Consider the policy $\ITER{(\ADD{(\WEIGH{\{a\}}{\cass{\pkta}})}{(\WEIGH{\{b\}}{\cass{\pktb}})})}$ over complete assignments $\cass{\pkta}, \cass{\pktb}$, and the formal language semiring $(\mathcal{P}(\Sigma^*), \cup, \cdot, \emptyset, \{\varepsilon\})$, for $\Sigma = \{a, b\}$. We compute $G(\pola)(\gsA)$ for subexpressions $\pola$ of our policy. We consider only $\DUP$-free strings, as $\DUP$-free policies always assign $\addid$ to longer strings.
    \begin{table}[H]
        \begin{tabular}{r || c c c c}
		\toprule
		\toprule
                $\boldsymbol{x \in \GS}$ & $\boldsymbol{\WEIGH{\{a\}}{\cass{\pkta}}}$ & $\boldsymbol{\WEIGH{\{b\}}{\cass{\pktb}}}$ & $\ADD{(\boldsymbol{\WEIGH{\{a\}}{\cass{\pkta}}})}{(\boldsymbol{\WEIGH{\{b\}}{\cass{\pktb}}})}$ & $\ITER{(\ADD{(\boldsymbol{\WEIGH{\{a\}}{\cass{\pkta}}})}{(\boldsymbol{\WEIGH{\{b\}}{\cass{\pktb}}})})}$\\
		\midrule
		\multicolumn{1}{r||}{$\pkta \, \pkta$} & $\{a\}$ & $\varnothing$ & $\{a\}$ & $\{\varepsilon, wa \mid w \in \Sigma^* \}$ \\
                \multicolumn{1}{r||}{$\pkta \, \pktb$} & $\varnothing$ & $\{b\}$ & $\{b\}$ & $\{wb \mid w \in \Sigma^* \}$\\
                \multicolumn{1}{r||}{$\pktb \, \pkta$} & $\{a\}$ & $\varnothing$ & $\{a\}$ & $\{ wa \mid w \in \Sigma^* \}$\\
                \multicolumn{1}{r||}{$\pktb \, \pktb$} & $\varnothing$ & $\{b\}$ & $\{b\}$ & $\{\varepsilon, wb \mid w \in \Sigma^* \}$\\
		\bottomrule
		\bottomrule
	\end{tabular}
    \end{table}
\end{example}

Note that unlike our $\DUP$ example above, a $\DUP$-free expression does not constrain at all the summands $\NFOLD{\pola}$ that produce non-zero weights as part of the star computation. For $\pola \triangleq \ITER{(\ADD{(\WEIGH{\{a\}}{\cass{\pkta}})}{(\WEIGH{\{b\}}{\cass{\pktb}})})}$, and an example input $\DUP$-free guarded string $\gsA \triangleq \pkta \, \pkta$ we have that:
\[
    \underbrace{\ITER{\pola}}_{\mathclap{\gsA \;\mapsto\; \{\varepsilon, wa \mid w \in \Sigma^* \}}} \quad \equiv \quad \SUM{n \in \nats} \NFOLD{(\pola)} \quad \equiv \quad \underbrace{\SKIP}_{\mathclap{\gsA \;\mapsto\; \{ \varepsilon \}}} \enspace \ADDN \enspace \underbrace{\pola}_{\mathclap{\gsA \;\mapsto\; \{ a \}}} \enspace \ADDN \enspace \underbrace{\NFOLD[2]{\pola}}_{\mathclap{\gsA \;\mapsto\; \{ aa, ba \}}} \enspace \ADDN \enspace \cdots \enspace \ADDN \enspace \underbrace{\NFOLD{\pola}}_{\mathclap{\gsA \;\mapsto\; \{ wa \mid w \in \Sigma^n \}}} \enspace \ADDN \enspace \cdots
\]
One may find it unintuitive that the same $\DUP$-less string $\gsA \triangleq \pkta \, \pkta$ is assigned non-zero weight not only by $\SKIP$, but also by every positive n-th exponentiation of $\pola$. After all, the semantics of n-th iteration requires that the input guarded string be a representative trace of the ``n-times sequencing'' of $\pola$. However, note that the nuance of guarded concatenation (\Cref{fig:language}) allows for precisely this:
\begin{align*}
    G(\NFOLD[2]{\pola})(\pkta \, \pkta) \eeq G(\SEQ{\pola}{\pola})(\pkta \, \pkta) \eeq \big( G(\pola) \diamond G(\pola) \big) (\pkta \, \pkta) &\eeq G(\pola)(\pkta \, \pkta) \semimul G(\pola)(\pkta \, \pkta) \\
																							  &\;\semiadd\; G(\pola)(\pkta \, \pktb) \semimul G(\pola)(\pktb \, \pkta)
\end{align*}
Intuitively, this is because guarded concatenation of $\DUP$-free strings will always collapse any intermediate state back into a minimal, $\DUP$-free string:
\(
    \pkta \, \pkta \diamond \pkta \, \pkta \eeq \pkta \, \pkta \eeq \pkta \, \pktb \diamond \pktb \, \pkta.
\)
We can extend this behavior to as many concatenations of $\DUP$-free strings as needed to ``match'' a given $\NFOLD{\pola}$. This is crucial to model the input-output behavior of networks with cycles, where in the absence of trace information, one has no control over how many times a cycle is traversed from a given input-output packet pair.
As such, computing the exact semantics of star requires that we compute the semantics of an infinite amount of policies: namely, the n-th iterates of the policy, and then sum them. Any finite approximation will yield an incorrect result. As seen above, $G(\ITER{\pola})(\pkta \, \pkta)$ is an infinite language, but all its n-th iterates produce only finite languages. We achieve the exact computation of \wnetkat policies through the use of an automata-based operational semantics.

\section{\wnetkat Automata}
\label{sec:wnetkat-automata}
In this section, we will present an operational semantics for \wnetkat. We will define a special weighted automaton model---\wnetkat automata---and show how to construct a finite automaton from any \wnetkat expression. \wnetkat Automata (\WNKA) resemble classic weighted automata, albeit adjusted to the specifics of \wnetkat and its guarded string-based language model. While classic weighted automata would only consume a symbol at a time, \WNKA consume packets in \emph{linked pairs}, as processing guarded strings requires keeping a state of the ``previous packet''.

\begin{definition}[\wnetkat Automaton]
    A \emph{\wnetkat automaton} (\WNKA) is a 4-tuple $\wnka = (\wnkastates, \wnkainit, \wnkatrans{}{}, \wnkaout{}{})$ where $\wnkastates$ is a finite set of states, $\wnkainit \colon \weightings{\wnkastates}$ is the initial weighting, $\wnkatrans{}{}$ is a family of transition functions $\wnkatrans{\pkta}{\pktb} \colon \wnkastates \rightarrow \weightings{\wnkastates}$ indexed by packet pairs, and $\wnkaout{}{}$ is a family of output weightings $\wnkaout{\pkta}{\pktb} \colon \weightings{\wnkastates}$.
 \end{definition}

The above definition is similar to that of weighted automata: weightings $\weightings{\wnkastates}$ are simply $\semi$-valued ``vectors'' over $\wnkastates$ (i.e., elements of the $\wnkastates$-semimodule over $\semi$), and the transition functions $\wnkatrans{\pkta}{\pktb} \colon \wnkastates \rightarrow \weightings{\wnkastates}$ are matrices $\wnkastates \times \wnkastates \to \semi$. We push this analogy further and note that functions $\wnkatrans{\pkta}{\pktb} \colon \wnkastates \rightarrow \weightings{\wnkastates}$ are isomorphic to $\weightings{\wnkastates \times \wnkastates}$, which we shall call {\em weighting matrices}. Given weighting matrices $m \colon \weightings{X \times Y}$ and $m' \colon \weightings{Y \times Z}$ we define their product $m \lsemiprod m' \colon  \weightings{X \times Z}$:
  \[
   \wtinga \lsemiprod \wtingb \triangleq \lambda (x, z).\; \semisum{(x, y) \in \supp(\wtinga)} \wtinga(x, y) \semimul \wtingb(y, z) \]
   The initial and output weightings can be equivalently represented as matrices $\wnkainit \colon \weightings{1 \times \wnkastates}$ and $\wnkaout{\pkta}{\pktb} \colon \weightings{\wnkastates \times 1}$. With these notational conventions in hand, we can now more easily define the weighted language of guarded strings $\llb \wnka \rrb \colon \weightings{\GS}$ recognized by an \WNKA $\wnka$:
	\begin{align*}
	    &\sem{\wnka} (\pkt_0 \, \pkt_1 \, \DUP \, \pkt_2 \, \DUP \, \dots \, \DUP \, \pkt_n) \triangleq \wnkainit \times \wnkatrans{\pkt_0}{\pkt_1} \times \wnkatrans{\pkt_1}{\pkt_2} \times \cdots \times \wnkatrans{\pkt_{n-2}}{\pkt_{n-1}} \times \wnkaout{\pkt_{n-1}}{\pkt_n}
	\end{align*}
    Note that when $n = 1$ we obtain $\sem{\wnka} (\pkt_0 \, \pkt_1) = \wnkainit \times \wnkaout{\pkt_0}{\pkt_1}$.

\begin{example}
	\label{ex:automata1}
    Consider the policy $\pola \triangleq \ITER{(\WEIGH{3}{\DUP})}$ from \Cref{ex:language1}. We define a minimal automaton $\wnka_\pola$ such that $\sem{\wnka_\pola} = G(\pola)$, as below. We use single-line arrows for transitions between states and for initial weights, and double-line arrows for the output weight function on each state.
	\vspace{-0.5cm}
	\begin{figure}[H]
	\centering
	\begin{minipage}[c]{0.49\linewidth}
	    \centering
	\begin{tikzpicture}
	    \node[smallstate,label=above:{$s_0$}] (x) {};

	    \draw ($(x) + (-0.7, 0)$) edge[input-edge] node[pos=0.4, above=-0.2mm] {\small$1$} (x);
	    \draw[->] (x) edge[output-edge] node[below=1mm] {\small$[\pkta = \pktb]$} ($(x) + (0, -0.4)$);
	    \draw[->] (x) edge[transition-edge,loop right,looseness=50] node[right] {\small$3 \semimul {[\pkta = \pktb]}$} (x);
	\end{tikzpicture}
	\end{minipage}
	\hfill
	\begin{minipage}[c]{0.5\linewidth}
	\begin{align*}
		\wnkainit(s_0) &\triangleq 1 \quad
		\wnkatrans{\pkta}{\pktb}(s_0) \triangleq 3 \semimul {[\pkta = \pktb]} \\
		\wnkaout{\pkta}{\pktb}(s_0) &\triangleq [\pkta = \pktb]
	\end{align*}
	\end{minipage}
\label{fig:automaton1}
\end{figure}
\vspace{-0.5cm}
Note how the automaton's transitions are labeled by conditions on both the current packet and the previous one---hence the packet-pair semantics of our \WNKA. We can then perform the computation of $G(\pola)(\pkta \, \NFOLD{(\pkta \, \DUP)} \, \pkta)$ as done through the automaton:
\[
\sem{\wnka_\pola}(\pkta \, \NFOLD{(\pkta \, \DUP)} \, \pkta) = \wnkainit \times \wnkatrans{\pkta}{\pkta}^n \times \wnkaout{\pkta}{\pkta} = (1) \times (3 \semimul {[\pkta = \pkta]})^n \times ([\pkta = \pkta]) = 3^n
\]
\end{example}

Although simple, the automaton of \Cref{ex:automata1} captures a good intuition of the behavior of $\DUP$ when iterated by star. As captured by the semantics of \WNKA, a transition can only be taken by consuming a $\DUP$ from the input guarded string. If the $\DUP$ is furthermore part of an expression being iterated with star, then we must allow for the \emph{unbounded} consumption of such $\DUP$s. As our automata are finite, this is achieved by \emph{looping} the transition. Equally, a $\DUP$-free policy must not consume any $\DUP$s from input guarded strings, and as such the corresponding automaton will not have any effectively traversable transitions. In practice, this means that although our automaton may still be comprised of many states with no transitions, these can be collapsed into a single state automaton, where the state output entirely captures the language for $\DUP$-free guarded strings.

\subsection{From Expressions To Automata}

Classically, \netkat automata are constructed  on-the-fly using \emph{Brzozowski derivatives}. However, in the presence of weights, Brzozowski derivatives are known to generally not yield finite automata~\cite{bonchi-algebra}. To avoid this, we instead describe a generalized Thompson's construction (\Cref{tab:thompson}), that is guaranteed to terminate by traversing the syntax of the given expression.

\begin{table}[t]
	\caption{
        \wnetkat Thompson construction. For expression $\pola$, we inductively build
        $\wnka_\pola \triangleq (S_\pola, \wnkainit_\pola, \wnkatrans{}{}^\pola, \wnkaout{}{}^\pola)$.
        We use $+$ to denote coproducts of sets; given a coproduct $X+Y$, $f\colon X \to Z$, and
        $g\colon Y \to Z$ the function $[f,g] \colon X+Y \to Z$ is the copairing of $f$ and $g$.
        We denote by $\wnkainit_{\pola_1} \boxtimes \wnkaout{}{}^{\pola_1}$ the square matrix
        $\pkta\pktb \mapsto \wnkainit_{\pola_1}  \times \wnkaout{\pkta}{\pktb}^{\pola_1}$.}
	\label{tab:thompson}
    \small
	\centering
	\begin{tabular}{l  l  l l l}
		\toprule
		\toprule
		$\boldsymbol{\pola}$ &
        $\boldsymbol{S_\pola}$ &
        $\boldsymbol{\wnkainit_\pola \colon\weightings{S_\pola}}$ &
        $\boldsymbol{\wnkatrans{\pkta}{\pktb}^{\pola} \colon S_\pola \to \weightings{S_\pola}}$&
        $\boldsymbol{\wnkaout{\pkta}{\pktb}^{\pola}\colon \weightings{S_\pola}}$\\
		\midrule

        $\ctest{\pkt}$ &
        $\{ \heartsuit \}$ &
        $\eta(\heartsuit)$ &
        $\addid$ &
        $\heartsuit \mapsto[\pkta = \pktb = \pkt]$

        \\[1ex]

        $\cass{\pkt}$ &
        $\{ \heartsuit \}$ &
        $\eta(\heartsuit)$ &
        $\addid$ &
        $\heartsuit \mapsto[\pktb = \pkt]$

        \\[1ex]

        $\DUP$ &
        $\{ \heartsuit, \clubsuit \}$ &
        $\eta(\heartsuit)$ &
        $s \mapsto
            \begin{cases}
                \eta(\clubsuit) & s=\heartsuit \wedge \pkta = \pktb \\
                \addid & \text{otherwise}
            \end{cases}$ &
        $s \mapsto
            \begin{cases}
                [\pkta = \pktb] & s=\clubsuit \\
                \addid & s = \heartsuit
            \end{cases}$

        \\[3ex]

		$\WEIGH{\wta}{\pola_1}$ &
        $S_{\pola_1}$ &
        $\wta \semimul \wnkainit_{\pola_1}$ &
        $ \wnkatrans{\pkta}{\pktb}^{\pola_1}$ \quad \quad& $\wnkaout{\pkta}{\pktb}^{\pola_1}$

        \\[1.5ex]

        $\ADD{\pola_1}{\pola_2}$ &
        $S_{\pola_1} + S_{\pola_2}$ &
        $\left[\wnkainit_{\pola_1}, \wnkainit_{\pola_2}\right]$ &
        $\left[\left[\wnkatrans{\pkta}{\pktb}^{\pola_1} , \addid \right], \left[ \addid , \wnkatrans{\pkta}{\pktb}^{\pola_2}\right]\right]$ &
        $\left[\wnkaout{\pkta}{\pktb}^{\pola_1} , \wnkaout{\pkta}{\pktb}^{\pola_2}\right]$

        \\[1.5ex]

		$\SEQ{\pola_1}{\pola_2}$ &
        $S_{\pola_1} + S_{\pola_2}$ &
        $[\wnkainit_{\pola_1}, \addid ]$ &
        $\left[ \left[ \wnkatrans{\pkta}{\pktb}^{\pola_1}
                     , \semisum{\pktc}
             \wnkaout{\pkta}{\pktc}^{\pola_1} \times \wnkainit_{\pola_2} \times \wnkatrans{\pktc}{\pktb}^{\pola_2}
                     \right]
              , \left[ \addid
                     , \wnkatrans{\pkta}{\pktb}^{\pola_2}
                     \right]
              \right]$ &
        $\left[ \semisum{\pktc}
        \wnkaout{\pkta}{\pktc}^{\pola_1} \times \wnkainit_{\pola_2} \times \wnkaout{\pktc}{\pktb}^{\pola_2}
              , \wnkaout{\pkta}{\pktb}^{\pola_2}
              \right]$ \\[2ex]

		$\ITER{\pola_1}$ &
        $S_{\pola_1} + \{\heartsuit\}$ &
        $[\addid, \mulid]$ &
        $\left[ \left[ \wnkatrans{\pkta}{\pktb}'
                     , \addid
                     \right]
                     , \left[ (\wnkaout{}{}^\heartsuit \times \wnkainit_{\pola_1} \times \wnkaout{}{}^{\pola_1})_{\pkta\pktb}
                            , \addid
                            \right]
                     \right]$ &
        $\left[ (\wnkaout{}{}^{\pola_1} \times \wnkaout{}{}^\heartsuit)_{\pkta\pktb}
              , \wnkaout{\pkta}{\pktb}^\heartsuit
              \right]$

        \\[1ex]

        &&&

        \footnotesize  \textbf{where}
        $  \wnkatrans{}{}'= \wnkatrans{}{}^{\pola_1} + \wnkaout{}{}^{\pola_1} \times \wnkaout{}{}^\heartsuit \times \wnkainit_{\pola_1} \times \wnkatrans{}{}^{\pola_1}$
        \ \textbf{and} &
        \footnotesize
        $\wnkaout{}{}^\heartsuit=\semistar{(\wnkainit_{\pola_1}  \boxtimes \wnkaout{}{}^{\pola_1})}$

        \\[1ex]
		\bottomrule
		\bottomrule
	\end{tabular}
\end{table}

Crucially, this construction is guaranteed to produce an automaton whose language matches the language of its expression. By transitivity, our construction computes precisely the denotational semantics of the expression, by converting the input histories into a guarded string.

\begin{rlemma}{thm:thompson-soundness}[Soundness of Thompson]
Given a policy $\pola \in \pols$: $\sem{ \wnka_\pola } = G( \pola )$.
\end{rlemma}

\begin{corollary}[Equivalence of \wnetkat policies and \wnetkat automata]
    \label{cor:deno-aut}
    Given a policy $\pola \in \pols$:\footnote{
        The soundness of our syntax and semantics (i.e., \Cref{thm:deno-lang,thm:thompson-soundness} and \Cref{cor:deno-aut}) is additionally mechanized in Lean; the mechanization can be found at \url{https://github.com/cornell-pl/wnetkat-lean/blob/pldi2026/WeightedNetKAT/Papers/PLDI2026.lean}.
}
    \[
    \sem{\pola}(\hcons{\pkt_0}{\hempty})(\hcons{\pkt_n}{\hcons{\dots}{\hcons{\pkt_1}{\hempty}}}) \eeq \sem{\mathcal{A}_{\pola}}(\pkt_0 \, \pkt_1 \, \DUP \, \dots \, \DUP \, \pkt_n)~.
    \]
\end{corollary}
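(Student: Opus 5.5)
The corollary follows by chaining \Cref{thm:deno-lang} and \Cref{thm:thompson-soundness}. By \Cref{thm:thompson-soundness}, the right-hand side $\sem{\wnka_\pola}(\pkt_0\,\pkt_1\,\DUP\cdots\DUP\,\pkt_n)$ equals $G(\pola)(\pkt_0\,\pkt_1\,\DUP\cdots\DUP\,\pkt_n)$. So it suffices to show that
\[
\sem{\pola}(\hcons{\pkt_0}{\hempty})(\hcons{\pkt_n}{\hcons{\dots}{\hcons{\pkt_1}{\hempty}}}) \eeq G(\pola)(\pkt_0\,\pkt_1\,\DUP\cdots\DUP\,\pkt_n).
\]
For this I instantiate \Cref{thm:deno-lang} at $\hista = \hcons{\pkt_0}{\hempty}$ and evaluate both sides at the output history $\hista' = \hcons{\pkt_n}{\hcons{\dots}{\hcons{\pkt_1}{\hempty}}}$. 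Since sums and scalar products on weightings are pointwise, this gives
\[
\sem{\pola}(\hcons{\pkt_0}{\hempty})(\hista') \eeq \semisum{\gsA\in\supp(G(\pola))} G(\pola)(\gsA)\semimul \sem{\gsA}(\hcons{\pkt_0}{\hempty})(\hista').
\]

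The key auxiliary fact is a computation of the denotation of a single guarded string, viewed as a reduced policy. For $\gsA = \pkta_0\,\pkta_1\,\DUP\cdots\DUP\,\pkta_m$ I claim
\[
\sem{\gsA}(\hcons{\pkt_0}{\hempty}) \eeq [\pkta_0=\pkt_0]\semimul \unit(\hcons{\pkta_m}{\hcons{\dots}{\hcons{\pkta_1}{\hempty}}}).
\]
I prove this by induction on $m$, following the inductive definition of $\GS$.
\begin{itemize}
\item In the base case $\gsA = \ctest{\pkta_0}\SEQN\cass{\pkta_1}$. The complete test filters on $\pkt_0=\pkta_0$, and the complete assignment then overwrites every field, giving $\unit(\hcons{\pkta_1}{\hempty})$.
\item In the step case $\gsA\SEQN\DUP\SEQN\cass{\pktc}$, I use the bind semantics. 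Since every component produces at most a unit weighting, the countable sum collapses to a single term. $\DUP$ pushes a copy of the head, and the complete assignment replaces the new head by $\pktc$.
\end{itemize}
This uses only the monad laws and the fact that a complete assignment, being a sequence of $k$ single-field updates over all fields, yields exactly the packet $\pktc$ regardless of its input.

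Given the claim, $\sem{\gsA}(\hcons{\pkt_0}{\hempty})(\hista')$ equals $\mulid$ exactly when $\gsA = \pkt_0\,\pkt_1\,\DUP\cdots\DUP\,\pkt_n$, and $\addid$ otherwise. This is because two such histories are equal iff they have the same length and the same packets, and guarded strings are in bijection with (initial packet, history) pairs. The countable sum therefore reduces to the single summand $G(\pola)(\pkt_0\,\pkt_1\,\DUP\cdots\DUP\,\pkt_n)$ if that string lies in the support, and to $\addid$ otherwise; in the latter case $G(\pola)$ of it is $\addid$ anyway. Removing the $\addid$ summands from a countable sum is justified because the sum is a supremum of partial sums and $\addid$ is the additive identity.

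The only step needing care is the guarded-string denotation lemma, in particular the orientation of histories: the head is the latest packet, so $\pkt_n$ is the head and $\pkt_1$ is the oldest entry. Everything else is bookkeeping combining the two earlier lemmas.
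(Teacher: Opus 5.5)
Your proposal is correct and follows the paper's route. The paper obtains the corollary by transitivity from \Cref{thm:deno-lang} (instantiated at $\hcons{\pkt_0}{\hempty}$) and \Cref{thm:thompson-soundness}, converting the input/output history pair into the guarded string $\pkt_0\,\pkt_1\,\DUP\cdots\DUP\,\pkt_n$. Your explicit computation of $\sem{\gsA}(\hcons{\pkt_0}{\hempty})$, which collapses the sum over $\supp(G(\pola))$ to a single summand, is the step the paper leaves implicit and discharges in its Lean mechanization.
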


Like the classic Thompson construction, our construction works by combining simpler base automata to achieve more complex ones. This is achieved by composing the vector and matrix components in adequate ways. For example, for $\pola_1 \ADDN \pola_2$, the two automata are simply juxtaposed into one, with no interaction between the two, as depicted below.
\begin{equation*}
\setlength{\arraycolsep}{2pt}
\wnkainit^\ADDN =
\begin{tikzpicture}[>=stealth,thick,baseline={(A.center)},every node/.style={inner sep=1pt,outer sep=0pt,font=\small}]
  \matrix [matrix of math nodes,
           row sep=1pt, column sep=2pt,
	   inner ysep=6pt,
           left delimiter=(, right delimiter=)] (A) {
    \, & \wnkainit^{\pola_1} & \, & \, & \wnkainit^{\pola_2} & \, \\
  };
  \draw ([xshift=0pt,yshift=11pt]A-1-1.north west) rectangle ([xshift=-0.5pt,yshift=-4pt]A-1-3.south east);
  \draw ([xshift=0.5pt,yshift=11pt]A-1-4.north west) rectangle ([xshift=0pt,yshift=-4pt]A-1-6.south east);
\end{tikzpicture}
\qquad
\wnkatrans{\pkta}{\pktb}^\ADDN =
\begin{tikzpicture}[>=stealth,thick,baseline={(A.center)},every node/.style={inner sep=1pt,outer sep=0pt,font=\small}]
  \matrix [matrix of math nodes,
           row sep=1pt, column sep=2pt,
           left delimiter=(, right delimiter=)] (A) {
    \, & \, & \, & \, & \, & \, \\
    \, & \wnkatrans{\pkta}{\pktb}^{\pola_1} & \, & \, & \addid & \, \\
    \, & \, & \, & \, & \, & \, \\
    \, & \, & \, & \, & \, & \, \\
    \, & \, & \, & \, & \, & \, \\
    \, & \addid & \, & \, & \wnkatrans{\pkta}{\pktb}^{\pola_2} & \, \\
    \, & \, & \, & \, & \, & \, \\
  };
  \draw ([xshift=0pt]A-1-1.north west) rectangle ([xshift=0pt]A-3-3.south east);
  \draw ([xshift=0pt]A-5-4.north west) rectangle ([xshift=0pt]A-7-6.south east);
\end{tikzpicture}
\qquad
\wnkaout{\pkta}{\pktb}^\ADDN =
\begin{tikzpicture}[>=stealth,thick,baseline={(A.center)},every node/.style={inner sep=1pt,outer sep=0pt,font=\small}]
  \matrix [matrix of math nodes,
           row sep=1pt, column sep=2pt,
           left delimiter=(, right delimiter=)] (A) {
     \, & \\
     \,\,\wnkaout{\pkta}{\pktb}^{\pola_1} & \\
     \, & \\
     \, & \\
     \, & \\
     \,\,\wnkaout{\pkta}{\pktb}^{\pola_2} & \\
     \, & \\
  };
  \draw ([xshift=-7pt]A-1-1.north west) rectangle ([xshift=9pt]A-3-1.south east);
  \draw ([xshift=-7pt]A-5-1.north west) rectangle ([xshift=9pt]A-7-1.south east);
\end{tikzpicture}
\end{equation*}

The ``quadrants'' of the transition matrices created by copairing allow us to specify the behavior of two classes of transitions: the top left, and bottom right quadrants capture transitioning \emph{inside} the two component automata, while the top right and bottom left qudrants capture transitioning \emph{between} the two component automata. As expected, the construction for choice does not allow transitioning between automata, and the internal automata transitions remain the same.

If we look at sequencing, however, we see a more interesting case. When sequencing two automata, we want to preserve their independent transitions, while additionally being able to ``jump'' from the first automaton to the second. Traditionally, this is done by introducing $\varepsilon$-transitions, non-deterministically bridging each state of the first automaton into the start state of the second. \WNKA{s} do not have $\varepsilon$-transitions and we produce these ``jumps'' by using the packets consumed on exiting a state, to instead transition into the second automaton. Formally we do this by: exiting the first automaton ($\wnkaout{\pkta}{\pktc}^{\pola_1}$), entering the second automaton ($\wnkainit^{\pola_2}$), and finally taking a first transition ($\wnkatrans{\pktc}{\pktb}^{\pola_2}$). This is equivalent to transitioning directly from the first automaton to the second, while accumulating the three weights, i.e. $\pkta\pktb \mapsto \semisum{\pktc} \wnkaout{\pkta}{\pktc}^{\pola_1} \times \wnkainit^{\pola_2} \times \wnkatrans{\pktc}{\pktb}^{\pola_2}$ (note that $\pkta\pktb = \pkta\pktc \diamond \pktc\pktb$).

The most interesting case for the Thompson construction, however, is the one of iteration. In the classic Thompson construction, iteration is achieved by introducing a new (accepting) start state ($\heartsuit$), and whenever a state of the automaton would be accepting, instead $\varepsilon$-transitioning to the new start state, where the string is either accepted or free to begin another iteration. Generalizing this principle requires care, due to the specific semantics of \WNKA{s}. This is best observed in two classes of cases: when iterating over expressions with $\DUP$ vs. $\DUP$-free expressions.

Firstly, as demonstrated in \Cref{ex:language2}, a $\DUP$-free policy, even if iterated, will only attribute non-zero weight to $\DUP$-free guarded string inputs. However, these atomic guarded strings may still be iterated unboundedly, and unobservedly, due to the lack of a $\DUP$. For example, $\pkta \pktd = \pkta \pktb \diamond \pktb \pktc \diamond \pktc \pktd$, so the weight of $\pkta \pktd$ must take into account its ``longer'', unobserved equivalents. Processing such a string is akin to entering the start state ($\wnkainit^{\pola_1}$) and exiting it directly ($\wnkaout{}{}^{\pola_1}$)---in this case three times---for each atomic component being sequenced. In general, this must be done for an unbounded number of intermediate atomic guarded strings, meaning the output of the new start state ($\wnkaout{}{}^\heartsuit$) must be the \emph{star} of the combined $\wnkainit^{\pola_1} \times \wnkaout{}{}^{\pola_1}$ matrix, which naively would require computing an infinite sum.

\begin{definition}[Matrix Star]
	\label{def:matrixstar}
    For a weighting square matrix $M \colon \weightings{X \times X}$, we define its star as $\ITER{M} \triangleq \semisum{n \in \nats} M^n$, where $M^0_{xy} \triangleq [x = y]$ and $M^{n+1} = M^n \times M$.
\end{definition}

The star of a matrix is, however, a common operation in automata theory, also known as the matrix closure. In the past, \citet{bloom1993} developed an algorithm for computing the matrix star in terms of only the underlying semiring operations, together with the \emph{semiring star}, a shorthand defined as $\ITER{\semia} \triangleq \semisum{n \in \nats} \semia^n$. Although computing the star of a semiring element requires computing a countable sum, for all our documented semirings and their combinations, this operation is easily computed, usually in constant time~\cite{Mohri2009}. The Thompson construction depicted in \Cref{ex:automata4} below provides a good view of the matrix star in action, as used to compute the output weights of state $\heartsuit$. For instance, as expected the policy maps the guarded string $\pkta \, \pkta$ to the \emph{infinite} regular language $\{wa \mid w \in \Sigma^*\}$, which is computed entirely by $\wnkaout{}{}^\heartsuit$, the starred output matrix.

\begin{example}
	\label{ex:automata4}
	For the policy $\pola \triangleq \ITER{(\ADD{(\WEIGH{\{a\}}{\cass{\pkta}})}{(\WEIGH{\{b\}}{\cass{\pktb}})})}$ from \Cref{ex:language2}, we obtain the following automaton through the Thompson construction:
	\vspace{-0.5cm}
    \begin{figure}[H]
    \centering
    \begin{subfigure}[b]{0.35\textwidth}
        \centering
	\begin{tikzpicture}
	    \node[smallstate,label=right:{$s_0$}] (x) {};
	    \node[smallstate,label=right:{$s_1$}] (x') at (2.0, 0) {};
	    	    \draw ($(x) + (0, 0.7)$) edge[input-edge] node[pos=0.4, right=-0.2mm] {\small$\{a\}$} (x);
	    \draw ($(x') + (0, 0.7)$) edge[input-edge] node[pos=0.4, right=-0.2mm] {\small$\{b\}$} (x');

	    \draw[->] (x) edge[output-edge] node[below=1mm] {\small$\pkta\pkta, \pktb\pkta \mapsto \{\eword\}$} ($(x) + (0, -0.4)$);
	    \draw[->] (x') edge[output-edge] node[below=1mm] {\small$\pkta\pktb, \pktb\pktb \mapsto \{\eword\}$} ($(x') + (0, -0.4)$);
	\end{tikzpicture}
        \caption{$\ADD{(\WEIGH{\{a\}}{\cass{\pkta}})}{(\WEIGH{\{b\}}{\cass{\pktb}})}$}
	\label{ex:automata4-a}
    \end{subfigure}
    \hfill
    \begin{subfigure}[b]{0.6\textwidth}
        \centering
		\begin{tikzpicture}
	    \node[smallstate,label=right:{$s_0$}] (x) at (0, -0.5){};
	    \node[smallstate,label=right:{$s_1$}] (x') at (5.0, -0.5) {};
	    \node[smallstate,label=right:{$\heartsuit$}] (x'') at (2.5, 0) {};

	    \draw ($(x'') + (-0.7, 0)$) edge[input-edge] node[pos=0.4, above=-0.2mm] {\small$\{\eword\}$} (x'');

	    \draw[->] (x) edge[output-edge] node[below=1mm] {\small$\pkta\pkta \mapsto \{\eword, wa\}$} ($(x) + (0, -0.4)$);
	    \draw[->] (x) edge[output-edge] node[below=5mm, xshift=-1.28mm] {\small$\pktb\pkta \mapsto \{wa\}$} ($(x) + (0, -0.4)$);

	    \draw[->] (x') edge[output-edge] node[below=1mm, xshift=-1.28mm] {\small$\pkta\pktb \mapsto \{wb\}$} ($(x') + (0, -0.4)$);
	    \draw[->] (x') edge[output-edge] node[below=5mm] {\small$\pktb\pktb \mapsto \{\eword, wb\}$} ($(x') + (0, -0.4)$);

	    \draw[->] (x'') edge[output-edge] node[below=0mm] {\small$\pkta\pkta \mapsto \{\eword, wa\}$} ($(x'') + (0, -0.4)$);
	    \draw[->] (x'') edge[output-edge] node[below=3mm] {\small$\pktb\pktb \mapsto \{\eword, wb\}$} ($(x'') + (0, -0.4)$);
	    \draw[->] (x'') edge[output-edge] node[below=6mm, xshift=-1.28mm] {\small$\pktb\pkta \mapsto \{wa\}$} ($(x'') + (0, -0.4)$);
	    \draw[->] (x'') edge[output-edge] node[below=9mm, xshift=-1.28mm] {\small$\pkta\pktb \mapsto \{wb\}$} ($(x'') + (0, -0.4)$);

	\end{tikzpicture}
	\caption{$\ITER{(\ADD{(\WEIGH{\{a\}}{\cass{\pkta}})}{(\WEIGH{\{b\}}{\cass{\pktb}})})}$}
	\label{ex:automata4-b}
    \end{subfigure}
\end{figure}
\vspace{-0.4cm}
\end{example}

Our second key consideration for the case of iteration is best seen when iterating expressions with $\DUP$.
This is the ``classical'' setting, where it is possible to non-deterministically return to the
start state whenever an end state is reached, ready to process another iteration (as is the case
in \Cref{ex:language1}). However, \wnetkat automata do not have a single start and end state. In fact,
\emph{every} state can be both initial and final, as
determined by the initial and output weights. This makes looping the automaton more delicate,
and requires an entirely matrix-based treatment, as demonstrated by component $\wnkatrans{}{}^\pola$
of the star construction: To transition inside the automaton is to either take an internal transition
as normal ($\wnkatrans{}{}^{\pola_1}$), or non-deterministically ($\semiadd$) exit the state
($\wnkaout{}{}^{\pola_1}$) and, via the new start state ($\wnkaout{}{}^\heartsuit$),
loop back into the automaton ($\wnkainit^{\pola_1}$), and transition through ($\wnkatrans{}{}^{\pola_1}$).

\subsection{Computable Semantics of \wnetkat}

Finally, the effective computation of the Thompson construction allows us to compute the automaton for any given \wnetkat policy. Using the correctness of the semantics of the automaton  (\Cref{cor:deno-aut}), and the fact that the automaton semantics is computed through matrix multiplication, which is itself computed in terms of semiring addition and multiplication, we obtain the following:

\begin{theorem}[Computable Semantics]
    \label{thm:sem-comp}
    Given a computable
         semiring $(\semi, \semiord)$, the semantics
    of every \swnetkat policy is computable by compiling
    it to its corresponding \wnetkat automaton.
\end{theorem}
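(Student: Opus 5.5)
The plan is to reduce the computation of $\sem{\pola}(\hcons{\pkt_0}{\hempty})(\hcons{\pkt_n}{\hcons{\dots}{\hcons{\pkt_1}{\hempty}}})$ to finitely many semiring operations on the finite automaton $\wnka_\pola$, and then use \Cref{cor:deno-aut} to transfer the result back to the denotational semantics. First I will make ``computable semiring'' precise, in the sense presumably fixed in \Cref{sec:decidability}: elements have finite representations, $\semiadd$ and $\semimul$ are computable, equality is decidable, and the semiring star $\semistar{\semia} = \semisum{n\in\nats}\semia^n$ is computable. I also assume the weights occurring in $\pola$ are given as such representations. Since policies may be assumed reduced, and the reduction of an arbitrary policy to a reduced one is a syntactic (hence effective) translation, it suffices to handle reduced policies.

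The first step is to show, by structural induction on $\pola$, that the Thompson construction of \Cref{tab:thompson} is effective. The state set $S_\pola$ is finite, being built from singletons and doubletons by coproducts. Each entry of $\wnkainit_\pola$, $\wnkatrans{\pkta}{\pktb}^\pola$ and $\wnkaout{\pkta}{\pktb}^\pola$ is obtained from the entries of the sub-automata by finitely many applications of $\semiadd$ and $\semimul$, together with multiplication by the weight $\wta$ in the case $\WEIGH{\wta}{\pola_1}$. Because $\packets$ is finite, the sums $\semisum{\pktc}$ in the sequencing and iteration cases are finite. Matrix products between finite matrices are likewise finite combinations of semiring operations.

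The only genuinely infinite object is the matrix star $\wnkaout{}{}^\heartsuit = \semistar{(\wnkainit_{\pola_1}\boxtimes\wnkaout{}{}^{\pola_1})}$ in the iteration case. This is a square matrix indexed by $\packets$, of finite size $|\packets|$, so I will invoke the algorithm of \citet{bloom1993}. That algorithm expresses $\ITER{M}$ for a finite $M$ by the recursive block formula
\[
\begin{pmatrix} A & B \\ C & D\end{pmatrix}^{*}
= \begin{pmatrix} (A + B D^{*} C)^{*} & (A+BD^{*}C)^{*} B D^{*} \\ D^{*}C(A+BD^{*}C)^{*} & D^{*} + D^{*}C(A+BD^{*}C)^{*}BD^{*}\end{pmatrix},
\]
which uses only $\semiadd$, $\semimul$ and the scalar star. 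Its validity for $\omega$-continuous semirings, i.e.\ that it coincides with $\semisum{n}M^n$ of \Cref{def:matrixstar}, is the main obstacle. I would discharge it by citing the standard result that $\omega$-continuous semirings are Conway semirings, so the matrix star satisfies the star identities. If a self-contained argument is needed, I would prove the $1{+}1$ block case by grouping the paths in $\semisum{n}M^n$ according to their visits to the first block, using that countable sums in $\omega$-continuous semirings can be arbitrarily reassociated and distributed over.

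Finally, given an input packet $\pkt_0$ and a history, I form the guarded string $\pkt_0\,\pkt_1\,\DUP\cdots\DUP\,\pkt_n$ and evaluate the finite product $\wnkainit\times\wnkatrans{\pkt_0}{\pkt_1}\times\cdots\times\wnkaout{\pkt_{n-1}}{\pkt_n}$. This takes finitely many semiring operations on already computed entries. By \Cref{cor:deno-aut}, which relies on \Cref{thm:deno-lang,thm:thompson-soundness}, the result equals the desired denotation, which establishes computability.
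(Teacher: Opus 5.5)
Your proposal is correct and follows the paper's route. The paper likewise argues that the Thompson construction of \Cref{tab:thompson} is effective: state sets are finite, the sums over $\packets$ are finite, and the only infinitary ingredient, the matrix star $\wnkaout{}{}^\heartsuit$, is computed by the algorithm of \citet{bloom1993} from $\semiadd$, $\semimul$ and the scalar closure of \Cref{def:computable_semiring}. It then evaluates $\sem{\wnka_\pola}$ by finite matrix products and transfers the result to $\sem{\pola}$ via \Cref{cor:deno-aut}. You simply spell out the structural induction and the Conway block-star formula that the paper's brief justification leaves implicit.
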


Furthermore, given that the semantics of every \wnetkat policy is computable
by compiling to a \wnetkat automaton, we can verify the questions of $\wta$-safety
and $\wta$-reachability through decision procedures that we develop at the level of
\wnetkat automata in the
following section.

\section{Decidability Results for \wnetkat}
\label{sec:decidability}

In this section, we tie the technical results of the previous sections back to the
verification questions in \Cref{sec:model-quant-prop}. Consider again 
verifying whether a network encoded in \wnetkat as $\pola$ is $\wta$-safe or
$\wta$-reachable. Semantically, these questions correspond to the following two
properties:
\begin{align*}
	\label{def:r-safety}
    \forall \pkt \in \packets, \hista \in \histories \colon \quad &
        \sem{\pola}(\hcons{\pkt}{\hempty})(\hista) \semiord \wta \ ,
    \tag{$\pola$ is $\wta$-safe} \\
	\label{def:r-reachability}
    \exists \pkt \in \packets, \hista \in \histories \colon \quad &
        \sem{\pola}(\hcons{\pkt}{\hempty})(\hista) \semiordrev \wta \ .
    \tag{$\pola$ is $\wta$-reachable}
\end{align*}
In words, $\wta$-safety says that \emph{all} (out of possibly \emph{infinitely many}) traces produced by the policy $\pola$ have weight at most
$\wta$. Dually, $\wta$-reachability says that there exists \emph{some} trace with weight at least~$\wta$.

We generalize techniques by \citet{decidable-weighted} to obtain \emph{generic} decision procedures for $\wta$-safety and $\wta$-reachability for a broad class of semirings.
These procedures can produce \emph{witnesses}, providing operators with diagnostic information (when $\wta$-safety is violated), and synthesize traces satisfying some desired lower bound on a quantity of interest. We illustrate this in \Cref{sec:case-studies} via a case study.

\subsection{Decidability of $\wta$-safety in \wnetkat}
\label{sec:verif-safe}
We begin by establishing the decidability of $\wta$-safety for
semirings that model a \emph{worst-case} analysis.
\begin{rtheorem}{thm:verif-safety}
    [Decidability of $\wta$-safety]
    Let $(\semi, \semiord)$ be a computable semiring such
    that
    \[
    \semia_1 \semiadd \semia_2 \semiord \semia_3
    \quad \text{iff} \quad
    \semia_1 \semiord \semia_3 ~\text{and}~\semia_2\semiord\semia_3~,
    \]
and let
    $\pola$ be an \swnetkat policy. Then
    \(
        \text{``$\pola$ is $\wta$-safe'' is decidable}~.
    \)
    Moreover, if $\semiord$ is total and $\pola$ is not $\wta$-safe, then we can compute a witness, i.e., $\pkt \in \packets$ and $\hista\in\histories$ such that $\sem{\pola}(\hcons{\pkt}{\hempty})(\hista) \not\semiord \wta$.
\end{rtheorem}

Before we describe the decision procedure, let us gain some intuition on the conditions imposed on $\semi$. Computability is necessary to effectively compute the semantics of the \wnetkat policy. The condition on $\semiadd$ expresses that $\semi$ models worst-case behavior: the semiring addition \enquote{chooses} a worst-case scenario so that upper-bounding $\semia_1\semiadd\semia_2$ is equivalent to upper-bounding both $\semia_1$ and $\semia_2$. This condition is satisfied by the Arctic, Probabilistic-union, and Why semirings (cf.\ \Cref{tab:weighted-netkat-instances}).

The decision procedure works as follows. First, we invoke \Cref{cor:deno-aut}, which gives us
\begin{align*}
	\text{$\pola$ is $\wta$-safe}
	\quad\text{iff}\quad
	\forall \gsA\in\GS\colon \sem{\wnka_{\pola}}(\gsA) \semiord \wta~.
\end{align*}
It follows from $\omega$-continuity of $\semi$ and the side condition on $\semiadd$ that the latter is equivalent to
\[
\semisum{\gsA \in \GS} \wlang{\wnka_\pola}(\gsA) \semiord \semithresh~.
\]
We then proceed by showing that this infinite sum over all guarded strings \emph{is computable}, which implies the claim. The key idea is to reduce the computation of this infinite sum to the computation of a matrix star (cf.\ \Cref{def:matrixstar}), which can be done via well-established algorithms~\cite{bloom1993}.
\ifappendix
All details are provided in \appendixref[pf:verif-safety].
\fi
In case $\wta$-safety is violated, we are---by the totality of $\semiord$---guaranteed to find a witness by enumerating guarded strings $\gsA \in \GS$ in a breadth-first search manner until we find one with $\sem{\mathcal{A}_{\pola}}(\gsA) \not\semiord \wta$, where we use
\Cref{thm:sem-comp} to compute the weight $\mathcal{A}_{\pola}$ assigns to $\gsA$. By \Cref{cor:deno-aut}, $\gsA$ can then be turned into an appropriate witness.

\subsection{Decidability of $\wta$-reachability in \wnetkat}
\label{sec:verif-reach}

Next, we establish the decidability of $\wta$-reachability for semirings that model a \emph{best-case} analysis.
\begin{rtheorem}{thm:verif-reach}
[Decidability of $\wta$-reachability for \wnetkat policies]
    Let $(\semi, \semiord)$ be a computable semiring such
    that $\semia_1 \semiordrev \semia_1 \semimul \semia_2$ and
    \[
	\semia_1 \semiadd \semia_2 \semiordrev \semia_3
    \quad \text{iff} \quad
    \semia_1 \semiordrev \semia_3
    ~\text{or}~
    \semia_2 \semiordrev \semia_3~,
    \]
	and let $\pola$ be a \swnetkat policy. Then ``$\pola$ is $\wta$-reachable'' is decidable. Moreover, if $\pola$ is $\wta$-reachable, we can compute a witness, i.e., $\pkt \in \packets$ and $\hista\in\histories$ such that $\sem{\pola}(\hcons{\pkt}{\hempty})(\hista) \semiordrev \wta$.
\end{rtheorem}

Let us again gain some intuition on the imposed conditions. Dually to $\wta$-safety, the condition on the semiring addition expresses that $\semi$ models best-case behavior: lower-bounding $\semia_1\semiadd\semia_2$ is equivalent to lower-bounding one of $\semia_1,\semia_2$. The condition on the semiring multiplication expresses that making traces longer can only make things worse since $\semia_1 \semimul\semia_2$ will always be smaller than $\semia_1$. These conditions are satisfied by the Tropical, Viterbi, and Bottleneck semirings (cf.\ \Cref{tab:weighted-netkat-instances}).

Our decision procedure works as follows. First, we invoke \Cref{cor:deno-aut} to get
\begin{align*}
	\text{$\pola$ is $\wta$-reachable}
	\quad\text{iff}\quad
	\exists \gsA\in\GS\colon \sem{\mathcal{A}_{\pola}}(\gsA) \semiordrev \wta~.
\end{align*}
We then exploit the conditions on $\semiadd$ and $\semimul$ to conclude that the above is equivalent to the existence of a guarded string $\gsA\in\GS$ corresponding to a  \emph{cycle-free run} of $\wnka_\pola$---for a notion of runs we define over the underlying graph structure of $\wnka_\pola$. There are only finitely many cycle-free runs, so it suffices to check $\sem{\mathcal{A}_{\pola}}(\gsA) \semiordrev \wta$ for only finitely many $\gsA$ to decide $\wta$-reachability. If we find such a guarded string $\gsA$, we use \Cref{cor:deno-aut} to turn it into a witness.
\ifappendix
All details can be found in \appendixref[pf:verif-reach].
\fi

\section{Case Studies}
\label{sec:case-studies}
In this section, we demonstrate how our marriage of classic \netkat's modeling capabilities and weighted reasoning enables the automatic quantitative analysis of intricate network configurations.
For that, we use a topology based on Internet2's Abilene backbone network (see \Cref{fig:abilene})\footnote{The TikZ code used in \Cref{fig:abilene} was produced with the help of a generative AI software tool (Claude, Sonnet 4.6).}, which features \emph{nodes} across several cities in the United States. Traffic can enter or exit the network from any node (e.g., a network packet entering at $\bay$ destined for $\nyc$), and every node is able to forward packets to nodes it is linked to (e.g., $\kansas$ can forward packets to $\denver$, $\houston$, and $\indiana$).
We have additionally annotated the network topology in \Cref{fig:abilene} with several quantities, such as the associated \textcolor{failureorange}{failure rates} of each node or the \textcolor{bandwidthpurple}{bandwidth} of each link between nodes (e.g., based on forwarding failure metrics or historical average bandwidths).

\begin{figure}[t]
  \centering
  \begin{tikzpicture}
    \node[anchor=south west, inner sep=0] (image) at (0,0)
      {\includegraphics[scale=0.22]{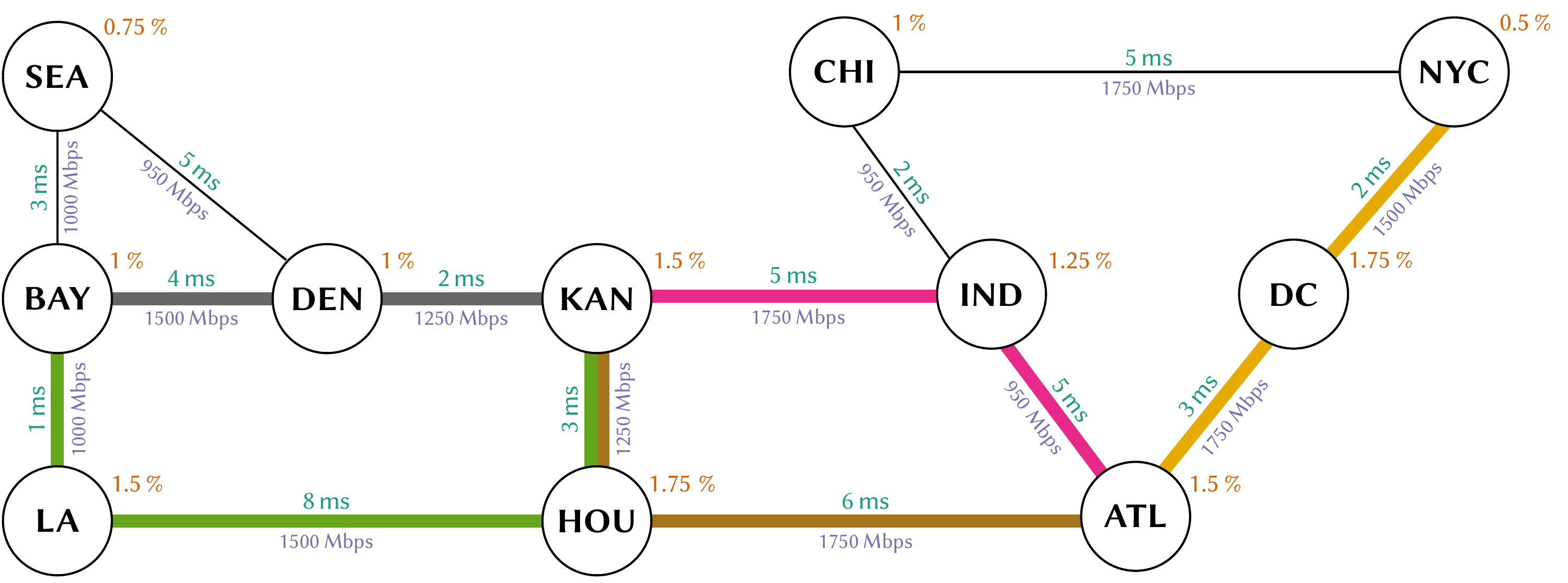}};

    \begin{scope}[x={(image.south east)}, y={(image.north west)}]
      \node[anchor=south east, inner sep=0] at (1.245, 0) {
          \small
          \begin{tabular}{lc}
            \fieldtid & \textbf{Path} \\
            \midrule
              \textcolor{tunnelblack}{1} & \textcolor{tunnelblack}{$\bay,\denver,\kansas$} \\
              \textcolor{tunnelgreen}{2} & \textcolor{tunnelgreen}{$\bay,\la,\houston,\kansas$} \\
              \textcolor{tunnelbrown}{3} & \textcolor{tunnelbrown}{$\kansas,\houston,\atlanta$} \\
              \textcolor{tunnelred}{4} & \textcolor{tunnelred}{$\kansas,\indiana,\atlanta$} \\
              \textcolor{tunnelgold}{5} & \textcolor{tunnelgold}{$\atlanta,\dc,\nyc$} \\
          \end{tabular}
        }
      ;
    \end{scope}
  \end{tikzpicture}
\caption{Topology from Abilene network with links between nodes 
    weighted by \textcolor{latencygreen}{latency}/\textcolor{bandwidthpurple}{bandwidth}
    and nodes by forwarding \textcolor{failureorange}{failure rate}. Network tunnels are
    mapped to their corresponding tunnel ID ($\fieldtid$) in table.}
\label{fig:abilene}
\end{figure}

Suppose that for certain source-destination pairs it is preferable to forward traffic through \emph{tunnels}
instead of with the usual forwarding behavior (e.g., based on \emph{shortest-paths} or some other routing scheme).
\Cref{fig:abilene} highlights example tunnels available in the network to be used by specific nodes for $\nyc$-bound traffic. For example,
 $\kansas$ is configured to use either of the tunnels
$\kansas \edge \houston \edge \atlanta$ ($\fieldtid$ 3) or
$\kansas \edge \indiana \edge \atlanta$ ($\fieldtid$ 4).
Both tunnels exit into $\atlanta$, which in turn is configured to forward $\nyc$-bound traffic through tunnel~$5$. In what follows, we first model this tunneling behavior in classic \netkat and then demonstrate how \wnetkat enables quantitative reasoning. We model the network's forwarding behavior using choice and \emph{nested iteration}:
\[
\begin{array}{rc}
    \pol{\kansas} &\triangleq
\end{array}
\begin{array}{l}
        \IFN \ \EQ{\fieldtid}{0} \ \THENN \\ \quad
            \IFN \ \EQ{\fielddst}{\nyc} \ \THENN \
                (\ADD{\ASSN{\fieldtid}{3}}{\ASSN{\fieldtid}{4}}) \\ \quad
            \ELSEN \ \dots \quad \text{\textcolor{commentgreen}{\texttt{(* Default behavior *)}}} \\
        \ELSEN \ \IFN \ \OR{\EQ{\fieldtid}{1}}{\EQ{\fieldtid}{2}} \ \THENN \ \ASSN{\fieldtid}{0} \\
        \ELSEN \ \IFN \ \EQ{\fieldtid}{3} \ \THENN \ \ASSN{\fieldnode}{\houston} \\
        \ELSEN \ \IFN \ \EQ{\fieldtid}{4} \ \THENN \ \ASSN{\fieldnode}{\indiana} \\
        \ELSEN \ \DROP
\end{array}
\begin{array}{rcl}
    \pola &\triangleq&
        \IFN \ \EQ{\fieldnode}{\atlanta} \ \THENN \\ && \quad
            \ITER{(\pol{\atlanta})} \SEQN \NOTEQ{\fieldnode}{\atlanta} \\ &&
        \ELSEN \ \IFN \ \EQ{\fieldnode}{\bay} \ \THENN \\ && \quad
            \ITER{(\pol{\bay})} \SEQN \NOTEQ{\fieldnode}{\bay} \\ &&
        \ELSEN \ \dots \medskip \\
    \abilene &\triangleq& \ITER{(\SEQ{\pola}{\DUP})}
\end{array}
\]

Similarly to \Cref{sec:encode-network}, $\abilene$ models the process of (i) forwarding a packet according to its current node (modeled by $p$), (ii) recording the packet's state in the history (using $\DUP$), and (iii) repeating this process (using iteration). Policy $\pola$ branches on the packet's current node and invokes the corresponding routing policy. Notice that these routing policies are \emph{also iterated}---an idea dating back to \citet{10.1145/2619239.2626300}. The reason for that becomes apparent when considering $\pol{\kansas}$: we use the field $\fieldtid$ to keep track of the tunnel the packet is currently in ($\fieldtid=0$ meaning no tunnel). $\kansas$ thus marks the end of tunnels~$1$ and~$2$, and  the start of tunnels $3$ and $4$. In particular, if $\kansas$ receives a packet with $\fieldtid=0$ and destined for $\nyc$, this packet may be forwarded \emph{either} via tunnel~$3$ \emph{or}~$4$ (using choice). Now, if $\kansas$ receives a packet destined for $\nyc$ on tunnel~$1$, say, then we have to execute $\pol{\kansas}$ \emph{twice} to ensure that, subsequently, it is correctly forwarded via tunnel~$3$ or~$4$. Iterating $\pol{\kansas}$ naturally captures the necessity of possibly having to execute the policy multiple times.

Consider $\bay$, which is configured analogously to $\kansas$ (i.e., $\nyc$-bound traffic is forwarded via tunnels~$1$ or~$2$). The full encoding of $\bay$ (and for all other examples throughout this section) is included in \appendixref[appendix:case-studies]. By instantiating \wnetkat with the Boolean semiring we can, as with \netkat,
already verify that the tunneled paths between $\bay$ and $\nyc$ are configured so that the two nodes are connected. This amounts to checking that the following policy is $1$-reachable:
\[
    \SEQ{(\AND{\EQ{\fieldnode}{\bay}}{\EQ{\fielddst}{\nyc}})}{
    \SEQ{\abilene}
        {(\AND{\EQ{\fieldnode}{\nyc}}{\NOTEQ{\fieldtid}{0}})}}
\]

\subsection{Verifying Reliability of Tunneled Paths with \Cref{thm:verif-safety}}

Let us now illustrate how \wnetkat can serve as a \emph{verification tool} for bounds on the \emph{reliability} of a network's configuration. Consider again network traffic entering at $\bay$ destined for $\nyc$. Beyond checking that $\bay$ and $\nyc$ are indeed connected, a network provider may additionally wish to ensure that all tunneled paths between $\bay$ and $\nyc$ are sufficiently reliable. This corresponds to upper-bounding the \emph{worst-case failures across all combinations of tunnels a packet might take}. We take the encoding from the previous section and \emph{weight} each node's (iterated) routing policy by its forwarding \textcolor{failureorange}{failure rate}, instantiating \wnetkat with the Probabilistic-union semiring (cf. \Cref{tab:weighted-netkat-instances}):
\[
\begin{array}{rc}
    \pol{\relstr} & \triangleq
\end{array}
\begin{array}{l}
        \IFN \ \EQ{\fieldnode}{\atlanta} \ \THENN \\ \quad
            \WEIGH{\failure{1.5}}{\ITER{(\pol{\atlanta})} \SEQN \NOTEQ{\fieldnode}{\atlanta}} \\
      \ELSEN \ \dots
\end{array}
\quad\quad\quad
\begin{array}{rcl}
    \abilene_{\relstr} &\triangleq& \ITER{(\SEQ{\pol{\relstr}}{\DUP})}
\end{array}
\]

Now assume we wish to check that all tunneled paths between $\bay$ and $\nyc$ have a failure rate of at most $10\%$. We use \Cref{thm:verif-safety} to decide whether the following is $0.1$-safe:
\[
    \SEQ{(\AND{\EQ{\fieldnode}{\bay}}{\EQ{\fielddst}{\nyc}})}{
        \SEQ{\abilene_{\relstr}}
        {(\AND{\EQ{\fieldnode}{\nyc}}{\NOTEQ{\fieldtid}{0}})}}
\]

The decision procedure from \Cref{thm:verif-safety} answers negatively and provides a witness: $\kansas$ is configured to always forward $\nyc$-bound traffic through tunnel 3 ($\kansas \edge \houston \edge \atlanta$). However, when a packet at $\kansas$ has just exited tunnel~$2$ ($\bay \edge \la \edge \houston \edge \kansas$), $\kansas$ will unnecessarily reroute traffic through $\houston$ and incur more probability of failure. As a result, the tunneled path $2\edge3\edge5$ between $\bay$ and $\nyc$ will have a failure rate of $10.3\%$. We remark that this is no longer a straightforward cycle detection as discussed in \Cref{sec:model-quant-prop}. \wnetkat enables the automatic identification of a specific combination of network tunnels that cause reliability issues.

A network provider can now use the generated witness to reconfigure $\kansas$ to only forward packets destined for $\nyc$ through tunnel~$3$ if they have not just exited tunnel~$2$:
\[
\begin{array}{rc}
    \pol{\kansas,\mathsf{safe}} & \triangleq
\end{array}
\begin{array}{l}
\IFN \ \EQ{\fieldtid}{0} \ \THENN \
    \dots \\
\ELSEN \ \IFN \ \AND{\EQ{\fieldtid}{2}}{\EQ{\fielddst}{\nyc}} \ \THENN \
    \ASSN{\fieldtid}{4} \quad \text{\textcolor{commentgreen}{\texttt{(* Forward directly *)}}}
\\
\ELSEN \ \IFN \ \OR{\EQ{\fieldtid}{1}}{\EQ{\fieldtid}{2}} \ \THENN \
    \ASSN{\fieldtid}{0} \\
\ELSEN \ \dots
\end{array}
\]
In particular, packets exiting tunnel~$2$ which are destined for $\nyc$ are now directly forwarded through tunnel 4. We can once again use the decision procedure from \Cref{thm:verif-safety}, which this time answers positively: all tunneled paths guarantee a failure rate of at most $10\%$.

\subsection{Finding High-Bandwidth Tunneled Paths with \Cref{thm:verif-reach}}
\noindent
We now demonstrate how \wnetkat can aid network providers as a \emph{design tool} when configuring the routing behavior: With \wnetkat, we can synthesize paths within a network satisfying specific quantitative properties; after which the network can be reconfigured appropriately.

Consider the following scenario: although there are several tunneled paths from $\bay$ to $\nyc$, we wish to refine the network to use high-bandwidth tunnels specifically for \emph{video traffic},
which should be delivered with at least $1000 \mbps$ of \textcolor{bandwidthpurple}{bandwidth}. We modify
the encoding of the network from the previous section (for which it is already guaranteed that all tunneled paths from $\bay$ to $\nyc$ have a failure rate of at most $10\%$)
 by weighting the \emph{forwarding actions} in each tunnel by the corresponding link's \textcolor{bandwidthpurple}{bandwidth}, instantiating \wnetkat with the Bottleneck semiring (cf. \Cref{tab:weighted-netkat-instances}):
\[
\begin{array}{rc}
    \pol{\kansas,\bandstr} & \triangleq
\end{array}
\begin{array}{l}
    \IFN \ \dots \\
    \ELSEN \ \IFN \ \EQ{\fieldtid}{3} \ \THENN \
        \WEIGH{\bandwidth{1250}}{\ASSN{\fieldnode}{\houston}} \\
    \ELSEN \ \IFN \ \EQ{\fieldtid}{4} \ \THENN \
        \WEIGH{\bandwidth{1750}}{\ASSN{\fieldnode}{\indiana}} \\
    \ELSEN \
        \DROP
\end{array}
\
\begin{array}{rcl}
    \pol{\bandstr} &\triangleq& \dots \\
    \\
    \abilene_{\bandstr} &\triangleq& \ITER{(\pol{\bandstr} \SEQN \DUP)}
\end{array}
\]
If it exists, we can now synthesize a tunneled path between $\bay$ and $\nyc$ with a bandwidth of at least $1000\mbps$. By \Cref{thm:verif-reach}, this can be decided by checking the $1000$-reachability of
\[
    \SEQ{(\AND{\EQ{\fieldnode}{\bay}}{\EQ{\fielddst}{\nyc}})}{
    \SEQ{\abilene_\bandstr}
        {(\AND{\EQ{\fieldnode}{\nyc}}{\NOTEQ{\fieldtid}{0}})}}~.
\]
The decision procedure from \Cref{thm:verif-reach} answers positively and provides the sought-after tunneled path:  $1\edge3\edge5$ has a bandwidth of $1250\mbps$. 
We now reconfigure the network to always choose this tunneled path for video traffic (identified via the field $\fieldvid$) destined for $\nyc$, e.g.,
\[
\begin{array}{rc}
    \pol{\kansas,\vidstr} & \triangleq
\end{array}
\begin{array}{l}
    \IFN \ \EQ{\fieldtid}{0} \ \THENN \\ \quad
        \IFN \ \EQ{\fielddst}{\nyc} \ \THENN \ (
            \IFN \ \EQ{\fieldvid}{\consttrue} \ \THENN \
                \ASSN{\fieldtid}{3} \
            \ELSEN \ (\ADD{\ASSN{\fieldtid}{3}}{\ASSN{\fieldtid}{4}}))
            \\ \quad
        \ELSEN \ \dots \\
    \ELSEN \ \dots
\end{array}
\]
In particular, video traffic ($\EQ{\fieldvid}{\consttrue}$) destined for $\nyc$ is now always tunneled through tunnel~$3$. $\bay$ would similarly be reconfigured to always tunnel $\nyc$-bound video traffic through tunnel~$1$. The forwarding behavior for regular traffic remains unchanged from our previous example.

In summary, we first used \wnetkat to verify that the tunneled paths configured in the Abilene network provide a reliability of at least $90\%$; then we used \wnetkat to refine our network configuration to use high-bandwidth tunneled paths for video traffic. We focus on reliability and bandwidth in these examples, but our framework (and decision procedures) remains parametric on a semiring and can verify several network phenomena as shown in \Cref{tab:weighted-netkat-instances}. For example, we can similarly verify that for the final network configuration above, all \emph{video traffic} in these tunneled paths is \emph{additionally} delivered within $20\millis$. Instantiating \wnetkat with the Arctic semiring and weighting by \textcolor{latencygreen}{latency}, this would correspond to checking that the following policy is $20$-safe:
\[
    \SEQ{(\AND{\AND{\EQ{\fieldnode}{\bay}}{\EQ{\fielddst}{\nyc}}}{\EQ{\fieldvid}{\consttrue}})}{
    \SEQ{\abilene_\vidstr}
        {(\AND{\EQ{\fieldnode}{\nyc}}{\NOTEQ{\fieldtid}{0}})}}
\]
As before, this property is decidable by \Cref{thm:verif-safety}; the decision procedure answers in the positive.

\section{Related Work}
\label{sec:related-work}

\wnetkat is an extension of \netkat~\cite{netkat}, which is itself an extension of
\kat~\cite{Kozen97} to reason about network behavior. \wnetkat is inspired in
large part by \probnetkat~\cite{probnetkat,cantor}, which in turn extends \netkat to
model probabilistic network behavior. \wnetkat generalizes this idea to support
modeling several different quantitative behaviors based on the choice of
semiring. Nevertheless, \wnetkat remains a conservative extension by subsuming
both \netkat and the guarded fragment of \probnetkat. Although \wnetkat
subsumes only a fragment of \probnetkat, an even smaller fragment (guarded and
$\DUP$-free) has previously been studied in a practical
setting~\cite{mcnetkat}.
\citet{costinternetkat} and \citet{wnetkat} both introduce extensions to
\netkat parametric on semirings to model latency and other quantitative
properties. However, neither extension provides a sound translation to weighted
automata nor decision procedures for verifying quantitative network
properties. Both extensions are
more expressive at the syntax-level than \wnetkat (e.g. including quantitative
tests), our extension instead is intentionally chosen so that the syntax is
expressive enough to model interesting network behavior while still being able
to use techniques based on weighted automata.

More generally, several frameworks have been proposed for reasoning about
programming systems that are parametric on
semirings (e.g., see~\cite{provenance-semi,core-coeff-calc,weighted-prog}).
In particular,
\citet{weighted-prog} propose \emph{weighted programming} as a paradigm for
specifying mathematical models beyond probability distributions. \wnetkat
follows a similar approach (and is likewise parametric on $\omega$-continuous
semirings), though our focus is specifically on extending the power of \netkat
to reasoning about the quantitative behavior of networks. Many of our example
semirings, however, are based on their applications in weighted
programming. \wgkat~\cite{wgkat} and \kawt~\cite{kawt} are both extensions to
\kat that likewise follow a similar approach to the work by
\citet{weighted-prog}. \citet{wgkat} extend Guarded Kleene Algebra with Tests
(\gkat)~\cite{gkat} to the weighted setting and show the decidability of
equivalence for weighted automata up to bisimilarity. Unfortunately, we cannot
apply their results as \wnetkat features unguarded iteration; we refer the
reader to the work of \citet{guarded-netkat} for the incompatiblity of \netkat
and \gkat. \citet{sedlar2024completeness} shows a completeness result for a
more general version of \kawt (Kleene Algebra with Weights and Tests). \wnetkat
is most similar to \kawt (albeit in the setting of \netkat which introduces
further subtleties as we discuss throughout the paper). However, \kawt does not provide a general language model and computable operational semantics, limiting this only to finite semirings.

Decision procedures for weighted automata have been studied
extensively in the literature~\cite{10.5555/646246.684713,KROB1994231}. Our decision procedures
for $\wta$-safety and $\wta$-reachability for \wnetkat automata adapt
results by \citet{decidable-weighted} for the Tropical semiring
over the natural numbers. Recent work by \citet{katch} contributed techniques for efficient implementations of decision procedures over \emph{\netkat automata} (which \wnetkat automata subsume).
In particular, they develop \emph{symbolic} versions of \netkat automata that do not explicitly enumerate their packet space and provide an ecosystem of supporting algorithms. Their approach is, however, highly tailored to deciding \emph{equivalence}, a property that is (i) undecidable in general for weighted automata (ii) often too strong when it comes to verifying quantitative network properties. Applying these techniques in our setting is fundamentally different as it would require not only the development of a novel, symbolic representation of our
\wnetkat automata but also direct symbolic decision procedures for $\wta$-safety and $\wta$-reachability rather than equivalence. 

Finally, semirings have been used as
the foundation of other frameworks in the networking domain, including network
calculus~\cite{network-calculus} and routing algebras~\cite{routing-algebras,metarouting}. Network calculus is a mathematical framework designed for modeling and reasoning about
quantitative properties. It provides primitives for modeling the arrival, buffering, and departure of traffic in a deterministic queueing system and also models interactions between multiple flows. Unlike \wnetkat, the focus is more on pencil-and-paper proofs of performance bounds rather than automated verification of safety and reachabilty properties. Routing algebras model the behavior of distributed control-plane protocols like OSPF and BGP, whereas \wnetkat focuses on behavior at the data-plane level. Deepening the connections between these frameworks is an excellent direction for future work.

\section{Conclusion}
\label{sec:conclusion}

We introduced \wnetkat, a framework for quantitative
network verification. We developed a denotational semantics of \wnetkat and an equivalent 
 language model. We then presented an operational
semantics based on \wnetkat automata to compute the exact semantics of \wnetkat.
This enabled the design of decision procedures for reasoning about $\wta$-safety and $\wta$-reachability in networks.
We then used the
framework to reason about worst-/best-case network guarantees over a range of practical network phenomena in the setting of Abilene backbone network.

As future work, we would like to implement practical versions of these decision
procedures over efficient representations of \wnetkat
automata (e.g., as in \cite{katch,netkat-learning}). Separately, network
verification with \netkat requires having accurate models of such
systems; which can be tedious and error prone. This is only more true in the
weighted setting, and so we would like to explore learning for \wnetkat
automata~\cite{netkat-learning}. Finally, we would be interested in
extensions to \wnetkat that make the language more expressive. In particular,
we would like to consider variants with quantitative tests
(e.g., as in~\cite{costinternetkat,wnetkat}), which would allow expressing
network behavior dependent on quantities (e.g., load balancing).
These extensions would however complicate our
denotational and operational model, and importantly, would not allow us to
reduce our properties to \wnetkat automata. We leave them as a possible
direction for future work.

\begin{acks}
    We are grateful to our PLDI reviewers and shepherd who helped us improve our paper significantly. We also thank Thomas Lu for helpful discussions on our case study, as well as the Cornell PLDG and UCL PPLV group for their feedback on early drafts.
This material is based upon work supported by the Defense Advanced Research Projects Agency (DARPA) under Contract No. HR001125CE018 (Approved for public release; distribution is unlimited.).
Additionally,
this material is based upon work supported by the Naval Surface Warfare Center-Corona Division (NSWC-Corona) under Contract No. N6426725C7016. Any opinions, findings and conclusions or recommendations expressed in this material are those of the author(s) and do not necessarily reflect the views of the NSWC-Corona.
Additionally,
this work was supported by
    ERC grant Autoprobe (no. 101002697),
    NSF grant DGE–2139899,
    DFF project AuRoRA,
    and a Royal Society Wolfson fellowship,
    as well as a gift the VMware University Research Fund.
Finally, a generative AI software tool (Claude, Sonnet 4.6) was used for finding typographical errors in our paper and for help producing the TikZ code used in one figure.
\end{acks}

\bibliographystyle{plainnat}
\bibliography{refs}
\ifappendix
\clearpage
\appendix
\section{Preliminaries}
\label{appendix:monoids-semirings}
In order to give an appropriate semantic treatment to \wnetkat, we
restrict the semirings that weights are drawn from to a specific class of
semirings: $\omega$-continuous semirings. In this section, we provide
definitions for these semirings and associated constructs.

\subsection{$\omega$-cpos and $\omega$-continuous Functions}
This section reviews basic concepts from domain theory that are
part of the definition of $\omega$-continuous semirings. For a more detailed
account, we refer the reader to~\citet{domain-theory}.
\begin{definition}[Partial Orders]
	A \emph{partial order} is a structure
	\(
	\cpo \eeq (\cpodom,\, \cpoord)~,
	\)
    where $\cpodom$ is a set and where $(\cpoord) \subseteq \cpodom\times\cpodom$ is a binary relation on $\cpodom$ such that $(\cpoord)$ is \emph{reflexive},  \emph{anitsymmetric}, and \emph{transitive}.	
\end{definition}

Now let $\cposubset \subseteq \cpodom$. We say that $\cpoa\in\cpodom$ is an \emph{upper bound} of $\cposubset$, if $\cpoa'\cpoord'\cpoa$ for all $\cpoa'\in'\cposubset$. We say that $\cpoa$ is the \emph{least} upper bound (or \emph{supremum}) of $\cposubset$, if $\cpoa\cpoord\cpoa'$ for all upper bounds $\cpoa'$ on $\cposubset$. Antisymmetry of $(\cpoord)$ implies that, if the least upper bounds exists, then it is unique and we denote it by $\cposup \cposubset$.
\begin{definition}[$\omega$-Complete Partial Orders]
	An \emph{$\omega$-cpo} is a partial order
	$
		(\cpodom,\, \cpoord)
	$
	such that:
	\begin{enumerate}
		\item There exists a least element $\cpobot \in \cpodom$, i.e., for all $\cpoa\in\cpodom$, we have $\cpobot \cpoord \cpoa$.
		\item For every $\omega$-chain, $\cpochain = \{\cpoa_0 \cpoord \cpoa_1 \cpoord \ldots\}$ in $\cpoord$, the {supremum} $\cposup \cpochain$ of $\cpochain$ exists in $\cpodom$.
	\end{enumerate}
\end{definition}
\begin{definition}[Monotonic and Continuous Endomaps]
	Let $(\cpodom,\, \cpoord)$ be an $\omega$-cpo and let
	$\cpoendo \colon \cpodom\to\cpodom$
	 be an endomap. We say that $\cpoendo$ is \emph{monotonic}, if for all $\cpoa,\cpoa'\in\cpodom$,
	 \[\cpoa\cpoord\cpoa'\quad\text{implies}\quad\cpoendo(\cpoa)\cpoord \cpoendo(\cpoa')~.\]
	 Moreover, we say that $\cpoendo$ is \emph{$\omega$-continuous}, if $\cpoendo$ is monotonic and preserves suprema of $\omega$-chains, i.e.,
	 \begin{align*}
	 	\forall\omega\text{-$chains$}~\cpochain~\text{in $\cpodom$}\colon\quad
	 	 \cpoendo\left(\cposup \cpochain \right) \eeq \cposup \left\{ \cpoendo(\cpoa) ~|~ \cpoa\in\cpochain \right\}~.
	 \end{align*}
\end{definition}
\subsection{Monoids}
There are several definitions of ($\omega$-)continuous monoids and semirings in
the literature (see \cite{karner1992limits,esik2008iteration}), ours are taken
from~\citet{kuich2011algebraic}.
\begin{definition}[Monoids]
	\label{def:monoids}
    A \emph{monoid} is a structure
    \(
    \mon\eeq (\mondom,\, \semimul,\, \mulid)~,
    \)
     where $\mondom$ is a set, $\semimul \colon \mondom\times\mondom \to \mondom$, and $\mulid \in \mondom$ such that \emph{multiplication} $\semimul$ is associative and $\mulid$ is an identity w.r.t.\ $\semimul$.
    We call $\mon$ \emph{commutative} if additionally $\forall \mona_1,\mona_2\colon \mona_1\semimul \mona_2 = \mona_2\semimul\mona_1$ holds.
    Moreover, we call $\mon$ \emph{idempotent} if $\forall \mona \colon \mona \semimul \mona = \mona$.
\end{definition}
\begin{definition}[$\omega$-Continuous Monoids]
    \label{def:omega-monoids}
	An \emph{$\omega$-continuous monoid} is an $\omega$-cpo
	\( (\mon,\, \semiord)~, \)
	 where $\mon = (\mondom,\, \semiadd,\, \addid)$ is a commutative monoid such that
	 \begin{enumerate}
         \item\label{def:omega-monoids1} $(\semiord)$ is \emph{positive}, i.e., $\addid$ is the least element of $(\semiord)$,
	 	\item\label{def:omega-monoids2} $\semiadd$ is \emph{monotonic} in both arguments, i.e.,
            for all $\mona_1,\mona_2,\mona_3 \in \mondom$,
	 	\begin{align*}
	 		\mona_2 \semiord \mona_3
	 		\qquad\text{implies}\qquad
	 		\mona_1 \semiadd \mona_2 \semiord \mona_1 \semiadd \mona_3
	 	\end{align*}
	 	\item\label{def:omega-monoids3} $\semiadd$ is \emph{$\omega$-continuous} in both
            arguments, i.e.
            \footnote{Notice that monotonicty of $\semiadd$ implies that the depicted suprema
             are well-defined.},
        \begin{align*}
	 		\forall \text{$\omega$-\text{chains}}~\semichain= \{\mona_0 \semiord \mona_1 \semiord \ldots\} \colon
	 		    \forall \mona \in \mondom \colon \quad
                    \mona \semiadd \semisup \semichain =
                        \semisup[\mona'\in\semichain] \mona \semiadd \mona'
        \end{align*}
	 \end{enumerate}
\end{definition}

Every $\omega$-continuous monoid induces a summation operation defined on \emph{countable} index sets as follows: Let $\indexset$ be a countable set and let
$\{\semia_\indexx\}_{\indexx\in\indexset} \colon \indexset \to \mondom$ be a family in
$\mondom$ indexed by $\indexset$. Now let $N = \{0,1,\ldots\}$ such that
$\enum \colon N \to \indexset$ is an \emph{arbitrary} enumeration of $\indexset$. One defines
\[
	\semisum{\indexx \in \indexset} \mona_\indexx \eeq
        \cposup \left\{\semisumtop{j=0}{n} \mona_{\enum(j)} ~|~ n\in N \right\}~.
\]
This supremum exists because every $\omega$-continuous monoid satisfies
$\mona \semiord \mona \semiadd \mona'$ for all $\mona,\mona' \in \mondom$. Moreover, it
can be shown that the above notion of countable summation is well-defined because the
value of the supremum is independent of the chosen enumeration $\enum$.
\begin{lemma}[Properties of $\omega$-continuous monoids~\cite{karner1992limits}]
\label{def:omega-comp-monoids}
    Let $(\mon, \semiord)$ and let $\{\mona_\indexx\}_{\indexx \in \indexset}$
    be a countable indexed family in $\mondom$. We have:
    \begin{enumerate}
		\item\label{def:omega-comp-monoids1} If $\indexset = \emptyset$, then $\semisum{\indexx \in\indexset} \semia_\indexx = \addid$.
        \item\label{def:omega-comp-monoids2} If $\indexset = \{\indexx_1,\indexx_2\}$, then $\semisum{i \in\indexset} \semia_{\indexx} = \semia_{\indexx_1} \semiadd \semia_{\indexx_2}$.
		\item\label{def:omega-comp-monoids3} If $\indexsetb$ is a countable set and $\indexset_j$
            are countable for every $j\in\indexsetb$ such that\footnote{
                Here, and elsewhere throughout the paper, $\dot{\bigcup}$
                denotes \emph{disjoint} union.
            }
            $\indexset =  \dot{\bigcup}_{j \in\indexsetb} \indexset_j$, then
		\[
				\semisum{j\in\indexsetb} \Big( \semisum{\indexx \in \indexset_j} \semia_i \Big)
				\eeq
				\semisum{\indexx \in \indexset} \semia_i~.
		\]
    \end{enumerate}
    Note that \Cref{def:omega-comp-monoids1,def:omega-comp-monoids2,def:omega-comp-monoids3}
    are exactly the properties defining an $\omega$-complete monoid, therefore every
    $\omega$-continuous monoid is also an $\omega$-complete monoid.
\end{lemma}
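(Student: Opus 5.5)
The statement has three parts. Parts~(1) and~(2) follow directly from the definition of the countable sum as a supremum of partial sums. Part~(3), generalized associativity, needs an auxiliary characterization of countable sums via finite subsets. Throughout we use only the definition of an $\omega$-continuous monoid (\Cref{def:omega-monoids}) and the facts recorded just before the statement: $\mona \semiord \mona \semiadd \mona'$ for all $\mona,\mona'$, and the countable sum does not depend on the enumeration $\enum$.

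\emph{Parts (1) and (2).} For $\indexset = \emptyset$ there are no partial sums, or equivalently only the empty partial sum $\addid$. In either reading the supremum is $\addid$, because $\addid$ is the least element by positivity. For $\indexset = \{\indexx_1,\indexx_2\}$ we choose the enumeration $\indexx_1,\indexx_2$. The partial sums are then $\mona_{\indexx_1}$ followed by $\mona_{\indexx_1}\semiadd\mona_{\indexx_2}$, and the chain is constant from that point on. Its supremum is therefore $\mona_{\indexx_1}\semiadd\mona_{\indexx_2}$, and enumeration independence makes this choice of enumeration harmless.

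\emph{Auxiliary lemma for (3).} I would first show that for every countable $\indexset$,
\[
\semisum{\indexx\in\indexset}\mona_\indexx \eeq \cposup\Big\{ \semisumtop{\indexx\in F}{}\mona_\indexx ~\Big|~ F\subseteq \indexset \text{ finite}\Big\}.
\]
Every prefix of the enumeration is a finite subset, so the left-hand side is at most the right-hand side. Conversely, every finite $F$ is contained in some prefix $\{\enum(0),\dots,\enum(n)\}$. Monotonicity of $\semiadd$, together with $\mona\semiord\mona\semiadd\mona'$, then bounds the sum over $F$ by the $n$-th partial sum. Hence the supremum on the right exists and the two sides coincide.

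\emph{Part (3).} Write $T_j \triangleq \semisum{\indexx\in\indexset_j}\mona_\indexx$; the goal is to prove both inequalities between $\semisum{j\in\indexsetb} T_j$ and $\semisum{\indexx\in\indexset}\mona_\indexx$.

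For $\semiordrev$, take any finite $F\subseteq\indexset$. It meets only finitely many blocks, say $\indexset_{j_1},\dots,\indexset_{j_k}$, and splits as $\dot{\bigcup}_l (F\cap\indexset_{j_l})$. By the auxiliary lemma, each block sum satisfies $\semisumtop{\indexx\in F\cap\indexset_{j_l}}{}\mona_\indexx \semiord T_{j_l}$. Adding these over $l$ and using monotonicity bounds the sum over $F$ by a finite sum of $T_j$'s, which is in turn bounded by $\semisum{j}T_j$. Taking the supremum over $F$ gives the inequality.

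For $\semiord$, it suffices to bound every finite partial sum $T_{j_1}\semiadd\dots\semiadd T_{j_k}$ by $\semisum{\indexx\in\indexset}\mona_\indexx$. Each $T_{j_l}$ is the supremum of a chain $(c^l_n)_n$ of prefix sums. Applying $\omega$-continuity of $\semiadd$ in each argument, one argument at a time, gives
\[
T_{j_1}\semiadd\dots\semiadd T_{j_k} \eeq \cposup[n_1]\cdots\cposup[n_k] \big(c^1_{n_1}\semiadd\dots\semiadd c^k_{n_k}\big).
\]
A diagonal argument based on monotonicity collapses the iterated supremum to $\cposup[n]\big(c^1_n\semiadd\dots\semiadd c^k_n\big)$. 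Each term $c^1_n\semiadd\dots\semiadd c^k_n$ is a sum over a finite subset of $\indexset$ (the blocks are disjoint), so by the auxiliary lemma it is at most $\semisum{\indexx\in\indexset}\mona_\indexx$.

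The two inequalities together with antisymmetry give part~(3). The main obstacle is the $\semiord$ direction: exchanging finitely many chain suprema with $\semiadd$. This is where the diagonal argument is needed, and it relies on both monotonicity and continuity of $\semiadd$.
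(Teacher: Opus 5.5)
Your proof is correct. The paper gives no proof of this lemma and only cites \cite{karner1992limits}. Your argument is the standard one from that literature: identify the countable sum with the supremum of the sums over finite subsets, then use $\omega$-continuity of $\semiadd$ to pull finitely many block sums inside their defining chain suprema.

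Two small remarks. First, your auxiliary lemma already proves independence of the enumeration, since its right-hand side never mentions $\enum$, so you need not import that fact separately. When you write the lemma up, state its argument in the right order: the chain supremum bounds every finite-subset sum, and every upper bound of the finite-subset sums bounds the chain, so the chain supremum \emph{is} the least upper bound. This order matters because in an $\omega$-cpo you cannot presuppose that the supremum of a directed family exists.

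Second, the diagonal collapse in the $\semiord$ direction of part (3) is not needed. Every term $c^1_{n_1}\semiadd\dots\semiadd c^k_{n_k}$ of the iterated supremum is itself a sum over a finite subset of $\indexset$, because the blocks are disjoint. Each such term therefore lies below $\semisum{\indexx\in\indexset}\mona_\indexx$, and hence so does the iterated supremum. The ingredient that is genuinely indispensable there is $\omega$-continuity of $\semiadd$, which you do use; monotonicity alone only gives the opposite inequality. Finite or empty blocks should be padded to constant $\omega$-chains, which is routine.
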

\subsection{Semirings}
\begin{definition}[Semirings]
	\label{def:semirings}
	A \emph{semiring} is a structure
	\(
		\semi \eeq (\semidom,\, \semiadd,\, \semimul, \, \addid,\, \mulid)~,
	\)
	where $\semidom$ is a set, with binary operations $\semiadd,\semimul\colon\semidom\times\semidom \to \semidom$, and constants $\addid,\mulid\in \semidom$ such that
	\begin{enumerate}
		\item\label{def:semirings1} $(\semidom,\, \semiadd,\, \addid)$ is a commutative monoid,
		\item\label{def:semirings2} $(\semidom,\, \semimul,\, \mulid)$ is a (not necessarily commutative) monoid,
		\item\label{def:semirings3} \emph{multiplication} $\semimul$ distributes over \emph{addition} $\semiadd$, i.e., for all $\semia_1,\semia_2,\semia_3\in\semidom$,
		\[
			\semia_1 \semimul (\semia_2 \semiadd \semia_3) \eeq \semia_1 \semimul \semia_2 \semiadd \semia_1\semimul\semia_3
			\quad\text{and}\quad
			 (\semia_2 \semiadd \semia_3) \semimul \semia_1  \eeq  \semia_2 \semimul \semia_1  \semiadd \semia_3\semimul\semia_1~,
		\]
		where we assume throughout that $\semimul$ binds stronger than $\semiadd$.
		\item\label{def:semirings4} multiplying with $\addid$ is \emph{annihilating}, i.e., $\forall \semia\in\semidom\colon \semia\semimul\addid = \addid\semimul\semia = \addid$.
	\end{enumerate}
\end{definition}
\begin{definition}[$\omega$-Continuous Semirings]
	\label{def:omega-semirings}
	An \emph{$\omega$-continuous semiring} is an $\omega$-cpo
	\[ (\semi,\, \semiord)~, \]
	 where $\semi = (\semidom,\, \semiadd,\, \semimul, \, \addid,\, \mulid)$ is a semiring such that
	 \begin{enumerate}
        \item\label{def:omega-semirings1}
            $((\semidom,\, \semiadd,\, \addid), \semiord)$ is an $\omega$-continuous monoid,
	 	\item\label{def:omega-semirings2} $\semimul$ is \emph{monotonic} in both arguments,
	 	\item\label{def:omega-semirings3}$\semimul$ is \emph{$\omega$-continuous} in both
            arguments.
	 \end{enumerate}
\end{definition}
\begin{lemma}[Properties of $\omega$-continuous semirings]
\label{def:omega-comp-semirings}
    Let $(\semi, \semiord)$ and let $\{\semia_\indexx\}_{\indexx \in \indexset}$
    be a countable indexed family in $\semidom$. 
               Multiplication distributes over countable summation, i.e.,
               for all $\semia \in \semidom$,
               \[
                   \semia \semimul \semisum{\indexx \in\indexset} \semia_\indexx \eeq \semisum{\indexx \in\indexset} \semia\semimul \semia_\indexx
                   \quad\text{and}\quad
                   \left( \semisum{\indexx \in\indexset} \semia_\indexx\right) \semimul \semia \eeq \semisum{\indexx \in\indexset} \semia_\indexx\semimul \semia~.
               \]
    Note that every $\omega$-continuous semiring is also an $\omega$-complete
    semiring.
\end{lemma}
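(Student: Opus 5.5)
We argue directly from the definition of countable sums as suprema of partial sums. Then we use $\omega$-continuity of $\semimul$ to move the multiplication inside the supremum, and finite distributivity to move it inside each partial sum. We treat only the left-distributive law $\semia \semimul \semisum{\indexx\in\indexset}\semia_\indexx = \semisum{\indexx\in\indexset}\semia\semimul\semia_\indexx$; the right-distributive law is symmetric, using continuity of $\semimul$ in its left argument.

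\emph{Degenerate case.} If $\indexset=\emptyset$, both sides equal $\addid$. The left side is $\semia\semimul\addid$, which is $\addid$ by annihilation, and the right side is $\addid$ by \Cref{def:omega-comp-monoids}(\ref{def:omega-comp-monoids1}).

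\emph{Setting up the chain.} Otherwise, fix an enumeration $\enum\colon N\to\indexset$ and write $P_n \triangleq \semisumtop{j=0}{n}\semia_{\enum(j)}$. First we check that $(P_n)_n$ is an $\omega$-chain. Since $\addid$ is least, we have $\addid\semiord\semia_{\enum(n+1)}$. Monotonicity of $\semiadd$ then gives $P_n = P_n\semiadd\addid \semiord P_n\semiadd\semia_{\enum(n+1)} = P_{n+1}$. Hence $\semisum{\indexx\in\indexset}\semia_\indexx=\cposup_n P_n$ is a supremum of an $\omega$-chain. If $N$ is finite the chain is eventually constant, and the argument below still applies.

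\emph{Main chain of equalities.} By $\omega$-continuity of $\semimul$ in its right argument (\Cref{def:omega-semirings}),
\[
\semia\semimul\cposup_n P_n \eeq \cposup_n \semia\semimul P_n .
\]
Next we show, by induction on $n$, that $\semia\semimul P_n = \semisumtop{j=0}{n}\semia\semimul\semia_{\enum(j)}$. The base case is trivial, and the step is exactly binary distributivity from \Cref{def:semirings}. The right-hand side is the $n$-th partial sum of the family $\{\semia\semimul\semia_\indexx\}_{\indexx\in\indexset}$ under the same enumeration $\enum$. Its supremum over $n$ is therefore, by definition, $\semisum{\indexx\in\indexset}\semia\semimul\semia_\indexx$. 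Here we use that the countable sum is independent of the chosen enumeration, as noted before \Cref{def:omega-comp-monoids}, so fixing $\enum$ is harmless.

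\emph{Closing remark and main obstacle.} For the remark that every $\omega$-continuous semiring is $\omega$-complete: the additive reduct is an $\omega$-complete monoid by \Cref{def:omega-comp-monoids}, and we have just established countable distributivity. These are exactly the defining axioms of an $\omega$-complete semiring. The only step requiring care is that the partial sums really form a chain, so that continuity of $\semimul$ applies. This follows from positivity of $\semiord$ together with monotonicity of $\semiadd$, as shown above.
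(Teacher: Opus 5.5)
Your proof is correct. The paper does not prove this lemma: it states it in the preliminaries as a standard property of $\omega$-continuous semirings, and the remark that follows defers such facts to the semiring literature it cites (the definitions themselves are taken from Kuich). The standard argument behind that citation is the one you give:
\begin{enumerate}
\item positivity of $\addid$ and monotonicity of $\semiadd$ make the partial sums $P_n$ an $\omega$-chain;
\item $\omega$-continuity of $\semimul$ in the relevant argument moves $\semia$ inside the supremum;
\item binary distributivity identifies $\semia\semimul P_n$ with the $n$-th partial sum of $\{\semia\semimul\semia_\indexx\}_{\indexx\in\indexset}$ under the same enumeration.
\end{enumerate}
What your write-up adds over the citation is a check that the lemma follows from exactly the axioms and the enumeration-based definition of countable sums that the paper adopts. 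That is worth having, since the paper itself notes that the literature uses several slightly different definitions of continuity. One cosmetic simplification: for finite $\indexset$ you need not speak of an eventually constant chain or invoke continuity at all. The countable sum is then an ordinary finite sum, and binary distributivity alone suffices.
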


\begin{definition}[Computable Semirings]
	\label{def:computable_semiring}
	An $\omega$-continuous semiring $(\semi,\, \semiord)$  is \emph{computable}, if:
	\begin{enumerate}
		\item The domain $\semidom$ is a computable set,
		\item The operations $\semiadd, \semimul$, and the \emph{closure operation}
        \footnote{Every $\omega$-complete semiring is a so-called \emph{starsemiring}
        admitting this operation~\cite{Droste2009}.}
        $\semistar{(-)} \colon \semidom \to \semidom$ defined as
		\[
		\semistar{\semia} \eeq \semisum{n \in \nats} \semia^n
		\qquad
		\text{where}
		\qquad
		\semia^0 = \mulid~\text{and}~\semia^{n+1} = \semia \semimul \semia^n
		\]
		are computable, and
		\item the order $\semiord$ is a decidable relation. 
	\end{enumerate}
\end{definition}

\begin{notation}
    Going forward and throughout the rest of this paper, we adpot the
    notational convention of writing $\semi$ instead of $\semidom$
    when working over a semiring $\semi$.
\end{notation}

\begin{remark}
Throughout the appendix, we make use in our proofs of several
properties that are well-known in the literature for ($\omega$-continuous)
monoids and semirings (e.g., countable summation can be shown to be
associative). We refer the reader to~\cite{golan2013semirings,Droste2009} for
more detailed accounts.
\end{remark}

\section{Denotational semantics}
\subsection{Properties of Weightings}
\label{sec:weightings-props}
\begin{theorem}[Properties of weightings]
	\label{thm:weightings-props}
    For reference, we list here several properties for weightings that follow
    from $\semi$ being an $\omega$-continuous semiring.
	\begin{enumerate}
		\item \label{thm:weightings-props1}
		$(\weightings[\semi]{X}, \, \lsemiadd, \, \laddid),  \,\semiord)$ is an
		$\omega$-continuous monoid.
		\item \label{thm:weightings-props2}
		$(\weightingsfunctor[\semi], \eta, \bind)$ forms a monad,
		\item \label{thm:weightings-props3}
		Lifted scalar multiplication (from both sides) distributes over lifted addition,
		\item \label{thm:weightings-props4}
		Lifted semiring zero is an annihilator for lifted multiplication,
		\item \label{thm:weightings-props5}
		Lifted scalar multiplication (from both sides) distributes over countable summation,
        \item \label{lmm:zero-left-annihilate-bind}
            $\laddid$ is a left-annihilator for $(\bind)$,
        \item \label{lmm:bot-right-annihilate-bind}
            $\cpobot = \mylam{x} \laddid$ is a right-annihilator for $(\bind)$,
        \item \label{lmm:bind-mono-l} \label{lmm:bind-mono-r}
            $(\bind)$ is monotonic in both arguments,
        \item  \label{lmm:bind-omega-continuous-l} \label{lmm:bind-omega-continuous-r}
            $(\bind)$ is $\omega$-continuous in both arguments.
	\end{enumerate}
\end{theorem}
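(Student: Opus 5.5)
The nine items split into two groups. Items (1), (3), (4), (5) concern the pointwise-lifted structure and can be checked one point at a time. Items (2) and (6)–(9) concern $\bind$; each reduces to a statement about countable sums in $\semi$, which I handle with \Cref{def:omega-comp-monoids} and \Cref{def:omega-comp-semirings}.

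\textbf{Pointwise items.} For (1), the ordering, the operation $\semiadd$ and the constant $\addid$ on $\weightings{X}$ are all defined pointwise. So the commutative-monoid laws, positivity and monotonicity of $\semiadd$ hold at every $x$, because they hold in $\semi$.

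For an $\omega$-chain $\wtinga_0 \semiord \wtinga_1 \semiord \cdots$, I take the pointwise supremum $\mylam{x}\cposup[n]\wtinga_n(x)$.
\begin{itemize}
\item It is an upper bound, and it is least among upper bounds, because both facts can be checked pointwise.
\item It still has countable support. If $\wtinga_n(x)=\addid$ for all $n$, then the supremum at $x$ is $\addid$. Hence the support of the supremum is contained in $\bigcup_n\supp(\wtinga_n)$, which is a countable union of countable sets.
\end{itemize}
Continuity of $\semiadd$ then follows pointwise from continuity in $\semi$. Closure of $\weightings{X}$ under $\semiadd$ and under scalar multiplication is immediate from the support inclusions $\supp(\wtinga_1\semiadd\wtinga_2)\subseteq\supp(\wtinga_1)\cup\supp(\wtinga_2)$ and $\supp(\wta\semimul\wtinga)\subseteq\supp(\wtinga)$.

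Items (3) and (4) are the semiring's distributivity and annihilation laws read at each $x$. Item (5) is \Cref{def:omega-comp-semirings} read at each $x$.

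\textbf{Monad laws (2).} First, $\bind$ lands in $\weightings{X}$. The support of $\wtinga\bind f$ is contained in $\bigcup_{x\in\supp(\wtinga)}\supp(f(x))$, which is a countable union of countable sets.

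For the left unit law, the sum defining $\eta(x)\bind f$ ranges over the single index $x$ with coefficient $\mulid$, so it equals $f(x)$. For the right unit law, the sum defining $(\wtinga\bind\eta)(y)$ has exactly one possibly non-$\addid$ term, namely $\wtinga(y)\semimul\mulid$. Both use \Cref{def:omega-comp-monoids}(1,2), extended to finite index sets by induction.

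Associativity is the only nontrivial law. I first note that I may sum over all of $X$ instead of over $\supp(\wtinga)$, padding with $\addid$ terms; this is legitimate provided the index set stays countable, so I restrict to the countable set $\supp(\wtinga)\cup\bigcup_x\supp(f(x))$. Then I unfold
\[((\wtinga\bind f)\bind g)(z)=\semisum{y}\Big(\semisum{x}\wtinga(x)\semimul f(x)(y)\Big)\semimul g(y)(z).\]
Right distributivity (\Cref{def:omega-comp-semirings}) pushes $g(y)(z)$ inside. Associativity of $\semimul$ and the partition property \Cref{def:omega-comp-monoids}(3), applied to the countable index set of pairs $(x,y)$, swap the two sums. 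Left distributivity then pulls $\wtinga(x)$ back out, which yields $(\wtinga\bind(\mylam{x}f(x)\bind g))(z)$.

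\textbf{Annihilation and order properties of $\bind$ (6)–(9).} Item (6) holds because $\supp(\addid)=\emptyset$, so every sum defining $\addid\bind f$ is empty. Item (7) holds because every summand contains the factor $f(x)(y)=\addid$.

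For monotonicity (8), I again rewrite the sums over a common countable index set, so that the two sides are sums of termwise comparable families. Termwise comparison lifts to partial sums by monotonicity of $\semiadd$, and then to their suprema.

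For continuity (9), I take a chain $\wtinga_n$ and write $\wtinga=\cposup_n\wtinga_n$. Continuity of $\semimul$ gives, for each $x$, $\wtinga(x)\semimul f(x)(y)=\cposup_n \wtinga_n(x)\semimul f(x)(y)$. It remains to exchange this supremum with the countable sum. Both are suprema of directed families of finite partial sums, so the exchange follows from continuity of $\semiadd$ on finite sums combined with the standard interchange of two monotone suprema. The argument in the other argument, for a chain $f_n$ ordered pointwise, is symmetric.

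I expect the main obstacle to be the bookkeeping in associativity and continuity. The sum indices depend on supports that change with $\wtinga$ or $f$, and both proofs rely on first reindexing over a fixed countable superset, justified because $\addid$ summands do not change a sum.
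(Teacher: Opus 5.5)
Your proof is correct and follows the route the paper has in mind. The paper gives no written proof, only the remark that these properties follow pointwise from $\omega$-continuity of $\semi$ (and that the monad laws are ``easy to check''); your pointwise checks, together with reindexing countable sums over a fixed countable superset of the supports, are exactly the elaboration of that remark.

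One small precision: in the associativity, monotonicity and continuity steps, the padded index set for $y$ should be $\bigcup_{x\in\supp(\wtinga)}\supp(f(x))$ rather than a union over all $x\in X$. The latter need not be countable when $X$ is uncountable.
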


\subsection{Least Fixed Point Characterization of Iteration}
\label{sec:iter-lfp}

We wish to show that $\sem{\ITER{\pola}} $ is the least solution of
\[
	\sem{\ITER{\pola}} \eeq \sem{\ADD{\SKIP}{\SEQ{\pola}{\ITER{\pola}}}}~.
\]

To formalize this, define the $\omega$-cpo $(\histories \rightarrow \weightings{\histories}, \,\cpoord)$, where $\cpoord$ is obtained from lifting the order $\semiord$ on $\weightings{\histories}$ pointwise. Now, given a policy $\pola$, define the operator
\[
	 \Phi_{\pola}\colon (\histories \rightarrow \weightings{\histories}) \to (\histories \rightarrow \weightings{\histories}), \qquad
	 \Phi_{\pola}(\cpoa) \ttriangleq
	 \mylam{\hista}
	 \eta(\hista) \semiadd (\sem{\pola}(\hista) \bind \cpoa)~.
 \]
\begin{theorem}[Least Fixed Point Characterization of Iteration]
\label{thm:iter-lfp}
	For every policy $\pola$, we have
	 \[
        \sem{\ITER{\pola}} \eeq \lfp \Phi_{\pola}
        \qquad
        \text{where}
        \qquad
         \lfp \Phi_{\pola} \eeq \cposup[n \in \nats] \Phi_{\pola}^n(\cpobot)~.
	\]
\end{theorem}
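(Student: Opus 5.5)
We prove the equality of the two functions by showing that the Kleene iterates of $\Phi_{\pola}$ are exactly the partial sums of the series defining $\sem{\ITER{\pola}}$. Concretely, we claim that for every $n\in\nats$
\[
	\Phi_{\pola}^{n+1}(\cpobot) \eeq \mylam{\hista} \semisumtop{i=0}{n} \sem{\NFOLD[i]{\pola}}(\hista)~.
\]
We prove this by induction on $n$. For $n=0$ we have $\Phi_{\pola}(\cpobot)(\hista) = \eta(\hista) \semiadd (\sem{\pola}(\hista)\bind\cpobot)$. Because $\cpobot$ is a right-annihilator for $\bind$ (\Cref{thm:weightings-props}), this reduces to $\eta(\hista) = \sem{\SKIP}(\hista) = \sem{\NFOLD[0]{\pola}}(\hista)$.

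For the inductive step we unfold the definition:
\[
	\Phi_{\pola}^{n+2}(\cpobot)(\hista) = \eta(\hista) \semiadd \Big(\sem{\pola}(\hista) \bind \mylam{\hista'} \semisumtop{i=0}{n}\sem{\NFOLD[i]{\pola}}(\hista')\Big)~.
\]
Bind distributes over finite (pointwise) sums in its second argument. This follows from the definition of $\bind$ as a countable sum, together with distributivity of $\semimul$ over $\semiadd$ and rearrangement of countable sums (\Cref{def:omega-comp-monoids}, \Cref{def:omega-comp-semirings}). The right-hand side therefore equals $\eta(\hista) \semiadd \semisumtop{i=0}{n} \sem{\pola}(\hista)\bind\sem{\NFOLD[i]{\pola}}$. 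Since $\sem{\pola}(\hista)\bind\sem{\NFOLD[i]{\pola}} = \sem{\SEQ{\pola}{\NFOLD[i]{\pola}}}(\hista) = \sem{\NFOLD[i+1]{\pola}}(\hista)$ by the semantics of sequencing, reindexing gives $\semisumtop{i=0}{n+1}\sem{\NFOLD[i]{\pola}}(\hista)$, which closes the induction. We also note that $\Phi^0_{\pola}(\cpobot)=\cpobot$ is the empty partial sum, so adding it to the chain does not change the supremum.

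It remains to show that $\lfp\Phi_{\pola}$ is given by the stated supremum. We prove that $\Phi_{\pola}$ is $\omega$-continuous on the $\omega$-cpo $(\histories\to\weightings{\histories},\cpoord)$ and then apply Kleene's fixed point theorem. Continuity follows from three facts in \Cref{thm:weightings-props}: $\bind$ is monotone and $\omega$-continuous in its second argument, $\semiadd$ is $\omega$-continuous, and suprema in the function space are computed pointwise. Kleene's theorem then yields $\lfp\Phi_{\pola} = \cposup[n] \Phi^n_{\pola}(\cpobot)$.

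By the claim, this supremum evaluated at $\hista$ is the supremum of the partial sums $\semisumtop{i=0}{n}\sem{\NFOLD[i]{\pola}}(\hista)$, taken pointwise in each output history. By the definition of countable sums in \Cref{fig:semiring}, that supremum equals $\semisum{n\in\nats}\sem{\NFOLD{\pola}}(\hista) = \sem{\ITER{\pola}}(\hista)$.

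The step needing the most care is the distributivity of $\bind$ over sums in its second argument. The sum defining $\bind$ ranges over $\supp(\sem{\pola}(\hista))$, and the summands are themselves sums, so we must exchange the order of summation. We would do this via the partition property (3) of \Cref{def:omega-comp-monoids}, applied to the countable index set $\supp(\sem{\pola}(\hista))\times\{0,\dots,n\}$. Some care is also needed to check that all supports stay countable, so that every iterate genuinely lies in $\histories\to\weightings{\histories}$.
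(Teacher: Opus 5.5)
Your proposal is correct and follows essentially the same route as the paper. The paper likewise gets $\lfp \Phi_{\pola}$ as the Kleene supremum from $\omega$-continuity of $\Phi_{\pola}$ (via the properties of weightings). It then proves $\Phi_{\pola}^{n+1}(\cpobot) = \mylam{\hista} \semisumtop{k=0}{n} \sem{\NFOLD[k]{\pola}}(\hista)$ by induction on $n$, using right-annihilation of $\cpobot$ under bind in the base case, and distributing multiplication over the finite sum and swapping sums in the step, which you package as distributivity of bind instead.
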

\begin{proof}
    First, we have that $\lfp \Phi_{\pola}$ is $\cposup[n \in \nats] \Phi_{\pola}^n(\cpobot)$
    by the Kleene fixed-point theorem (the $\omega$-continuity of of $\Phi_{\pola}$ follows from
    \Cref{thm:weightings-props}). Then, we prove
    \[\forall n \colon\Phi_{\pola}^{n + 1}(\bot) =
    \mylam{\hista} \semisumtop{k=0}{n} \sem{ \NFOLD[k]{\pola} }(\hista)\] by induction on $n$
    (\Cref{lmm:phi-value-n}). The claim then follows by
    \[
    \semisumtop{k=0}{\infty} \sem{ \NFOLD[k]{\pola} }(\hista) =
        \cposup[n] \semisumtop{k=0}{n} \sem{ \NFOLD[k]{\pola} }(\hista)~.\]\qedhere
\end{proof}

\begin{lemma}
\label{lmm:phi-value-n}
    For all $n\in\nats$,
    $\Phi_{\pola}^{n + 1}(\bot) =
        \mylam{\hista} \semisumtop{k=0}{n} \sem{ \NFOLD[k]{\pola} }(\hista)$
\end{lemma}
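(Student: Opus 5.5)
We prove the claim by induction on $n$, unfolding the definition of $\Phi_{\pola}$ once per step. Throughout, $\Phi_{\pola}(\cpoa) = \mylam{\hista} \eta(\hista) \semiadd (\sem{\pola}(\hista) \bind \cpoa)$ and $\cpobot = \mylam{\hista}\laddid$.

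\emph{Base case} ($n = 0$). We compute
\[
\Phi_{\pola}(\cpobot)(\hista) = \eta(\hista) \semiadd (\sem{\pola}(\hista) \bind \cpobot).
\]
By \Cref{thm:weightings-props} (\ref{lmm:bot-right-annihilate-bind}), $\cpobot$ is a right-annihilator for $\bind$, so the second summand is $\laddid$. The result is therefore $\eta(\hista)$. This equals $\sem{\SKIP}(\hista) = \sem{\NFOLD[0]{\pola}}(\hista)$, since $\SKIP = \TRUE$ holds on every packet and so its semantics is $\mulid \semimul \eta(\hista)$.

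\emph{Inductive step.} Assume $\Phi_{\pola}^{n+1}(\cpobot) = \mylam{\hista'} \semisumtop{k=0}{n} \sem{\NFOLD[k]{\pola}}(\hista')$. Then
\[
\Phi_{\pola}^{n+2}(\cpobot)(\hista) = \eta(\hista) \semiadd \Big(\sem{\pola}(\hista) \bind \mylam{\hista'} \semisumtop{k=0}{n} \sem{\NFOLD[k]{\pola}}(\hista')\Big).
\]
The key step is that $\bind$ distributes over finite sums in its second argument:
\[
\wtinga \bind \Big(\mylam{x} \textstyle\sum_k f_k(x)\Big) = \sum_k (\wtinga \bind f_k).
\]
This follows by unfolding the definition of $\bind$. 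The result is a countable sum over $\supp(\wtinga)$ of $\wtinga(x) \semimul \sum_k f_k(x)(y)$. We apply left distributivity of $\semimul$ over $\semiadd$, and then exchange the finite sum with the countable sum. The exchange is justified by \Cref{def:omega-comp-monoids} (\ref{def:omega-comp-monoids3}), partitioning the index set $\supp(\wtinga)\times\{0,\dots,n\}$ in the two possible ways.

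After distributing, each summand has the form $\sem{\pola}(\hista) \bind \sem{\NFOLD[k]{\pola}}$, which is exactly $\sem{\SEQ{\pola}{\NFOLD[k]{\pola}}}(\hista) = \sem{\NFOLD[k+1]{\pola}}(\hista)$ by the semantics of sequential composition. Adding the $\eta(\hista) = \sem{\NFOLD[0]{\pola}}(\hista)$ term and reindexing gives $\semisumtop{k=0}{n+1}\sem{\NFOLD[k]{\pola}}(\hista)$, as required. Commutativity and associativity of the lifted addition make the reindexing harmless.

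The only step needing care is the distributivity of $\bind$ over (finite) sums in its second argument. \Cref{thm:weightings-props} does not list it explicitly, so I would prove it as the small auxiliary fact above directly from the definitions. The rest is routine unfolding.
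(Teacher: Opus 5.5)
Your proof is correct and follows essentially the same route as the paper. Both argue by induction on $n$ and handle the base case via right-annihilation of $\bot$ for $\bind$. In the inductive step, both unfold the bind, apply the induction hypothesis, distribute the weight over the finite sum, exchange the finite and countable sums, refold into $\sem{\NFOLD[k+1]{\pola}}(\hista)$, and reindex. The only difference is cosmetic: you isolate distributivity of $\bind$ over finite sums in its second argument as a separate auxiliary fact, whereas the paper carries out that same calculation inline.
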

\begin{proof} By induction on $n$. \\

    \emph{Case $n = 0$:}

    \[
    \begin{array}{rclr}
    &&
        \Phi_{\pola}(\bot) \\
    &=&
        \mylam{\hista}
            \unit(\hista) \lsemiadd (\sem{\pola} (\hista) \bind \bot)
    & \text{(By definition)} \\
    &=&
        \mylam{\hista}
            \unit(\hista) \lsemiadd \laddid
    & \text{(\Cref{lmm:bot-right-annihilate-bind})} \\
    &=&
        \mylam{\hista} \unit(\hista)
    & \text{(\Cref{thm:weightings-props}.\ref{thm:weightings-props1})} \\
    &=&
        \mylam{\hista} \sem{\SKIP}(\hista)
    & \text{(By definition)} \\
    &=&
        \mylam{\hista} \sem{\NFOLD[0]{\pola}}(\hista)
    & \text{(By definition)} \\
    &=&
        \mylam{\hista} \semisumtop{k=0}{0} \sem{ \NFOLD[k]{\pola} }(\hista)
    & \text{(By definition)}
    \end{array}
    \]

    \emph{Case $n = n + 1$:}

    \[
    \begin{array}{rclr}
    &&
        \Phi_{\pola}^{n + 2}(\bot) \\
    &=&
        \mylam{\hista}
            \unit(\hista) \lsemiadd (\sem{\pola} (\hista) \bind \Phi_{\pola}^{n + 1}(\bot))
    & \text{(By definition)} \\
    &=&
        \mylam{\hista}
            \unit(\hista) \lsemiadd
                \lsemisum{\histb \in \supp(\sem{\pola})}
                    \sem{\pola}(\hista)(\histb) \lsemimul \Phi_{\pola}^{n + 1}(\bot)(\histb)
    & \text{(By definition)} \\
    &=&
        \mylam{\hista}
            \unit(\hista) \lsemiadd
                \lsemisum{\histb \in \supp(\sem{\pola})}
                    \sem{\pola}(\hista)(\histb) \lsemimul
                        \lsemisumtop{k=0}{n} \sem{ \NFOLD[k]{\pola} }(\histb)
    & \text{(IH)} \\
    &=&
        \mylam{\hista}
            \unit(\hista) \lsemiadd
                \lsemisum{\histb \in \supp(\sem{\pola})}
                    \lsemisumtop{k=0}{n}
                        \sem{\pola}(\hista)(\histb) \lsemimul
                            \sem{ \NFOLD[k]{\pola} }(\histb)
    & \text{(\Cref{thm:weightings-props}.\ref{thm:weightings-props5})} \\
    &=&
        \mylam{\hista}
            \unit(\hista) \lsemiadd
                \lsemisumtop{k=0}{n}
                    \lsemisum{\histb \in \supp(\sem{\pola})}
                        \sem{\pola}(\hista)(\histb) \lsemimul
                            \sem{ \NFOLD[k]{\pola} }(\histb)
    & \text{(\Cref{thm:weightings-props}.\ref{thm:weightings-props1},
            assoc. of countable sum)} \\
    &=&
        \mylam{\hista}
            \unit(\hista) \lsemiadd
                \lsemisumtop{k=0}{n}
                    \sem{\pola}(\hista) \bind \sem{ \NFOLD[k]{\pola} }
    & \text{(By definition)} \\
    &=&
        \mylam{\hista}
            \unit(\hista) \lsemiadd
                \lsemisumtop{k=0}{n}
                    \sem{ \NFOLD[k + 1]{\pola} }(\hista)
    & \text{(By definition)} \\
    &=&
        \mylam{\hista}
            \unit(\hista) \lsemiadd
                \lsemisumtop{k=1}{n+1}
                    \sem{ \NFOLD[k]{\pola} }(\hista)
    & \text{(Simple arithmetic)} \\
    &=&
        \mylam{\hista}
            \sem{ \NFOLD[0]{\pola} }(\hista) \lsemiadd
                \lsemisumtop{k=1}{n+1}
                    \sem{ \NFOLD[k]{\pola} }(\hista)
    & \text{(By definition)} \\
    &=&
        \mylam{\hista}
                \lsemisumtop{k=0}{n+1}
                    \sem{ \NFOLD[k]{\pola} }(\hista)
    & \text{(Simple arithmetic)} \\
    \end{array}
    \]
\end{proof}

\section{Approximation of \wnetkat policies}
\label{appendix:approximation}
\begin{figure}[h]
	\begin{tabular}{ll}
		\begin{tabular}{l  l}
			\toprule
			\toprule
			$\boldsymbol{\pola}$ & $\boldsymbol{\approxpol{\pola}{n}}$\\
			\midrule
			$\pola$~primitive \quad \quad& $\pola$ \\[1ex]
			$\SEQ{\pola_1}{\pola_2}$ & $\SEQ{\approxpol{\pola_1}{n}}{\approxpol{\pola_2}{n}}$ \\[1ex]
			$\WEIGH{\wta}{\polb}$ & $\WEIGH{\wta}{\approxpol{\polb}{n}}$ \\[1ex]
			$\ADD{\pola_1}{\pola_2}$ & $\ADD{\approxpol{\pola_1}{n}}{\approxpol{\pola_2}{n}}$ \\[1ex]
			$\ITER{\pola}$ & $\SUMTOP{i=0}{n}{\NFOLD[i]{(\approxpol{\pola}{n})}}$ \\[1ex]
			\bottomrule
			\bottomrule
		\end{tabular}
		&\qquad
		\begin{tabular}{l}
			\begin{tabular}{rcl}
				$\pola$ primitive &::=&
				$\testa$ \\ &$\mid$&
				$\DUP$ \\ &$\mid$&
				$\ASSN{\fielda}{\vala}$
			\end{tabular}
			\medskip \\ \midrule \medskip
			\begin{tabular}{rcl}
				$\SUMTOP{i=0}{0}{\pola_i}$ &$\triangleq$&
				$\DROP$ \medskip \\
				$\SUMTOP{i=0}{n+1}{\pola_i}$ &$\triangleq$&
				$\ADD{\pola_0}{\SUMTOP{i=0}{n}{\pola_{i+1}}}$
			\end{tabular}
		\end{tabular}
	\end{tabular}
	\caption{Approximants of policies in \wnetkat. Recall that $\NFOLD[i]{{\pola}}$ is defined in \Cref{fig:syntax-semantics}.}
	\label{tab:approx}
\end{figure}
We now formalize a useful notion of approximants of \swnetkat policies $\pola$, which enable us to
effectively\footnote{Under the mild condition that the semiring $\semi$ is computable
(cf.\ \Cref{def:computable_semiring}).}  approximate the semantics $\sem{\pola}$ of $\pola$ with increasing precision. Besides this, approximants will enable us to prove that \wnetkat subsumes a rich fragment of \probnetkat \cite{probnetkat,cantor}  --- a probabilistic extension of \netkat.

The idea is to generalize the notion by \citet{cantor}: Determining the semantics of policies involving iteration generally requires evaluating countable sums (cf.\ \Cref{fig:syntax-semantics}) such as 
\[
\sem{ \ITER{\pola} } (\hista)(\hista')  \eeq
\semisum{i \in \nats} \sem{ \NFOLD[i]{\pola} } (\hista)(\hista')~.
\]
However, whenever we cut such a countable sum off at some index --- thereby turning it into a \emph{finite} sum --- we soundly \emph{underapproximate} it and, in the limit, recover the entire sum, i.e.,
\[
\forall n \in \nats \colon \semisumtop{i =0}{n} \sem{ \NFOLD[i]{\pola} } (\hista)(\hista')
\semiord \semisum{i \in \nats} \sem{ \NFOLD[i]{\pola} } (\hista)(\hista')
\quad\text{and}\quad
\cposup[n\in\nats] \semisumtop{i =0}{n} \sem{ \NFOLD[i]{\pola} } (\hista)(\hista')
= 
\semisum{i \in \nats} \sem{ \NFOLD[i]{\pola} } (\hista)(\hista')~.
\]
On a syntactic level, cutting off countable sums can be realized by replacing all iterations appearing in a given policy $\pola$ by their $n$-th unrolling, which yields the notion of $n$-th approximants:

\begin{definition}[Approximants]
	Let $\pola$ be a policy and $n\in\nats$. We define the
	\emph{$n$-th approximant $\approxpol{\pola}{n}$ of $\pola$} recursively on
	the structure of $\pola$ as shown in \Cref{tab:approx}. 
\end{definition}
We often write $\approxn[n]{\sem{\pola}}$ instead  of $\sem{\approxpol{\pola}{n}}(\hista)$. It is important to notice that $\approxpol{\pola}{n}$ recursively unrolls \emph{all} (including \emph{nested}) iterations. Therefore $\approxpol{\pola}{n}$ does no longer contain iterations so that $\approxn[n]{\sem{\pola}}(\hista)$ can indeed be computed for every $n\in\nats$ by recursively applying the rules given in \Cref{fig:syntax-semantics}. 

Towards establishing soundness and completeness of approximants, we first observe that they give rise to an $\omega$-chain w.r.t.\ the order $\cpoord$ on $\histories \rightarrow \weightings{\histories}$:
\begin{rlemma}{lmm:approx-chain}
	The approximants of a policy $\pola \in \pols$ form an $\omega$-chain w.r.t.
	$(\cpoord)$, i.e.
	\[
	\approxn[0]{\sem{\pola}} \cpoord
	\approxn[1]{\sem{\pola}} \cpoord
	\approxn[2]{\sem{\pola}} \cpoord
	\dots
	\]
	More formally, we have:
	$\forall n \in \nats \colon
	\approxn[n]{\sem{\pola}} \cpoord
	\approxn[n+1]{\sem{\pola}}$.
\end{rlemma}
\begin{proof} The proof proceeds by structural induction on $\pola$,
    all cases follow from monotonicity w.r.t. $\semiord$ of operations
    at the level of weightings: 

    \emph{Case $\pola$ primitive:}     We have that $\approxn[n]{\sem{\pola}} = \sem{\pola}$ for all $n \in \nats$,
        therefore all approximants form a chain by reflexivity of $(\cpoord)$. 

     \emph{Case $\pola = \SEQ{\pola}{\polb}$:}
        \[
        \begin{array}{rclr}
            \approxn[n]{\sem{\SEQ{\pola}{\polb}}}(\hista)
        &=&
            \approxn[n]{\sem{\pola}}(\hista) \bind \approxn[n]{\sem{\polb}}
        & \text{(By definition)} \\
        &\cpoord&
            \approxn[n+1]{\sem{\pola}}(\hista) \bind \approxn[n]{\sem{\polb}}
        & \text{(\Cref{thm:weightings-props}.\ref{lmm:bind-mono-l} and IH)} \\
        &\cpoord&
            \approxn[n+1]{\sem{\pola}}(\hista) \bind \approxn[n+1]{\sem{\polb}}
        & \text{(\Cref{thm:weightings-props}.\ref{lmm:bind-mono-r} and IH)} \\
        &=&
            \approxn[n+1]{\sem{\SEQ{\pola}{\polb}}}(\hista)
        & \text{(By definition)}
        \end{array}
        \]

    \emph{Case $\pola = \WEIGH{\wta}{\pola}$:} By \Cref{def:omega-semirings}.\ref{def:omega-semirings2} and IH. 

    \emph{Case $\pola = \ADD{\pola}{\polb}$:} By \Cref{def:omega-monoids}.\ref{def:omega-monoids2} and IH. 

    \emph{Case $\pola = \ITER{\pola}$:}
        \[
        \begin{array}{rclr}
        &&
            \approxn[n]{\sem{\ITER{\pola}}} \\
        &=&
            \sem{\SUMTOP{i=0}{n}{\NFOLD[i]{(\approxpol{\pola}{n})}}}
        & \text{(By definition)} \\
        &=&
            \semisumtop{i=0}{n} \sem{\NFOLD[i]{(\approxpol{\pola}{n})}}
        & \text{(By definition)} \\
        &\cpoord&
            \sem{\NFOLD[n+1]{(\approxpol{\pola}{n})}} \semiadd
                \semisumtop{i=0}{n} \sem{\NFOLD[i]{(\approxpol{\pola}{n})}}
        & \text{($\semia \semiord \semia \semiadd \semib$ for any $\omega$-cont. monoid)} \\
        &=&
            \semisumtop{i=0}{n+1} \sem{\NFOLD[i]{(\approxpol{\pola}{n})}}
        & \text{(Simple arithmetic)} \\
        &\cpoord&
            \semisumtop{i=0}{n+1} \sem{\NFOLD[i]{(\approxpol{\pola}{n+1})}}
        & \text{(\Cref{def:omega-monoids}.\ref{def:omega-monoids2},
            \Cref{lmm:nth-fold-mono}, and IH)} \\
        &=&
            \sem{\SUMTOP{i=0}{n+1}{\NFOLD[i]{(\approxpol{\pola}{n+1})}}}
        & \text{(By definition)} \\
        &=&
            \approxn[n+1]{\sem{\ITER{\pola}}}
        & \text{(By definition)}
        \end{array}
        \]
\end{proof}
Exploiting $\omega$-continuity of the semiring operations (cf.\ \Cref{fig:semiring}) then yields the following:
\begin{rtheorem}{thm:sup-approx-equiv}[Soundness and Completeness of Approximants]
	The appoximants of a policy $\pola$ soundly underapproximate the semantics of $\pola$ and, in the limit, yield precisely the semantics of $\pola$, i.e., 
	\[
	\forall n \in \nats\colon \approxn[n]{\sem{\pola}} \cpoord \sem{\pola}
	\qquad\text{and}\qquad
	\cposup[n\in\nats] \approxn[n]{\sem{\pola}}
	= 
	\sem{\pola}~.
	\]
\end{rtheorem}
\begin{proof} By \Cref{lmm:approx-chain}, the supremum
    $\cposup[n\in\nats] \approxn[n]{\sem{\pola}}$ is well-defined.
    The rest of the proof proceeds by structural induction on $\pola$, all
    cases follow from $\omega$-continuity w.r.t. $\semiord$ of operations at
    the level of weightings: \\

    \emph{Case $\pola$ primitive:} We have that $\approxn[n]{\sem{\pola}} = \sem{\pola}$ for all $n \in \nats$,
        therefore the supremum of all approximants is $\sem{\pola}$.

     \emph{Case $\pola = \SEQ{\pola}{\polb}$:}
        \[
        \begin{array}{rclr}
            \cposup[n \in \nats] \approxn[n]{\sem{\SEQ{\pola}{\polb}}}(\hista)
        &=&
            \cposup[n \in \nats]
                \approxn[n]{\sem{\pola}}(\hista) \bind \approxn[n]{\sem{\polb}}
        & \text{(By definition)} \\
        &=&
            \cposup[m \in \nats] \cposup[n \in \nats]
                \approxn[n]{\sem{\pola}}(\hista) \bind \approxn[m]{\sem{\polb}}
        & \text{(\cite[Proposition 2.1.12]{domain-theory})} \\
        &=&
            \cposup[m \in \nats]
                \left( \cposup[n \in \nats]
                    \approxn[n]{\sem{\pola}}(\hista) \right) \bind \approxn[m]{\sem{\polb}}
        & \text{(\Cref{thm:weightings-props}.\ref{lmm:bind-omega-continuous-l})} \\
        &=&
            \left( \cposup[n \in \nats]
                \approxn[n]{\sem{\pola}}(\hista) \right) \bind
                    \cposup[m \in \nats] \approxn[m]{\sem{\polb}}
        & \text{(\Cref{thm:weightings-props}.\ref{lmm:bind-omega-continuous-r})} \\
        &=&
            \sem{\pola}(\hista) \bind
                    \cposup[m \in \nats] \approxn[m]{\sem{\polb}}
        & \text{(IH for $\pola$)} \\
        &=&
            \sem{\pola}(\hista) \bind \sem{\polb}
        & \text{(IH for $\polb$)}
        \end{array}
        \]

    \emph{Case $\pola = \WEIGH{\wta}{\pola}$:} By \Cref{def:omega-semirings}.\ref{def:omega-semirings3} and IH.

    \emph{Case $\pola = \ADD{\pola}{\polb}$:} By \Cref{def:omega-monoids}.\ref{def:omega-monoids3} and IH.

    \emph{Case $\pola = \ITER{\pola}$:}
        \[
        \begin{array}{rclr}
        &&
            \sem{\ITER{\pola}}(\hista) \\
        &=&
            \semisum{m \in \nats} \sem{\NFOLD[m]{\pola}}(\hista)
        & \text{(By definition)} \\
        &=&
            \semisum{m \in \nats}
                \cposup[n \in \nats]
                    \sem{\NFOLD[m]{(\approxpol{\pola}{n})}}(\hista)
        & \text{(\Cref{lmm:nth-fold-sup} and IH)} \\
        &=&
            \cposup[m \in \nats]
                \semisumtop{i=0}{m}
                    \cposup[n \in \nats]
                        \sem{\NFOLD[i]{(\approxpol{\pola}{n})}}(\hista)
        & \text{(By definition)} \\
        &=&
            \cposup[m \in \nats]
                \cposup[n \in \nats]
                    \semisumtop{i=0}{m}
                        \sem{\NFOLD[i]{(\approxpol{\pola}{n})}}(\hista)
        & \text{(\Cref{def:omega-monoids}.\ref{def:omega-monoids3})} \\
        &=&
            \cposup[n \in \nats]
                \semisumtop{i=0}{n}
                    \sem{\NFOLD[i]{(\approxpol{\pola}{n})}}(\hista)
        & \text{(\cite[Proposition 2.1.12]{domain-theory})} \\
        &=&
            \cposup[n \in \nats]
                \sem{\SUMTOP{i=0}{n}{\NFOLD[i]{(\approxpol{\pola}{n})}}}(\hista)
        & \text{(By definition)} \\
        &=&
            \cposup[n \in \nats]
                \approxn[n]{\sem{\ITER{\pola}}}
        & \text{(By definition)}
        \end{array}
        \]
\end{proof}

\subsection{Alternate Definiton of Approximation for While Loops}
We use the following alternate definition for the $n$-th approximant of a while
loop in \wnetkat throughout the rest of the paper.
\[
\begin{array}{rcl}
	\approxpol{\WHILE[\testa]{\pola}}{n} &\triangleq&
	\NFOLD[n,\testa]{(\approxpol{\pola}{n})}
\end{array}
\quad \text{where} \quad
\begin{array}{rcl}
	\NFOLD[0,\testa]{\pola} &\triangleq& \NOT{\testa} \\
	\NFOLD[n+1,\testa]{\pola} &\triangleq&
	\IF[\testa]{\SEQ{\pola}{\NFOLD[n,\testa]{\pola}}}{\SKIP} \\
\end{array}
\]

This definition will be convenient in the following section, where we relate
the semantics of \wnetkat to the guarded fragment of \probnetkat. Note
that although we use an alternate definition, we are able to show that it is
equivalent to the definition given in \Cref{tab:approx}.

\begin{lemma}
\label{lmm:alt-approx-equiv}
	Let $\altaltapproxpol{\pola}{n}$ denote the $n$-th approximant
	for a policy $\pola$ with the alternate definition for guarded
	iteration given above. This definition is equivalent to the one
	in \Cref{tab:approx}:
	\[
	\begin{array}{lrcl}
		\forall n \in \nats \colon &
		\approxn[n]{\sem{\pola}} &=&
		\sem{\altaltapproxpol{\pola}{n}}
	\end{array}
	\]
\end{lemma}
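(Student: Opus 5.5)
The plan is to prove the claim by structural induction on $\pola$. Both approximant definitions are compositional and agree on every constructor other than the encoded while loop. The semantics in \Cref{fig:syntax-semantics} is also compositional: $\sem{\SEQ{\pola_1}{\pola_2}}$, $\sem{\WEIGH{\wta}{\pola_1}}$ and $\sem{\ADD{\pola_1}{\pola_2}}$ depend only on $\sem{\pola_1}$ and $\sem{\pola_2}$. Hence every case except iteration and while follows from the induction hypothesis by substituting equal semantics. For a plain iteration $\ITER{\pola}$ not arising from a while loop, both definitions coincide, and the induction hypothesis applies to each summand $\NFOLD[i]{(\approxpol{\pola}{n})}$. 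The whole content therefore lies in the while case $\WHILE[\testa]{\pola} = \SEQ{\ITER{(\SEQ{\testa}{\pola})}}{\NOT{\testa}}$.

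For that case, first unfold the definition from \Cref{tab:approx}. Since $\testa$ and $\NOT{\testa}$ are primitive, we have $\approxpol{\SEQ{\testa}{\pola}}{n} = \SEQ{\testa}{\approxpol{\pola}{n}}$. Writing $\polb \triangleq \approxpol{\pola}{n}$ (whose semantics equals that of $\altaltapproxpol{\pola}{n}$ by the induction hypothesis), the standard approximant is $\SEQ{\big(\SUMTOP{i=0}{n}{\NFOLD[i]{(\SEQ{\testa}{\polb})}}\big)}{\NOT{\testa}}$. Here I read $\SUMTOP{i=0}{n}{}$ as ranging over $i=0,\dots,n$, the convention used in the proof of \Cref{lmm:approx-chain}. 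This is the point I expect to be the main obstacle. With the literal recursive definition in \Cref{tab:approx}, the sum has only $n$ summands and the two approximants are off by one; in particular, $n=0$ gives $\DROP$ versus $\NOT{\testa}$. So I would first fix the indexing consistently, adjusting either the sum or the alternate base case, before doing any calculation.

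The key auxiliary claim, proved by induction on $k$ for an arbitrary policy $\polb$, is
\[
\sem{\NFOLD[k,\testa]{\polb}} \eeq \sem{\SEQ{\big(\SUMTOP{i=0}{k}{\NFOLD[i]{(\SEQ{\testa}{\polb})}}\big)}{\NOT{\testa}}}~.
\]
For $k=0$, both sides equal $\sem{\NOT{\testa}}$, since $\sem{\SEQ{\SKIP}{\NOT{\testa}}} = \sem{\NOT{\testa}}$ by the unit law of the monad (\Cref{thm:weightings-props}). For the step $k+1$:
\begin{itemize}
\item Split off the $i=0$ summand, which contributes $\sem{\NOT{\testa}}$.
\item For the remaining summands, use $\NFOLD[i+1]{(\SEQ{\testa}{\polb})} = \SEQ{\testa}{\SEQ{\polb}{\NFOLD[i]{(\SEQ{\testa}{\polb})}}}$, associativity of bind (monad laws), and distributivity of bind over finite sums in both arguments (\Cref{thm:weightings-props}, items on distributivity and bind).
\item Together these rewrite the right-hand side as $\sem{\NOT{\testa}} \semiadd \sem{\SEQ{\testa}{\SEQ{\polb}{\big(\SEQ{\SUMTOP{i=0}{k}{\NFOLD[i]{(\SEQ{\testa}{\polb})}}}{\NOT{\testa}}\big)}}}$.
\item By the induction hypothesis on $k$, this equals $\sem{\SEQ{\testa}{\SEQ{\polb}{\NFOLD[k,\testa]{\polb}}}} \semiadd \sem{\SEQ{\NOT{\testa}}{\SKIP}}$, which is exactly $\sem{\IF[\testa]{\SEQ{\polb}{\NFOLD[k,\testa]{\polb}}}{\SKIP}}$ by the encoding of $\IFN$ and commutativity of $\semiadd$.
\end{itemize}

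Instantiating the auxiliary claim at $k=n$ with $\polb = \approxpol{\pola}{n}$ closes the while case. Compositionality then lets us replace $\approxpol{\pola}{n}$ by $\altaltapproxpol{\pola}{n}$ inside $\NFOLD[n,\testa]{(-)}$, which completes the induction and yields $\approxn[n]{\sem{\pola}} = \sem{\altaltapproxpol{\pola}{n}}$ for all $n$.
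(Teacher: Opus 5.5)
Your proposal is correct and follows essentially the paper's proof: structural induction with only the while case nontrivial, discharged by an auxiliary induction on $k$. That auxiliary step is the paper's Lemma~\ref{lmm:alt-defn-guarded-iteration-equiv}, which is stated for two bodies with equal semantics instead of invoking compositionality afterwards. Your off-by-one concern is also legitimate: with the literal clause $\SUMTOP{i=0}{0}{\pola_i} \triangleq \DROP$, the two approximants already differ at $n=0$, and the paper's proof silently adopts your reading by taking $\SUMTOP{i=0}{0}{\NFOLD[i]{(\SEQ{\testa}{\pola_1})}}$ to be $\SKIP$ ``by definition'' in its base case.
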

\begin{proof}
    By structural induction on $\pola$. All cases follow immediately except for
    guarded iteration, as the definitions coincide exactly. We consider only
    this case. \\

    \emph{Case $\pola = \WHILE[\testa]{\pola}$:}         We begin by unfolding the LHS/RHS:
        \[
        \begin{array}{rclr}
            \approxn[n]{\sem{\WHILE[\testa]{\pola}}}
        &=&
            \sem{\approxpol{\WHILE[\testa]{\pola}}{n}} \\
        &=&
            \sem{\approxpol{\SEQ{\ITER{(\SEQ{\testa}{\pola})}}
                                {\NOT{\testa}}}{n}}
        & \text{(Definition in \Cref{tab:approx})} \\
        &=&
            \sem{\SEQ{\left( \SUMTOP{i=0}{n}
                                {\NFOLD[i]{(\approxpol{\SEQ{\testa}{\pola}}{n})}\right)}}
                     {\NOT{\testa}}}
        & \text{(Definition in \Cref{tab:approx})} \\
        &=&
            \sem{\SEQ{\left( \SUMTOP{i=0}{n}
                                {\NFOLD[i]{(\SEQ{\testa}{\approxpol{\pola}{n}})}\right)}}
                     {\NOT{\testa}}}
        & \text{(Definition in \Cref{tab:approx})} \\\\

        \sem{\altaltapproxpol{\WHILE[\testa]{\pola}}{n}}
        &=&
            \sem{\NFOLD[n,\testa]{(\altaltapproxpol{\pola}{n})}}
        & \text{(Alternate definition)}
        \end{array}
        \]

        Finally, we have:
        \[
        \begin{array}{rclr}
            \sem{\SEQ{\left( \SUMTOP{i=0}{n}
                                {\NFOLD[i]{(\SEQ{\testa}{\approxpol{\pola}{n}})}\right)}}
                     {\NOT{\testa}}}
        &=&
            \sem{\NFOLD[n,\testa]{(\altaltapproxpol{\pola}{n})}}
        & \text{(\Cref{lmm:alt-defn-guarded-iteration-equiv} and IH)}
        \end{array}
        \]
\end{proof}

\subsection{Lemmas for Soundness and Completeness of Approximants}

\begin{lemma}[Monotonicity of $n$-th fold]
\label{lmm:nth-fold-mono}
Given policies $\pola_1$, $\pola_2$, we have:
    \[
    \begin{array}{rcl}
        \sem{\pola_1} \cpoord \sem{\pola_2} &\Rightarrow&
            \sem{\NFOLD[n]{\pola_1}} \cpoord \sem{\NFOLD[n]{\pola_2}}
    \end{array}
    \]
\end{lemma}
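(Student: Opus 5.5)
The plan is to prove the claim by induction on $n$, using only the definition $\NFOLD[0]{\pola} = \SKIP$, $\NFOLD[n+1]{\pola} = \SEQ{\pola}{\NFOLD{\pola}}$ and the monotonicity of bind from \Cref{thm:weightings-props}.\ref{lmm:bind-mono-l}. Throughout, we fix policies $\pola_1,\pola_2$ with $\sem{\pola_1} \cpoord \sem{\pola_2}$, i.e., $\sem{\pola_1}(\hista) \semiord \sem{\pola_2}(\hista)$ pointwise for every $\hista \in \histories$.

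\emph{Base case $n = 0$.} Both sides equal $\sem{\SKIP}$, so the claim follows by reflexivity of $\cpoord$.

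\emph{Inductive step.} Assume $\sem{\NFOLD{\pola_1}} \cpoord \sem{\NFOLD{\pola_2}}$. For any history $\hista$ we have
\[
\sem{\NFOLD[n+1]{\pola_1}}(\hista) \eeq \sem{\pola_1}(\hista) \bind \sem{\NFOLD{\pola_1}}~.
\]
We then chain two inequalities. First, using monotonicity of $\bind$ in its left argument together with the hypothesis $\sem{\pola_1}(\hista) \semiord \sem{\pola_2}(\hista)$, this is $\semiord \sem{\pola_2}(\hista) \bind \sem{\NFOLD{\pola_1}}$. Second, using monotonicity of $\bind$ in its right argument together with the induction hypothesis, this is in turn $\semiord \sem{\pola_2}(\hista) \bind \sem{\NFOLD{\pola_2}}$, which is exactly $\sem{\NFOLD[n+1]{\pola_2}}(\hista)$. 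Since $\hista$ was arbitrary, the pointwise inequality lifts to $\cpoord$, completing the induction.

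\emph{Main obstacle.} The only subtle point is that the two sides of bind sum over possibly different supports, namely $\supp(\sem{\pola_1}(\hista))$ versus $\supp(\sem{\pola_2}(\hista))$. This is absorbed by the monotonicity statement in \Cref{thm:weightings-props}, which we are entitled to assume. If one wants to verify it directly, note that the sum can be extended to the union of the two supports: the extra terms are $\addid$ by annihilation, so they do not change the value. On this common countable index set, monotonicity of $\semimul$ and of countable sums (as a supremum of monotone partial sums) gives the inequality.
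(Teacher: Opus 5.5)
Your proof is correct and matches the paper's argument: induction on $n$, reflexivity of $\cpoord$ for the base case $\sem{\SKIP}$, and in the inductive step two applications of \Cref{thm:weightings-props}.\ref{lmm:bind-mono-l}. The first uses monotonicity in the left argument with the hypothesis, and the second uses monotonicity in the right argument with the induction hypothesis. Your side remark about differing supports is a sound justification of that monotonicity property, but the paper simply cites it.
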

\begin{proof}
    By induction on $n$.

    \emph{Case $n = 0$:}
    $$
        \sem{\NFOLD[0]{\pola_1}}
    =\sem{\SKIP}
    =
        \sem{\NFOLD[0]{\pola_2}}
    \cpoord
        \sem{\NFOLD[0]{\pola_2}}
    $$

    \emph{Case $n = n + 1$:}
    \[
    \begin{array}{rclr}
        \sem{\NFOLD[n+1]{\pola_1}}(\hista)
    &=&
        \sem{\SEQ{\pola_1}{\NFOLD[n]{\pola_1}}}(\hista)
    & \text{(By definition)} \\
    &=&
        \sem{\pola_1}(\hista) \bind \sem{\NFOLD[n]{\pola_1}}
    & \text{(By definition)} \\
    &\cpoord&
        \sem{\pola_2}(\hista) \bind \sem{\NFOLD[n]{\pola_1}}
    & \text{(\Cref{thm:weightings-props}.\ref{lmm:bind-mono-l} and assumption)} \\
    &\cpoord&
        \sem{\pola_2}(\hista) \bind \sem{\NFOLD[n]{\pola_2}}
    & \text{(\Cref{thm:weightings-props}.\ref{lmm:bind-mono-r} and IH)} \\
    &=&
        \sem{\NFOLD[n+1]{\pola_2}}(\hista)
    & \text{(By definition)} 
    \end{array}
    \]
\end{proof}

\begin{lemma}
\label{lmm:nth-fold-sup}
    Given policy $\pola$, we have:
    \[
    \begin{array}{lrcl}
        \forall m \in \nats \colon
            \sem{\pola} = \cposup[n \in \nats] \approxn[n]{\sem{\pola}} &\Rightarrow&
                \sem{\NFOLD[m]{\pola}} =
                    \cposup[n \in \nats] \sem{\NFOLD[m]{(\approxpol{\pola}{n})}}
    \end{array}
    \]
\end{lemma}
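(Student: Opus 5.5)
Assume the hypothesis $\sem{\pola} = \cposup[n \in \nats] \approxn[n]{\sem{\pola}}$. We prove the conclusion by induction on $m$, working pointwise in the $\omega$-cpo $(\histories \rightarrow \weightings{\histories}, \cpoord)$. Two preliminary facts are needed.

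First, the family $n \mapsto \sem{\NFOLD[m]{(\approxpol{\pola}{n})}}$ is an $\omega$-chain, so the supremum on the right-hand side is well-defined. This follows from \Cref{lmm:approx-chain}, which gives $\approxn[n]{\sem{\pola}} \cpoord \approxn[n+1]{\sem{\pola}}$, together with \Cref{lmm:nth-fold-mono}.

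Second, suprema in $\histories \rightarrow \weightings{\histories}$ and in $\weightings{\histories}$ are computed pointwise. Hence $(\cposup[n] f_n)(\hista) = \cposup[n] f_n(\hista)$, and we may argue at a fixed input history $\hista$.

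\emph{Base case $m = 0$.} Both sides are $\sem{\SKIP}$, because $\NFOLD[0]{(\approxpol{\pola}{n})} = \SKIP$ for every $n$. The supremum of a constant chain is that constant.

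\emph{Inductive step.} Assume $\sem{\NFOLD[m]{\pola}} = \cposup[n] \sem{\NFOLD[m]{(\approxpol{\pola}{n})}}$. Unfolding the definitions and then using the hypothesis together with the induction hypothesis gives
\[
\sem{\NFOLD[m+1]{\pola}}(\hista) \eeq \sem{\pola}(\hista) \bind \sem{\NFOLD[m]{\pola}}
\eeq \Big(\cposup[k] \approxn[k]{\sem{\pola}}(\hista)\Big) \bind \cposup[j] \sem{\NFOLD[m]{(\approxpol{\pola}{j})}}~.
\]
Next we pull both suprema out of the bind. We use $\omega$-continuity of $(\bind)$ in each argument separately (\Cref{thm:weightings-props}.\ref{lmm:bind-omega-continuous-l}), exactly as in the sequencing case of \Cref{thm:sup-approx-equiv}. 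This yields $\cposup[k]\cposup[j] \approxn[k]{\sem{\pola}}(\hista) \bind \sem{\NFOLD[m]{(\approxpol{\pola}{j})}}$.

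We then collapse the double supremum to the diagonal $k = j = n$. The justification is \cite[Proposition 2.1.12]{domain-theory}, which applies because the two-argument family is monotone in both indices by \Cref{thm:weightings-props}.\ref{lmm:bind-mono-l} and the two chain facts above. The diagonal term is $\sem{\SEQ{\approxpol{\pola}{n}}{\NFOLD[m]{(\approxpol{\pola}{n})}}}(\hista) = \sem{\NFOLD[m+1]{(\approxpol{\pola}{n})}}(\hista)$, which closes the induction.

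The main obstacle is the continuity step. The right argument of $\bind$ is a function $\histories \to \weightings{\histories}$, and the supremum there lives in the function space. We must check that continuity of bind in that argument means preserving pointwise suprema of chains of such functions. This amounts to $\omega$-continuity of the countable sum $\semisum{x \in \supp(\wtinga)} \wtinga(x) \semimul f(x)(y)$ in $f$. It follows from continuity of $\semimul$ and the fact that countable sums commute with suprema of chains, which is the content of \Cref{thm:weightings-props}. The only remaining caution is that $\supp(\wtinga)$ is fixed while $f$ varies, which poses no difficulty.
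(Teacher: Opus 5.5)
Your proof is correct and follows the paper's proof essentially step for step. Both argue by induction on $m$, and in the inductive step both combine the hypothesis, the induction hypothesis, $\omega$-continuity of $\bind$ in each argument, and the diagonal collapse of \cite[Proposition 2.1.12]{domain-theory}. The only differences are that the paper runs the chain of equalities from the right-hand side to the left, and that you spell out why $n \mapsto \sem{\NFOLD[m]{(\approxpol{\pola}{n})}}$ is a chain (via \Cref{lmm:approx-chain} and \Cref{lmm:nth-fold-mono}), which the paper leaves implicit.
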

\begin{proof}
    By induction on $m$.

    \emph{Case $m=0$:}
    $$
        \cposup[n \in \nats] \sem{\NFOLD[0]{(\approxpol{\pola}{n})}}
    =
        \cposup[n \in \nats] \sem{\SKIP}
    =
        \sem{\SKIP}
    =
        \sem{\NFOLD[0]{\pola}}
    $$

    \emph{Case $m=m+1$:}
    \[
    \begin{array}{rclr}
    &&
        \cposup[n \in \nats] \sem{\NFOLD[m+1]{(\approxpol{\pola}{n})}}(\hista) \\
    &=&
        \cposup[n \in \nats]
            \approxn[n]{\sem{\pola}}(\hista) \bind
                \sem{\NFOLD[m]{(\approxpol{\pola}{n})}}
    & \text{(By definition)} \\
    &=&
        \cposup[n_2 \in \nats]
            \cposup[n_1 \in \nats]
                \approxn[n_1]{\sem{\pola}}(\hista) \bind
                    \sem{\NFOLD[m]{(\approxpol{\pola}{n_2})}}
    & \text{(\cite[Proposition 2.1.12]{domain-theory})} \\
    &=&
        \cposup[n_2 \in \nats]
            \left( \cposup[n \in \nats]
                \approxn[n]{\sem{\pola}}(\hista) \right) \bind
                    \sem{\NFOLD[m]{(\approxpol{\pola}{n_2})}}
    & \text{(\Cref{thm:weightings-props}.\ref{lmm:bind-omega-continuous-l})} \\
    &=&
        \left( \cposup[n \in \nats]
            \approxn[n]{\sem{\pola}}(\hista) \right) \bind
                \cposup[n \in \nats]
                    \sem{\NFOLD[m]{(\approxpol{\pola}{n})}}
    & \text{(\Cref{thm:weightings-props}.\ref{lmm:bind-omega-continuous-r})} \\
    &=&
        \sem{\pola} \bind
            \cposup[n \in \nats]
                \sem{\NFOLD[m]{(\approxpol{\pola}{n})}}
    & \text{(Assumption)} \\
    &=&
        \sem{\pola} \bind \sem{\NFOLD[m]{\pola}}
    & \text{(IH)} \\
    &=&
        \sem{\NFOLD[m + 1]{\pola}}
    & \text{(By definition)}
    \end{array}
    \]
\end{proof}

\begin{lemma}[Equivalence of alternate definition of guarded iteration
    for \wnetkat]
\label{lmm:alt-defn-guarded-iteration-equiv}
Given policies $\pola_1$, $\pola_2$ and predicate $\testa$ in \wnetkat, we
have:
\[
\begin{array}{lrcl}
    \forall n \in \nats \colon &
        \sem{\pola_1} = \sem{\pola_2} &\Rightarrow&
            \sem{\SEQ{\left( \SUMTOP{i=0}{n}
                                {\NFOLD[i]{(\SEQ{\testa}{\pola_1})}\right)}}
                     {\NOT{\testa}}}
            =
                \sem{\NFOLD[n,t]{\pola_2}}

\end{array}
\]
\end{lemma}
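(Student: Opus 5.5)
The plan is to prove the statement by induction on $n$. Before the induction, I would remove the distinction between $\pola_1$ and $\pola_2$. The semantics is compositional: $\sem{\SEQ{\pola}{\polb}}$, $\sem{\ADD{\pola}{\polb}}$ and $\sem{\NFOLD[i]{\pola}}$ depend only on $\sem{\pola}$ and $\sem{\polb}$. Hence the hypothesis $\sem{\pola_1}=\sem{\pola_2}$ lets us replace $\pola_1$ by $\pola_2$ (write $\pola$) on the left-hand side. What remains is the purely semantic identity
\[
\sem{\NFOLD[n,\testa]{\pola}} \eeq \semisumtop{i=0}{n} \sem{\SEQ{\NFOLD[i]{(\SEQ{\testa}{\pola})}}{\NOT{\testa}}}~.
\]

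Before starting, I would fix the convention for the finite sum. For $n=0$ the right-hand side of the statement must be $\sem{\NOT{\testa}}$. So the sum $\SUMTOP{i=0}{n}{}$ has to be read as ranging over $i=0,\dots,n$, i.e.\ $n+1$ summands, which matches $\NFOLD[n,\testa]{}$. If one insists on the recursive definition in \Cref{tab:approx}, which yields $n$ summands, the bound must be shifted by one. The argument below is otherwise unchanged.

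First I would collect a few algebraic facts at the level of denotations. Each follows from \Cref{thm:weightings-props}, namely the monad laws and the distribution of bind over finite and countable sums in both arguments:
\begin{itemize}
\item sequential composition is associative, with $\SKIP$ as a unit;
\item $\SEQ{}{}$ distributes over $\ADDN$ on both sides;
\item $\sem{\SEQ{\pola}{\NFOLD[i]{\pola}}}=\sem{\NFOLD[i+1]{\pola}}$ holds by definition.
\end{itemize}
Using these, $\IF[\testa]{A}{\SKIP}$ unfolds semantically to $\ADD{\SEQ{\testa}{A}}{\NOT{\testa}}$.

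The induction then runs as follows. In the base case, both sides equal $\sem{\NOT{\testa}}$, because $\NFOLD[0]{(\SEQ{\testa}{\pola})}=\SKIP$ and $\SKIP$ is a unit. In the step case, I would unfold
\[
\sem{\NFOLD[n+1,\testa]{\pola}} \eeq \sem{\SEQ{\testa}{\SEQ{\pola}{\NFOLD[n,\testa]{\pola}}}} \semiadd \sem{\NOT{\testa}}~.
\]
Next I would apply the induction hypothesis to $\NFOLD[n,\testa]{\pola}$ and distribute the prefix $\SEQ{\testa}{\pola}$ over the finite sum. Associativity then regroups each summand as $\SEQ{\NFOLD[i+1]{(\SEQ{\testa}{\pola})}}{\NOT{\testa}}$ for $i=0,\dots,n$. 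Finally, the extra $\sem{\NOT{\testa}}$ is identified with the $i=0$ summand $\SEQ{\NFOLD[0]{(\SEQ{\testa}{\pola})}}{\NOT{\testa}}$. Reindexing then gives the sum for $n+1$, using commutativity and associativity of $\semiadd$.

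I expect the main obstacle to be bookkeeping rather than any deep step. The first part is getting the indexing of the finite sum consistent with $\NFOLD[n,\testa]{}$, as discussed above. The second part is justifying the distribution of $\sem{\SEQ{\testa}{\pola}}(\hista)\bind(-)$ over the finite sum of weightings, which needs bind to be additive in its second argument. I would derive this from \Cref{thm:weightings-props}.\ref{thm:weightings-props5} together with the exchange of finite and countable sums from \Cref{def:omega-comp-monoids}.
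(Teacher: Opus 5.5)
Your proposal is correct and follows essentially the same route as the paper's proof: induction on $n$, splitting off the $i=0$ summand, left-distributing $\SEQ{\testa}{\pola}$ over the finite sum, applying the induction hypothesis, and refolding into $\IF[\testa]{\SEQ{\pola}{\NFOLD[n,\testa]{\pola}}}{\SKIP}$. The only cosmetic differences are that the paper computes from the left-hand side and uses $\sem{\pola_1}=\sem{\pola_2}$ inside the inductive step rather than up front.

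Your indexing remark is also justified. The paper's base case evaluates $\SUMTOP{i=0}{0}{\NFOLD[i]{(\SEQ{\testa}{\pola_1})}}$ to $\SKIP$ ``by definition'', which tacitly uses the inclusive reading with $n+1$ summands. The recursive clause in \Cref{tab:approx} instead sets this sum to $\DROP$, and under that reading the case $n=0$ would fail.
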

\begin{proof}
    By induction on $n$.

\emph{Case $n=0$:}
    \[
    \begin{array}{rclr}
        \sem{\SEQ{\left( \SUMTOP{i=0}{0}
                            {\NFOLD[i]{(\SEQ{\testa}{\pola_1})}}\right)}
                 {\NOT{\testa}}}
    &=& \sem{\SEQ{\SKIP}{\NOT{\testa}}}
    & \text{(By definition)} \\
    &=& \sem{\NOT{\testa}}
    & \text{(\Cref{thm:weightings-props}.\ref{thm:weightings-props2})} \\
    &=& \sem{\NFOLD[0,t]{\pola_2}}
    \end{array}
    \]

\emph{Case $n=n+1$:}
    \[
    \begin{array}{rclr}
    &&  \sem{\SEQ{\left( \SUMTOP{i=0}{n+1}
                            {\NFOLD[i]{(\SEQ{\testa}{\pola_1})}}\right)}
                 {\NOT{\testa}}} \\
    &=&
        \sem{\SEQ{\left(\ADD{\NFOLD[0]{(\SEQ{\testa}{\pola_1})}}
                            {\SUMTOP{i=0}{n}
                                {\NFOLD[i+1]{(\SEQ{\testa}{\pola_1})}}}\right)}
                 {\NOT{\testa}}}
    & \text{(By definition)} \\
    &=&
        \sem{\SEQ{\left(\ADD{\SKIP}
                            {\SUMTOP{i=0}{n}
                                {\SEQ{\SEQ{\testa}{\pola_1}}
                                    {\NFOLD[i]{(\SEQ{\testa}{\pola_1})}}}}\right)}
                 {\NOT{\testa}}}
    & \text{(Definition of $n$-th fold)} \\
    &=&
        \sem{\SEQ{\left(\ADD{\SKIP}
                            {\SEQ{\SEQ{\testa}{\pola_1}}
                                 {\SUMTOP{i=0}{n}
                                    \NFOLD[i]{(\SEQ{\testa}{\pola_1})}}}\right)}
                 {\NOT{\testa}}}
    & \text{(\Cref{thm:weightings-props}.\ref{thm:weightings-props3})} \\
    &=&
        \sem{\ADD{\SEQ{\SKIP}{\NOT{\testa}}}
                 {\SEQ{\SEQ{\SEQ{\testa}{\pola_1}}
                           {\left(\SUMTOP{i=0}{n}
                                    \NFOLD[i]{(\SEQ{\testa}{\pola_1})}\right)}}
                      {\NOT{\testa}}}}
    & \text{(\Cref{thm:weightings-props}.\ref{thm:weightings-props3})} \\
    &=&
        \sem{\ADD{\NOT{\testa}}
                 {\SEQ{\SEQ{\SEQ{\testa}{\pola_1}}
                           {\left(\SUMTOP{i=0}{n}
                                    \NFOLD[i]{(\SEQ{\testa}{\pola_1})}\right)}}
                      {\NOT{\testa}}}}
    & \text{(\Cref{thm:weightings-props}.\ref{thm:weightings-props2})} \\
    &=&
        \sem{\ADD{\NOT{\testa}}
                 {\SEQ{\SEQ{\testa}{\pola_1}}
                      {\NFOLD[n,\testa]{\pola_2}}}}
    & \text{(IH for $n$)} \\
    &=&
        \sem{\ADD{\NOT{\testa}}
                 {\SEQ{\SEQ{\testa}{\pola_2}}
                      {\NFOLD[n,\testa]{\pola_2}}}}
    & \text{(Assumption for $\pola_1$)} \\
    &=&
        \sem{\ADD{\SEQ{\NOT{\testa}}{\SKIP}}
                 {\SEQ{\SEQ{\testa}{\pola_2}}
                      {\NFOLD[n,\testa]{\pola_2}}}}
    & \text{(\Cref{thm:weightings-props}.\ref{thm:weightings-props2})} \\
    &=&
        \sem{\ADD{\SEQ{\SEQ{\testa}{\pola_2}}
                      {\NFOLD[n,\testa]{\pola_2}}}
                 {\SEQ{\NOT{\testa}}{\SKIP}}}
    & \text{(\Cref{thm:weightings-props}.\ref{thm:weightings-props1})} \\
    &=&
        \sem{\IF[\testa]{\SEQ{\pola_2}{\NFOLD[n,\testa]{\pola_2}}}
                        {\SKIP}}
    & \text{(By definition)} \\
    &=&
        \sem{\NFOLD[n+1,\testa]{\pola_2}}
    \end{array}
    \]
\end{proof}

\section{\wnetkat Subsumes \netkat and Guarded \probnetkat}
\label{appendix:subsumption}
It is natural to ask in what sense \wnetkat subsumes existing (extensions of) \netkat.
We establish tight connections between both \netkat and its probabilistic extension \probnetkat.
Whereas it is straightforward to see that \wnetkat conservatively extends \netkat, establishing a connection to \probnetkat is much more challenging as the semantics of \probnetkat requires an involved measure and domain-theoretic treatment.
Yet, \wnetkat admits an instance subsuming a rich and practically relevant \cite{mcnetkat} fragment of \probnetkat.

\smallskip \noindent\textbf{\wnetkat Subsumes \netkat.}
The semantics of classical~\netkat \cite{netkat} maps input histories to \emph{sets} of output histories. When instantiating \wnetkat with the Boolean semiring $\boolSemiring = {(\{0,1\},\, {\lor},\, {\land},\, 0,\, 1)}$, the semantic domains of classic \netkat and \iwnetkat{\boolSemiring} hence coincide. This yields the following:

\begin{theorem}[\wnetkat subsumes \netkat]
	Under the (syntactic) identifications $\AND{\testa}{\testb} = \testa \cdot \testb, \OR{\testa}{\testb} = \testa + \testb$, $\WEIGH{0}{\pola} = 0$, $\WEIGH{1}{\pola} = \pola$, $\SEQ{\pola}{\polb} = \pola \cdot \polb$, and $\ADD{\pola}{\polb} = \pola + \polb$, the syntax of \iwnetkat{\boolSemiring} reduces to classical \netkat and
	\(
		\sem{\pola}(\hista)(\hista') =1
		\quad\text{iff}\quad
		\text{classical \netkat produces $\hista'$ when executing $\pola$ on $\hista$}~.
	\)
\end{theorem}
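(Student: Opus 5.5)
The plan is a structural induction on $\pola$, relating \iwnetkat{\boolSemiring} weightings to sets. In the Boolean semiring a weighting $\wtinga \in \weightings[\boolSemiring]{\histories}$ is exactly the characteristic function of its support. The support is a countable subset of $\histories$, and $\histories$ is itself countable. So the map $\wtinga \mapsto \supp(\wtinga)$ is a bijection onto $2^{\histories}$, the semantic domain of classical \netkat.

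The first step is to check that this bijection transports the monad structure. The unit $\unit(\hista)$ corresponds to the singleton $\{\hista\}$. Countable sums in $\boolSemiring$ are countable disjunctions, because the supremum of partial sums of $\lor$ is $1$ iff some summand is $1$. Hence lifted $\semiadd$ and $\semisum{}$ correspond to union and countable union. The bind $\wtinga \bind f$ evaluates at $y$ to $\bigvee_{x \in \supp(\wtinga)} f(x)(y)$, which is the Kleisli extension $\bigcup_{x\in A} f(x)$ of the powerset monad used by \netkat. Finally, scalar multiplication by $0$ or $1$ gives the empty weighting or the identity. This matches the identifications $\WEIGH{0}{\pola}=0$ and $\WEIGH{1}{\pola}=\pola$.

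The second step is the induction, proving $\supp(\sem{\pola}(\hista)) = \sem{\pola}_{\netkat}(\hista)$.
\begin{itemize}
\item Predicates, $\ASSN{\fielda}{\vala}$ and $\DUP$ follow by inspecting \Cref{fig:syntax-semantics} against the \netkat definitions~\cite{netkat}. Predicates yield $\{\hista\}$ or $\emptyset$; $\ASSN{\fielda}{\vala}$ and $\DUP$ yield singletons.
\item Sequencing, weighting and choice follow from the first step together with the induction hypothesis.
\item For iteration, \wnetkat defines $\sem{\ITER{\pola}}(\hista) = \semisum{n} \sem{\NFOLD{\pola}}(\hista)$. This becomes $\bigcup_n \sem{\NFOLD{\pola}}_{\netkat}(\hista)$, which is exactly \netkat's definition of star. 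The identification uses the countable-union correspondence and an inner induction on $n$, applying the sequencing case to each $\NFOLD{\pola}$.
\end{itemize}

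The step needing most care is iteration, specifically that the $\omega$-continuous sum in $\boolSemiring$ (with order $0 \le 1$) really is countable disjunction, independent of enumeration. This is immediate from \Cref{def:omega-comp-monoids}: partial sums form a chain in $\{0,1\}$ that reaches $1$ exactly when some term is $1$. If one prefers \netkat's least-fixed-point presentation of star instead, one instead invokes \Cref{thm:iter-lfp} and notes that $\Phi_\pola$ transports to the corresponding \netkat operator.

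The syntactic part of the statement is then bookkeeping. Weights in $\boolSemiring$ are only $0$ and $1$, so every $\WEIGH{\wta}{\pola}$ is eliminated by the identifications, and the remaining constructors coincide with those of \netkat.
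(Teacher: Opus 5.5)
Your proposal is correct and follows the paper's route. The paper justifies the theorem only by noting that, over the Boolean semiring, weightings on the countable set of histories are exactly sets of histories, so the semantic domains of \netkat and \iwnetkat{\boolSemiring} coincide. Your structural induction, which checks that the unit, bind, (countable) sums and weighting by $0$ or $1$ become the singleton, Kleisli extension, (countable) union, and drop or identity, carries out the verification that this remark leaves implicit.
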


\smallskip \noindent\textbf{\wnetkat Subsumes Guarded \probnetkat.}
The semantics of \emph{full} \probnetkat maps \emph{sets} of input histories to
\emph{possibly continuous} probability measures over \emph{sets} of output histories,
requiring a complex measure- and domain-theoretic treatment \cite{cantor}. Loosely speaking,
this is due to the fact that \probnetkat mixes nondeterministic and probabilistic behavior,
which is well-known to be challenging. Even though \wnetkat does not subsume \emph{full}
\probnetkat, it subsumes its rich \emph{guarded} fragment~\cite{mcnetkat}. In what follows, we first link
the two extensions of \netkat syntactically. We then treat their semantic relationship.

Instantiate \wnetkat with the non-negative extended reals $(\posrealsinf, \, +, \, \cdot, \, 0, \, 1)$ and restrict the usage of weightings to choices of the form $\ADD{\WEIGH{\proba}{\pola}}{\WEIGH{(1 - \proba)}{\polb}}$, which coincides with the probabilistic choice operator $\PROBCHOICE[\proba]{\pola}{\polb}$ of \probnetkat, where $0 \leq \proba \leq 1$. Furthermore, we restrict choice and iteration to occur guarded, in the forms $\ADD{\SEQ{\testa}{\pola}}{\SEQ{\NOT{\testa}}{\polb}}$ and $\SEQ{\ITER{(\SEQ{\testa}{\pola})}}{\NOT{\testa}}$. Finally, for $\testa,\testb$, we identify \probnetkat's disjunction $\testa \& \testb$ with \wnetkat's $\OR{\testa}{\testb}$. Call the
the resulting instance \iwnetkat{$\probSemiring$}. With these syntactic identifications, the syntax of \iwnetkat{$\probSemiring$} and \emph{guarded} \probnetkat coincide.

Let us now compare the semantics of  \iwnetkat{$\probSemiring$} and guarded \probnetkat. First, \iwnetkat{$\probSemiring$} maps input policies to discrete probability \emph{sub}distributions over histories:
\[
	\forall \hista,\hista'\colon\sem{\pola}(\hista)(\hista') \in [0,1]
	\qquad\text{and}\qquad
	\underbrace{\mass{\sem{\pola}(\hista)} = \semisum{\hista' \in \supp(\sem{\pola}(\hista))} \wtinga(\hista')}_{\text{mass of $\sem{\pola}(\hista)$}} \leq 1~.
\]
The semantics of \probnetkat, denoted by $\altsem{-}$, on the other hand, maps \emph{sets} of input histories to possibly \emph{continuous} distributions over \emph{sets} of histories\footnote{See \cite{probnetkat} for the construction of the corresponding probability space.}, i.e., formally $\altsem{\pola} \colon \setsof{\histories} \rightarrow \probmonad(\setsof{\histories})$, where $\probmonad(\setsof{\histories})$ is the (continuous) Giry monad \cite{giry}. Even though $\sem{-}$ and $\altsem{-}$ are quite different in nature and \iwnetkat{$\probSemiring$} does not require involved measure theory,
we establish a tight connection by carefully relating the \emph{approximants} of both semantics. First, the mass of $\sem{\pola}(\hista)$ produced by \iwnetkat{$\probSemiring$} coincides with the probability of \probnetkat not producing the empty set of output histories. Second, the probability \iwnetkat{$\probSemiring$} assigns to $\hista'$ on input $\hista$ coincides with the probability of \probnetkat producing an output set containing $\hista'$.
Finally, guarded \probnetkat policies assign probability mass only to subsingleton
sets of histories (i.e., guarded \probnetkat policies have subsingleton support).

\begin{theorem}[\wnetkat subsumes guarded \probnetkat]
	Given a policy $\pola$ in the guarded fragment of \probnetkat and a history
	$\hista \in \histories$, the semantics of \iwnetkat{$\probSemiring$} subsume
    the semantics of guarded \probnetkat in the following sense:
    \begin{enumerate}
    \item
		$\int_{\histsa \in \setsof{\histories}} [1 < |\histsa|] \ d \altsem{\pola}(\{\hista\}) = 0$,
    \item
		$\mass{\sem{\transpol{\pola}}(\hista)} =
            \int_{\histsa \in \setsof{\histories}} \
                [\histsa \neq \varnothing] \ d \altsem{\pola}(\{\hista\})$, and
    \item
		$\forall \histb \in \supp(\sem{\transpol{\pola}}(\hista)) \colon
		    \sem{\transpol{\pola}}(\hista)(\histb) =
		        \int_{\histsa \in \setsof{\histories}} \
		            [\histb \in \histsa] \ d \altsem{\pola}(\{\hista\})$.
    \end{enumerate}
\end{theorem}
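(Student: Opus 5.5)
The plan is to go through the approximants of \Cref{appendix:approximation}. For each finite, loop-free approximant the statement can be checked by structural induction, and we then pass to the limit using \Cref{thm:sup-approx-equiv}. On the \wnetkat side, \Cref{thm:sup-approx-equiv} together with \Cref{lmm:alt-approx-equiv} gives $\sem{\pola} = \cposup[n] \sem{\altaltapproxpol{\pola}{n}}$. On the \probnetkat side, \citet{probnetkat,cantor} show that $\altsem{\pola}$ is the weak (in fact pointwise on basic clopen events) limit of the approximants $\altsem{\pola^{(n)}}$ built with the same while-unrolling $\NFOLD[n,\testa]{-}$. Since the two approximant schemes coincide syntactically, it suffices to establish the three claims for loop-free guarded policies and then argue that all three quantities are preserved in the limit.

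For loop-free guarded policies I will prove a single stronger invariant by induction on $\pola$: for every history $\hista$, the distribution $\altsem{\pola}(\{\hista\})$ is supported on sets of size at most one, and $\altsem{\pola}(\{\hista\})(\{\histb\}) = \sem{\pola}(\hista)(\histb)$. Claims (1) through (3) follow at once from this invariant. Claim (3) is exactly the invariant, because $[\histb \in \histsa]$ restricted to subsingletons is $[\histsa=\{\histb\}]$. Claim (2) follows by summing (3) over $\histb$, which is a countable sum by discreteness.

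The induction cases go as follows.
\begin{itemize}
\item Primitives (tests, assignments, $\DUP$) are deterministic on singletons in both semantics.
\item For $\PROBCHOICE[\proba]{\pola}{\polb}$, the \probnetkat side is the convex combination $\proba\altsem{\pola}+(1-\proba)\altsem{\polb}$, which matches $\proba\semimul\sem{\pola}\semiadd(1-\proba)\semimul\sem{\polb}$ in $\posrealsinf$.
\item For guarded choice, exactly one branch survives on a singleton input, and that branch is selected by the head packet.
\item For sequencing, \probnetkat applies $\altsem{\polb}$ to the output set via Kleisli extension. On a subsingleton $\{\histb\}$ this is $\altsem{\polb}(\{\histb\})$, and on $\varnothing$ it is $\delta_\varnothing$. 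Summing over the discrete support then gives the bind $\sem{\pola}(\hista)\bind\sem{\polb}$.
\end{itemize}
The set-union behaviour of \probnetkat, under which the output for a set input is a union, only arises on non-singleton inputs, and guardedness excludes those. This is the crux of why the invariant holds.

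The main obstacle is the limit step. The measure $\altsem{\pola}(\{\hista\})$ lives on the Cantor space $\setsof{\histories}$, and we need the relevant integrals to commute with the supremum of approximants. My approach is to use the monotone convergence of approximants in \citet{cantor}, namely that $\altsem{\pola^{(n)}}(a)(B_b)\uparrow\altsem{\pola}(a)(B_b)$ for the upward-closed basic sets $B_b=\{\histsa \mid b\subseteq\histsa\}$. Then:
\begin{itemize}
\item Claim (3) uses $B_{\{\histb\}}$, and its left-hand side converges as a monotone supremum in $[0,1]$ by \Cref{thm:sup-approx-equiv}.
\item Claim (2) uses the union $\bigcup_{\histb} B_{\{\histb\}}=\{\histsa\neq\varnothing\}$ together with countable additivity and an exchange of suprema.
\item Claim (1) concerns the event $\{|\histsa|>1\}=\bigcup_{\histb\neq\histc}B_{\{\histb,\histc\}}$. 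Each $B_{\{\histb,\histc\}}$ has measure $0$ under every approximant by the invariant, and hence has measure $0$ in the limit, being a monotone limit of zeros.
\end{itemize}
If monotone convergence is unavailable for some of these events, I would instead fall back on the weak-convergence continuity theorem of \citet{cantor} for these clopen sets.
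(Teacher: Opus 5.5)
Your route is the paper's. Your invariant for loop-free policies is an equivalent, slightly cleaner form of the relation $\sim$ in \Cref{def:weightings-measures-equiv}, which the paper establishes via \Cref{lmm:singleton-support-approx} and \Cref{lmm:approx-equiv-wnketkat-probnetkat}. Your limit step is the paper's use of \citet[Corollary 23]{cantor} together with \Cref{thm:sup-approx-equiv}. That corollary applies directly to the Scott-continuous indicators $[\histb\in\histsa]$, $[\histsa\neq\varnothing]$ and $[1<|\histsa|]$. You therefore do not need to decompose into the overlapping sets $B_b$, where ``countable additivity'' is in any case not the right tool.

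The genuine gap is the sentence ``the two approximant schemes coincide syntactically''; they do not. In \citet{cantor} the $n$-th approximant of $\WHILE[\testa]{\pola}$ is the approximant of its desugaring, namely $\SEQ{\ALTNFOLD[n]{(\SEQ{\testa}{\pola_n})}}{\NOT{\testa}}$ with $\ALTNFOLD[0]{\polb}=\SKIP$ and $\ALTNFOLD[n+1]{\polb}=\PAR{\SKIP}{\SEQ{\polb}{\ALTNFOLD[n]{\polb}}}$. This lies outside the guarded fragment, and your induction cannot be run on it. Already $\ALTNFOLD[1]{(\SEQ{\testa}{\pola_n})}$ maps an input $\{\hista\}$, with $\hista$ satisfying $\testa$, to $\{\hista\}$ united with the output of $\pola_n$. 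That is typically a two-element set, so the subsingleton invariant fails at that node. Your remark that union only matters on non-singleton inputs is therefore false for these terms.

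What is missing is the semantic rewriting $\altsem{\SEQ{\ALTNFOLD[n]{(\SEQ{\testa}{\polb})}}{\NOT{\testa}}}=\altsem{\ALTNFOLD[n,\testa]{\polb}}$, where $\ALTNFOLD[0,\testa]{\polb}=\NOT{\testa}$ and $\ALTNFOLD[n+1,\testa]{\polb}=\IF[\testa]{\SEQ{\polb}{\ALTNFOLD[n,\testa]{\polb}}}{\SKIP}$. It is proved by induction on $n$:
\begin{enumerate}
\item Push the trailing $\NOT{\testa}$ into the $\&$ from the right. This is sound because a test is a deterministic filter commuting with union; right distributivity fails for general probabilistic right factors.
\item Simplify $\SEQ{\SKIP}{\NOT{\testa}}$ to $\NOT{\testa}$.
\item Apply the induction hypothesis and recognize the result as the guarded conditional.
\end{enumerate}
This is exactly \Cref{lmm:alt-defn-guarded-iteration-equiv-probnetkat} and \Cref{lmm:probnetkat-alt-approx-pol-equiv}, which the paper introduces precisely so that approximants stay guarded. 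With this lemma, and its \wnetkat analogue \Cref{lmm:alt-approx-equiv} that you already cite, Cantor's approximants may be replaced by guarded loop-free policies, and the rest of your argument goes through.
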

\begin{proof}
    We prove only (3), the reasoning for the other two claims is analogous.
    Our key idea is to exploit \cite[Corollary 23]{cantor}: For every Scott-continuous\footnote{w.r.t.\ the DCPO $(\setsof{\histories},\, \subseteq)$ of sets of histories} random variable $f \colon \setsof{\histories} \to \posrealsinf$, we have
	\begin{align*}
        \label{eqn:thm_cantor_approx}
		\int_{\histsa \in \setsof{\histories}} \
		f \ d \altsem{\pola}(\{\hista\})
		\eeq
		\sup_{n\in\nats}
		\int_{\histsa \in \setsof{\histories}} \
		f \ d
		\approxn[n]{\altsem{\pola}}(\hista)~,
		\tag{$\dagger$}
	\end{align*}
	where $\approxn[n]{\altsem{\pola}}$ denotes \probnetkat's $n$-th approximant.
    This enables us to reduce the reasoning to \emph{iteration-free} policies, and
    show
    \begin{align*}
		\sup_{n\in\nats}
		\int_{\histsa \in \setsof{\histories}} \
		[\histb \in \histsa] \ d
		\approxn[n]{\altsem{\pola}}(\hista)
        \eeq
		\sup_{n\in\nats} \
        \approxn[n]{\altsem{\pola}}(\hista)(\histb)~.
        \tag{\Cref{lmm:singleton-support-approx}}
    \end{align*}

    By ensuring that approximants of while-loops for \probnetkat stay within the guarded fragment
    (\Cref{lmm:probnetkat-alt-approx-pol-equiv}), we prove
	\begin{align*}
		\forall \hista\in\histories\colon\forall n \in \nats \colon
		\quad
		 \underbrace{\approxn[n]{\sem{\pola}}(\hista)}_{\text{in \iwnetkat{$\probSemiring$}}}
         \eeq
         \underbrace{\approxn[n]{\altsem{\pola}}(\{\hista\})}_{\text{in}~\probnetkat}~,
         \tag{\Cref{lmm:approx-equiv-wnketkat-probnetkat}}
	\end{align*}
    where $\approxn[n]{\sem{\pola}}$ denotes \iwnetkat{$\probSemiring$}'s $n$-th approximant. We
    additionally show that the approximants of a policy $\pola$ in \iwnetkat{$\probSemiring$}
    yield precisely the semantics of $\pola$ in the limit, i.e.,
	\begin{align*}
        \sup_{n\in\nats} \approxn[n]{\sem{\pola}}
        =
        \sem{\pola}~.
		\tag{\Cref{thm:sup-approx-equiv}}
	\end{align*}
	Finally, this implies the claim because
	\[
		\sem{{\pola}}(\hista)(\histb)
        ~\substack{\text{Thm.\ref{thm:sup-approx-equiv}} \\ {}={}}~
		 \sup_{n \in \nats} \ \approxn[n]{\sem{\pola}}(\hista)(\histb)
		 ~\substack{\text{Lmm.\ref{lmm:approx-equiv-wnketkat-probnetkat}} \\ {}={}}~
         \sup_{n \in \nats} \ \approxn[n]{\altsem{\pola}}(\{\hista\})(\{\{\histb\}\})
         ~\substack{\text{Lmm.\ref{lmm:singleton-support-approx},\ref{eqn:thm_cantor_approx}} \\ {}={}}~
           \int_{\histsa \in \setsof{\histories}} \
           [\histb \in \histsa] \ d
           \altsem{\pola}(\hista)~.
	\]
\end{proof}

\subsection{Equivalence Up to Approximation of \iwnetkat{$\probSemiring$} and Guarded \probnetkat}

In this section we derive the equivalence of \iwnetkat{$\probSemiring$} and Guarded \probnetkat
up to \emph{approximation}. We define approximation for \wnetkat in \Cref{appendix:approximation}),
and use the following definition for \probnetkat.

\begin{definition}[Semantics and approximation of \probnetkat programs]
	We take the following definitions directly from \citet{cantor}:
	\begin{enumerate}
		\item Let
		$\altsem{-} \in \setsof{\histories} \rightarrow \probmonad(\setsof{\histories})$
		denote the semantic map for \probnetkat, where $\probmonad$ is the
		(continuous) probability monad
		\item Let $\altapproxpol{\pola}{n}$ denote the \emph{$n$-th approximant} of a
		policy $\pola$ in \probnetkat
		\item Let $\approxn[n]{\altsem{\pola}}(\histsa)$ denote the (discrete)
		measure obtained from the $n$-th approximant:
		$\altsem{\altapproxpol{\pola}{n}}(\histsa)$.
		\item Let $\ALTNFOLD[n]{\pola}$ denote the \emph{$n$-th fold} of a \probnetkat policy
		$\pola$:
		\[
		\begin{array}{rcl}
			\ALTNFOLD[0]{\pola} &\triangleq& \SKIP
		\end{array}
		\quad
		\begin{array}{rcl}
			\ALTNFOLD[n+1]{\pola} &\triangleq& \PAR{\SKIP}{\SEQ{\pola}{\ALTNFOLD[n]{\pola}}}
		\end{array}
		\]
	\end{enumerate}
\end{definition}

We deviate from \citet{cantor} slightly in the definition of the $n$-th
approximant for guarded iteration (which would just be the $n$-th approximant
of its desugared policy in full \probnetkat). We use the following
alternate definition throughout:
\[
\begin{array}{rcl}
	\altapproxpol{\WHILE[\testa]{\pola}}{n} &\triangleq&
	\ALTNFOLD[n,\testa]{(\altapproxpol{\pola}{n})}
\end{array}
\quad \text{where} \quad
\begin{array}{rcl}
	\ALTNFOLD[0,\testa]{\pola} &\triangleq& \NOT{\testa} \\
	\ALTNFOLD[n+1,\testa]{\pola} &\triangleq&
	\IF[\testa]{\SEQ{\pola}{\ALTNFOLD[n,\testa]{\pola}}}{\SKIP} \\
\end{array}
\]

This alternate definition ensures that the $n$-th approximant of a policy
in the guarded fragment of \probnetkat is also in the guarded fragment of
\probnetkat (and the remainder of our proofs can proceed by induction on
approximants of guarded \probnetkat policies). Despite the deviation, we
are able to recover that this definition is equivalent
(\Cref{lmm:probnetkat-alt-approx-pol-equiv}).
With this equivalent definition of approximation in place, we show that the
guarded fragment of \probnetkat and \iwnetkat{$\probSemiring$} are equivalent
up to approximation. We define precisely what we mean by equivalent as follows.

\begin{definition}
	\label{def:weightings-measures-equiv}
	Given a weighting $\wtinga \in \weightings[\nonnegRealsSemiring]{\histories}$ and a discrete
	measure $\measurea \in \measures{\setsof{\histories}}$, we say the two are
	\emph{equivalent} and write $\wtinga \sim \measurea$ if the following three conditions hold: (1) for all \(\hista \in \supp(\wtinga)\), \(\wtinga(\hista) = \measurea(\{\{\hista\}\})\), (2) \(\sum{\{\histsa \in \setsof{\histories} \ |\ 1 < |\histsa| \}}
		\measurea(\{\histsa\}) = 0\), and (3) \(\mass{\wtinga} =
		\sum{\hista \in \histories}
		\measurea(\{\{\hista\}\})\).
		\end{definition}

\begin{lemma}[Equivalence up to approximation of \iwnetkat{$\probSemiring$} and \probnetkat]
\label{lmm:approx-equiv-wnketkat-probnetkat}
	For all $n \in \nats$: given a policy $\pola$ in the guarded fragment of \probnetkat,
	the semantics of \iwnetkat{$\probSemiring$} and the guarded fragment of \probnetkat
	are equivalent up to $n$-th approximation:
	\[
	\begin{array}{lrcl}
		\forall \hista \in \histories \colon &
		\approxn[n]{\sem{ \transpol{\pola} }}(\hista) &\sim&
		\approxn[n]{\altsem{\pola}}(\{\hista\})
	\end{array}
	\]
\end{lemma}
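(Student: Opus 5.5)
The proof goes by structural induction on the guarded \probnetkat policy $\pola$, for a fixed $n$. Because $n$-th approximants contain no iteration, this is really an induction over iteration-free guarded policies. The one exception is while loops. By \Cref{lmm:probnetkat-alt-approx-pol-equiv} and \Cref{lmm:alt-approx-equiv}, both sides may use the alternate approximant $\NFOLD[n,\testa]{(\approxpol{\pola}{n})}$. That approximant is itself built from guarded conditionals, sequencing and $\NOT{\testa}$, so for the while case I do an inner induction on $n$, reusing the conditional and sequencing cases.

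Relation $\sim$ has three parts: pointwise agreement on singletons, no mass on sets of size $>1$, and equal total mass. So in every case I show that the \probnetkat measure $\approxn[n]{\altsem{\pola}}(\{\hista\})$ is supported on subsingletons. Then it is determined by the values $\measurea(\{\{\histb\}\})$ together with $\measurea(\{\varnothing\})$, and these values can be matched against the weighting.

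\emph{Base cases.} Tests, modifications and $\DUP$ are deterministic in both semantics. On $\{\hista\}$ they produce either a Dirac measure on a singleton $\{\histb\}$ or on $\varnothing$ (a failed test), while \wnetkat produces $\unit(\histb)$ or $\addid$. All three conditions of \Cref{def:weightings-measures-equiv} are immediate.

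\emph{Probabilistic choice.} $\PROBCHOICE[\proba]{\pola}{\polb}$ corresponds to $\ADD{\WEIGH{\proba}{\pola}}{\WEIGH{(1-\proba)}{\polb}}$. The \probnetkat measure is the convex combination of the two measures, and \wnetkat gives $\proba\cdot m_1 + (1-\proba)\cdot m_2$. All three conditions are linear, so they carry over from the induction hypothesis.

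\emph{Guarded choice.} $\ADD{\SEQ{\testa}{\pola}}{\SEQ{\NOT{\testa}}{\polb}}$ on input $\{\hista\}$ reduces to one of the two branches, since exactly one of $\testa,\NOT{\testa}$ passes the head packet. This matches \probnetkat's $\IF[\testa]{\pola}{\polb}$ directly. Note that parallel composition $\&$ occurs only inside predicates, where it is identified with $\vee$, so there is no separate case for it.

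\emph{Sequencing.} This is the main obstacle. In \probnetkat, $\altsem{\SEQ{\pola}{\polb}}(\{\hista\})$ is the Kleisli extension: sample a set $\histsa$ from $\altsem{\pola}(\{\hista\})$, then run $\polb$ on all of $\histsa$, which in general involves taking a union over the elements of $\histsa$. By the induction hypothesis the intermediate measure is supported on $\varnothing$ and singletons $\{\histb\}$. On $\varnothing$, $\polb$ returns $\delta_\varnothing$. On $\{\histb\}$, the induction hypothesis for $\polb$ applies. The composite is therefore the discrete mixture
\[\measurea(\{\varnothing\})\,\delta_\varnothing + \sum{\histb} \measurea(\{\{\histb\}\})\, \approxn[n]{\altsem{\polb}}(\{\histb\}),\]
which is again supported on subsingletons. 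Its value on $\{\histc\}$ is $\sum{\histb} m_1(\histb)\, m_2^{\histb}(\histc)$, exactly the bind $\sem{\pola}(\hista)\bind\sem{\polb}$ evaluated at $\histc$. For the mass condition, the mass of the composite on nonempty sets is $\sum{\histb} m_1(\histb)\,\mass{m_2^{\histb}}$, which matches the mass of the bind via \Cref{def:omega-comp-semirings} (distributivity over countable sums) and reassociation of countable sums.

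The delicate point is to justify that the Kleisli extension on a subsingleton-supported discrete measure really decomposes this way. This needs the fact, quoted from \citet{cantor}, that the extension of a policy to a set input acts as the union over singleton runs, so on $\varnothing$ and on singletons it reduces as above. It also needs the measures to be discrete, which holds for iteration-free policies.
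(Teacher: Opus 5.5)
Your proposal is correct and follows the paper's proof essentially case for case: structural induction on $\pola$, base cases via $\unit(\hista) \sim \delta_{\{\hista\}}$, probabilistic choice by linearity (the paper's scalar-multiplication and addition lemmas), sequencing via the bind lemma, conditionals by case analysis on the guard, and while loops via the alternate approximant $\NFOLD[n,\testa]{(\approxpol{\pola}{n})}$ with an inner induction reusing the conditional and sequencing cases. The only organizational difference is that you carry subsingleton support through the induction, whereas the paper takes it from the separate \Cref{lmm:singleton-support-approx}; one small correction is that the sequencing step does not need the claim that a policy acts on a set input as the union of its singleton runs (this fails in general in \probnetkat, since $\PROBCHOICE[\proba]{\pola}{\polb}$ makes a single choice for the whole input set), but only $\altsem{\polb}(\varnothing) = \delta_\varnothing$, which holds for every \probnetkat policy by an easy induction, together with the induction hypothesis on singletons.
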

\begin{proof}
    By \Cref{lmm:singleton-support-approx}, we have that
    $\sum{\{\histsa \in \setsof{\histories} \ |\ 1 < |\histsa| \}}
        \approxn[n]{\altsem{\pola}}(\{\hista\})(\{\histsa\}) = 0$.
    We prove the remaining two conditions of \Cref{def:weightings-measures-equiv} by structural induction on
    the policy $\pola$.

    \emph{Case $\pola$ predicate:}\
        By \Cref{lmm:predicate-equiv-wnetkat-gprobnetkat}.

    \emph{Case $\pola = \ASSN{\fielda}{\vala}$:}\
        By \Cref{lmm:equiv-eta-dirac}.

    \emph{Case $\pola = \DUP$:}\
        By \Cref{lmm:equiv-eta-dirac}.

    \emph{Case $\pola = \SEQ{\pola}{\polb}$:}\
        By \Cref{lmm:bind-equiv-wnetkat-gprobnetkat} and IH.

    \emph{Case $\pola = \IF[\testa]{\pola}{\polb}$:}\
        By \Cref{lmm:if-then-else-equiv-wnetkat-gprobnetkat} and IH.

    \emph{Case $\pola = \WHILE[\testa]{\pola}$:}

        We first unfold the LHS/RHS.
        \[
        \begin{array}{rclr}
            \approxn[n]{\sem{\WHILE[\transpol{\testa}]{\transpol{\pola}}}}
        &=& \sem{\approxpol{\WHILE[\transpol{\testa}]{\transpol{\pola}}}{n}}
        & \text{(By definition)} \\
        &=& \sem{\NFOLD[n,\transpol{\testa}]{(\approxpol{\transpol{\pola}}{n})}}
        & \text{(\Cref{lmm:alt-approx-equiv})} \\\\

        \approxn[n]{\altsem{\WHILE[\testa]{\pola}}}
        &=& \altsem{\altapproxpol{\WHILE[\testa]{\pola}}{n}}
        & \text{(By definition)} \\
        &=& \altsem{\ALTNFOLD[n,\testa]{(\altapproxpol{\pola}{n})}}
        & \text{(\Cref{lmm:probnetkat-alt-approx-pol-equiv})}
        \end{array}
        \]

        Finally, we have:
        \[
        \begin{array}{rclrc}
            \sem{\NFOLD[n,\transpol{\testa}]{(\approxpol{\transpol{\pola}}{n})}}
        &=&
            \altsem{\ALTNFOLD[n,\testa]{(\altapproxpol{\pola}{n})}}
        & \text{(\Cref{lmm:guarded-iteration-equiv-wnetkat-gprobnetkat} and IH)}
        \end{array}
        \]

    \emph{Case $\pola = \PROBCHOICE[\proba]{\pola}{\polb}$:}

        We first unfold the LHS/RHS.
        \[
        \begin{array}{rclr}
            \approxn[n]{\sem{\transpol{\PROBCHOICE[\proba]{\pola}{\polb}}}}
        &=&
            \approxn[n]{\sem{\ADD{\WEIGH{\proba}{\transpol{\pola}}}
                                 {\WEIGH{(1 - \proba)}{\transpol{\polb}}}}}
        & \text{(By definition)} \\
        &=&
            \proba \cdot \approxn[n]{\sem{\transpol{\pola}}} +
                (1 - \proba) \cdot \approxn[n]{\sem{\transpol{\polb}}}
        & \text{(By definition)} \\\\

            \approxn[n]{\altsem{\PROBCHOICE[\proba]{\pola}{\polb}}}
        &=&
            \proba \cdot \approxn[n]{\altsem{\pola}} +
                (1 - \proba) \cdot \approxn[n]{\altsem{\polb}}
        & \text{(By definition)}
        \end{array}
        \]

        By \Cref{lmm:scalar-mul-equiv-wnetkat-gprobnetkat}, we have:
        \[
        \begin{array}{rclr}
            \proba \cdot \approxn[n]{\sem{\transpol{\pola}}} &\sim&
                \proba \cdot \approxn[n]{\altsem{\pola}}~,
            & \text{(with IH for $\pola$)} \\
            &\text{and}& \\
                (1 - \proba) \cdot \approxn[n]{\sem{\transpol{\polb}}} &\sim&
                    (1 - \proba) \cdot \approxn[n]{\altsem{\polb}}~.
            & \text{(with IH for $\polb$)}
        \end{array}
        \]

        Finally, we conclude:
        \[
        \begin{array}{rclr}
            \proba \cdot \approxn[n]{\sem{\transpol{\pola}}} +
                (1 - \proba) \cdot \approxn[n]{\sem{\transpol{\polb}}}
        &\sim&
            \proba \cdot \approxn[n]{\altsem{\pola}} +
                (1 - \proba) \cdot \approxn[n]{\altsem{\polb}}
        &\text{(\Cref{lmm:addition-equiv-wnetkat-gprobnetkat})}
        \end{array}
        \]
\end{proof}

\subsection{Lemmas for Relation Between \iwnetkat{$\probSemiring$} and \probnetkat}

\begin{lemma}[Equivalence of alternate definition of guarded iteration
    for \probnetkat]
\label{lmm:alt-defn-guarded-iteration-equiv-probnetkat}
Given policies $\pola_1$, $\pola_2$ and predicate $\testa$ in \probnetkat, we
have:
\[
\begin{array}{lrcl}
    \forall n \in \nats \colon &
        \altsem{\pola_1} = \altsem{\pola_2} &\Rightarrow&
            \altsem{\SEQ{\ALTNFOLD[n]{(\SEQ{\testa}{\pola_1})}}{\NOT{\testa}}}
                =
                    \altsem{\ALTNFOLD[n,t]{\pola_2}}

\end{array}
\]
\end{lemma}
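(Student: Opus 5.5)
The plan is to mirror the proof of \Cref{lmm:alt-defn-guarded-iteration-equiv}: induction on $n$, rewriting with equational laws of \probnetkat from \citet{probnetkat,cantor} instead of \Cref{thm:weightings-props}. The one new point is that \probnetkat's $n$-th fold is built from parallel composition $\PAR{}{}$ rather than a sum. So I must only use distributivity laws that are sound for $\PAR{}{}$ in the presence of probabilistic choice. \probnetkat satisfies \emph{right} distributivity $\altsem{\SEQ{(\PAR{\pola}{\polb})}{\polb'}} = \altsem{\PAR{\SEQ{\pola}{\polb'}}{\SEQ{\polb}{\polb'}}}$. It does \emph{not} in general satisfy left distributivity over $\PAR{}{}$ when the left factor is probabilistic. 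The rewriting is arranged so that only right distributivity is needed.

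\emph{Base case $n=0$.} Here $\ALTNFOLD[0]{(\SEQ{\testa}{\pola_1})} = \SKIP$. Since $\SKIP$ is a unit for sequential composition in \probnetkat, $\altsem{\SEQ{\SKIP}{\NOT{\testa}}} = \altsem{\NOT{\testa}} = \altsem{\ALTNFOLD[0,\testa]{\pola_2}}$.

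\emph{Inductive step.} The chain of equalities, with the justification for each step, is:
\begin{align*}
&\altsem{\SEQ{\ALTNFOLD[n+1]{(\SEQ{\testa}{\pola_1})}}{\NOT{\testa}}}
= \altsem{\SEQ{(\PAR{\SKIP}{\SEQ{\SEQ{\testa}{\pola_1}}{\ALTNFOLD[n]{(\SEQ{\testa}{\pola_1})}}})}{\NOT{\testa}}}
&& \text{(definition of the fold)} \\
&\quad= \altsem{\PAR{\NOT{\testa}}{\SEQ{\SEQ{\testa}{\pola_1}}{(\SEQ{\ALTNFOLD[n]{(\SEQ{\testa}{\pola_1})}}{\NOT{\testa}})}}}
&& \text{(right distributivity, associativity, unit)} \\
&\quad= \altsem{\PAR{\NOT{\testa}}{\SEQ{\SEQ{\testa}{\pola_1}}{\ALTNFOLD[n,\testa]{\pola_2}}}}
&& \text{(IH)} \\
&\quad= \altsem{\PAR{\NOT{\testa}}{\SEQ{\SEQ{\testa}{\pola_2}}{\ALTNFOLD[n,\testa]{\pola_2}}}}
&& \text{(hypothesis $\altsem{\pola_1}=\altsem{\pola_2}$)} \\
&\quad= \altsem{\PAR{\SEQ{\testa}{(\SEQ{\pola_2}{\ALTNFOLD[n,\testa]{\pola_2}})}}{\SEQ{\NOT{\testa}}{\SKIP}}}
&& \text{(commutativity of $\PAR{}{}$, $\NOT{\testa} = \SEQ{\NOT{\testa}}{\SKIP}$)} \\
&\quad= \altsem{\IF[\testa]{\SEQ{\pola_2}{\ALTNFOLD[n,\testa]{\pola_2}}}{\SKIP}}
= \altsem{\ALTNFOLD[n+1,\testa]{\pola_2}}
&& \text{(desugaring of guarded conditional).}
\end{align*}
The IH and hypothesis steps rely on compositionality. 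The semantics of $\SEQ{\pola}{\polb}$ is defined as Kleisli composition of $\altsem{\pola}$ and $\altsem{\polb}$, so equal denotations can be substituted inside sequential contexts. Associativity of $\SEQ{}{}$ is likewise sound in \probnetkat. The final step requires that the guarded conditional in \probnetkat, which is primitive or desugared as $\PAR{\SEQ{\testa}{\pola}}{\SEQ{\NOT{\testa}}{\polb}}$, agrees with that union. For disjoint guards $\testa,\NOT{\testa}$ this is a standard lemma of \citet{probnetkat}.

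\emph{Main obstacle.} The delicate step is checking that every use of distributivity above is right distributivity, which holds unconditionally in \probnetkat. In particular I must never push $\SEQ{\testa}{\pola_1}$ into a $\PAR{}{}$, since this could fail for probabilistic $\pola_1$. The chain above is arranged exactly so. If the conditional step needs justification beyond citation, I would show it directly: on a singleton input $\{\hista\}$, exactly one of $\testa,\NOT{\testa}$ passes. Hence one branch of the union is deterministically $\varnothing$, and union with $\varnothing$ is the identity on the resulting measure.
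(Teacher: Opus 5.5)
Your proof follows the paper's proof essentially line for line. It uses the same induction on $n$ and the same inductive chain: distribute $\NOT{\testa}$ over the $\PAR{}{}$ in $\ALTNFOLD[n+1]{(\SEQ{\testa}{\pola_1})}$, simplify $\SEQ{\SKIP}{\NOT{\testa}}$, apply the IH and the hypothesis $\altsem{\pola_1}=\altsem{\pola_2}$ compositionally, commute the union, pad with $\SKIP$, and read off the conditional. In \probnetkat the conditional is by definition sugar for $\PAR{\SEQ{\testa}{\pola}}{\SEQ{\NOT{\testa}}{\polb}}$. Your singleton-input fallback is therefore not needed, and it would in any case only give equality on singleton inputs.

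One justification is wrong, although the step it supports is fine. You claim right distributivity $\altsem{\SEQ{(\PAR{\pola}{\polb})}{r}} = \altsem{\PAR{\SEQ{\pola}{r}}{\SEQ{\polb}{r}}}$ holds unconditionally in \probnetkat; it does \emph{not}. Take $\pola=\polb=\SKIP$ and $r = \ASSN{\fielda}{1} \oplus_{1/2} \ASSN{\fielda}{2}$. The left side denotes $\altsem{r}$, because $\altsem{\PAR{\SKIP}{\SKIP}}(a)=\delta_a$. The right side is $\PAR{r}{r}$, which flips two independent coins and, on a singleton input, outputs a two-element set with probability $1/2$. So right distributivity fails for probabilistic right factors, just as left distributivity fails for probabilistic left factors.

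Your step is valid only because the right factor is the predicate $\NOT{\testa}$. Its semantics is a deterministic filter $a \mapsto \delta_{a \cap B}$, which commutes with union. This is exactly the instance the paper uses, justified via the NetKAT axioms of \cite{probnetkat}. Replace ``holds unconditionally'' with this predicate-specific argument and the proof is complete.
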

\begin{proof}
    By induction on $n$.

    \emph{Case $n = 0$:}
        $$
            \altsem{\SEQ{\ALTNFOLD[0]{(\SEQ{\testa}{\pola_1})}}
                        {\NOT{\testa}}}
        =
            \altsem{\SEQ{\SKIP}{\NOT{\testa}}}
        = \altsem{\NOT{\testa}}
        = \altsem{\ALTNFOLD[0,\testa]{(\altapproxpol{\pola}{0})}}
        $$

    \emph{Case $n = n + 1$:}
        \[
        \begin{array}{rclr}
        &&
            \altsem{\SEQ{\ALTNFOLD[n+1]{(\SEQ{\testa}{\pola_1})}}
                        {\NOT{\testa}}} \\
        &=&
            \altsem{\SEQ{(\PAR{\SKIP}
                              {\SEQ{\SEQ{\testa}{\pola_1}}
                                   {\ALTNFOLD[n]{(\SEQ{\testa}{\pola_1})}}})}
                        {\NOT{\testa}}}
        & \text{(By definition)} \\
        &=&
            \altsem{\PAR{\SEQ{\SKIP}{\NOT{\testa}}}
                        {\SEQ{{\SEQ{\SEQ{\testa}{\pola_1}}
                                   {\ALTNFOLD[n]{(\SEQ{\testa}{\pola_1})}}}}
                             {\NOT{\testa}}}}
        & \text{(NetKAT axioms \cite[Corollary 1]{probnetkat})} \\
        &=&
            \altsem{\PAR{\NOT{\testa}}
                        {\SEQ{{\SEQ{\SEQ{\testa}{\pola_1}}
                                   {\ALTNFOLD[n]{(\SEQ{\testa}{\pola_1})}}}}
                             {\NOT{\testa}}}}
        & \text{(\cite[Lemma 4]{probnetkat})} \\
        &=&
            \altsem{\PAR{\NOT{\testa}}
                        {\SEQ{\SEQ{\testa}{\pola_1}}}
                             {\ALTNFOLD[n,\testa]{\pola_2}}}
            & \text{(IH for $n$)} \\
        &=&
            \altsem{\PAR{\NOT{\testa}}
                        {\SEQ{\SEQ{\testa}{\pola_2}}}
                             {\ALTNFOLD[n,\testa]{\pola_2}}}
            & \text{(Assumption for $\pola_1$)} \\
        &=&
            \altsem{\PAR{\SEQ{\SEQ{\testa}{\pola_2}}
                             {\ALTNFOLD[n,\testa]{\pola_2}}}
                        {\NOT{\testa}}}
        & \text{(NetKAT axioms \cite[Lemma 4]{probnetkat})} \\
        &=&
            \altsem{\PAR{\SEQ{\SEQ{\testa}{\pola_2}}
                             {\ALTNFOLD[n,\testa]{\pola_2}}}
                        {\SEQ{\NOT{\testa}}{\SKIP}}}
        & \text{(NetKAT axioms \cite[Lemma 4]{probnetkat})} \\
        &=& \altsem{\IF[\testa]{\SEQ{\pola_2}
                                    {\ALTNFOLD[n,\testa]{\pola_2}}}
                               {\SKIP}}
        & \text{(By definition)} \\
        &=&
            \altsem{\ALTNFOLD[n+1,\testa]{\pola_2}}
        & \text{(By definition)}
        \end{array}
        \]
\end{proof}

\begin{lemma}
[Equivalence of alternate definition of approximation in \probnetkat]
\label{lmm:probnetkat-alt-approx-pol-equiv}
	Let $\altaltapproxpol{\pola}{n}$ denote the $n$-th approximant
	for a policy $\pola$ as originally defined in \citet{cantor}. The alternate
	definition for approximation is equivalent in the following sense:
	\[
	\begin{array}{lrcl}
		\forall n \in \nats \colon &
		\altsem{\altaltapproxpol{\pola}{n}} &=&
		\approxn[n]{\altsem{\pola}}
	\end{array}
    \]
\end{lemma}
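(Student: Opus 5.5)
The plan is a structural induction on $\pola$, showing $\altsem{\altaltapproxpol{\pola}{n}} = \altsem{\altapproxpol{\pola}{n}}$ for every fixed $n$. The two approximation schemes agree by definition on every construct except guarded iteration, so only that case needs real work. For primitives (predicates, modifications, $\DUP$) both approximants are the policy itself. For $\SEQ{\pola}{\polb}$, $\PROBCHOICE[\proba]{\pola}{\polb}$ and $\IF[\testa]{\pola}{\polb}$, both approximants are formed homomorphically from the approximants of the subterms. The claim then follows from the induction hypotheses, because the \probnetkat semantics is compositional: $\altsem{-}$ of a compound term depends only on $\altsem{-}$ of its immediate subterms.

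For $\WHILE[\testa]{\pola} = \SEQ{\ITER{(\SEQ{\testa}{\pola})}}{\NOT{\testa}}$, I first unfold the original approximant of \citet{cantor}. Its approximant of the star is the $n$-th fold $\ALTNFOLD[n]{(\SEQ{\testa}{\altaltapproxpol{\pola}{n}})}$, with predicates left unchanged. This gives
\[
\altsem{\altaltapproxpol{\WHILE[\testa]{\pola}}{n}} \eeq \altsem{\SEQ{\ALTNFOLD[n]{(\SEQ{\testa}{\altaltapproxpol{\pola}{n}})}}{\NOT{\testa}}}.
\]
By the alternate definition, the right-hand side of the claim is $\altsem{\ALTNFOLD[n,\testa]{(\altapproxpol{\pola}{n})}}$. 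The induction hypothesis gives $\altsem{\altaltapproxpol{\pola}{n}} = \altsem{\altapproxpol{\pola}{n}}$. I then apply \Cref{lmm:alt-defn-guarded-iteration-equiv-probnetkat} with $\pola_1 = \altaltapproxpol{\pola}{n}$ and $\pola_2 = \altapproxpol{\pola}{n}$, which closes the case directly.

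The main obstacle is bookkeeping rather than mathematics. I have to check that the approximant of \citet{cantor} for the desugared while loop really has the shape above. In particular, the subterm $\SEQ{\testa}{\pola}$ must be approximated to $\SEQ{\testa}{\altaltapproxpol{\pola}{n}}$, with the $n$ in the fold and in the body being the same index, and the trailing $\NOT{\testa}$ must stay untouched. If their definition indexes the fold differently, for example with $n+1$ unrollings, the plan is to adjust the index in \Cref{lmm:alt-defn-guarded-iteration-equiv-probnetkat} accordingly, or to redefine the alternate approximant to match. Nothing else in the argument depends on the particular index.
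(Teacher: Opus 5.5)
Your proposal is correct and follows the paper's proof essentially step for step. The paper also uses structural induction with guarded iteration as the only nontrivial case. It unfolds the original approximant of $\SEQ{\ITER{(\SEQ{\testa}{\pola})}}{\NOT{\testa}}$ to $\SEQ{\ALTNFOLD[n]{(\SEQ{\testa}{\altaltapproxpol{\pola}{n}})}}{\NOT{\testa}}$ and closes with \Cref{lmm:alt-defn-guarded-iteration-equiv-probnetkat} and the induction hypothesis, using the same index $n$ for the fold and the body, so the indexing issue you anticipated does not arise.
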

\begin{proof}
    By structural induction on $\pola$. All cases follow immediately except for
    guarded iteration, as the definitions coincide exactly. We consider only
    this case.

    \emph{Case $\pola = \WHILE[\testa]{\pola}$:}

    We unfold the LHS/RHS first:
    \[
    \begin{array}{rclr}
        \altsem{\altaltapproxpol{\WHILE[\testa]{\pola}}{n}}
    &=&
        \altsem{\altaltapproxpol{\SEQ{\ITER{(\SEQ{\testa}{\pola})}}{\NOT{\testa}}}{n}} \\
    &=&
        \altsem{\SEQ{\ALTNFOLD[n]{(\altaltapproxpol{\SEQ{\testa}{\pola}}{n})}}{\NOT{\testa}}}
    & \text{(By definition \cite{cantor})} \\
    &=&
        \altsem{\SEQ{\ALTNFOLD[n]{(\SEQ{\testa}{\altaltapproxpol{\pola}{n}})}}{\NOT{\testa}}}
     & \text{(By definition \cite{cantor})} \\\\

    \approxn[n]{\altsem{\WHILE[\testa]{\pola}}}
    &=&
        \altsem{\altapproxpol{\WHILE[\testa]{\pola}}{n}} \medskip \\
    &=&
        \altsem{\ALTNFOLD[n,\testa]{(\altapproxpol{\pola}{n})}}
    & \text{(Alternate definition)}
    \end{array}
    \]

    Finally, we conclude:
    \[
    \begin{array}{rclr}
        \altsem{\SEQ{\ALTNFOLD[n]{(\SEQ{\testa}{\altaltapproxpol{\pola}{n}})}}{\NOT{\testa}}}
    &=&
        \altsem{\ALTNFOLD[n,\testa]{(\altapproxpol{\pola}{n})}}
    & \text{(\Cref{lmm:alt-defn-guarded-iteration-equiv-probnetkat} and IH)}
    \end{array}
    \]
\end{proof}

\begin{definition}[Subsingleton support of a discrete measure]
\label{def:subsingleton-support}
We say a discrete measure $\measurea \in \measures{\setsof{\histories}}$ has a
\emph{subsingleton support} if \(\quad
    \forall \{\histsa\} \in \supp(\measurea) \colon  |\histsa| \leq 1\).
\end{definition}

\begin{definition}[Subsingleton support for Markov kernels]
    We say that a  (discrete) Markov kernel  $\markova \colon \setsof{\histories} \times \borelsets \rightarrow [0,1]$ has a
\emph{subsingleton support} if \(\quad        \forall \histsa \ \text{s.t.} \ |\histsa| \leq 1 \colon
            \markova(\histsa, -) \ \text{has subsingleton support}\).
\end{definition}

\begin{lemma}[Bind preserves subsingleton support]
\label{lmm:bind-preserves-subsingleton}
    For any discrete measure $\measurea \in \measures{\setsof{\histories}}$ and
    Markov kernel $\markova \colon \setsof{\histories} \times \borelsets \rightarrow [0,1]$
    with subsingleton support, we have that $\measurea \bind \markova$ has
    subsingleton support.
\end{lemma}
\begin{proof}

            \[
            \begin{array}{rclr}
                &&
                    \measurea \bind \markova \\
                &=&
                    \lambda \eventa.
                        \int_{\histsa \in \setsof{\histories}}
                            \markova(\histsa,A) \cdot
                                \measurea (d \histsa)
                & \text{(By definition \cite[Figure 2]{cantor})} \\\\
                &=&
                    \lambda \eventa.
                        \begin{cases}
                            \sum{\histsa \in \setsof{\histories}}
                                \markova(\histsa,\{\histsb\}) \cdot
                                    \measurea(\{\histsa\})
                            & \eventa = \{ \histsb \} \\
                            0 & \text{otherwise}
                        \end{cases}
                & \text{($\measurea$ is discrete)} \\
            \end{array}
            \]

            Consider any $\histsa \in \setsof{\histories}$. By assumption
            for $\measurea$, we have that $|\histsa| \leq 1$. Therefore, for any
            $\histsb \in \setsof{\histories}$ such
            that $\{\histsb\} \in \supp(\measurea \bind \markova)$,
            we have by assumption for $\markova$ that $|\histsb| \leq 1$.
\end{proof}

\begin{lemma}[Subsingleton support of conditional choice]
\label{lmm:subsingleton-conditional-choice}
    For any *-free policies $\pola$, $\polb$ and predicate $\testa$ in \probnetkat: if \(
        \altsem{\pola} \) and  \( \altsem{\polb}\) have subsingleton support then
        \(        \altsem{\IF[\testa]{\pola}{\polb}} \ \text{has subsingleton support}.
   \)
\end{lemma}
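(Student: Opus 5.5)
The plan is to unfold the \probnetkat semantics of the conditional and apply \Cref{lmm:bind-preserves-subsingleton}. In \probnetkat, $\IF[\testa]{\pola}{\polb}$ desugars to $\PAR{\SEQ{\testa}{\pola}}{\SEQ{\NOT{\testa}}{\polb}}$. We must show that for every input set $\histsa$ with $|\histsa|\le 1$, the measure $\altsem{\IF[\testa]{\pola}{\polb}}(\histsa)$ only charges sets of size at most one. We split on $\histsa$.

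If $\histsa=\varnothing$, every \probnetkat policy maps $\varnothing$ to the Dirac measure $\delta_{\varnothing}$, which trivially has subsingleton support.

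If $\histsa=\{\hista\}$ with head packet $\pkt$, we use that predicates act deterministically by filtering: $\altsem{\testa}(\{\hista\})=\delta_{\{\hista\}}$ when $\pkt$ satisfies $\testa$ and $\delta_{\varnothing}$ otherwise, and symmetrically for $\NOT{\testa}$. Exactly one of $\testa,\NOT{\testa}$ passes $\hista$. Without loss of generality let it be $\testa$. Sequential composition is bind, so $\altsem{\SEQ{\testa}{\pola}}(\{\hista\})=\delta_{\{\hista\}}\bind\altsem{\pola}=\altsem{\pola}(\{\hista\})$, which has subsingleton support by hypothesis (this is also an instance of \Cref{lmm:bind-preserves-subsingleton}). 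Likewise $\altsem{\SEQ{\NOT{\testa}}{\polb}}(\{\hista\})=\altsem{\polb}(\varnothing)=\delta_{\varnothing}$.

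It remains to handle parallel composition $\&$. Its semantics is the pushforward under union of the product of the two component measures. Here one component is $\delta_{\varnothing}$, so the result is just $\altsem{\pola}(\{\hista\})$ pushed forward along $\histsb\mapsto\histsb\cup\varnothing$, i.e.\ $\altsem{\pola}(\{\hista\})$ itself, which has subsingleton support. The case where $\NOT{\testa}$ passes is symmetric, with $\polb$ in place of $\pola$.

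The main obstacle is justifying the computation for $\&$ with a Dirac-on-empty component, and the fact that every policy maps $\varnothing$ to $\delta_\varnothing$. For the first I would cite the definition of $\&$ in \cite[Figure 2]{cantor}, or equivalently the \probnetkat law $\PAR{\pola}{\DROP}=\pola$ from \cite{probnetkat}. For the second I would use the fact that \probnetkat programs are strict on $\varnothing$ (\cite[Lemma 4]{probnetkat}-style facts). The *-freeness assumption ensures all measures involved are discrete, so the argument in \Cref{lmm:bind-preserves-subsingleton} applies verbatim.
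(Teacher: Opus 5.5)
Your proof is correct and follows the same route as the paper: case-split on whether the guard passes the singleton input (predicates act as packet filters), note that the conditional then coincides with $\altsem{\pola}$ or $\altsem{\polb}$ on that input, and conclude by hypothesis. You also spell out two steps the paper leaves implicit---that the $\&$ with a $\delta_{\varnothing}$ component is the identity, which relies on strictness of the other branch on $\varnothing$, and the empty-input case.
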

\begin{proof}
    We proceed by case analysis
    on $\altsem{\testa}(\histsa)$. There are two cases (\cite[Lemma 2]{cantor}), from both of
    which we determine that either
    $\approxn[n]{\altsem{\IF[\testa]{\pola}{\polb}}}(\histsa) =
        \approxn[n]{\altsem{\pola}}(\histsa)$ or
    $\approxn[n]{\altsem{\IF[\testa]{\pola}{\polb}}}(\histsa) =
        \approxn[n]{\altsem{\polb}}(\histsa)$.
    In either case, the property holds by assumption.
\end{proof}

\begin{lemma}[Subsingleton support of guarded iteration]
\label{lmm:subsingleton-guarded-iteration}
    For any *-free policy $\pola$ and predicate $\testa$ in \probnetkat:
    \[
    \begin{array}{lrcl}
        \forall n \in \nats \colon &
                \altsem{\pola} \ \text{has subsingleton support}
                    &\Rightarrow&
        \altsem{\ALTNFOLD[n,\testa]{\pola}} \
                \text{has subsingleton support}
    \end{array}
    \]
\end{lemma}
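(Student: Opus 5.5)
Induction on $n$, following the recursive definition of $\ALTNFOLD[n,\testa]{\pola}$ and using only the two preceding support lemmas plus basic facts about predicates in \probnetkat. Throughout, subsingleton support for a Markov kernel means that every input $\histsa$ with $|\histsa|\le 1$ is mapped to a discrete measure supported on sets of size at most one. That is the notion \Cref{lmm:bind-preserves-subsingleton} and \Cref{lmm:subsingleton-conditional-choice} consume and produce, so I will keep it fixed for the whole argument.

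\emph{Base case $n=0$.} Here $\ALTNFOLD[0,\testa]{\pola} = \NOT{\testa}$ is a predicate. By the semantics of predicates in \probnetkat, $\altsem{\NOT{\testa}}(\histsa)$ is the Dirac measure on the subset of $\histsa$ whose head packets satisfy $\NOT{\testa}$ (\cite[Lemma 2]{cantor}). A subset of a set of size at most one again has size at most one, so the base case follows.

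\emph{Inductive step.} We have
\[
\ALTNFOLD[n+1,\testa]{\pola} \eeq \IF[\testa]{\SEQ{\pola}{\ALTNFOLD[n,\testa]{\pola}}}{\SKIP}.
\]
The plan is to apply \Cref{lmm:subsingleton-conditional-choice}. This needs subsingleton support for both branches; note that both branches are $*$-free, because $\pola$ is $*$-free and $\ALTNFOLD[n,\testa]{\pola}$ is built from $\pola$, tests and $\SKIP$ only.
\begin{itemize}
\item The else-branch $\SKIP$ is the Dirac kernel $\histsa \mapsto \delta_{\histsa}$, which trivially preserves $|\histsa|\le 1$.
\item For the then-branch, sequencing is bind: $\altsem{\SEQ{\pola}{\polb}}(\histsa) = \altsem{\pola}(\histsa) \bind \altsem{\polb}$. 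By the hypothesis on $\pola$, for $|\histsa|\le1$ the measure $\altsem{\pola}(\histsa)$ has subsingleton support. By the induction hypothesis, $\altsem{\ALTNFOLD[n,\testa]{\pola}}$ is a kernel with subsingleton support. \Cref{lmm:bind-preserves-subsingleton} then gives subsingleton support for the composite.
\end{itemize}
\Cref{lmm:subsingleton-conditional-choice} now yields the claim for $n+1$.

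\emph{Main obstacle.} The delicate point is the discreteness hypothesis in \Cref{lmm:bind-preserves-subsingleton}, which is stated for discrete measures. I need $\altsem{\pola}(\histsa)$ to be discrete, and I would obtain this from $*$-freeness: star-free \probnetkat policies denote finitely supported measures on finite inputs \cite{cantor}. A smaller point is the conditional lemma. Its proof case-splits on $\altsem{\testa}(\histsa)$, and for a singleton $\histsa$ the test either keeps the whole input or drops it. So the conditional reduces to one of the two branches, and no genuine mixing of the branches occurs on subsingleton inputs. With these two points checked, the induction closes routinely.
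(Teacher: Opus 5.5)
Your proof is correct and follows the paper's argument essentially step for step. Both use induction on $n$, with the base case discharged because predicates act as filters \cite[Lemma 2]{cantor}, and the inductive step discharged via \Cref{lmm:subsingleton-conditional-choice}: $\SKIP$ is handled directly, and the then-branch is handled by \Cref{lmm:bind-preserves-subsingleton} using the hypothesis on $\pola$ and the induction hypothesis. Your remarks on the discreteness of star-free denotations and on the $*$-freeness of both branches spell out side conditions the paper leaves implicit; the paper only notes in passing that $*$-free programs have discrete distributions.
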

\begin{proof}
    By induction on $n$.

    \emph{Case $n=0$:}
        \[
        \begin{array}{rclr}
            \altsem{\ALTNFOLD[0,\testa]{\pola}}
        &=& \altsem{\NOT{\testa}}
        & \text{(By definition)}
        \end{array}
        \]

        The property holds for predicates as they behave like packet filters
        \citep[Lemma 2]{cantor}.

    \emph{Case $n=n+1$:}
        \[
        \begin{array}{rclr}
            \altsem{\ALTNFOLD[n+1,\testa]{\pola}}
        &=&
            \altsem{\IF[\testa]{\SEQ{\pola}
                                    {\ALTNFOLD[n,\testa]{\pola}}}
                               {\SKIP}}
        &\text{(By definition)}
        \end{array}
        \]

        By \Cref{lmm:subsingleton-conditional-choice}, we have to show
        that both $\altsem{\SEQ{\pola}{\ALTNFOLD[n,\testa]{\pola}}}$
        and $\altsem{\SKIP}$ have subsingleton support. Any atomic
        program only acts on individual histories \citep[Lemma 3]{cantor},
        so we have that $\altsem{\SKIP}$ has subsingleton support. We
        proceed to show that
        $\altsem{\SEQ{\pola}{\ALTNFOLD[n,\testa]{\pola}}}$
        has subsingleton support.
        We have:
        \[
        \begin{array}{rclr}
            \altsem{\SEQ{\pola}{\ALTNFOLD[n,\testa]{\pola}}}(\histsa)
        &=& \altsem{\pola}(\histsa) \bind \altsem{\ALTNFOLD[n,\testa]{\pola}}
        & \text{(By definition \cite[Figure 2]{cantor})}
        \end{array}
        \]

        By \Cref{lmm:bind-preserves-subsingleton}, we need to show that
        $\altsem{\pola}(\histsa)$ and $\altsem{\ALTNFOLD[n,\testa]{\pola}}$
        have subsingleton support. However, these conditions hold by
        assumption for $\pola$ and by the induction hypothesis for $n$.
\end{proof}

\begin{lemma}[Subsingleton support of approximants of guarded \probnetkat policies]
\label{lmm:singleton-support-approx}
    For any policy $\pola$ in the guarded fragment of \probnetkat,
    $\approxn[n]{\altsem{\pola}}$ has subsingleton support.
\end{lemma}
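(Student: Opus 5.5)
The plan is a structural induction on the guarded \probnetkat policy $\pola$, proving the claim for every fixed $n \in \nats$. The cases follow the grammar of the guarded fragment: predicates, modifications $\ASSN{\fielda}{\vala}$, $\DUP$, sequential composition, probabilistic choice $\PROBCHOICE[\proba]{\pola}{\polb}$, conditionals $\IF[\testa]{\pola}{\polb}$, and guarded iteration $\WHILE[\testa]{\pola}$.

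Throughout I work with the alternate approximant for while loops. By \Cref{lmm:probnetkat-alt-approx-pol-equiv} it denotes the same measure as the approximant of \citet{cantor}. Its benefit is that $\altapproxpol{\pola}{n}$ is a $*$-free policy that stays inside the guarded fragment. This lets me apply \Cref{lmm:subsingleton-conditional-choice} and \Cref{lmm:subsingleton-guarded-iteration}, which are stated for $*$-free policies. The claim I actually prove is the one matching the Markov-kernel notion: for every $\histsa$ with $|\histsa| \le 1$, the measure $\approxn[n]{\altsem{\pola}}(\histsa)$ has subsingleton support. The statement's applications only use inputs $\{\hista\}$, so this suffices.

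\emph{Base cases.} For predicates, modifications and $\DUP$, the approximant is the policy itself. Predicates act as packet filters \citep[Lemma 2]{cantor}, and other atomic programs act on individual histories \citep[Lemma 3]{cantor}. On input $\varnothing$ or $\{\hista\}$, each therefore yields a Dirac measure on $\varnothing$ or on a singleton.

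\emph{Inductive cases.}
\begin{enumerate}
\item Sequential composition: the approximant is the bind of the two approximants, so \Cref{lmm:bind-preserves-subsingleton} together with both induction hypotheses suffices.
\item Conditional: this is exactly \Cref{lmm:subsingleton-conditional-choice}, applied with the induction hypotheses.
\item Guarded iteration: the approximant is $\ALTNFOLD[n,\testa]{(\altapproxpol{\pola}{n})}$. The induction hypothesis gives subsingleton support for $\altapproxpol{\pola}{n}$, and \Cref{lmm:subsingleton-guarded-iteration} then applies.
\item Probabilistic choice: the approximant is the convex combination $\proba \cdot \approxn[n]{\altsem{\pola}} + (1-\proba)\cdot\approxn[n]{\altsem{\polb}}$. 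The support of a convex combination of discrete measures lies in the union of the two supports, and each support contains only sets of size at most one by the induction hypotheses.
\end{enumerate}

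The main obstacle is mostly bookkeeping rather than any single hard computation. First, I must make sure the induction is really over guarded policies with the alternate approximant, so that every sub-policy reached is $*$-free and the cited lemmas apply. Second, I must check that the kernel-level hypothesis needed by \Cref{lmm:bind-preserves-subsingleton} holds on all subsingleton inputs, including $\varnothing$. For $\varnothing$ this follows because \probnetkat programs are strict, mapping $\varnothing$ to the Dirac measure on $\varnothing$ \citep{cantor}. I would state this strictness explicitly before starting the induction.
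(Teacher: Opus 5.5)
Your proposal is correct and matches the paper's proof step for step. The paper also uses structural induction: base cases via the packet-filter and atomic-action lemmas of \citet{cantor}, sequencing via \Cref{lmm:bind-preserves-subsingleton}, conditionals via \Cref{lmm:subsingleton-conditional-choice}, while loops via the alternate approximant and \Cref{lmm:subsingleton-guarded-iteration}, and probabilistic choice via the support of a convex combination. Your kernel-level formulation (all inputs with $|\histsa| \le 1$, with $\varnothing$ handled by strictness) only makes explicit what the paper's induction leaves implicit.
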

\begin{proof}
    By structural induction on the policy $p$. For predicates $\testa$ this follows from
    them behaving like packet filters \citep[Lemma 2]{cantor}. Likewise, the property holds
    for atomic programs as they only act on individual histories \citep[Lemma 3]{cantor}.
    We consider only the remaining cases:

        \emph{Case $\pola = \SEQ{\pola}{\polb}$:}
            \[
            \begin{array}{rclr}
                \approxn[n]{\altsem{\SEQ{\pola}{\polb}}}(\histsa)
                &=&
                    \approxn[n]{\altsem{\pola}}(\histsa) \bind
                        \approxn[n]{\altsem{\polb}}
                & \text{(By definition \cite[Figure 2]{cantor})}
            \end{array}
            \]

            Therefore, the property holds by \Cref{lmm:bind-preserves-subsingleton}
            and IH.

        \emph{Case $\pola = \IF[\testa]{\pola}{\polb}$:} \
            By \Cref{lmm:subsingleton-conditional-choice} and IH. \\

        \emph{Case $\pola = \WHILE[\testa]{\pola}$:}
            \[
            \begin{array}{rclr}
                 \approxn[n]{\altsem{\WHILE[\testa]{\pola}}}(\histsa)
                &=&
                  \altsem{\altapproxpol{\WHILE[\testa]{\pola}}{n}}(\histsa)
                & \text{(By definition \cite{cantor})} \\
                &=&
                  \altsem{\ALTNFOLD[n,t]{(\altapproxpol{\pola}{n})}}(\histsa)
                & \text{(\Cref{lmm:probnetkat-alt-approx-pol-equiv})}
            \end{array}
            \]

            Therefore, the property holds by \Cref{lmm:subsingleton-guarded-iteration}
            and IH.

        \emph{Case $\pola = \PROBCHOICE[\proba]{\pola}{\polb}$:}

            \[
            \begin{array}{rclr}
                \approxn[n]{\altsem{\PROBCHOICE[\proba]{\pola}{\polb}}}(\histsa)
            &=&
                \altsem{\PROBCHOICE[\proba]{\altapproxpol{\pola}{n}}
                                       {\altapproxpol{\polb}{n}}}(\histsa)
            & \text{(By definition)} \\
            &=&
                \proba \cdot \altsem{{\altapproxpol{\pola}{n}}}(\histsa) +
                    (1 - \proba) \cdot \altsem{{\altapproxpol{\polb}{n}}}(\histsa)
            & \text{(By definition)} \\
            &=&
                \proba \cdot \approxn[n]{\altsem{\pola}}(\histsa) +
                    (1 - \proba) \cdot \approxn[n]{\altsem{\polb}}(\histsa)
            & \text{(By definition)}
            \end{array}
            \]

            Therefore, we have that
            $\supp(\approxn[n]{\altsem{\PROBCHOICE[\proba]{\pola}{\polb}}}(\histsa)) \subseteq
                \supp(\approxn[n]{\altsem{\pola}}(\histsa)) \cup
                    \supp(\approxn[n]{\altsem{\polb}}(\histsa))$
            and the property holds by IH.
\end{proof}
    Note that approximants are *-free programs so their probability distributions
    are discrete.

\begin{lemma}[Equivalence of unit across \iwnetkat{$\probSemiring$} and
guarded \probnetkat]
\label{lmm:equiv-eta-dirac}
\[
\begin{array}{lrcl}
    \forall \hista \in \histories \colon &
        \unit(\hista) \sim \delta_{\{\hista\}}
\end{array}
\]
\end{lemma}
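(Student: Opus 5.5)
The statement is \Cref{lmm:equiv-eta-dirac}: for every $\hista \in \histories$, $\unit(\hista) \sim \delta_{\{\hista\}}$. The plan is to verify the three clauses of \Cref{def:weightings-measures-equiv} directly. The only ingredients are the definitions of $\unit$ in $\weightings[\nonnegRealsSemiring]{\histories}$ and of the Dirac measure $\delta_{\{\hista\}}$ on $\setsof{\histories}$. No induction is needed, since both objects are completely explicit.

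First compute the support. Since $\unit(\hista) = \mylam{y}[\hista = y]$ and $1 \neq 0$ in $\posrealsinf$, we get $\supp(\unit(\hista)) = \{\hista\}$. Likewise $\delta_{\{\hista\}}$ assigns mass $1$ to the single point $\{\hista\} \in \setsof{\histories}$ and $0$ to every other singleton event $\{\histsa\}$. The discrete measure is therefore determined by $\delta_{\{\hista\}}(\{\histsa\}) = [\histsa = \{\hista\}]$.

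Then check each clause in turn.
\begin{enumerate}
\item The only element of the support is $\hista$ itself. There $\unit(\hista)(\hista) = 1 = \delta_{\{\hista\}}(\{\{\hista\}\})$.
\item Every $\histsa$ with $1 < |\histsa|$ differs from $\{\hista\}$, so each summand $\delta_{\{\hista\}}(\{\histsa\})$ is $0$. The countable sum of zeros is $0$.
\item The mass is $\mass{\unit(\hista)} = 1$. On the other side, $\sum_{\histb \in \histories}\delta_{\{\hista\}}(\{\{\histb\}\})$ has exactly one nonzero summand, namely the one with $\histb = \hista$, which equals $1$.
\end{enumerate}

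The only point requiring any care is notational. One must be precise that events are sets of sets of histories, so that $\{\{\hista\}\}$ is the singleton event containing the point $\{\hista\}$. One must also confirm that the Dirac measure used by \citet{cantor} for atomic programs coincides with this point mass. Once that identification is fixed, each clause is a one-line calculation.
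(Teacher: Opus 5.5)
Your proof is correct and follows the same route as the paper. Both give a direct, induction-free check of the three clauses of \Cref{def:weightings-measures-equiv} from the explicit forms of $\unit(\hista)$ and $\delta_{\{\hista\}}$. The differences are cosmetic: the paper checks the pointwise clause by a case split on $\histb = \hista$ versus $\histb \neq \hista$ instead of first computing $\supp(\unit(\hista)) = \{\hista\}$. Also, as a small nit, the index set $\{\histsa \in \setsof{\histories} \mid 1 < |\histsa|\}$ in the second clause is uncountable, not countable, though a sum of zeros vanishes regardless.
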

\begin{proof}
We prove the three conditions of \Cref{def:weightings-measures-equiv} separately:
\begin{itemize}
    \item {$\sum{\{\histsa \in \setsof{\histories} \ |\ 1 < |\histsa|\}} \delta_{\{\hista\}}(\histsa) = 0$} holds
        by definition. \medskip \\
    \item {$\mass{\unit(\hista)} = \sum{\hista \in \histories} \delta_{\{\hista\}}(\{\{\hista\}\})$}
        $$
        \mass{\unit(\hista)} =
        1 =
        \delta_{\{\hista\}}(\{\{\hista\}\}) =
        \sum{\hista \in \histories} \delta_{\{\hista\}}(\{\{\hista\}\})
        $$
    \item{$\forall \histb \in \supp(\wtinga) \colon \unit(\hista)(\histb) = \delta_{\{\hista\}}(\{\{\histb\}\})$} \\
        We proceed by case analysis:
        \begin{itemize}
            \item{$\hista = \histb$:}
                $$
                \unit(\hista)(\hista)
                = 1
                = \delta_{\{\hista\}}(\{\{\hista\}\})
                $$
            \item{$\hista \neq \histb$:}
                $$
                \unit(\hista)(\histb)
                = 0
                = \delta_{\{\hista\}}(\{\{\histb\}\})
                $$
        \end{itemize}\qedhere
\end{itemize}
\end{proof}

\begin{definition}
\label{def:equiv-fn-markov}
We lift equivalence between weightings and discrete measures to functions
and Markov kernels as follows.
\[
\begin{array}{llclrcl}
\forall f \colon \histories \rightarrow \weightings[\nonnegRealsSemiring]{\histories},
    \markova \colon \setsof{\histories} \times \borelsets \rightarrow [0,1] \colon &
        f \sim \markova &\longeq&
            \forall \hista \colon &
                f(\hista) \sim \markova(\{\hista\},-)
\end{array}
\]
\end{definition}

\begin{lemma}[Bind respects equivalence across \iwnetkat{$\probSemiring$}
    and guarded \probnetkat]
\label{lmm:bind-equiv-wnetkat-gprobnetkat}
Given a weighting $\wtinga \in \weightings[\nonnegRealsSemiring]{\histories}$, a discrete
measure $\measurea \in \measures{\setsof{\histories}}$, a function
$f \colon \histories \rightarrow \weightings[\nonnegRealsSemiring]{\histories}$, and
a Markov kernel $\markova \colon \setsof{\histories} \times \borelsets \rightarrow [0,1]$,
we have:
\[
\begin{array}{lrcl}
    \wtinga \sim \measurea \wedge f \sim \markova &\Rightarrow&
        \wtinga \bind f \sim \measurea \bind \markova
\end{array}
\]
\end{lemma}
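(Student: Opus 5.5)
We verify the three conditions of \Cref{def:weightings-measures-equiv} for $\wtinga \bind f$ and $\measurea \bind \markova$ separately, after first putting the measure-side bind into a discrete form. As in the proof of \Cref{lmm:bind-preserves-subsingleton}, since $\measurea$ is discrete we can write, for every $\histsb \in \setsof{\histories}$,
\[
(\measurea \bind \markova)(\{\histsb\}) \eeq \sum{\histsa \in \setsof{\histories}} \markova(\histsa,\{\histsb\}) \cdot \measurea(\{\histsa\}).
\]
By condition (2) of $\wtinga \sim \measurea$, only sets with $|\histsa| \leq 1$ carry mass. This splits the sum into two parts: the term for $\histsa = \varnothing$, and the terms for singletons $\histsa = \{\hista\}$. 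All three conditions then follow from this decomposition by rearranging nonnegative countable sums, which is justified in $\posrealsinf$.

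The delicate point, and the one I expect to be the main obstacle, is the empty-set term. In \probnetkat, a kernel applied to $\varnothing$ yields $\delta_\varnothing$, since programs act pointwise on histories (\cite[Lemma 3]{cantor}). The mass $\measurea(\{\varnothing\})$ is exactly $1 - \mass{\wtinga}$. I would handle this by stating an auxiliary fact: every \probnetkat semantic kernel satisfies $\markova(\varnothing,-) = \delta_\varnothing$. I would either cite it from \cite{cantor} or add it as a hypothesis on $\markova$, which is valid in all uses of this lemma. With this fact, the $\varnothing$ term contributes only to the event $\{\varnothing\}$. It therefore does not affect any of the three conditions, because each of them concerns only nonempty sets.

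Condition (2) is immediate. For $|\histsb| > 1$, each singleton term $\markova(\{\hista\},\{\histsb\})$ vanishes by condition (2) of $f(\hista) \sim \markova(\{\hista\},-)$, and the $\varnothing$ term vanishes by the auxiliary fact.

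Condition (1) is a computation. Take $\histb \in \supp(\wtinga \bind f)$. We get
\[
(\measurea\bind\markova)(\{\{\histb\}\}) = \sum{\hista} \measurea(\{\{\hista\}\})\cdot \markova(\{\hista\},\{\{\histb\}\}).
\]
Summands with $\hista \notin \supp(\wtinga)$ should vanish. This holds because condition (3) of $\wtinga \sim \measurea$ gives $\mass{\wtinga} = \sum{\hista}\measurea(\{\{\hista\}\})$, while condition (1) gives pointwise agreement on the support. Together they force $\measurea(\{\{\hista\}\}) = 0$ off the support. For $\hista \in \supp(\wtinga)$ we use condition (1) for both $\wtinga$ and $f(\hista)$. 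For the kernel factor, if $\histb \notin \supp(f(\hista))$, the same mass argument applied to $f(\hista) \sim \markova(\{\hista\},-)$ gives zero. The result is exactly $\sum{\hista \in \supp(\wtinga)} \wtinga(\hista)\cdot f(\hista)(\histb)$, which is $(\wtinga \bind f)(\histb)$.

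Condition (3) follows by summing. We compute
\[
\mass{\wtinga \bind f} = \sum{\hista\in\supp(\wtinga)} \wtinga(\hista)\cdot \mass{f(\hista)}.
\]
This uses associativity and commutativity of countable sums together with distributivity (\Cref{def:omega-comp-semirings}). We then rewrite $\mass{f(\hista)}$ as $\sum{\histb}\markova(\{\hista\},\{\{\histb\}\})$ by condition (3) of $f \sim \markova$, and $\wtinga(\hista)$ as $\measurea(\{\{\hista\}\})$. Exchanging the sums once more gives $\sum{\histb}(\measurea\bind\markova)(\{\{\histb\}\})$, where the $\varnothing$ term contributes nothing to singleton events.
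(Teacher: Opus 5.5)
Your proposal is correct and follows the paper's proof essentially step for step. Both proofs unfold $\measurea \bind \markova$ into a discrete sum split into a $\varnothing$ term and singleton terms, then check condition (2) directly, condition (1) pointwise, and condition (3) by exchanging nonnegative countable sums. The paper, however, silently drops the $\varnothing$ term and silently replaces measure and kernel values by weights off the supports. Your mass-cancellation argument (legitimate since condition (3) gives $\mass{\wtinga}\le 1$) and your extra hypothesis $\markova(\varnothing,-)=\delta_\varnothing$ repair these omissions. The hypothesis is genuinely needed: take the zero weighting, $\measurea=\delta_\varnothing$, $f=\unit$, $\markova(\{\hista\},-)=\delta_{\{\hista\}}$ and $\markova(\varnothing,-)=\delta_{\{\hista_0\}}$; every hypothesis holds, yet condition (3) fails ($0$ versus $1$).
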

\begin{proof}
We begin by unfolding the RHS:
\[
\begin{array}{rclr}
&& \measurea \bind \markova \\
&=&
    \lambda \eventa.\, \int_{\histsa \in \setsof{\histories}} \markova(\histsa,A) \cdot \measurea(d\histsa)
& \text{(By definition \cite{cantor})} \medskip \\
&=&
    \lambda \eventa.
        \begin{cases}
            \sum{\histsa \in \setsof{\histories}}
               \markova(\histsa,\{\histsb\}) \cdot \measurea(\{\histsa\})
               & \eventa = \{\histsb\} \\
            0 & \text{otherwise}
        \end{cases}
& \text{($\measurea$ is discrete)} \medskip \\
&=&
    \lambda \eventa.
        \begin{cases}
            \markova(\varnothing,\{\histsb\}) \cdot \measurea(\{\varnothing\}) +
                \sum{\hista \in \histories}
                    \markova(\{\hista\},\{\histsb\}) \cdot \measurea(\{\{\hista\}\})
            & \eventa = \{\histsb\} \\
        0 & \text{otherwise}
        \end{cases}
& \text{($\wtinga \sim \measurea$)}
\end{array}
\]
We now prove all three conditions of \Cref{def:weightings-measures-equiv} separately.
\begin{itemize}
    \item{$\sum{\{\histsa \in \setsof{\histories} \ |\ 1 < |\histsa|\}}
            (\measurea \bind \markova)(\{\histsa\}) = 0$} follows immediately from the property holding for $\markova$. \\

    \item{$\mass{\wtinga \bind f} =
        \sum{\hista \in \histories}
            (\measurea \bind \markova)(\{\{\hista\}\})$}
        \[
        \begin{array}{rclr}
        && \mass{\wtinga \bind f} \\
        &=& \sum{\hista \in \supp(\wtinga \bind f)} (\wtinga \bind f)(\hista)
        & \text{(By definition)} \medskip \\
        &=&
            \sum{\hista \in \supp(\wtinga \bind f)}
                \sum{\histb \in \supp(\wtinga)}
                    \wtinga(\histb) \cdot f(\histb)(\hista)
        & \text{(By definition)} \medskip \\
        &=&
            \sum{\histb \in \supp(\wtinga)}
                \sum{\hista \in \supp(\wtinga \bind f)}
                    \wtinga(\histb) \cdot f(\histb)(\hista)
        & \text{(associativity of countable summation)} \medskip \\
        &=&
            \sum{\histb \in \supp(\wtinga)}
                \sum{\hista \in \supp(f(\histb))}
                    \wtinga(\histb) \cdot f(\histb)(\hista)
        & \text{(*)} \\
        &=&
            \sum{\histb \in \supp(\wtinga)}
                \wtinga(\histb) \cdot
                    \sum{\hista \in \supp(f(\histb))}
                        f(\histb)(\hista)
        & \text{(\Cref{def:omega-comp-semirings})} \medskip \\
        &=&
            \sum{\histb \in \supp(\wtinga)}
                \wtinga(\histb) \cdot
                    \sum{\hista \in \histories}
                        \markova(\{\histb\},\{\{\hista\}\})
        & \text{($f \sim \markova$)} \\
        &=&
            \sum{\histb \in \histories}
                \measurea(\{\{\histb\}\}) \cdot
                    \sum{\hista \in \histories}
                        \markova(\{\histb\},\{\{\hista\}\})
        & \text{($\wtinga \sim \measurea$)} \\
        &=&
            \sum{\histb \in \histories}
                \measurea(\{\{\histb\}\}) \cdot
                    \sum{\hista \in \histories}
                        \markova(\{\histb\},\{\{\hista\}\})
        & \text{($f \sim \markova$)} \\
        &=&
            \sum{\histb \in \histories}
                \sum{\hista \in \histories}
                    \measurea(\{\{\histb\}\}) \cdot
                        \markova(\{\histb\},\{\{\hista\}\})
        & \text{(\Cref{def:omega-comp-semirings})} \\
        &=&
            \sum{\hista \in \histories}
                \sum{\histb \in \histories}
                    \measurea(\{\{\histb\}\}) \cdot
                        \markova(\{\histb\},\{\{\hista\}\})
        & \text{(associativity of countable summation)} \\
        &=&
            \sum{\hista \in \histories}
                \sum{\histb \in \histories}
                    \markova(\{\histb\},\{\{\hista\}\}) \cdot
                        \measurea(\{\{\histb\}\})
        & \text{(Standard arithmetic)} \\
        &=&
            \sum{\hista \in \histories} (\measurea \bind \markova)(\{\{\hista\}\})
        & \text{(By definition)}
        \end{array}
        \]

    * This equation is due to the following:
    \[
    \begin{array}{rclr}
    && \hista \in \supp(\wtinga \bind f) \\
    &\iff&
        \hista \in \supp \left(
            \mylam{\hista} \semisum{\hista' \in \supp(\wtinga)}
                \wtinga(\hista') \semimul f(\hista')(\hista) \right)
    & \text{(By definition)} \\
    &\iff&
        \supp(\wtinga) \neq \varnothing \wedge \hista \in \supp(f(\hista'))
    & \text{(Simple arithmetic)}
    \end{array}
    \]

    Furthermore, as we are considering an expression of the form:
    \[
        \semisum{\hista' \in \supp(\wtinga)}
                {\semisum{\hista \in \supp(\wtinga \bind f)}{...}}
    \]

    We can rewrite it as done above.

    \item{$\forall \hista \in \supp(\wtinga \bind f) \colon
            (\wtinga \bind f)(\hista) = (\measurea \bind \markova)(\{\{\hista\}\})$}

        \[
        \begin{array}{rclr}
        && (\measurea \bind \markova)(\{\{\hista\}\}) \\
        &=&
            \sum{\histb \in \histories}
                \markova(\{\histb\},\{\{\hista\}\}) \cdot \measurea(\{\{\histb\}\})
        & \text{(By definition)} \\
        &=&
            \sum{\histb \in \histories}
                f(\histb)(\hista) \cdot \measurea(\{\{\histb\}\})
        & \text{($f \sim \markova$)} \\
        &=&
            \sum{\histb \in \histories}
                f(\histb)(\hista) \cdot \wtinga(\histb)
        & \text{($\wtinga \sim \measurea$)} \\
        &=&
            \sum{\histb \in \supp(\wtinga)}
                f(\histb)(\hista) \cdot \wtinga(\histb)
        & \text{(Standard arithmetic)} \\
        &=&
            \sum{\histb \in \supp(\wtinga)}
                \wtinga(\histb) \cdot f(\histb)(\hista)
        & \text{(Standard arithmetic)} \\
        &=& (\wtinga \bind f)(\hista)
        \end{array}
        \]\qedhere
\end{itemize}
\end{proof}

\begin{lemma}[Scalar multiplication respects equivalence]
\label{lmm:scalar-mul-equiv-wnetkat-gprobnetkat}
\[
\begin{array}{lrcl}
    \forall \wtinga \in \weightings[\nonnegRealsSemiring]{\histories},
        \measurea \in \measures{\setsof{\histories}} \colon &
            \wtinga \sim \measurea &\Rightarrow&
                \proba \cdot \wtinga \sim \proba \cdot \measurea
\end{array}
\]
\end{lemma}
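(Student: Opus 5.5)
The plan is to verify the three conditions of \Cref{def:weightings-measures-equiv} for the pair $\proba \cdot \wtinga$ and $\proba \cdot \measurea$. Here $\proba\cdot\wtinga$ is the pointwise-lifted scalar multiplication in $\weightings[\nonnegRealsSemiring]{\histories}$. The measure $\proba\cdot\measurea$ is the scaled discrete measure $\eventa \mapsto \proba\cdot\measurea(\eventa)$, which is how \probnetkat's probabilistic choice scales measures. Throughout we may assume $0 \le \proba \le 1$, as in the guarded fragment. Each condition should follow directly from the corresponding condition for $\wtinga \sim \measurea$, using only that multiplication by $\proba$ distributes over countable sums (\Cref{def:omega-comp-semirings}).

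First, condition (2). We have $\sum{\{\histsa \mid 1<|\histsa|\}} (\proba\cdot\measurea)(\{\histsa\}) = \proba \cdot \sum{\{\histsa \mid 1<|\histsa|\}} \measurea(\{\histsa\}) = \proba\cdot 0 = 0$, by countable distributivity and the hypothesis.

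Second, condition (3). Write $\mass{\proba\cdot\wtinga} = \sum{\hista\in\supp(\proba\cdot\wtinga)} \proba\cdot\wtinga(\hista)$. This sum can be re-indexed over $\supp(\wtinga)$, because $\supp(\proba\cdot\wtinga)\subseteq\supp(\wtinga)$ and every dropped term is $0$. Distributivity then gives $\proba\cdot\mass{\wtinga}$. The hypothesis turns this into $\proba\cdot\sum{\hista} \measurea(\{\{\hista\}\})$, and distributivity again gives $\sum{\hista}(\proba\cdot\measurea)(\{\{\hista\}\})$.

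Third, condition (1). For $\hista\in\supp(\proba\cdot\wtinga)$ we have $\proba\cdot\wtinga(\hista)\neq 0$. Since $0$ annihilates, this forces $\wtinga(\hista)\neq 0$, so $\hista\in\supp(\wtinga)$. The hypothesis then yields $\wtinga(\hista)=\measurea(\{\{\hista\}\})$, and multiplying both sides by $\proba$ closes the case.

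The only mildly delicate step is the support bookkeeping in conditions (1) and (3). There we must note that scaling can shrink the support (when $\proba = 0$) but never enlarge it. This keeps the hypothesis applicable and makes the re-indexing of sums harmless. Beyond that, the argument is routine.
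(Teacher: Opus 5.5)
Your proposal is correct and follows essentially the same route as the paper: you check the three conditions of the equivalence separately, using countable distributivity of multiplication by $\proba$ together with the observation $\supp(\proba\cdot\wtinga)\subseteq\supp(\wtinga)$, both to re-index the mass sum and to apply the hypothesis pointwise. Your assumption $0\le\proba\le 1$ is harmless but not needed for any step.
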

\begin{proof}
    We prove all three conditions of \Cref{def:weightings-measures-equiv} separately.
    \begin{itemize}
        \item{$\sum{\{\histsa \in \setsof{\histories} \ |\ 1 < |\histsa|\}}
            (\proba \cdot \measurea)(\{\histsa\}) = 0$} follows immediately from the property holding for $\measurea$. \\

        \item{$\mass{\proba \cdot \wtinga} =
            \sum{\hista \in \histories}
                (\proba \cdot \measurea)(\{\{\hista\}\})$}

            \[
            \begin{array}{rclr}
            &&
                \mass{\proba \cdot \wtinga} \\
            &=&
                \sum{\hista \in \supp(\proba \cdot \wtinga)}
                    (\proba \cdot \wtinga)(\hista)
            & \text{(By definition)} \\
            &=&
                \sum{\hista \in \supp(\proba \cdot \wtinga)}
                    \proba \cdot \wtinga(\hista)
            & \text{(By definition)} \\
            &=&
                \sum{\hista \in \supp(\wtinga)}
                        \proba \cdot \wtinga(\hista)
            & \text{(Standard arithmetic)} \\
            &=&
                \proba \cdot
                    \sum{\hista \in \supp(\wtinga)}
                        \wtinga(\hista)
            & \text{(Standard arithmetic)} \\
            &=&
                \proba \cdot
                    \sum{\hista \in \histories}
                        \measurea(\{\{\hista\}\})
            & \text{(Assumption)} \\
            &=&
                \sum{\hista \in \histories}
                    \proba \cdot
                        \measurea(\{\{\hista\}\})
            & \text{(Standard arithmetic)} \\
            &=&
                \sum{\hista \in \histories}
                    (\proba \cdot
                        \measurea)(\{\{\hista\}\})
            & \text{(By definition)}
            \end{array}
            \]

        \item{$\forall \hista \in \supp(\proba \cdot \wtinga) \colon
            (\proba \cdot \wtinga)(\hista) = (\proba \cdot \measurea)(\{\{\hista\}\})$}

        \[
        \begin{array}{rclr}
        &&
            (\proba \cdot \wtinga)(\hista) \\
        &=&
            \proba \cdot \wtinga(\hista)
        & \text{(By definiton)} \\
        &=&
            \proba \cdot \measurea(\{\{\hista\}\})
        & \text{(Assumption*)} \\
        &=&
            (\proba \cdot \measurea)(\{\{\hista\}\})
        & \text{(By definition)}
        \end{array}
        \]

        * Note that $\hista \in \supp(\proba \cdot \wtinga) \Rightarrow \hista \in \supp(\wtinga)$.
    \end{itemize}
\end{proof}

\begin{lemma}[Addition respects equivalence across \iwnetkat{$\probSemiring$}
    and guarded \probnetkat]
\label{lmm:addition-equiv-wnetkat-gprobnetkat}
\[
\begin{array}{lrcl}
    \forall \wtinga_1,\wtinga_2 \in \weightings[\nonnegRealsSemiring]{\histories},
        \measurea_1,\measurea_2 \in \measures{\setsof{\histories}} \colon &
            \wtinga_1 \sim \measurea_1 \wedge
                \wtinga_2 \sim \measurea_2 &\Rightarrow&
                \wtinga_1 + \wtinga_2 \sim \measurea_1 + \measurea_2
\end{array}
\]
\end{lemma}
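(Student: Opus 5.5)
We check the three conditions of \Cref{def:weightings-measures-equiv} separately, following the pattern of \Cref{lmm:scalar-mul-equiv-wnetkat-gprobnetkat}. Throughout, addition of measures is pointwise, $(\measurea_1 + \measurea_2)(\eventa) = \measurea_1(\eventa) + \measurea_2(\eventa)$, and addition of weightings is pointwise by definition. All sums are over non-negative extended reals, so they can be freely reordered and split.

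\emph{Condition (2).} We split
$\sum{\{\histsa \mid 1<|\histsa|\}} (\measurea_1+\measurea_2)(\{\histsa\})$
into the corresponding sums for $\measurea_1$ and $\measurea_2$. Each of these is $0$ by the hypotheses $\wtinga_1\sim\measurea_1$ and $\wtinga_2\sim\measurea_2$.

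\emph{Condition (3).} We write $\mass{\wtinga_1+\wtinga_2}$ as a sum over $\histories$; summing over all of $\histories$ instead of the support changes nothing, since the extra terms are $0$. Pointwise addition then splits it into $\mass{\wtinga_1} + \mass{\wtinga_2}$. The hypotheses rewrite each mass as $\sum{\hista\in\histories}\measurea_i(\{\{\hista\}\})$, and we recombine the two sums pointwise.

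\emph{Condition (1).} This is the step that needs care. For $\hista \in \supp(\wtinga_1+\wtinga_2)$ we want
$\wtinga_1(\hista)+\wtinga_2(\hista) = \measurea_1(\{\{\hista\}\}) + \measurea_2(\{\{\hista\}\})$.
The difficulty is that $\hista$ need not lie in both supports, and condition (1) of the hypotheses only speaks about points in the support. We handle this by cases. If $\hista\in\supp(\wtinga_i)$, the hypothesis gives $\wtinga_i(\hista)=\measurea_i(\{\{\hista\}\})$ directly. If $\hista\notin\supp(\wtinga_i)$, then $\wtinga_i(\hista)=0$, and we must show $\measurea_i(\{\{\hista\}\})=0$. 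For this we use condition (3) for $\wtinga_i\sim\measurea_i$:
\[
\sum{\hista'\in\supp(\wtinga_i)} \measurea_i(\{\{\hista'\}\}) = \mass{\wtinga_i} = \sum{\hista'\in\histories}\measurea_i(\{\{\hista'\}\}).
\]
The first equality is condition (1) applied on the support. If this common value is finite, subtracting shows that the non-negative terms off the support sum to $0$, so each of them vanishes, in particular $\measurea_i(\{\{\hista\}\})=0$. Finiteness holds because $\measurea_i$ is a (sub)probability measure, so its total mass is at most $1$.

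Combining the two cases for $i=1,2$ gives condition (1), and together with (2) and (3) this yields $\wtinga_1+\wtinga_2\sim\measurea_1+\measurea_2$.
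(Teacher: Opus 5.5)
Your proof is correct and follows the same approach as the paper: it checks the three conditions of \Cref{def:weightings-measures-equiv} separately, with the same splitting and recombining of the mass sums. The one difference is the off-support case of condition (1), where the paper simply asserts $\wtinga_i(\hista) = \measurea_i(\{\{\hista\}\})$ ``by assumption'' when $\wtinga_i(\hista) = 0$. You instead derive $\measurea_i(\{\{\hista\}\}) = 0$ from condition (3) together with finiteness of the mass, which makes explicit a step the paper leaves unjustified.
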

\begin{proof}
    We prove all three conditions of \Cref{def:weightings-measures-equiv} separately.
    \begin{itemize}[leftmargin=*]
        \item{$\sum{\{\histsa \in \setsof{\histories} \ |\ 1 < |\histsa|\}}
            (\measurea_1 + \measurea_2)(\{\histsa\}) = 0$} follows immediately from the property holding for $\measurea_1$ and $\measurea_2$. \\

        \item{$\mass{\wtinga_1 + \wtinga_2} =
            \sum{\hista \in \histories}
                (\measurea_1 + \measurea_2)(\{\{\hista\}\})$}

            \[
            \begin{array}{rclr}
            &&
                \mass{\wtinga_1 + \wtinga_2} \\
            &=&
                \sum{\hista \in \supp(\wtinga_1 + \wtinga_2)}
                    (\wtinga_1 + \wtinga_2)(\hista)
            & \text{(By definition)} \\
            &=&
                \sum{\hista \in \supp(\wtinga_1 + \wtinga_2)}
                    \wtinga_1(\hista) + \wtinga_2(\hista)
            & \text{(By definition)} \\
            &=&
                \left( \sum{\hista \in \supp(\wtinga_1 + \wtinga_2)}
                    \wtinga_1(\hista) \right) +
                \sum{\hista \in \supp(\wtinga_1 + \wtinga_2)}
                    \wtinga_2(\hista)
            & \text{(Standard arithmetic)} \\
            &=&
                \left( \sum{\hista \in \supp(\wtinga_1)}
                    \wtinga_1(\hista) \right) +
                \sum{\hista \in \supp(\wtinga_2)}
                    \wtinga_2(\hista)
            & \text{($\supp(\wtinga_i(\hista)) \subseteq \supp(\wtinga_1 + \wtinga_2)$)} \\
            &=&
                \left( \sum{\hista \in \histories}
                    \measurea_1(\{\{\hista\}\}) \right) +
                \sum{\hista \in \supp(\wtinga_2)}
                    \wtinga_2(\hista)
            & \text{(Assumption)} \\
            &=&
                \left( \sum{\hista \in \histories}
                    \measurea_1(\{\{\hista\}\}) \right) +
                \sum{\hista \in \histories}
                    \measurea_2(\{\{\hista\}\})
            & \text{(Assumption)} \\
            &=&
                \sum{\hista \in \histories}
                    \measurea_1(\{\{\hista\}\}) + \measurea_2(\{\{\hista\}\})
            & \text{(Standard arithmetic)} \\
            &=&
                \sum{\hista \in \histories}
                    (\measurea_1 + \measurea_2)(\{\{\hista\}\})
            & \text{(By definition)} \\
            \end{array}
            \]

        \item{$\forall \hista \in \supp(\wtinga_1 + \wtinga_2) \colon
            (\wtinga_1 + \wtinga_2)(\hista) = (\measurea_1 + \measurea_2)(\{\{\hista\}\})$}

        \[
        \begin{array}{rclr}
        &&
            (\wtinga_1 + \wtinga_2)(\hista) \\
        &=&
            \wtinga_1(\hista) + \wtinga_2(\hista)
        & \text{(By definition)} \\
        &=&
            \measurea_1(\{\{\hista\}\}) + \wtinga_2(\hista)
        & \text{(Assumption*)} \\
        &=&
            \measurea_1(\{\{\hista\}\}) + \measurea_2(\{\{\hista\}\})
        & \text{(Assumption*)} \\
        &=&
            (\measurea_1 + \measurea_2)(\{\{\hista\}\})
        & \text{(By definition)} \\
        \end{array}
        \]

        * We have $\hista \in \supp(\wtinga_i)$ or $\wtinga_i(\hista) = 0$,
        in either case we have $\wtinga_i(\hista) = \measurea_i(\{\{\hista\}\})$
        by assumption.
    \end{itemize}
\end{proof}

\begin{lemma}[Predicate equivalence across \iwnetkat{$\probSemiring$} and
    guarded \probnetkat]
\label{lmm:predicate-equiv-wnetkat-gprobnetkat}
For any predicate $\testa$ in \probnetkat, we have
\(
    \forall \hista \in \histories \colon \sem{\transpol{\testa}} \sim \altsem{\testa}
\)
\end{lemma}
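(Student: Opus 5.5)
We need to prove: for any predicate $\testa$, $\sem{\testa} \sim \altsem{\testa}$. By \Cref{def:equiv-fn-markov}, this means that for every history $\hista$ we must show $\sem{\testa}(\hista) \sim \altsem{\testa}(\{\hista\},-)$. We do not induct on the structure of $\testa$. Instead, we use the fact that in both semantics a predicate acts as a packet filter. In \wnetkat, \Cref{fig:syntax-semantics} gives $\sem{\testa}(\hcons{\pkt}{\hbar}) = [\sem{\testa}_{\tests}(\pkt)=1]\semimul\eta(\hcons{\pkt}{\hbar})$. In \probnetkat, \citep[Lemma 2]{cantor} gives $\altsem{\testa}(\histsa) = \delta_{\{\histb \in \histsa \mid \histb \models \testa\}}$. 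Applied to the singleton input $\{\hista\}$, this measure is $\delta_{\{\hista\}}$ when the head packet of $\hista$ satisfies $\testa$, and $\delta_{\varnothing}$ otherwise.

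The first step is to check that the two notions of satisfaction coincide. We need that $\hista$ satisfies $\testa$ in the \probnetkat sense exactly when $\sem{\testa}_{\tests}(\pkt)=1$, where $\pkt$ is the head of $\hista$. Under the syntactic identification of \probnetkat's $\&$ with $\vee$ (and with the usual readings of $;$ as $\wedge$ and $\neg$ as complement), this is a short structural induction on $\testa$. It is the only induction in the proof, and each of its cases is a Boolean identity.

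The second step is a case split on whether $\sem{\testa}_{\tests}(\pkt)=1$.

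In the first case, the left-hand side is $\eta(\hista)$ and the right-hand side is $\delta_{\{\hista\}}$. This case is exactly \Cref{lmm:equiv-eta-dirac}.

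In the second case, the left-hand side is the zero weighting $\addid$ and the right-hand side is $\delta_\varnothing$. We check the three conditions of \Cref{def:weightings-measures-equiv} directly:
\begin{enumerate}
\item Condition (1) is vacuous, because $\supp(\addid)=\varnothing$.
\item Condition (2) holds because $\delta_\varnothing$ gives mass only to the empty set, so it assigns no mass to any set of size greater than $1$.
\item Condition (3) holds because $\mass{\addid}=0$ and $\sum{\hista'}\delta_\varnothing(\{\{\hista'\}\})=0$.
\end{enumerate}

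The main obstacle is the first step: making the identification of the two predicate semantics precise. This means stating carefully the \probnetkat filter lemma for a singleton input, and confirming that \probnetkat's negation and conjunction, read as operations on the underlying set-based semantics, agree with Boolean evaluation on the head packet. Once that is in place, the rest of the proof is the routine verification above.
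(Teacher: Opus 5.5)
Your proof is correct, but it is organized differently from the paper's. The paper inducts on $\testa$ directly at the level of the relation $\sim$. It discharges condition (2) once via \Cref{lmm:singleton-support-approx}. It checks $\FALSE$ and the tests $\EQ{\fielda}{\vala}$ by hand and handles $\TRUE$ with \Cref{lmm:equiv-eta-dirac}. Conjunction is read as sequencing and handled compositionally through \Cref{lmm:bind-equiv-wnetkat-gprobnetkat} and the induction hypothesis. Disjunction and negation each use the dichotomy $\altsem{\testa}(\{\hista\}) \in \{\delta_\varnothing, \delta_{\{\hista\}}\}$ from Lemma 2 of \cite{cantor}, together with the \netkat axioms for \probnetkat.

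You instead push the whole induction down to the Boolean level: you show that $\sem{\testa}_{\tests}$ evaluated on the head packet agrees with membership in \probnetkat's filter set. After that you need only two base facts, $\eta(\hista) \sim \delta_{\{\hista\}}$ and $\addid \sim \delta_\varnothing$. Your route is shorter and more uniform. It uses the direct Iverson-bracket definition of predicate semantics in \Cref{fig:syntax-semantics} rather than the derived identification of conjunction with sequencing, and it needs neither the bind-transfer lemma nor the \netkat axioms. The price is that you use the explicit form of the filter set from Lemma 2 of \cite{cantor}, not just the two-valued dichotomy. That costs little, because the set is defined compositionally and tests inspect only the head packet.

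The paper's proof instead stays in the same compositional style as \Cref{lmm:approx-equiv-wnketkat-probnetkat} and reuses its transfer lemmas, so no separate satisfaction relation has to be set up.
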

\begin{proof}
    By structural induction on the predicate $\testa$. For all cases, we
    have that
    $$
    \sum{\{\histsa \in \setsof{\histories} \ |\ 1 < |\histsa|\}}
        \altsem{\testa}(\hista)(\{\histsa\})
    = 0$$
    by \Cref{lmm:singleton-support-approx}. We prove only the remaining properties
    of \Cref{def:weightings-measures-equiv}.

    \emph{Case $\testa = \FALSE$:}

        We need only show that
        $\mass{\sem{\transpol{\FALSE}}(\hista)} =
            \sum{\histb \in \histories} \altsem{\FALSE}(\hista)(\{\{\histb\}\})$.
        We have:
        $$
        \mass{\sem{\transpol{\FALSE}}(\hista)}
        = 0
        = \sum{\histb \in \histories} \delta_{\varnothing}(\{\{\histb\}\})
        = \sum{\histb \in \histories} \altsem{\FALSE}(\hista)(\{\{\histb\}\})
        $$

    \emph{Case $\testa = \TRUE$:} By \Cref{lmm:equiv-eta-dirac}.

    \emph{Case $\testa = \EQ{\fielda}{\vala}$:}

        Let $h = \hcons{\pkt}{\hista}$. We assume $\pkt.\fielda = \vala$, otherwise
        the same reasoning as in the $\FALSE$ case applies.
        \begin{enumerate}
            \item{$\mass{\sem{\transpol{\EQ{\fielda}{\vala}}}(\hcons{\pkt}{\hista})} =
                \sum{\histb \in \histories}
                    \altsem{\EQ{\fielda}{\vala}}(\hcons{\pkt}{\hista})(\{\{\histb\}\})$:}
                \[
                \begin{array}{rclr}
                    \mass{\sem{\transpol{\EQ{\fielda}{\vala}}}(\hcons{\pkt}{\hista})}
                &=&
                    \unit(\hcons{\pkt}{\hista})(\hcons{\pkt}{\hista})
                & \text{(By definition)} \\
                &=&
                    \delta_{\{\hcons{\pkt}{\hista}\}}(\{\hcons{\pkt}{\hista}\})
                & \text{(\Cref{lmm:equiv-eta-dirac})} \\
                &=&
                    \altsem{\EQ{\fielda}{\vala}}(\{\hcons{\pkt}{\hista}\})(\{\{\hcons{\pkt}{\hista}\}\})
                & \text{(By definition)} \\
                &=&
                    \sum{\histb \in \histories}
                        \altsem{\EQ{\fielda}{\vala}}(\{\hcons{\pkt}{\hista}\})(\{\{\histb\}\})
                & \text{(Standard arithmetic)}
                \end{array}
                \]
            \item{$\forall \histb \in \supp(\sem{\transpol{\EQ{\fielda}{\vala}}}(\hista)) \colon
                \sem{\transpol{\EQ{\fielda}{\vala}}}(\hista)(\histb) =
                    \altsem{\EQ{\fielda}{\vala}}(\{\hista\})(\{\{\histb\}\})$} \\

                Let $\hista = \hcons{\pkt}{\hista}$. Because
                $\histb \in \supp(\sem{\transpol{\EQ{\fielda}{\vala}}}(\hista))$,
                we have that $\histb = \hcons{\pkt}{\hista}$ and that
                $\pkt.\fielda = \vala$. Therefore,
                $$
                \sem{\transpol{\EQ{\fielda}{\vala}}}(\hcons{\pkt}{\hista})(\hcons{\pkt}{\hista})
                = 1
                = \altsem{\EQ{\fielda}{\vala}}(\{\hcons{\pkt}{\hista}\})(\{\{\hcons{\pkt}{\hista}\}\})
                $$
        \end{enumerate}

    \emph{Case $\testa = \OR{\testa}{\testb}$:}

        We proceed by a further case analysis on $\altsem{\testa}(\{\hista\})$. We have the
        following two cases (\citet[Lemma 2]{cantor}):
        \begin{enumerate}
            \item{$\altsem{\testa}(\{\hista\}) = \delta_\varnothing$ (i.e. the case ``false'')}\\

                By IH for $\testa$, $\sem{\transpol{\testa}}(\hista) = 0$. We have:
                \[
                \begin{array}{rclr}
                    \sem{\transpol{\OR{\testa}{\testb}}}(\hista)
                &=&
                    \sem{\transpol{\testb}}(\hista)
                & \text{(By definition)} \\\\

                    \altsem{\OR{\testa}{\testb}}(\{\hista\})
                &=&
                    \altsem{\testb}(\{\hista\})
                & \text{(NetKAT axioms \cite[Corollary 1]{probnetkat})}
                \end{array}
                \]

                Finally, we have
                $\sem{\transpol{\testb}}(\hista) \sim \altsem{\testb}(\{\hista\})$
                by IH for $\testb$. \medskip

            \item{$\altsem{\testa}(\{\hista\}) = \delta_{\{\hista\}}$ (i.e. the case ``true'')}

                By IH for $\testa$, $\sem{\transpol{\testa}}(\hista) = \eta(\hista)$. We have:
                \[
                \begin{array}{rclr}
                    \sem{\transpol{\OR{\testa}{\testb}}}(\hista)
                &=&
                    \eta(\hista)
                & \text{(By definition)} \\\\

                    \altsem{\OR{\testa}{\testb}}(\{\hista\})
                &=&
                    \delta_{\{\hista\}}
                & \text{(NetKAT axioms \cite[Corollary 1]{probnetkat})}
                \end{array}
                \]

                Therefore, we can apply \Cref{lmm:equiv-eta-dirac}.
        \end{enumerate}

    \emph{Case $\testa = \AND{\testa}{\testb}$:}

    We unfold the LHS/RHS first:
    \[
    \begin{array}{rclr}
        \sem{\transpol{\AND{\testa}{\testb}}}(\hista)
        &=& \sem{\transpol{\testa}}(\hista) \bind \sem{\transpol{\testb}} \medskip \\

        \altsem{\AND{\testa}{\testb}}(\{\hista\})
    &=& \altsem{\testa}(\{\hista\}) \bind \altsem{\testb}
    \end{array}
    \]

    Finally, we have:
    \[
    \begin{array}{rclr}
        \sem{\transpol{\testa}}(\hista) \bind \sem{\transpol{\testb}}
    &\sim&
        \altsem{\testa}(\{\hista\}) \bind \altsem{\testb}
    & \text{(\Cref{lmm:bind-equiv-wnetkat-gprobnetkat} and IH)}
    \end{array}
    \]

    \emph{Case $\testa = \NOT{\testa}$:}

        We proceed by a further case analysis on $\altsem{\testa}(\{\hista\})$.
        We have the following two cases (\citet[Lemma 2]{cantor}):
        \begin{enumerate}
            \item{$\altsem{\testa}(\{\hista\}) = \delta_\varnothing$ (i.e. the case ``false''):}\\
                By IH, $\sem{\transpol{\testa}}(\hista) = 0$. We have:
                \[
                \begin{array}{rcl}
                    \sem{\transpol{\NOT{\testa}}}(\hista) &=& \eta(\hista) \medskip \\
                    \altsem{\NOT{\testa}}(\{\hista\}) &=& \delta_{\{\hista\}}
                \end{array}
                \]
                Therefore, we can apply \Cref{lmm:equiv-eta-dirac}.
            \item{$\altsem{\testa}(\{\hista\}) = \delta_{\{\hista\}}$ (i.e. the case ``true'':)}\\
                By IH, $\sem{\transpol{\testa}}(\hista) = \eta(\hista)$. We have:
                \[
                \begin{array}{rcl}
                    \sem{\transpol{\NOT{\testa}}}(\hista) = 0 \medskip \\
                    \altsem{\NOT{\testa}}(\{\hista\}) = \delta_\varnothing
                \end{array}
                \]
                The same reasoning as in the case $\testa = \FALSE$ applies.
        \end{enumerate}
\end{proof}

\begin{lemma}[Conditional branching equivalence across \iwnetkat{$\probSemiring$}
and guarded \probnetkat]
\label{lmm:if-then-else-equiv-wnetkat-gprobnetkat}
\[
\begin{array}{rrcl}
    \forall \testa,\pola,\pola^\prime,\polb,\polb^\prime \colon &
            \approxn[n]{\sem{\pola^\prime}} \sim \approxn[n]{\altsem{\pola}} \wedge
                \approxn[n]{\sem{\polb^\prime}} \sim \approxn[n]{\altsem{\polb}} &\Rightarrow&
                    \approxn[n]{\sem{\IF[\transpol{\testa}]{\pola^\prime}{\polb^\prime}}}
                        \sim
                    \approxn[n]{\altsem{\IF[\testa]{\pola}{\polb}}}
\end{array}
\]
\end{lemma}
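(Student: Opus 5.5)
The plan is to fix an arbitrary history $\hista$ and reduce both sides to a single branch by case analysis on the predicate, as was done in the proof of \Cref{lmm:subsingleton-conditional-choice}. By \citet[Lemma 2]{cantor}, on a singleton input the predicate $\testa$ behaves as a filter. Either $\altsem{\testa}(\{\hista\}) = \delta_{\{\hista\}}$ or $\altsem{\testa}(\{\hista\}) = \delta_\varnothing$. By \Cref{lmm:predicate-equiv-wnetkat-gprobnetkat} the \wnetkat side agrees: in the first case $\sem{\transpol{\testa}}(\hista) = \eta(\hista)$ and $\sem{\transpol{\NOT{\testa}}}(\hista) = \addid$, and the second case is symmetric. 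Approximants commute with the conditional, since $\approxpol{\IF[\testa]{\pola'}{\polb'}}{n} = \IF[\testa]{\approxpol{\pola'}{n}}{\approxpol{\polb'}{n}}$ by \Cref{tab:approx} and predicates are primitive. So it suffices to compute the conditional's semantics on the approximants.

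In the ``true'' case I would unfold the \wnetkat side as
\[
(\sem{\testa}(\hista) \bind \approxn[n]{\sem{\pola'}}) \semiadd (\sem{\NOT{\testa}}(\hista) \bind \approxn[n]{\sem{\polb'}}).
\]
The monad unit law (\Cref{thm:weightings-props}.\ref{thm:weightings-props2}) turns the first summand into $\approxn[n]{\sem{\pola'}}(\hista)$. Left-annihilation of $\addid$ (\Cref{thm:weightings-props}.\ref{lmm:zero-left-annihilate-bind}) makes the second summand $\addid$. The \probnetkat side $\approxn[n]{\altsem{\IF[\testa]{\pola}{\polb}}}(\{\hista\})$ is \probnetkat's guarded conditional. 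Using \citet[Lemma 2]{cantor} together with the \netkat axioms for \probnetkat (\cite[Corollary 1, Lemma 4]{probnetkat}), it reduces to $\approxn[n]{\altsem{\pola}}(\{\hista\})$, exactly as in \Cref{lmm:subsingleton-conditional-choice}. The hypothesis $\approxn[n]{\sem{\pola'}} \sim \approxn[n]{\altsem{\pola}}$, read through \Cref{def:equiv-fn-markov}, then gives the required relation at $\hista$. The ``false'' case is symmetric and uses the hypothesis for $\polb'$.

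An alternative, fully compositional route is to unfold both sides as a sum of two sequential compositions. The \wnetkat side would be handled with \Cref{lmm:bind-equiv-wnetkat-gprobnetkat}, \Cref{lmm:predicate-equiv-wnetkat-gprobnetkat} and \Cref{lmm:addition-equiv-wnetkat-gprobnetkat}. However, \probnetkat's $\&$ is not literally measure addition, so that route would need extra justification. I therefore prefer the case split.

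The main obstacle is the \probnetkat step: justifying that on a singleton input whose predicate evaluates to $\delta_{\{\hista\}}$ (resp.\ $\delta_\varnothing$), the guarded conditional equals the corresponding branch. The union $\&$ with the dead branch must contribute only $\varnothing$, which is neutral for union. I would first try to obtain this directly from \citet[Lemma 2]{cantor} plus the KAT-style identities $\testa \SEQN \pola \,\&\, \NOT{\testa}\SEQN\polb$ with $\testa$ acting as $\SKIP$ or $\DROP$. Failing that, I would compute the bind $\delta_\varnothing \bind \altsem{\polb} = \delta_\varnothing$ explicitly and use that the parallel composition with $\delta_\varnothing$ is the identity.
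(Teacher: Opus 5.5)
Your proof is correct and follows the same route as the paper. Both proceed by a case split on $\altsem{\testa}(\{\hista\})$ via \citet[Lemma 2]{cantor}, transfer the predicate's value to the \wnetkat side with \Cref{lmm:predicate-equiv-wnetkat-gprobnetkat}, collapse each conditional to its surviving branch (monad/weighting laws on one side, the \netkat axioms of \cite[Corollary 1]{probnetkat} on the other), and then apply the hypothesis. Your assignment of ``true'' to $\pola$ and ``false'' to $\polb$ is the right one; the paper's written proof has the two cases swapped.
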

\begin{proof}
We proceed by case analysis on $\altsem{\testa}(\{\hista\})$. We have
the following two cases (\citet[Lemma 2]{cantor}).
\begin{enumerate}
    \item{$\altsem{\testa}(\{\hista\}) = \delta_\varnothing$ (i.e. ``false'')}\\

        By \Cref{lmm:predicate-equiv-wnetkat-gprobnetkat}, we also have
        that $\sem{\testa}(\hista) = 0$. Therefore:

        \[
        \begin{array}{rclr}
            \approxn[n]{\sem{\IF[\transpol{\testa}]{\pola^\prime}{\polb^\prime}}}
        &=&
            \approxn[n]{\sem{\pola^\prime}}
        & \text{(\Cref{thm:weightings-props})} \\
        &\sim&
            \approxn[n]{\altsem{\pola}}
        & \text{(Assumption)} \\
        &=&
            \approxn[n]{\altsem{\IF[\testa]{\pola}{\polb}}}
        & \text{(NetKAT axioms \cite[Corollary 1]{probnetkat})}
        \end{array}
        \]
    \item{$\altsem{\testa}(\{\hista\}) = \delta_{\{\hista\}}$ (i.e. ``true'')}\\

        By \Cref{lmm:predicate-equiv-wnetkat-gprobnetkat}, we also have
        that $\sem{\testa}(\hista) = \eta$. Therefore:

        \[
        \begin{array}{rclr}
            \approxn[n]{\sem{\IF[\transpol{\testa}]{\pola^\prime}{\polb^\prime}}}
        &=&
            \approxn[n]{\sem{\polb^\prime}}
        & \text{(\Cref{thm:weightings-props})} \\
        &\sim&
            \approxn[n]{\altsem{\polb}}
        & \text{(Assumption)} \\
        &=&
            \approxn[n]{\altsem{\IF[\testa]{\pola}{\polb}}}
        & \text{(NetKAT axioms \cite[Corollary 1]{probnetkat})}
        \end{array}
        \]
\end{enumerate}
\end{proof}

\begin{lemma}[Guarded iteration equivalence across \iwnetkat{$\probSemiring$}
    and guarded \probnetkat]
\label{lmm:guarded-iteration-equiv-wnetkat-gprobnetkat}
Given a policy $\pola$ and predicate $\testa$ in the guarded fragment of
\probnetkat:
\[
\begin{array}{lrcl}
    \forall n \in \nats \colon &
    \approxn[n]{\sem{\transpol{\pola}}} \sim \approxn[n]{\altsem{\pola}} &\Rightarrow&
        \approxn[n]{\sem{\NFOLD[n,\transpol{\testa}]{\transpol{\pola}}}} \sim
            \approxn[n]{\altsem{\ALTNFOLD[n,\testa]{\pola}}}
\end{array}
\]
\end{lemma}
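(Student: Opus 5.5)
We prove the statement by induction on $n$, the unfolding depth of the guarded fold $\NFOLD[n,\testa]{\pola}$. The hypothesis $\approxn[n]{\sem{\pola}} \sim \approxn[n]{\altsem{\pola}}$ is read at the level of functions versus Markov kernels (\Cref{def:equiv-fn-markov}), i.e.\ pointwise in the input history $\hista$. Since the folds contain no iteration, their approximants coincide with the folds themselves, so it suffices to show $\sem{\NFOLD[k,\testa]{\pola'}} \sim \altsem{\ALTNFOLD[k,\testa]{\pola''}}$ for every $k$. Here $\pola'$ and $\pola''$ stand for the respective $n$-th approximants, and the hypothesis gives $\sem{\pola'} \sim \altsem{\pola''}$. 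We keep the hypothesis on the body fixed and induct on $k$.

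\emph{Base case $k=0$.} Both folds are $\NOT{\testa}$. Equivalence is \Cref{lmm:predicate-equiv-wnetkat-gprobnetkat}, applied to the predicate $\NOT{\testa}$.

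\emph{Inductive step.} Both sides unfold to $\IF[\testa]{\SEQ{\pola'}{\NFOLD[k,\testa]{\pola'}}}{\SKIP}$ and its \probnetkat counterpart. We first apply \Cref{lmm:if-then-else-equiv-wnetkat-gprobnetkat}, which reduces the goal to equivalence of the two branches. The $\SKIP$ branch is handled by \Cref{lmm:equiv-eta-dirac}, since $\sem{\SKIP}(\hista) = \unit(\hista)$ and $\altsem{\SKIP}(\{\hista\}) = \delta_{\{\hista\}}$. The sequential branch is handled by \Cref{lmm:bind-equiv-wnetkat-gprobnetkat}: we take $\wtinga = \sem{\pola'}(\hista)$ and $\measurea = \altsem{\pola''}(\{\hista\})$, which are equivalent by assumption, and $f = \sem{\NFOLD[k,\testa]{\pola'}}$ and $\markova = \altsem{\ALTNFOLD[k,\testa]{\pola''}}$, which are equivalent by the induction hypothesis.

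\emph{Main obstacle.} The delicate point is that \Cref{lmm:bind-equiv-wnetkat-gprobnetkat} needs the kernel $\markova$ to behave well on the empty set of histories too, since the unfolded bind contains the term $\markova(\varnothing,-)\cdot\measurea(\{\varnothing\})$. We discharge this using the standard fact that \probnetkat programs map $\varnothing$ to $\delta_\varnothing$. That term then contributes only to the empty output set and never to singletons or to the mass condition.

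We also need subsingleton support of the intermediate measures, which \Cref{lmm:subsingleton-guarded-iteration} and \Cref{lmm:bind-preserves-subsingleton} provide. With this, condition (2) of \Cref{def:weightings-measures-equiv} is preserved throughout the induction. If the branch lemma's formulation, stated for $n$-th approximants, does not apply verbatim, we instantiate it with the star-free folds, whose approximants are themselves.
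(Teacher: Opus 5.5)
Your proof is correct and follows the paper's argument. It is an induction on the fold depth, with the base case handled by \Cref{lmm:predicate-equiv-wnetkat-gprobnetkat}; the inductive step uses \Cref{lmm:if-then-else-equiv-wnetkat-gprobnetkat} (the $\SKIP$ branch being trivially equivalent) and then \Cref{lmm:bind-equiv-wnetkat-gprobnetkat} with the body hypothesis and the induction hypothesis. Two of your additions only make explicit what the paper glosses over: keeping the fold depth $k$ separate from the approximation index $n$ (the paper overloads $n$ and silently keeps the body fixed), and your remark that $\markova(\varnothing,-)=\delta_\varnothing$.
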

\begin{proof}
    By induction on $n$.

    \emph{Case $n=0$:}
        \[
        \begin{array}{rclr}
            \sem{\NFOLD[0,\transpol{\testa}]{\transpol{\pola}}}
        &=&
            \sem{\NOT{\transpol{\testa}}}
        & \text{(By definition)} \\
        &\sim&
            \altsem{\NOT{\testa}}
        & \text{(\Cref{lmm:predicate-equiv-wnetkat-gprobnetkat})} \\
        &=&
            \altsem{\ALTNFOLD[0,\testa]{\pola}}
        & \text{(By definition)}
        \end{array}
        \]

    \emph{Case $n=n+1$:}

        We first unfold the LHS/RHS:
        \[
        \begin{array}{rclr}
            \sem{\NFOLD[n+1,\transpol{\testa}]{\transpol{\pola}}}
        &=&
            \sem{\IF[\transpol{\testa}]{\SEQ{\transpol{\pola}}
                                            {\NFOLD[n,\transpol{\testa}]{\transpol{\pola}}}}
                                       {\SKIP}} \medskip \\

            \altsem{\ALTNFOLD[n+1,\testa]{\pola}}
        &=&
            \altsem{\IF[\testa]{\SEQ{\pola}
                                    {\ALTNFOLD[n,\testa]{\pola}}}
                               {\SKIP}}
        \end{array}
        \]

        We have that
        $\approxn[n]{\sem{\SKIP}} \sim \approxn[n]{\altsem{\SKIP}}$,
        so by \Cref{lmm:if-then-else-equiv-wnetkat-gprobnetkat} we need
        only show that:
        $$
        \sem{\SEQ{\transpol{\pola}}
                 {\NFOLD[n,\transpol{\testa}]{\transpol{\pola}}}}
                 \sim
        \altsem{\SEQ{\pola}
                    {\ALTNFOLD[n,\testa]{\pola}}}
        $$

        Unfolding the LHS/RHS:
        \[
        \begin{array}{rclr}
            \sem{\SEQ{\transpol{\pola}}
                     {\NFOLD[n,\transpol{\testa}]{\transpol{\pola}}}}(\hista)
        &=&
            \sem{\transpol{\pola}}(\hista) \bind
                     \sem{\NFOLD[n,\transpol{\testa}]{\transpol{\pola}}}
            & \text{(By definition)} \medskip \\

            \altsem{\SEQ{\pola}
                        {\ALTNFOLD[n,\testa]{\pola}}}(\{\hista\})
        &=&
            \altsem{\pola}(\{\hista\}) \bind
                     \altsem{\ALTNFOLD[n,\testa]{\pola}}
        & \text{(By definition)}
        \end{array}
        \]

        By \Cref{lmm:bind-equiv-wnetkat-gprobnetkat}, we need only show:
        \[
        \begin{array}{lrcl}
            (1) & \sem{\transpol{\pola}}(\hista) &\sim& \altsem{\pola}(\{\hista\}) \\
            (2) & \sem{\NFOLD[n,\testa]{\transpol{\pola}}} &\sim& \altsem{\ALTNFOLD[n,\testa]{\pola}}
        \end{array}
        \]

        The first condition holds by assumption for $\pola$, the second condition
        holds by IH for $n$.
\end{proof}

\section{Language model}
\subsection{Soundness of Reduction to Reduced \wnetkat Syntax}
\label{appendix:reduced}

We define the reduction map from \wnetkat policies to reduced \wnetkat policies (\Cref{fig:reduced-syntax}) as in the following figure.

\begin{figure}[h]
     \begin{tabular}{l  l}
	\toprule
	\toprule
    $\boldsymbol{\pola \in \pols}$ & $\boldsymbol{\mathsf{\red(\pola) \in \rpols}}$\\
	\midrule
    $\testa$ \quad \quad& $\SUM{\pkt \in \packets} \WEIGH{[\ctest{\pkt} \leq \testa]}{\ctest{\pkt}}$ \\[1.2ex]
    $\ASSN{\fielda}{\vala}$ \quad \quad& $\SUM{\pkt \in \packets} \SEQ{\ctest{\pkt}}{\cass{(\updatepkt{\pkt}{\fielda}{\vala})}}$ \\[1.2ex]
	$\DUP$ \quad \quad& $\DUP$ \\[1ex]
	$\SEQ{\pola_1}{\pola_2}$ \quad \quad& $\SEQ{\red(\pola_1)}{\red(\pola_2)}$ \\[1ex]
	$\WEIGH{\wta}{\pola_1}$ \quad \quad& $\WEIGH{\wta}{\red{(\pola_1)}}$ \\[1ex]
	$\ADD{\pola_1}{\pola_2}$ \quad \quad& $\ADD{\red(\pola_1)}{\red(\pola_2)}$ \\[1ex]
	$\ITER{\pola_1}$ \quad \quad& $\ITER{\red(\pola_1)}$ \\[1ex]
	\bottomrule
	\bottomrule
    \end{tabular}
\caption{Reduction map for \wnetkat policies to reduced \wnetkat policies.}\label{fig:reduced}
\end{figure}
Importantly, we have that every \wnetkat policy is semantically equivalent to its reduced
counterpart.

\begin{lemma}[Soundness of Reduction]
\label{thm:reduced}
For all $\hista \in \histories$ and $\pola \in \pols$, we have that $\sem{\pola}(\hista) = \sem{\red(\pola)}(\hista)$, where $\red(\pola) \in \rpols$ is the reduced policy as defined in \Cref{fig:reduced}.
\end{lemma}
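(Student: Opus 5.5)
We prove the claim by structural induction on $\pola$. The key point is that $\red$ acts homomorphically on every non-primitive constructor: sequencing, weighting, choice and iteration. In each of these cases the semantics in \Cref{fig:syntax-semantics} is defined compositionally from the semantics of the immediate subpolicies, so the result follows directly from the induction hypothesis.

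For sequencing we use $\sem{\SEQ{\pola_1}{\pola_2}}(\hista) = \sem{\pola_1}(\hista) \bind \sem{\pola_2}$. By the induction hypothesis, $\sem{\pola_1}$ and $\sem{\red(\pola_1)}$ agree pointwise, and likewise for $\pola_2$. Hence the two supports coincide and so do the summands of the bind. Weighting and choice are immediate.

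For iteration, we first show by an inner induction on $n$ that $\sem{\NFOLD{\pola_1}} = \sem{\NFOLD{\red(\pola_1)}}$. The base case is $\SKIP$ on both sides, and the step case is the sequencing argument just given. Summing over $n$ then yields the claim. $\DUP$ is mapped to itself, so there is nothing to prove in that case.

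The real work is in the two primitive cases, predicates and modifications. Both rest on an auxiliary fact about complete tests: $\sem{\ctest{\pkt}}(\hcons{\pkta}{\hbar}) = [\pkta=\pkt]\semimul\unit(\hcons{\pkta}{\hbar})$. We prove this by unfolding $\ctest{\pkt}$ as a conjunction of field tests. Since conjunction coincides with sequencing, a short induction over the fields, using the semantics of $\EQ{\fielda}{\vala}$, shows that the conjunction holds exactly when every field of $\pkta$ agrees with $\pkt$, that is, when $\pkta = \pkt$. Similarly, unfolding $\cass{\pkt}$ as a sequence of field assignments gives $\sem{\cass{\pkt}}(\hcons{\pkta}{\hbar}) = \unit(\hcons{\pkt}{\hbar})$, because after assigning every field the resulting packet is $\pkt$.

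With these facts the predicate case goes as follows. Fix an input $\hcons{\pkta}{\hbar}$ and evaluate the finite sum $\SUM{\pkt}{\WEIGH{[\ctest{\pkt}\leq\testa]}{\ctest{\pkt}}}$. Only the summand with $\pkt=\pkta$ survives, leaving $[\ctest{\pkta}\leq\testa]\semimul\unit(\hcons{\pkta}{\hbar})$. Here $[\ctest{\pkta}\leq\testa]$ is read as $\sem{\testa}_{\tests}(\pkta)=1$, which matches the semantics of $\testa$.

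The modification case is analogous. The surviving summand is the one with $\pkt=\pkta$, whose value is $\sem{\cass{(\updatepkt{\pkta}{\fielda}{\vala})}}(\hcons{\pkta}{\hbar}) = \unit(\hcons{\updatepkt{\pkta}{\fielda}{\vala}}{\hbar})$, as required.

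I expect the main obstacle to be only bookkeeping. First, the finite sum notation has to be related to the semantics of iterated $\ADDN$. Second, $\DROP$ and the $\addid$ summands must vanish, which uses the monoid laws and annihilation from \Cref{thm:weightings-props}. Third, the Iverson bracket in the weighting has to be interpreted as $\mulid$ or $\addid$ of $\semi$.
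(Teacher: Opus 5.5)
Your proposal is correct and follows the paper's proof. Both argue by structural induction, discharge sequencing, weighting, choice and iteration by the induction hypothesis, and handle the modification and predicate cases with the complete-test and complete-assignment facts, so that the finite sum collapses to the single summand for the input packet. The only differences are presentational: the paper treats predicates via the Boolean-algebra decomposition $\testa = \bigvee_{\ctest{\pkta}\leq\testa}\ctest{\pkta}$ plus an $\addid/\mulid$ case split, whereas you evaluate the sum directly (your reading of $[\ctest{\pkta}\leq\testa]$ as $\sem{\testa}_{\tests}(\pkta)=1$ is exactly the fact the paper relies on there), and you make explicit the inner induction on $\NFOLD{\pola_1}$ that the paper leaves implicit in ``by IH''.
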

\begin{proof}
	By induction on the structure of $\pola$, for any $\hista \in \histories$.
	Base cases:

	\begin{align*}
		\sem{\red(\testa)}(\hista) &= \sem{\testa}(\hista) \tag{\Cref{lmm:reduced-pred}}
	\end{align*}
	\begin{align*}
		\sem{\red(\DUP)}(\hista) &= \sem{\DUP}(\hista) \tag{by definition}
	\end{align*}
	\begin{align*}
		\sem{\red(\ASSN{\fielda}{\vala})}(\hista)
        &= \sem{\SUM{\ctest{\pkta} \in \Test} \ctest{\pkta} \SEQN \cass{(\updatepkt{\pkta}{\fielda}{\vala})}}(\hista) \tag{def of $\red(\ASSN{\fielda}{\vala})$} \\
        &= \semisum{\ctest{\pkta} \in \Test} \sem{\ctest{\pkta} \SEQN \cass{(\updatepkt{\pkta}{\fielda}{\vala})}}(\hista) \tag{def of $\sem{\pola \ADDN \polb}$} \\
        &= \semisum{\ctest{\pkta} \in \Test} \sem{\ctest{\pkta}}(\hista) \bind \sem{\cass{(\updatepkt{\pkta}{\fielda}{\vala})}}(\hista) \tag{def of $\sem{\pola \SEQN \polb}$} \\
        &= \semisum{\ctest{\pkta} \in \Test} \lambda \histc.\; \semisum{\histb \in \supp(\sem{\ctest{\pkta}}(\hista))} \sem{\ctest{\pkta}}(\hista)(\histb) \semimul \sem{\cass{(\updatepkt{\pkta}{\fielda}{\vala})}}(\histb)(\histc) \tag{def of $\bind$} \\
         &= \lambda \histc.\; \semisum{\ctest{\pkta} \in \Test} \semisum{\histb \in \supp(\sem{\ctest{\pkta}}(\hista))} \sem{\ctest{\pkta}}(\hista)(\histb) \semimul \sem{\cass{(\updatepkt{\pkta}{\fielda}{\vala})}}(\histb)(\histc) \tag{lambda equivalence} \\
        &= \lambda (\pkt'' \cons \hbar'').\; \semisum{\ctest{\pkta} \in \Test} \semisum{(\pkt' \cons \hbar') \in \supp(\sem{\ctest{\pkta}}(\pkt \cons \hbar))} \sem{\ctest{\pkta}}(\pkt \cons \hbar)(\pkt' \cons \hbar') \semimul \sem{\cass{(\updatepkt{\pkta}{\fielda}{\vala})}}(\pkt' \cons \hbar')(\pkt'' \cons \hbar'') \tag{expansion of $\hista, \histb, \histc$} \\
        &= \lambda (\pkt'' \cons \hbar'').\; \semisum{\ctest{\pkta} \in \Test} \semisum{(\pkt' \cons \hbar') \in \supp(\sem{\ctest{\pkta}}(\pkt \cons \hbar))} [\hbar = \hbar' = \hbar'' \wedge \pkt = \pkt' = \pkta \wedge \pkt'' = \updatepkt{\pkta}{\fielda}{\vala}] \tag{\Cref{prop:complete-test,prop:complete-assignment}} \\
		&= \lambda (\pkt'' \cons \hbar'').\; \semisum{\{(\pkt \cons \hbar)\}} [\hbar = \hbar'' \wedge \pkt = \pkta \wedge \pkt'' = \updatepkt{\pkta}{\fielda}{\vala}] \tag{singleton complete test support} \\
		&= \lambda (\pkt'' \cons \hbar'').\; [\hbar = \hbar'' \wedge \pkt = \pkta \wedge \pkt'' = \updatepkt{\pkta}{\fielda}{\vala}] \tag{singleton sum} \\
		&= \lambda (\pkt'' \cons \hbar'').\; [\hbar = \hbar'' \wedge \pkt'' = \updatepkt{\pkt}{\fielda}{\vala}]] \tag{propositional equivalence} \\
        &= \sem{\ASSN{\fielda}{\vala}}(\hista) \tag{semantics of $\ASSN{\fielda}{\vala}$}
	\end{align*}

	Inductive cases:
	\begin{align*}
		\sem{\red(\pola \ADDN \polb)}
		&= \sem{\red(\pola) \ADDN \red(\polb)} \tag{by definition} \\
		&= \sem{\pola \ADDN \polb} \tag{by IH on $\pola$ and $\polb$}
	\end{align*}
	\begin{align*}
		\sem{\red(\pola \SEQN \polb)}
		&= \sem{\red(\pola) \SEQN \red(\polb)} \tag{by definition} \\
		&= \sem{\pola \SEQN \polb} \tag{by IH on $\pola$ and $\polb$}
	\end{align*}
	\begin{align*}
		\sem{\red(\WEIGH{\wta}{\pola})}
		&= \sem{\WEIGH{\wta}{\red(\pola)}} \tag{by definition} \\
		&= \sem{\WEIGH{\wta}{\pola}} \tag{by IH on $\pola$}
	\end{align*}
	\begin{align*}
		\sem{\red(\ITER{\pola})}
		&= \sem{\ITER{\red(\pola)}} \tag{by definition} \\
		&= \sem{\ITER{\pola}} \tag{by IH on $\pola$}
	\end{align*}
\end{proof}

\subsection{Lemmas for Soundness of Reduction to Reduced \wnetkat Syntax}
\begin{proposition}
    \label{prop:complete-test}
    Every complete test $\ctest{\pkt} \triangleq \SEQ{\SEQ{\EQ{\fielda_1}{\vala_1}}{\dots}}{\EQ{\fielda_k}{\vala_k}}$ matches exactly one packet:
    \[ \sem{\ctest{\pkt}}(\pkta \cons \hbar)(\pktb \cons \hbar') = [\hbar = \hbar' \wedge \pkta = \pktb = \{\EQ{\fielda_1}{\vala_1}, \dots, \EQ{\fielda_k}{\vala_k}\}] \]
\end{proposition}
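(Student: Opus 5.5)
The complete test is the sequential composition $\SEQ{\SEQ{\EQ{\fielda_1}{\vala_1}}{\dots}}{\EQ{\fielda_k}{\vala_k}}$, so the plan is to prove a slightly more general statement by induction on the number of conjuncts and then specialise to $k = |\fields|$.

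\textbf{Generalised claim.} For any $j \le k$ we have
\[
\sem{\SEQ{\SEQ{\EQ{\fielda_1}{\vala_1}}{\dots}}{\EQ{\fielda_j}{\vala_j}}}(\pkta \cons \hbar)(\pktb \cons \hbar') = [\hbar = \hbar' \wedge \pkta = \pktb \wedge \forall i \le j\colon \pkta.\fielda_i = \vala_i].
\]

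\textbf{Base case.} We read the empty composition as $\SKIP$, whose semantics is $\unit$. Alternatively, take $j=1$ and apply the semantics of filters directly:
\[
\sem{\EQ{\fielda}{\vala}}(\pkta\cons\hbar) = [\pkta.\fielda=\vala]\semimul\unit(\pkta\cons\hbar),
\]
which evaluated at $\pktb\cons\hbar'$ gives $[\pkta.\fielda = \vala \wedge \pkta=\pktb \wedge \hbar=\hbar']$.

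\textbf{Inductive step.} Unfold sequential composition via $\bind$:
\[
\sum{\histb\in\supp(\cdot)} \sem{\text{prefix}}(\pkta\cons\hbar)(\histb)\semimul\sem{\EQ{\fielda_{j+1}}{\vala_{j+1}}}(\histb)(\pktb\cons\hbar').
\]
By the induction hypothesis the support of the prefix is either empty or the singleton $\{\pkta\cons\hbar\}$. The countable sum therefore collapses to at most one term, using \Cref{def:omega-comp-monoids} for empty and singleton index sets. That term is a product of Iverson brackets, each of which is $\addid$ or $\mulid$, so the product is the bracket of the conjunction; here we use $\mulid\semimul\mulid=\mulid$ and annihilation by $\addid$. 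This yields the claimed bracket for $j+1$.

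\textbf{Conclusion.} Taking $j=k$, the condition $\forall i\colon \pkta.\fielda_i=\vala_i$ holds exactly when $\pkta$ equals the record $\{\fielda_1=\vala_1,\dots,\fielda_k=\vala_k\}$, since packets are total functions $\fields\to\values$ and the $\fielda_i$ enumerate all of $\fields$. This gives the proposition.

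\textbf{Main obstacle.} The only delicate point is the collapse of the sum inside $\bind$, where one must be careful about the case of empty support. This is purely bookkeeping: an empty support yields $\addid$, which matches the false bracket.
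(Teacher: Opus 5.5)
Your proof is correct. The paper states this proposition without proof; it is used as a basic fact in \Cref{lmm:red-props} and in the reduction lemmas. The nearest thing to an argument in the paper is the remark in \Cref{sec:semantics} that conjunction and sequencing of predicates have the same semantics.

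There is a shorter route. \Cref{fig:reduced-syntax} defines $\ctest{\pkt}$ as a conjunction, i.e.\ a single predicate. The filter clause then gives $\sem{\ctest{\pkt}}(\pkta\cons\hbar)(\pktb\cons\hbar') = [\sem{\ctest{\pkt}}_{\tests}(\pkta)=1]\semimul[\pkta\cons\hbar=\pktb\cons\hbar']$ in one line. Here $\sem{\ctest{\pkt}}_{\tests}(\pkta)=1$ iff $\pkta.\fielda_i=\vala_i$ for all $i$, and no bind or support bookkeeping arises.

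Your induction is what is needed if one takes the $\SEQN$-form written in the proposition literally. Its inductive step is in effect a proof of the paper's unproved remark $\sem{\AND{\testa}{\testb}}=\sem{\SEQ{\testa}{\testb}}$, iterated. You also respect the left-nested bracketing, since you append one test on the right at each step.

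One small bookkeeping point: \Cref{def:omega-comp-monoids} covers empty and two-element index sets, but not singletons. The singleton case follows at once from the definition of countable sums, since there is only one partial sum. Alternatively, you can avoid supports altogether by summing over all histories, because terms outside the support are $\addid$.
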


\begin{proposition}
    \label{prop:complete-assignment}
    Every complete assignment $\cass{\pkt} \triangleq \SEQ{\SEQ{\ASSN{\fielda_1}{\vala_1}}{\dots}}{\ASSN{\fielda_k}{\vala_k}}$ matches exactly one packet:
    \[ \sem{\cass{\pkt}}(\pkta \cons \hbar)(\pktb \cons \hbar') = [\hbar = \hbar' \wedge \pktb = \{\EQ{\fielda_1}{\vala_1}, \dots, \EQ{\fielda_k}{\vala_k}\}] \]
\end{proposition}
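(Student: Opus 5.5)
The statement is Proposition~\ref{prop:complete-assignment}. I will prove it by induction on the number $j$ of assignments already performed, strengthening the claim to a statement about every prefix $\cass{\pkt}_j \triangleq \SEQ{\SEQ{\ASSN{\fielda_1}{\vala_1}}{\dots}}{\ASSN{\fielda_j}{\vala_j}}$ with $0 \le j \le k$, where $\cass{\pkt}_0 \triangleq \SKIP$. The strengthened claim is
\[
\sem{\cass{\pkt}_j}(\pkta \cons \hbar) \eeq \unit\bigl(\updatepkt{\pkta}{\fielda_1\dots\fielda_j}{\vala_1\dots\vala_j} \cons \hbar\bigr),
\]
where the bracket denotes the simultaneous update of the first $j$ fields. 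Because the fields $\fielda_1,\dots,\fielda_k$ are pairwise distinct, the order of these updates is irrelevant.

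The base case $j=0$ is the semantics of $\SKIP$. Since $\sem{\TRUE}_{\tests}=1$, this gives $\unit(\pkta\cons\hbar)$.

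For the step, unfold sequential composition, $\sem{\SEQ{\cass{\pkt}_j}{\ASSN{\fielda_{j+1}}{\vala_{j+1}}}}(\pkta\cons\hbar) = \sem{\cass{\pkt}_j}(\pkta\cons\hbar) \bind \sem{\ASSN{\fielda_{j+1}}{\vala_{j+1}}}$. Then apply the induction hypothesis together with the left unit monad law (\Cref{thm:weightings-props}.\ref{thm:weightings-props2}), $\unit(x)\bind f = f(x)$. Finally, use the semantics of modification from \Cref{fig:syntax-semantics}, which updates only the head packet and leaves $\hbar$ untouched. Composition is associative up to semantics by the same monad laws, so the bracketing of $\cass{\pkt}$ is immaterial.

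At $j=k$ every field has been overwritten, so the updated packet no longer depends on $\pkta$: it equals the record $\{\EQ{\fielda_1}{\vala_1},\dots,\EQ{\fielda_k}{\vala_k}\}$. This is where the hypothesis is used that $\fielda_1,\dots,\fielda_k$ enumerate all of $\fields$. Evaluating the unit at $\pktb\cons\hbar'$ gives the Iverson bracket $[\pktb\cons\hbar' = \pkt\cons\hbar]$, which is exactly $[\hbar=\hbar' \wedge \pktb = \pkt]$, as required.

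I do not expect a serious obstacle. The only point needing care is the bookkeeping that distinct fields commute, so that the accumulated update is well defined, and that completeness of the field list makes the final packet independent of the input $\pkta$.
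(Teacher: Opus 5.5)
Your proof is correct. The paper states this proposition without proof, treating it as immediate from the semantics of modification and sequential composition. Your induction over the left-nested chain of assignments is exactly the unfolding it leaves implicit: the left-unit law $\unit(x) \bind f = f(x)$ at each step, and completeness of the field list at $j=k$. (One cosmetic point: $\cass{\pkt}_1$ is literally $\ASSN{\fielda_1}{\vala_1}$ rather than $\SEQ{\SKIP}{\ASSN{\fielda_1}{\vala_1}}$, which the same unit law absorbs, or you can start the induction at $j=1$.)
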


In the following lemma (and accompanying proof) we adopt the convention of denoting
complete tests by $\pkta$,$\pktb$, and $\pktc$, whereas we
denote complete assignments by $\pkt$. We write
$\pkt_\pkta$, $\pkt_\pktb$, etc., to emphasize a complete
assignment to the packet $\pkta$.

\begin{lemma}
    \label{lmm:red-props}
    For all complete tests $\oldctestA \in \Test$ and complete assignments $\oldcass{\oldctestB} \in \Assign$:
    \begin{enumerate}
	\item \label{lmm:red-props1} $\sem{\oldctestA} = \sem{\SEQ{\oldctestA}{\oldcass{\oldctestA}}}$
	\item \label{lmm:red-props2} $\sem{\SEQ{\oldctestA}{\DUP}} = \sem{\SEQ{\DUP}{\oldctestA}}$
	\item \label{lmm:red-props3} $\sem{\TRUE} = \sem{\SUM{\oldctestA \in \Test} \oldctestA}$
	\item \label{lmm:red-props4} $\sem{\oldcass{\oldctestB}} = \sem{\SUM{\oldctestA \in \Test} \oldctestA \SEQN \oldcass{\oldctestB}}$
	\item \label{lmm:red-props5} $\sem{\DUP} = \sem{\SUM{\oldctestA \in \Test} \oldctestA \SEQN \oldcass{\oldctestA} \SEQN \DUP \SEQN \oldcass{\oldctestA}}$
    \end{enumerate}
\end{lemma}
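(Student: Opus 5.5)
Each of the five equalities can be checked pointwise: fix an arbitrary input history $\pktb \cons \hbar$ and an arbitrary output history $\hista'$, then compute both sides using the clauses of \Cref{fig:syntax-semantics}. The tools are \Cref{prop:complete-test}, \Cref{prop:complete-assignment}, and the definition of $\bind$. Sequential compositions of filters, assignments and $\DUP$ each produce at most one output history, namely a weighting of the form $\unit(\cdot)$ or $\addid$. So every bind collapses to a single term: by the monad law $\unit(x)\bind f = f(x)$ from \Cref{thm:weightings-props}, and because $\addid$ is a left annihilator of $\bind$. All calculations therefore reduce to comparing Iverson brackets.

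\textbf{Items (1) and (2).} For (1), $\sem{\oldctestA}(\pktb\cons\hbar) = [\pktb=\oldctestA]\semimul\unit(\pktb\cons\hbar)$. Binding with $\sem{\oldcass{\oldctestA}}$ gives $[\pktb=\oldctestA]\semimul\unit(\oldctestA\cons\hbar)$, and this equals the left-hand side because the bracket forces $\pktb=\oldctestA$. For (2), both sides evaluate to $[\pktb=\oldctestA]\semimul\unit(\pktb\cons\pktb\cons\hbar)$. This works because $\DUP$ keeps the head packet unchanged, so testing before or after it yields the same result.

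\textbf{Items (3) and (4).} Here the finite sum over $\Test$ is evaluated pointwise as $\semisum{\oldctestA}[\pktb=\oldctestA]\semimul m_{\oldctestA}$, where $m_{\oldctestA}$ denotes the weighting produced by the summand for $\oldctestA$. Exactly one atom matches $\pktb$, since complete tests partition $\packets$. Hence the sum reduces to the single summand $\oldctestA=\pktb$, and the remaining summands vanish by annihilation. For (3) the result is $\unit(\pktb\cons\hbar)=\sem{\TRUE}(\pktb\cons\hbar)$. For (4) it is $\unit(\oldctestB\cons\hbar)$, which is exactly $\sem{\oldcass{\oldctestB}}(\pktb\cons\hbar)$ by \Cref{prop:complete-assignment}.

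\textbf{Item (5).} This combines the previous steps. After (3)-style collapsing, only $\oldctestA=\pktb$ survives. The remaining chain $\oldcass{\pktb};\DUP;\oldcass{\pktb}$ applied to $\pktb\cons\hbar$ yields $\unit(\pktb\cons\pktb\cons\hbar)$, which is exactly $\sem{\DUP}(\pktb\cons\hbar)$.

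The only mildly delicate point is bookkeeping: keeping track of which packet is the head after $\DUP$, and making sure each bind is applied to the whole history rather than just the head. I do not expect any real obstacle beyond this.
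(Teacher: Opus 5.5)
Your proof is correct. Items (1)--(4) follow the paper's route. The paper also evaluates both sides pointwise via \Cref{prop:complete-test,prop:complete-assignment}, expanding $\bind$ as a sum over a one-element support and multiplying Iverson brackets. Your appeal to $\unit(x)\bind f = f(x)$ and left annihilation by $\addid$ does exactly this in one step.

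Item (5) is where you diverge. The paper never evaluates it directly but derives it equationally from the earlier items:
\begin{itemize}
\item $\sem{\DUP} = \sem{\SEQ{\DUP}{\TRUE}}$ by the monad laws;
\item (3) turns $\TRUE$ into $\SUM{\oldctestA \in \Test}{\oldctestA}$;
\item (1) replaces each $\oldctestA$ by $\SEQ{\oldctestA}{\oldcass{\oldctestA}}$;
\item $\SEQN$ is distributed over the finite sum;
\item (2) moves each atom in front of $\DUP$;
\item (1) is applied once more.
\end{itemize}
In the paper, (2) exists precisely to serve this step; in your proof it is established but never used. Your direct computation is shorter and needs no distributivity or congruence steps. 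The paper's derivation instead treats (1)--(3) as rewrite rules on policies, in the style of equational \netkat reasoning, rather than recomputing the semantics.

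One presentational fix: you use $\beta$ both for the head of the arbitrary input history and for the target of the complete assignment $\oldcass{\oldctestB}$. As a result, the last sentence of your item (4) reads as if the input head already equals the assignment target. Name the input head differently, e.g.\ $\pktc\cons\hbar$, so the claim visibly reads $\sem{\oldcass{\oldctestB}}(\pktc\cons\hbar) = \unit(\oldctestB\cons\hbar)$ for arbitrary $\pktc$.
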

\begin{proof}
	For all histories $\hista \in \histories$:
	\begin{align*}
		\sem{\SEQ{\oldctestA}{\oldcass{\oldctestA}}}(\hista)
		&= \sem{\oldctestA}(\hista) \bind \sem{\oldcass{\oldctestA}} \tag{definition of $\sem{\SEQ{\pola}{\polb}}$} \\
		&= \lambda \histc.\; \semisum{\histb \in \supp(\sem{\oldctestA}(\hista))} \sem{\oldctestA}(\hista)(\histb) \semimul \sem{\oldcass{\oldctestA}}(\histb)(\histc) \tag{definition of $\bind$} \\
		&= \lambda (\pkt''_0 \cons \hbar'').\; \semisum{(\pkt'_0 \cons \hbar') \in \supp(\sem{\oldctestA}(\pkt_0 \cons \hbar))} \sem{\oldctestA}(\pkt_0 \cons \hbar)(\pkt'_0 \cons \hbar') \semimul \sem{\oldcass{\oldctestA}}(\pkt'_0 \cons \hbar')(\pkt''_0 \cons \hbar'') \tag{expansion of $\hista,\histb,\histc$} \\
		&= \lambda (\pkt''_0 \cons \hbar'').\; \semisum{(\pkt'_0 \cons \hbar') \in \supp(\sem{\oldctestA}(\pkt_0 \cons \hbar))} [\hbar = \hbar' \wedge \pkt_0 = \pkt'_0 = \pkt] \semimul {[\hbar' = \hbar'' \wedge \pkt''_0 = \pkt]} \tag{\Cref{prop:complete-test,prop:complete-assignment}} \\
		&= \lambda (\pkt''_0 \cons \hbar'').\; \semisum{(\pkt'_0 \cons \hbar') \in \supp(\sem{\oldctestA}(\pkt_0 \cons \hbar))} {[\hbar = \hbar' = \hbar'' \wedge \pkt_0 = \pkt'_0 = \pkt''_0 = \pkt]} \tag{Iverson bracket multiplication} \\
		&= \lambda (\pkt''_0 \cons \hbar'').\; {[\hbar = \hbar = \hbar'' \wedge \pkt_0 = \pkt_0 = \pkt''_0 = \pkt]} \tag{By \Cref{prop:complete-test}: $\supp(\sem{\oldctestA}(\pkt_0 \cons \hbar)) = \{ \pkt_0 \cons \hbar \mid \pkt_0 = \pkt_\oldctestA \}$} \\
		&= \lambda (\pkt''_0 \cons \hbar'').\; {[\hbar = \hbar'' \wedge \pkt_0 = \pkt''_0 = \pkt]} \tag{predicate simplification} \\
		&= \lambda (\pkt''_0 \cons \hbar'').\; \sem{\oldctestA}(\pkt_0 \cons \hbar)(\pkt''_0 \cons \hbar'') \tag{\Cref{prop:complete-test}} \\
		&= \sem{\oldctestA}(\pkt_0 \cons \hbar) \tag{implicit lambda} \\
		&= \sem{\oldctestA}(\hista) \tag{$\hista = \pkt_0 \cons \hbar$}
	\end{align*}
	\begin{align*}
		\sem{\SEQ{\oldctestA}{\DUP}}(\hista)
		&= \sem{\oldctestA}(\hista) \bind \sem{\DUP} \tag{definition of $\sem{\SEQ{\pola}{\polb}}$} \\
		&= \lambda \histc.\; \semisum{\histb \in \supp(\sem{\oldctestA}(\hista))} \sem{\oldctestA}(\hista)(\histb) \semimul \sem{\DUP}(\histb)(\histc) \tag{definition of $\bind$} \\
		&= \lambda (\pkt''_0 \cons \hbar'').\; \semisum{(\pkt'_0 \cons \hbar') \in \supp(\sem{\oldctestA}(\pkt_0 \cons \hbar))} \sem{\oldctestA}(\pkt_0 \cons \hbar)(\pkt'_0 \cons \hbar') \semimul \sem{\DUP}(\pkt'_0 \cons \hbar')(\pkt''_0 \cons \hbar'') \tag{expansion of $\hista,\histb,\histc$} \\
		&= \lambda (\pkt''_0 \cons \hbar'').\; \semisum{(\pkt'_0 \cons \hbar') \in \supp(\sem{\oldctestA}(\pkt_0 \cons \hbar))} [\hbar = \hbar' \wedge \pkt_0 = \pkt'_0 = \pkt] \semimul {[\pkt''_0 = \pkt'_0 \wedge \hbar'' = \pkt'_0 \cons \hbar']} \tag{\Cref{prop:complete-test,prop:complete-assignment}} \\
		&= \lambda (\pkt''_0 \cons \hbar'').\; \semisum{(\pkt'_0 \cons \hbar') \in \supp(\sem{\oldctestA}(\pkt_0 \cons \hbar))} [\hbar'' = \pkt''_0 \cons \hbar' \wedge \hbar'' = \pkt''_0 \cons \hbar  \wedge \pkt_0 = \pkt'_0 = \pkt''_0 = \pkt] \tag{Iverson bracket multiplication} \\
		&= \lambda (\pkt''_0 \cons \hbar'').\; [\hbar'' = \pkt''_0 \cons \hbar \wedge \hbar'' = \pkt''_0 \cons \hbar  \wedge \pkt_0 = \pkt_0 = \pkt''_0 = \pkt] \tag{By \Cref{prop:complete-test}: $\supp(\sem{\oldctestA}(\pkt_0 \cons \hbar)) = \{ \pkt_0 \cons \hbar \mid \pkt_0 = \pkt_\oldctestA \}$} \\
		&= \lambda (\pkt''_0 \cons \hbar'').\; [\hbar'' = \pkt''_0 \cons \hbar \wedge \pkt_0 = \pkt''_0 = \pkt] \tag{predicate simplification} \\
		&= \lambda (\pkt''_0 \cons \hbar'').\; \semisum{(\pkt'_0 \cons \hbar') \in \supp(\sem{\DUP}(\pkt_0 \cons \hbar))} [\hbar'' = \pkt''_0 \cons \hbar \wedge \pkt_0 = \pkt''_0 = \pkt] \tag{By def. of $\sem{\DUP}$: $|\supp(\sem{\DUP}(\pkt_0 \cons \hbar))| = 1$} \\
		&= \lambda (\pkt''_0 \cons \hbar'').\; \semisum{(\pkt'_0 \cons \hbar') \in \supp(\sem{\DUP}(\pkt_0 \cons \hbar))} [\hbar'' = \hbar' = \pkt_0 \cons \hbar \wedge \pkt_0 = \pkt'_0 = \pkt''_0 = \pkt] \tag{By def. of $\sem{\DUP}$: $\supp(\sem{\DUP}(\pkt_0 \cons \hbar)) = \{ \pkt_0 \cons \pkt_0 \cons \hbar\}$} \\
		&= \lambda (\pkt''_0 \cons \hbar'').\; \semisum{(\pkt'_0 \cons \hbar') \in \supp(\sem{\DUP}(\pkt_0 \cons \hbar))} [\pkt_0 = \pkt'_0 \wedge \hbar' = \pkt_0 \cons \hbar] \semimul {[\pkt'_0 = \pkt''_0 = \pkt \wedge \hbar' = \hbar'']} \tag{Iverson bracket multiplication} \\
		&= \lambda (\pkt''_0 \cons \hbar'').\; \semisum{(\pkt'_0 \cons \hbar') \in \supp(\sem{\DUP}(\pkt_0 \cons \hbar))}  \sem{\DUP}(\pkt_0 \cons \hbar)(\pkt'_0 \cons \hbar') \semimul \sem{\oldctestA}(\pkt'_0 \cons \hbar')(\pkt''_0 \cons \hbar'') \tag{By \Cref{prop:complete-test} and def. of $\sem{\DUP}$} \\
		&= \lambda \histc.\; \semisum{\histb \in \supp(\sem{\DUP}(\hista))}  \sem{\DUP}(\hista)(\histb) \semimul \sem{\oldctestA}(\histb)(\histc) \tag{definition of $\hista, \histb, \histc$} \\
		&= \sem{\DUP}(\hista) \bind \sem{\oldctestA} \tag{definition of $\bind$} \\
		&= \sem{\SEQ{\DUP}{\oldctestA}}(\hista) \tag{definition of $\sem{\SEQ{\pola}{\polb}}$}
	\end{align*}
	\begin{align*}
		\sem{\SUM{\oldctestA \in \Test} \oldctestA}(\hista)
		&= \semisum{\oldctestA \in \Test} \sem{\oldctestA}(\hista) \tag{definition of $\sem{\ADD{\pola}{\polb}}$} \\
		&= \lambda \histb.\; \big( \semisum{\oldctestA \in \Test} \sem{\oldctestA}(\hista) \big)(\histb) \tag{explicit lambda} \\
		&= \lambda \histb.\; \semisum{\oldctestA \in \Test} \sem{\oldctestA}(\hista)(\histb) \tag{definition of lifted $\semiadd$} \\
		&= \lambda (\pkt'_0 \cons \hbar').\; \semisum{\oldctestA \in \Test} \sem{\oldctestA}(\pkt_0 \cons \hbar)(\pkt'_0 \cons \hbar') \tag{expansion of $\hista, \histb$} \\
		&= \lambda (\pkt'_0 \cons \hbar').\; \semisum{\oldctestA \in \Test} [\hbar = \hbar' \wedge \pkt_0 = \pkt'_0 = \pkt] \tag{\Cref{prop:complete-test}, for packet $\pkt$ corresponding to $\oldctestA$} \\
		&= \lambda (\pkt'_0 \cons \hbar').\; [\hbar = \hbar' \wedge \pkt_0 = \pkt'_0] \tag{$\Test \cong \packets$} \\
		&= \lambda (\pkt'_0 \cons \hbar').\; [\pkt_0 \cons \hbar = \pkt'_0 \cons \hbar'] \tag{predicate simplification} \\
		&= \lambda \histb.\; [\hista = \histb] \tag{definition of $\hista, \histb$} \\
		&= \sem{\TRUE}(\hista) \tag{definition of $\sem{\TRUE}$}
	\end{align*}
		\begin{align*}
			\sem{\SUM{\oldctestA \in \Test} \oldctestA \SEQN \oldcass{\oldctestB}}(\hista)
		&= \semisum{\oldctestA \in \Test} \sem{\oldctestA \SEQN \oldcass{\oldctestB}}(\hista) \tag{definition of $\sem{\ADD{\pola}{\polb}}$} \\
		&= \semisum{\oldctestA \in \Test} \sem{\oldctestA}(\hista) \bind \sem{\oldcass{\oldctestB}} \tag{definition of $\sem{\SEQ{\pola}{\polb}}$} \\
		&= \semisum{\oldctestA \in \Test} \lambda \histc.\; \semisum{\histb \in \supp(\sem{\oldctestA}(\hista))} \sem{\oldctestA}(\hista)(\histb) \semimul \sem{\oldcass{\oldctestB}}(\histb)(\histc) \tag{definition of $\bind$} \\
		&= \lambda \histc.\; \semisum{\oldctestA \in \Test} \semisum{\histb \in \supp(\sem{\oldctestA}(\hista))} \sem{\oldctestA}(\hista)(\histb) \semimul \sem{\oldcass{\oldctestB}}(\histb)(\histc) \tag{lambda equivalence} \\
		&= \lambda (\pkt'' \cons \hbar'').\; \semisum{\oldctestA \in \Test} \semisum{(\pkt' \cons \hbar') \in \supp(\sem{\oldctestA}(\pkt \cons \hbar))} \sem{\oldctestA}(\pkt \cons \hbar)(\pkt' \cons \hbar') \semimul \sem{\oldcass{\oldctestB}}(\pkt' \cons \hbar')(\pkt'' \cons \hbar'')  \tag{expansion of $\hista, \histb, \histc$} \\
		&= \lambda (\pkt'' \cons \hbar'').\; \semisum{\oldctestA \in \Test} \semisum{(\pkt' \cons \hbar') \in \supp(\sem{\oldctestA}(\pkt \cons \hbar))} [\hbar = \hbar' \wedge \pkt = \pkt' = \pkt_\oldctestA] \semimul {[\hbar' = \hbar'' \wedge \pkt'' = \pkt_\oldctestB]} \tag{\Cref{prop:complete-test,prop:complete-assignment}} \\
		&= \lambda (\pkt'' \cons \hbar'').\; \semisum{\oldctestA \in \Test} \semisum{(\pkt' \cons \hbar') \in \supp(\sem{\oldctestA}(\pkt \cons \hbar))} [\hbar = \hbar' = \hbar'' \wedge \pkt = \pkt' = \pkt_\oldctestA \wedge \pkt'' = \pkt_\oldctestB] \tag{Iverson bracket multiplication} \\
		&= \lambda (\pkt'' \cons \hbar'').\; \semisum{(\pkt' \cons \hbar') \in \supp(\sem{\oldctestA}(\pkt \cons \hbar))} [\hbar = \hbar' = \hbar'' \wedge \pkt = \pkt' \wedge \pkt'' = \pkt_\oldctestB] \tag{for unique $\oldctestA$ s.t. $\pkt = \pkt_\oldctestA$} \\
		&= \lambda (\pkt'' \cons \hbar'').\; [\hbar = \hbar'' \wedge \pkt'' = \pkt_\oldctestB] \tag{for unique $(\pkt' \cons \hbar') = (\pkt \cons \hbar)$} \\
		&= \lambda (\pkt'' \cons \hbar'').\; \sem{\oldcass{\oldctestB}}(\pkt \cons \hbar)(\pkt'' \cons \hbar'') \tag{\Cref{prop:complete-assignment}} \\
		&= \sem{\oldcass{\oldctestB}}(\pkt \cons \hbar) \tag{implicit lambda} \\
		&= \sem{\oldcass{\oldctestB}}(\hista) \tag{$\hista = (\pkt \cons \hbar)$} \\
	\end{align*}
	\begin{align*}
		\sem{\DUP}
		&= \sem{\DUP \SEQN \TRUE} \tag{by monadic structure of $\sem{\pola \SEQN \polb}$} \\
		&= \sem{\DUP \SEQN (\SUM{\oldctestA \in \Test} \oldctestA)} \tag{\Cref{lmm:red-props}.\ref{lmm:red-props3}} \\
		&= \sem{\DUP \SEQN (\SUM{\oldctestA \in \Test} \oldctestA \SEQN \oldcass{\oldctestA})} \tag{\Cref{lmm:red-props}.\ref{lmm:red-props1}} \\
		&= \sem{\SUM{\oldctestA \in \Test} \DUP \SEQN \oldctestA \SEQN \oldcass{\oldctestA}} \tag{distributivity into the infinitary sum from $\bind$, of $\sem{\pola \SEQN \polb}$} \\
		&= \sem{\SUM{\oldctestA \in \Test} \oldctestA \SEQN \DUP \SEQN \oldcass{\oldctestA}} \tag{\Cref{lmm:red-props}.\ref{lmm:red-props2}} \\
		&= \sem{\SUM{\oldctestA \in \Test} \oldctestA \SEQN \oldcass{\oldctestA} \SEQN \DUP \SEQN \oldcass{\oldctestA}} \tag{\Cref{lmm:red-props}.\ref{lmm:red-props1}}
	\end{align*}

\end{proof}

\begin{lemma}
	\label{lmm:reduced-pred}
	For all $\testa \in \tests$: $\sem{\testa} = \sem{\red(\testa)}$
\end{lemma}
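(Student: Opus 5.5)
We prove $\sem{\testa} = \sem{\red(\testa)}$ directly from the definition $\red(\testa) = \SUM{\pkt \in \packets} \WEIGH{[\ctest{\pkt} \leq \testa]}{\ctest{\pkt}}$, where we read $[\ctest{\pkt} \leq \testa]$ as $[\sem{\testa}_{\tests}(\pkt) = 1]$ (the atom $\pkt$ lies below $\testa$ in the Boolean algebra of tests iff $\pkt$ satisfies $\testa$). No induction on $\testa$ is needed, because the semantics of any predicate is already determined pointwise by $\sem{\testa}_{\tests}$. It suffices to fix an arbitrary input history $\hista = \hcons{\pkta}{\hbar}$ and output $\histb$, and to compute both sides.

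First we unfold the right-hand side. By the semantics of choice, which is a finite sum since $\packets$ is finite, and of weighting, we get
\[
\sem{\red(\testa)}(\hcons{\pkta}{\hbar})(\histb) = \semisum{\pkt \in \packets} [\sem{\testa}_{\tests}(\pkt) = 1] \semimul \sem{\ctest{\pkt}}(\hcons{\pkta}{\hbar})(\histb).
\]
By \Cref{prop:complete-test}, $\sem{\ctest{\pkt}}(\hcons{\pkta}{\hbar})(\histb) = [\pkt = \pkta \wedge \histb = \hcons{\pkta}{\hbar}]$. Hence every summand with $\pkt \neq \pkta$ vanishes by annihilation of $\addid$. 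Only the $\pkt = \pkta$ summand survives, and it equals $[\sem{\testa}_{\tests}(\pkta) = 1] \semimul [\histb = \hcons{\pkta}{\hbar}]$.

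This is precisely the defining clause $\sem{\testa}(\hcons{\pkta}{\hbar}) = [\sem{\testa}_{\tests}(\pkta)=1] \semimul \eta(\hcons{\pkta}{\hbar})$ from \Cref{fig:syntax-semantics}, evaluated at $\histb$. Both sides therefore agree.

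The only delicate point is justifying the reading of the side condition $[\ctest{\pkt} \leq \testa]$. If it is instead taken as the KAT order $\sem{\ctest{\pkt}} \cpoord \sem{\testa}$, we must show that this holds iff $\sem{\testa}_{\tests}(\pkt) = 1$. We establish this by a short structural induction on $\testa$, checking the cases $\FALSE$, $\TRUE$, $\EQ{\fielda}{\vala}$, $\wedge$, $\vee$, and $\neg$. Since $\sem{\ctest{\pkt}}$ is $\{\addid,\mulid\}$-valued and supported only on histories with head $\pkt$, the order comparison reduces to evaluating $\sem{\testa}_{\tests}$ at $\pkt$. Apart from this, the steps are routine Iverson-bracket manipulations.
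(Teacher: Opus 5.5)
Your proof is correct, and it is more direct than the paper's.

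\emph{The paper's route.} It first rewrites $\testa$ via the Boolean-algebra normal form $\sem{\testa}(\hista)(\histb) = \sem{\bigvee_{\ctest{\pkta} \leq \testa} \ctest{\pkta}}(\hista)(\histb)$. This step is taken from the Boolean-algebra structure of tests without further argument. It then compares this join with $\sem{\red(\testa)}(\hista)(\histb)$ by a case split on whether the value is $\addid$ or $\mulid$. In the $\mulid$ case it uses that exactly one atom below $\testa$ fires, which is what lets the non-idempotent semiring sum agree with the idempotent Boolean join.

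\emph{Your route.} You fix the meaning of the side condition as $[\sem{\testa}_{\tests}(\pkt)=1]$ and unfold $\sem{\red(\testa)}$. You then collapse the finite sum to the summand indexed by the head packet using \Cref{prop:complete-test}, which lands exactly on the filter clause of \Cref{fig:syntax-semantics}.

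\emph{Comparison.} Both arguments rest on the same fact: complete tests are mutually exclusive and each matches exactly one packet. Yours avoids the normal-form step and the case analysis, and it makes explicit what $[\ctest{\pkt} \leq \testa]$ means. The paper keeps $\leq$ in its native Boolean-algebra sense and never spells that meaning out.

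\emph{Two remarks on your last paragraph.} First, the structural induction is unnecessary. The filter clause gives $\sem{\testa}$ uniformly in terms of $\sem{\testa}_{\tests}$ for every predicate, so $\sem{\ctest{\pkt}} \cpoord \sem{\testa}$ reduces at once to $\mulid \semiord {[\sem{\testa}_{\tests}(\pkt)=1]}$. Second, that condition is equivalent to $\sem{\testa}_{\tests}(\pkt)=1$ only when $\mulid \neq \addid$, by positivity and antisymmetry. This is harmless, since in the one-element semiring both sides of the lemma are trivially equal.
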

\begin{proof}
    Fix two given histories $\hista, \histb \in \histories$. Then, as $(\Test, \OR{}{}, \AND{}{}, \NOT{}, \FALSE, \TRUE)$ forms a Boolean algebra we have that $\sem{\testa}(\hista)(\histb) = \sem{\bigvee_{\ctest{\pkta} \leq t} \ctest{\pkta}}(\hista)(\histb)$.

    We now show that $\sem{\bigvee_{\ctest{\pkta} \leq t} \ctest{\pkta}}(\hista)(\histb) = \sem{\SUM{\ctest{\pkta} \in \Test} \WEIGH{[\ctest{\pkta} \leq \testa]}{\ctest{\pkta}}}(\hista)(\histb)$. We proceed by case analysis.

    Suppose that $\sem{\bigvee_{\ctest{\pkta} \leq t} \ctest{\pkta}}(\hista)(\histb) = \addid$. Then:
	\begin{align*}
		\sem{\bigvee_{\ctest{\pkta} \leq t} \ctest{\pkta}}(\hista)(\histb) = \addid
		&\implies \forall \ctest{\pkta} \in \Test.\; \ctest{\pkta} \leq \testa \Rightarrow \sem{\ctest{\pkta}}(\hista)(\histb) = \addid \tag{$A \vee B = \bot \implies A = \bot \wedge B = \bot$} \\
		&\implies \forall \ctest{\pkta} \in \Test.\; \ctest{\pkta} \leq \testa \wedge \sem{\ctest{\pkta}}(\hista)(\histb) = \addid \tag{$A \Rightarrow B = \bot \implies A \wedge B = \bot$} \\
		&\implies \forall \ctest{\pkta} \in \Test.\; [\ctest{\pkta} \leq \testa] \semimul \sem{\ctest{\pkta}}(\hista)(\histb) = \addid \tag{Iverson bracket multiplication} \\
		&\implies \forall \ctest{\pkta} \in \Test.\; \sem{\WEIGH{[\ctest{\pkta} \leq \testa]}{\ctest{\pkta}}}(\hista)(\histb) = \addid \tag{def of $\sem{\WEIGH{\wta}{\pola}}$} \\
		&\implies \semisum{\ctest{\pkta} \in \Test} \sem{\WEIGH{[\ctest{\pkta} \leq \testa]}{\ctest{\pkta}}}(\hista)(\histb) = \addid \tag{$\semisum{} \addid = \addid$} \\
		&\implies \sem{\SUM{\ctest{\pkta} \in \Test} \WEIGH{[\ctest{\pkta} \leq \testa]}{\ctest{\pkta}}}(\hista)(\histb) = \addid \tag{def of $\sem{\pola \ADDN \polb}$}
	\end{align*}

	Suppose that $\sem{\bigvee_{\ctest{\pkta} \leq t} \ctest{\pkta}}(\hista)(\histb) = \mulid$. Then:
	\begin{align*}
		\sem{\bigvee_{\ctest{\pkta} \leq t} \ctest{\pkta}}(\hista)(\histb) = \mulid
		&\implies \exists \ctest{\pkta} \in \Test.\; \ctest{\pkta} \leq \testa \wedge \sem{\ctest{\pkta}}(\hista)(\histb) = \mulid \tag{$A \vee B = \top \implies A = \top \vee B = \top$} \\
		&\implies \exists! \ctest{\pkta} \in \Test.\; \ctest{\pkta} \leq \testa \wedge \sem{\ctest{\pkta}}(\hista)(\histb) = \mulid \tag{$\Test \cong \packets$, for $\hista = \pkt \cons \hbar$, only $\sem{\ctest{\pkta}_\pkt}(\hista)(\hista) = \mulid$} \\
		&\implies \exists! \ctest{\pkta} \in \Test.\; [\ctest{\pkta} \leq \testa] \semimul \sem{\ctest{\pkta}}(\hista)(\histb) = \mulid \tag{Iverson bracket multiplication} \\
		&\implies \exists! \ctest{\pkta} \in \Test.\; \sem{\WEIGH{[\ctest{\pkta} \leq \testa]}{\ctest{\pkta}}}(\hista)(\histb) = \mulid \tag{def of $\sem{\WEIGH{\wta}{\pola}}$} \\
		&\implies \semisum{\ctest{\pkta} \in \Test} \sem{\WEIGH{[\ctest{\pkta} \leq \testa]}{\ctest{\pkta}}}(\hista)(\histb) = \mulid \tag{singleton supported index set} \\
        &\implies \sem{\SUM{\ctest{\pkta} \in \Test} \WEIGH{[\ctest{\pkta} \leq \testa]}{\ctest{\pkta}}}(\hista)(\histb) = \mulid \tag{def of $\sem{\pola \ADDN \polb}$}
	\end{align*}

    As $\testa \in \tests$, $\sem{\testa}(\hista)(\histb) \in \{\addid, \mulid\}$, and so $\sem{\testa} = \sem{\bigvee_{\ctest{\pkta} \leq t} \ctest{\pkta}} = \sem{\SUM{\ctest{\pkta} \in \Test} \WEIGH{[\ctest{\pkta} \leq \testa]}{\ctest{\pkta}}} = \sem{\red(\testa)}$.

\end{proof}

\subsection{Properties of the Language Model}

\begin{lemma}
	\label{lmm:supp-weigh}
	For all $\hista, \histb \in \histories$, $\pola \in \pols$, $\wta \in \semi$:
	\[\semisum{x \in \supp(G(\pola))} G(\WEIGH{\wta}{\pola})(x) \semimul \sem{x}(\hista)(\histb)
	\eeq
	\semisum{x \in \supp(G(\WEIGH{\wta}{\pola}))} G(\WEIGH{\wta}{\pola})(x) \semimul \sem{x}(\hista)(\histb)\]
\end{lemma}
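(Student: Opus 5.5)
The two sums have the same summand, $G(\WEIGH{\wta}{\pola})(x) \semimul \sem{x}(\hista)(\histb)$, and differ only in their index sets. The plan is therefore to show that every index lying in one set but not the other contributes $\addid$. Once that is done, both sums equal the same sum over the union of the two index sets.

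First I would relate the supports. By the language model (\Cref{fig:language}), $G(\WEIGH{\wta}{\pola})(x) = \wta \semimul G(\pola)(x)$. If $x \notin \supp(G(\pola))$, then $G(\pola)(x) = \addid$, and annihilation gives $G(\WEIGH{\wta}{\pola})(x) = \addid$. Hence $\supp(G(\WEIGH{\wta}{\pola})) \subseteq \supp(G(\pola))$. The converse inclusion can fail, since $\wta \semimul G(\pola)(x)$ may be $\addid$ even when $G(\pola)(x) \neq \addid$; for example, take $\wta = \addid$, or a semiring with zero divisors. So I would only claim the one inclusion.

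Next I would split the larger index set disjointly: $\supp(G(\pola)) = \supp(G(\WEIGH{\wta}{\pola})) \,\dot\cup\, D$, where $D \triangleq \supp(G(\pola)) \setminus \supp(G(\WEIGH{\wta}{\pola}))$. Both parts are countable, because $\GS$ is countable. Applying the partition property of countable sums (\Cref{def:omega-comp-monoids}.\ref{def:omega-comp-monoids3}) together with \Cref{def:omega-comp-monoids}.\ref{def:omega-comp-monoids2}, the left-hand side becomes the sum over $\supp(G(\WEIGH{\wta}{\pola}))$ plus the sum over $D$.

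For every $x \in D$ we have $G(\WEIGH{\wta}{\pola})(x) = \addid$ by the definition of support. Annihilation then makes each summand over $D$ equal to $\addid$. A countable sum of $\addid$'s is $\addid$: every partial sum is $\addid$, so their supremum is $\addid$. The sum over $D$ therefore vanishes, and $\addid$ being the additive identity yields the right-hand side.

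No step is a real obstacle. The only point needing care is the bookkeeping justifying the split of the countable sum into two disjoint index sets, which is exactly what \Cref{def:omega-comp-monoids} provides.
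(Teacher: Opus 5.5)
Your proof is correct and follows essentially the same idea as the paper's: both rest on annihilation by $\addid$ and on the fact that indices whose summand is $\addid$ can be dropped from a countable sum. The only difference is bookkeeping. The paper restricts both sums to the indices with nonzero summand and shows those two sets coincide, whereas you use the one-sided inclusion $\supp(G(\WEIGH{\wta}{\pola})) \subseteq \supp(G(\pola))$ plus a disjoint split, which is slightly more direct.
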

\begin{proof}
	For convenience, let $J, K \subseteq \GS$ and $I \colon \GS \rightarrow \semi$ be given by:
	\[J \triangleq \supp(G(\pola)) \qquad K \triangleq \supp(G(\WEIGH{\wta}{\pola})) \qquad I(x) = G(\WEIGH{\wta}{\pola})(x) \semimul \sem{x}(\hista)(\histb)\]
	We will first consider the sums over sets $J' \subseteq J$ and $K' \subseteq K$ of indices leading to non-zero terms:
	\[J' \triangleq \{x \mid I(x) \neq \addid \wedge x \in J\} \qquad K' \triangleq \{x \mid I(x) \neq \addid \wedge x \in K\}\]
	We now show that $J' = K'$. Note that $J' \subseteq K'$ as:
	\begin{align*}
		x \in J'
		&\implies I(x) \neq \addid \wedge x \in J \tag{definition of $J'$} \\
		&\implies I(x) \neq \addid \tag{$A \wedge B \implies A$} \\
		&\implies I(x) \neq \addid \wedge I(x) \neq \addid \tag{$A \implies A \wedge A$} \\
		&\implies I(x) \neq \addid \wedge G(\WEIGH{\wta}{\pola})(x) \semimul \sem{x}(\hista)(\histb) \neq \addid \tag{definition of $I(x)$} \\
		&\implies I(x) \neq \addid \wedge G(\WEIGH{\wta}{\pola})(x) \neq \addid \wedge \sem{x}(\hista)(\histb) = \mulid \tag{$\sem{x}(\hista)(\histb) \in \{\addid, \mulid\}$} \\
		&\implies I(x) \neq \addid \wedge G(\WEIGH{\wta}{\pola})(x) \neq \addid \tag{$A \wedge B \implies A$} \\
		&\implies I(x) \neq \addid \wedge x \in \supp(G(\WEIGH{\wta}{\pola})) \tag{definition of $\supp(G(\WEIGH{\wta}{\pola}))$} \\
		&\implies I(x) \neq \addid \wedge x \in K \tag{definition of $K$} \\
		&\implies x \in K' \tag{definition of $K'$}
	\end{align*}
	And similarly, $K' \subseteq J'$ as:
	\begin{align*}
		x \in K'
		&\implies I(x) \neq \addid \wedge x \in K \tag{definition of $K'$} \\
		&\implies I(x) \neq \addid \wedge x \in \supp(G(\WEIGH{\wta}{\pola})) \tag{definition of $K$} \\
		&\implies I(x) \neq \addid \wedge G(\WEIGH{\wta}{\pola})(x) \neq \addid \tag{definition of $\supp(G(\WEIGH{\wta}{\pola}))$} \\
		&\implies I(x) \neq \addid \wedge \wta \semimul G(\pola)(x) \neq \addid \tag{definition of $G(\WEIGH{\wta}{\pola})$} \\
		&\implies I(x) \neq \addid \wedge \wta \neq \addid \wedge G(\pola)(x) \neq \addid \tag{\Cref{def:semirings}.\ref{def:semirings4}} \\
		&\implies I(x) \neq \addid \wedge G(\pola)(x) \neq \addid \tag{$A \wedge B \implies A$} \\
		&\implies I(x) \neq \addid \wedge x \in \supp(G(\pola)) \tag{definition of $\supp(G(\pola))$} \\
		&\implies I(x) \neq \addid \wedge x \in J \tag{definition of $J$} \\
		&\implies x \in J' \tag{definition of $J'$}
	\end{align*}
	Therefore $J' = K'$ and so $\semisum{x \in J'} I(x) \eeq \semisum{x \in K'} I(x)$. Finally, by the properties of $\omega$-continuous monoids:
	\[\semisum{x \in J} I(x) \eeq \semisum{x \in J'} I(x) \eeq \semisum{x \in K'} I(x) \eeq \semisum{x \in K} I(x)\]
\end{proof}

\begin{lemma}
\label{lmm:supp-sum}
    For all $\hista, \histb \in \histories$ and $\pola, \polb \in \pols$, the following sums are equal:
\[
    \big(\semisum{x \in \supp(G(\pola))} G(\pola)(x) \semimul \llb x \rrb (\hista)(\histb)\big) \semiadd \big(\semisum{y \in \supp(G(\polb))} G(\polb)(y) \semimul \llb y \rrb (\hista)(\histb)\big)
\]\[
    \semisum{z \in \supp(G(\pola)) \cup \supp(G(\polb))} (G(\pola)(z) \semimul \llb z \rrb (\hista)(\histb)) \semiadd (G(\polb)(z) \semimul \llb z \rrb (\hista)(\histb))
\]
\end{lemma}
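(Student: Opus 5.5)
Both sides are countable sums of the single family $I_\pola(z) \triangleq G(\pola)(z)\semimul\sem{z}(\hista)(\histb)$ and $I_\polb(z) \triangleq G(\polb)(z)\semimul\sem{z}(\hista)(\histb)$, so I would prove the equality by reindexing all three sums over the common countable index set $U \triangleq \supp(G(\pola)) \cup \supp(G(\polb))$. The set $U$ is countable as a union of two countable supports, so every sum involved is well defined by \Cref{def:omega-comp-monoids}.

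\textbf{Step 1: extend the left-hand sums to $U$.} First I show
\[
\semisum{x \in \supp(G(\pola))} I_\pola(x) \eeq \semisum{z \in U} I_\pola(z),
\]
and the analogous statement for $\polb$. Split $U$ as the disjoint union $\supp(G(\pola)) \,\dot\cup\, (U\setminus\supp(G(\pola)))$ and apply \Cref{def:omega-comp-monoids}.\ref{def:omega-comp-monoids3} together with \ref{def:omega-comp-monoids2}. For $z \notin \supp(G(\pola))$ we have $G(\pola)(z)=\addid$, so $I_\pola(z)=\addid$ by annihilation (\Cref{def:semirings}.\ref{def:semirings4}). Hence the extra block is a countable sum of zeros. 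That block equals $\addid$, since every partial sum is $\addid$ and so is their supremum. The $\addid$ then vanishes by the monoid identity.

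\textbf{Step 2: merge the two sums.} It remains to show
\[
\semisum{z\in U} I_\pola(z) \semiadd \semisum{z\in U} I_\polb(z) \eeq \semisum{z\in U} \big(I_\pola(z)\semiadd I_\polb(z)\big).
\]
I would write this as one sum over the disjoint union $U\times\{1,2\}$ and apply \Cref{def:omega-comp-monoids}.\ref{def:omega-comp-monoids3} twice:
\begin{itemize}
\item partitioning by the second coordinate gives the left-hand side, using \ref{def:omega-comp-monoids2} for the outer two-element sum;
\item partitioning by the first coordinate gives the right-hand side, again using \ref{def:omega-comp-monoids2} for each inner two-element sum.
\end{itemize}

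\textbf{Obstacle.} No step is conceptually hard. The only point needing care is the zero-padding argument in Step 1, because the countable sum is defined as a supremum over an enumeration. It must be justified through the partition property rather than by informal term manipulation. Alternatively, one could argue directly with enumerations: interleaving the two enumerations gives partial sums that are cofinal with the sums of the pairwise terms. Continuity of $\semiadd$ (\Cref{def:omega-monoids}.\ref{def:omega-monoids3}) then yields the same supremum. I would use the partition-based route, since \Cref{def:omega-comp-monoids} already packages it.
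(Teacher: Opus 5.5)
Your proof is correct and follows the paper's route. The paper likewise shows via annihilation that $G(\pola)(g) \semimul \sem{g}(\hista)(\histb) = \addid$ for $g \notin \supp(G(\pola))$, and analogously for $\polb$, and then merges the two sums over $\supp(G(\pola)) \cup \supp(G(\polb))$ by citing ``properties of $\omega$-continuous monoids''. Your zero-padding step and the partition of $U \times \{1,2\}$ spell out that final step explicitly using the partition and two-element-sum properties of countable sums.
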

\begin{proof}
	Let $I_1, I_2 \colon \GS \rightarrow \semi$ defined by $I_1(g) = G(\pola)(g) \semimul \llb x \rrb (\hista)(\histb)$ and $I_2(g) = G(\polb)(g) \semimul \llb y \rrb (\hista)(\histb)$. Additionally, let $X_1 = \supp(G(\pola))$ and $X_2 = \supp(G(\polb))$. Then:
	\begin{align*}
		g \notin X_1
		&\implies g \notin \supp(G(\pola)) \tag{Definition of $X_1$} \\
		&\implies G(\pola)(g) = \addid \tag{Definition of $\supp$} \\
		&\implies G(\pola)(g) \semimul \llb x \rrb (\hista)(\histb) = \addid \tag{\Cref{def:semirings}.\ref{def:semirings4}} \\
		&\implies I_1(g) = \addid \tag{Definition of $I_1(g)$}
	\end{align*}
	The reasoning to show that $g \notin X_2 \implies I_2(g) = \addid$ is analogous. By properties of $\omega$-continuous monoids, we have:
	\[\semisum{x_1 \in X_1} I_1(x_1) \semiadd \semisum{x_2 \in X_2} I_2(x_2) = \semisum{x \in X_1 \cup X_2} I_1(x) \semiadd I_2(x)\]
	And so, by definition of $X_1, X_2$ and $I_1, I_2$, the claim is proved.
\end{proof}

\begin{lemma}
\label{lmm:supp-union}
    For all $\pola, \polb \in \pols$, $\supp(G(\ADD{\pola}{\polb})) = \supp(G(\pola)) \cup \supp(G(\polb))$.
\end{lemma}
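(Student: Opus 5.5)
The claim is a pointwise identity of supports, so I will prove the two inclusions separately for an arbitrary guarded string $\gsA \in \GS$. The only definition needed is the clause for choice in the language model (\Cref{def:lang}, \Cref{fig:language}):
\[
G(\ADD{\pola}{\polb})(\gsA) \eeq G(\pola)(\gsA) \semiadd G(\polb)(\gsA).
\]

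\textbf{Inclusion $\supseteq$.} This is the direction that actually uses the structure of the semiring. Suppose $\gsA \in \supp(G(\pola))$, so $G(\pola)(\gsA) \neq \addid$. I will argue by contradiction: assume $G(\pola)(\gsA) \semiadd G(\polb)(\gsA) = \addid$.
\begin{itemize}
\item In any $\omega$-continuous monoid we have $\semia \semiord \semia \semiadd \semib$; the paper already uses this fact in the proof of \Cref{lmm:approx-chain}.
\item Hence $G(\pola)(\gsA) \semiord \addid$.
\item Positivity (\Cref{def:omega-monoids}.\ref{def:omega-monoids1}) gives $\addid \semiord G(\pola)(\gsA)$.
\item Antisymmetry of $\semiord$ then forces $G(\pola)(\gsA) = \addid$, a contradiction.
\end{itemize}
The case $\gsA \in \supp(G(\polb))$ is symmetric, using commutativity of $\semiadd$. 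In short, $\omega$-continuous semirings are zero-sum-free, and that is exactly what this inclusion needs.

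\textbf{Inclusion $\subseteq$.} This direction is the contrapositive and is routine. If $\gsA$ lies in neither support, then both summands are $\addid$. So $G(\ADD{\pola}{\polb})(\gsA) = \addid \semiadd \addid = \addid$ because $\addid$ is the additive identity, and therefore $\gsA \notin \supp(G(\ADD{\pola}{\polb}))$.

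The one step that needs care is the zero-sum-freeness in the first inclusion. It fails for arbitrary semirings, so it must be justified from the positivity of the order together with $\semia \semiord \semia \semiadd \semib$, rather than asserted.
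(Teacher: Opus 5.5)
Your proof is correct and follows the paper's argument. Both reduce membership in $\supp(G(\ADD{\pola}{\polb}))$ to $G(\pola)(\gsA) \semiadd G(\polb)(\gsA) \neq \addid$ and then use zero-sum-freeness of $\omega$-continuous semirings. The only differences are presentational: you split the argument into two inclusions and derive zero-sum-freeness from positivity, $\semia \semiord \semia \semiadd \semib$, and antisymmetry, whereas the paper writes a single chain of equivalences and simply cites zero-sum-freeness.
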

\begin{proof}
    An element is in $\supp(G(\ADD{\pola}{\polb}))$ if and only if it is in $\supp(G(\pola)) \cup \supp(G(\polb))$:
    \begin{align*}
        z \in \supp(G(\ADD{\pola}{\polb}))
        &\iff G(\ADD{\pola}{\polb})(z) \neq \addid \tag{definition of $\supp$} \\
        &\iff G(\pola)(z) \semiadd G(\polb)(z) \neq \addid \tag{definition of $G(\ADD{\pola}{\polb}$)}  \\
        &\iff G(\pola)(z) \neq \addid \vee G(\polb)(z) \neq \addid \tag{$\omega$-cont. monoid is zerosumfree} \\
        &\iff z \in \supp(G(\pola)) \vee z \in \supp(G(\polb)) \tag{definition of $\supp$} \\
        &\iff z \in \supp(G(\pola)) \cup \supp(G(\polb)) \tag{property of set union}
    \end{align*}
    Therefore both sets have exactly the same elements, therefore they are equal.
\end{proof}

\begin{lemma}
	\label{lmm:bind-dist-1}
        For countable sets $X$ and $Y$, if $f \colon X \rightarrow \weightings{\histories}$ and $g \colon Y \rightarrow (\histories \rightarrow \weightings{\histories})$ then:
	\[\big(\semisum{x \in X} f(x)\big) \bind \big(\semisum{y \in Y} g(y)\big) \eeq \semisum{x \in X} \semisum{y \in Y} \big(f(x) \bind g(y)\big)\]
\end{lemma}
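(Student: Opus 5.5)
The plan is to prove the identity pointwise, by unfolding the definition of bind and then exchanging countable sums. Fix a history $\hista$. Lifted addition and countable summation on weightings are pointwise, and by the definition of $\bind$ the left-hand side evaluated at $\hista$ equals
\[
\semisum{\histb \in \supp(\semisum{x\in X} f(x))} \Big(\semisum{x \in X} f(x)(\histb)\Big) \semimul \Big(\semisum{y \in Y} g(y)(\histb)(\hista)\Big).
\]

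The first step is to replace the index set $\supp(\semisum{x} f(x))$ by a set that does not depend on the sum. Every term with $\histb \notin \supp(\cdot)$ is $\addid$ by annihilation (\Cref{def:semirings}), so I will extend the outer sum to the countable set $U \triangleq \bigcup_{x\in X}\supp(f(x))$. This set is countable as a countable union of countable supports. It contains the support of the sum because $\omega$-continuous monoids are zerosumfree, exactly as used in \Cref{lmm:supp-union}, lifted to countable sums via suprema of partial sums with $\addid$ least. I will use this "adding or removing indices whose terms are $\addid$ does not change a countable sum" principle repeatedly, in the same style as \Cref{lmm:supp-weigh,lmm:supp-sum}.

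The second step is to distribute the product over both countable sums using \Cref{def:omega-comp-semirings}, on the left and then on the right. This gives
\[
\semisum{\histb\in U}\semisum{x\in X}\semisum{y\in Y} f(x)(\histb)\semimul g(y)(\histb)(\hista).
\]
Next I will reorder the triple sum to $\semisum{x}\semisum{y}\semisum{\histb\in U}$. This uses associativity and commutativity of countable sums, i.e.\ \Cref{def:omega-comp-monoids}(3) applied to the countable product index set $U\times X\times Y$ partitioned in two different ways.

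Finally, for fixed $x,y$, I shrink $U$ back to $\supp(f(x))$, again discarding only terms equal to $\addid$. The inner sum is then exactly $(f(x)\bind g(y))(\hista)$, which yields the right-hand side pointwise.

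The main obstacle is bookkeeping rather than mathematics: the supports appearing in $\bind$ vary with the argument, so the exchange of sums only becomes legitimate after passing to the fixed countable index set $U$. Justifying the support-extension step cleanly (zerosumfreeness for countable sums) is where I expect the most care is needed.
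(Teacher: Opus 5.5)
Your proposal is correct and follows essentially the same route as the paper: unfold $\bind$, enlarge the outer index set to a fixed countable set, distribute $\semimul$ over both countable sums, reorder using associativity of countable sums, then shrink back to $\supp(f(x))$ and refold into $\bind$. The only difference is that the paper enlarges to all of $\histories$ (already countable) rather than your $U$; also, the inclusion of the support of the sum in $U$ needs only that a countable sum of $\addid$ terms is $\addid$ — zerosumfreeness gives the converse inclusion, which you do not need.
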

\begin{proof}
	\begin{align*}
		\big(\semisum{x \in X} f(x)\big) \bind \big(\semisum{y \in Y} g(y)\big)
		&= \lambda \histb .\; \semisum{\hista \in \supp(\semisum{x \in X} f(x))} \big( \semisum{x \in X} f(x) \big)(\hista) \semimul \big( \semisum{y \in Y} g(y) \big)(\hista)(\histb) \tag{definition of $\bind$} \\
		&= \lambda \histb .\; \semisum{\hista \in \supp(\semisum{x \in X} f(x))} \big( \semisum{x \in X} f(x)(\hista) \big) \semimul \big( \semisum{y \in Y} g(y)(\hista)(\histb) \big) \tag{definition of lifted $\semisum{}$} \\
		&= \lambda \histb .\; \semisum{\hista \in \histories} \big( \semisum{x \in X} f(x)(\hista) \big)  \semimul \big( \semisum{y \in Y} g(y)(\hista)(\histb) \big) \tag{prop. of $\omega$-cont. monoids} \\
		&= \lambda \histb .\; \semisum{\hista \in \histories} \big( \semisum{x \in X} f(x)(\hista) \semimul \big( \semisum{y \in Y} g(y)(\hista)(\histb) \big) \big) \tag{\Cref{def:omega-comp-semirings}} \\
		&= \lambda \histb .\; \semisum{\hista \in \histories} \big( \semisum{x \in X} \big( \semisum{y \in Y} f(x)(\hista) \semimul g(y)(\hista)(\histb) \big) \big) \tag{\Cref{def:omega-comp-semirings}} \\
		&= \lambda \histb .\; \semisum{x \in X} \big( \semisum{\hista \in \histories} \big( \semisum{y \in Y} f(x)(\hista) \semimul g(y)(\hista)(\histb) \big) \big) \tag{assoc. of countable sum} \\
		&= \lambda \histb .\; \semisum{x \in X} \big( \semisum{y \in Y} \big( \semisum{\hista \in \histories} f(x)(\hista) \semimul g(y)(\hista)(\histb) \big) \big) \tag{assoc. of countable sum} \\
		&= \lambda \histb .\; \semisum{x \in X} \big( \semisum{y \in Y} \big( \semisum{\hista \in \supp(f(x))} f(x)(\hista) \semimul g(y)(\hista)(\histb) \big) \big) \tag{prop. of $\omega$-cont. monoids} \\
		&= \semisum{x \in X} \big(\lambda \histb .\; \semisum{y \in Y} \big( \semisum{\hista \in \supp(f(x))} f(x)(\hista) \semimul g(y)(\hista)(\histb) \big) \big) \tag{definition of lifted $\semisum{}$} \\
		&= \semisum{x \in X} \big( \semisum{y \in Y} \big(\lambda \histb .\; \semisum{\hista \in \supp(f(x))} f(x)(\hista) \semimul g(y)(\hista)(\histb) \big) \big) \tag{definition of lifted $\semisum{}$} \\
		&= \semisum{x \in X} \big( \semisum{y \in Y} \big(f(x) \bind g(y) \big) \big) \tag{definition of $\bind$} \\
		&= \semisum{x \in X} \semisum{y \in Y} \big(f(x) \bind g(y) \big) \tag{notational equivalence}
	\end{align*}
\end{proof}

\begin{lemma}
	\label{lmm:bind-dist-2}
        If $f \colon \weightings{\histories}$, $g \colon \histories \rightarrow \weightings{\histories}$, and $\wta \in \semi$ then:
	\[f \bind \big(\lambda \hista. \wta \semimul g(\hista)\big) \eeq (f \semimul \wta) \bind g\]
\end{lemma}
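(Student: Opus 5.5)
The identity $f \bind (\lambda \hista.\, \wta \semimul g(\hista)) = (f \semimul \wta) \bind g$ is a pointwise statement about weightings, so the plan is to fix an arbitrary output history $\histb$ and compare both sides as countable sums.

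First, we unfold the left-hand side with the definition of $\bind$ and of lifted scalar multiplication. This gives
\[\semisum{\hista \in \supp(f)} f(\hista) \semimul (\wta \semimul g(\hista)(\histb)),\]
and associativity of $\semimul$ rewrites each summand as $(f(\hista)\semimul \wta)\semimul g(\hista)(\histb)$.

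Next, we unfold the right-hand side in the same way. Since $(f\semimul\wta)(\hista) = f(\hista)\semimul\wta$, we obtain
\[\semisum{\hista \in \supp(f\semimul\wta)} (f(\hista)\semimul \wta) \semimul g(\hista)(\histb).\]
The summands now agree termwise, so only the index sets differ.

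The one point needing care is that mismatch between $\supp(f)$ and $\supp(f\semimul\wta)$. The inclusion $\supp(f\semimul\wta)\subseteq\supp(f)$ holds by annihilation: if $f(\hista)=\addid$, then $f(\hista)\semimul\wta=\addid$. The reverse inclusion can fail when there are zero divisors. However, every index in $\supp(f)\setminus\supp(f\semimul\wta)$ contributes the summand $\addid \semimul g(\hista)(\histb) = \addid$.

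We then argue exactly as in the proof of \Cref{lmm:supp-weigh}: a countable sum is unchanged by adding or removing indices whose terms are $\addid$. This follows from \Cref{def:omega-comp-monoids}, since splitting the index set into a disjoint union lets the $\addid$-part sum to $\addid$, which is neutral for $\semiadd$. Alternatively, one can extend both sums to all of $\histories$ restricted to the countable set $\supp(f)$.

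With the index sets reconciled, the two sums coincide. Since $\histb$ was arbitrary, the two weightings are equal by function extensionality.
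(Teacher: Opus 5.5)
Your proof is correct and takes essentially the same route as the paper. You unfold $\bind$ and the lifted scalar multiplication, reassociate $f(\hista)\semimul(\wta\semimul g(\hista)(\histb))$ into $(f(\hista)\semimul\wta)\semimul g(\hista)(\histb)$, and reconcile the index sets by discarding $\addid$-summands; the only cosmetic difference is that the paper first extends the sum over $\supp(f)$ to all of $\histories$ and then restricts it to $\supp(f\semimul\wta)$, rather than comparing the two supports directly as you do.
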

\begin{proof}
	\begin{align*}
		f \bind \big(\lambda \hista. \wta \semimul g(\hista)\big)
		&= \lambda \histb.\; \semisum{\hista \in \supp(f)} f(\hista) \semimul \big(\lambda \hista.\; \wta \semimul g(\hista) \big)(\hista)(\histb) \tag{definition of $\bind$} \\
		&= \lambda \histb.\; \semisum{\hista \in \supp(f)} f(\hista) \semimul \big(\wta \semimul g(\hista) \big)(\histb) \tag{$\lambda$ application} \\
		&= \lambda \histb.\; \semisum{\hista \in \supp(f)} f(\hista) \semimul \big(\wta \semimul g(\hista)(\histb) \big) \tag{definition of lifted $\semimul$} \\
		&= \lambda \histb.\; \semisum{\hista \in \histories} f(\hista) \semimul \big(\wta \semimul g(\hista)(\histb) \big) \tag{prop. of $\omega$-cont. monoids} \\
		&= \lambda \histb.\; \semisum{\hista \in \histories} \big(f(\hista) \semimul \wta \big) \semimul g(\hista)(\histb) \tag{associativity of $\semimul$} \\
		&= \lambda \histb.\; \semisum{\hista \in \supp(f \semimul \wta)} \big(f(\hista) \semimul \wta \big) \semimul g(\hista)(\histb) \tag{prop. of $\omega$-cont. monoids} \\
		&= (f \semimul \wta) \bind g \tag{definition of $\bind$}
	\end{align*}
\end{proof}

\begin{lemma}
	\label{lmm:supp-seq-sub}
	\[\supp(G(\SEQ{\pola}{\polb})) = \{x \diamond y \mid G(\pola)(x) \semimul G(\polb)(y) \neq \addid \wedge x \in \supp(G(\pola)) \wedge y \in \supp(G(\polb))\}\]
\end{lemma}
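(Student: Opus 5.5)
We will prove the two inclusions by unfolding the definition of $G(\SEQ{\pola}{\polb})$ via the lifted guarded concatenation from \Cref{fig:language}:
\[
G(\SEQ{\pola}{\polb})(z) \eeq \semisum{x \in \supp(G(\pola)),\, y\in\supp(G(\polb)),\, z = x \diamond y} G(\pola)(x) \semimul G(\polb)(y)~.
\]
The statement then reduces to one fact: this (countable) sum is $\neq \addid$ iff at least one summand is $\neq\addid$. Here the set on the right-hand side of the lemma is read as ranging only over pairs $(x,y)$ for which $x\diamond y$ is defined.

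The key auxiliary step is \emph{zero-sum-freeness} of countable sums in an $\omega$-continuous semiring; we prove it first. Positivity of $\semiord$ makes $\addid$ the least element, and monotonicity of $\semiadd$ gives $\semia \semiord \semia \semiadd \semia'$ (as already used in \Cref{lmm:approx-chain}). Hence every summand $\semia_j$ lies below some partial sum $\semisumtop{i=0}{n}\semia_{\enum(i)}$, and every partial sum lies below the supremum $\semisum{i}\semia_i$. Consequently, if $\semisum{i}\semia_i=\addid$, then $\addid \semiord \semia_j \semiord \addid$, and antisymmetry gives $\semia_j=\addid$ for all $j$. Conversely, if all summands are $\addid$, all partial sums are $\addid$ and so is their supremum (\Cref{def:omega-comp-monoids}). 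This is the same argument already invoked informally in \Cref{lmm:supp-union}; here we only extend it from binary to countable sums.

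With this in hand, the two inclusions go as follows. ($\subseteq$) If $z\in\supp(G(\SEQ{\pola}{\polb}))$, the sum above is nonzero, so some summand $G(\pola)(x)\semimul G(\polb)(y)\neq\addid$ with $x\in\supp(G(\pola))$, $y\in\supp(G(\polb))$ and $z=x\diamond y$. Hence $z$ lies in the right-hand set. ($\supseteq$) Given such $x,y$ with $x\diamond y$ defined and $G(\pola)(x)\semimul G(\polb)(y)\neq\addid$, the pair $(x,y)$ is one of the indices of the sum for $z = x\diamond y$. That summand is nonzero, so by zero-sum-freeness $G(\SEQ{\pola}{\polb})(z)\neq \addid$.

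The main obstacle is only bookkeeping. We must make sure the index set is countable so that the sum is well defined; it is in fact finite for fixed $z$, since a decomposition $z = x\diamond y$ is determined by a split point among the $\DUP$s of $z$ together with the packet at the junction. We must also handle the partiality of $\diamond$ cleanly; both points are routine.
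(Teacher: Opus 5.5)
Your proof is correct and follows essentially the same route as the paper's: unfold $G(\SEQ{\pola}{\polb})$ via the lifted guarded concatenation, then use the fact that a countable sum in an $\omega$-continuous semiring is $\neq \addid$ iff some summand is. The only difference is that the paper simply cites zero-sum-freeness, whereas you derive it from positivity, monotonicity of $\semiadd$ and antisymmetry, and you also note (harmlessly) that the index set is finite for each fixed $z$.
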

\begin{proof}
We show that the sets have the same elements, and must then be equal.
\begin{align*}
	z \in \supp(G(\SEQ{\pola}{\polb}))
	&\iff G(\SEQ{\pola}{\polb})(z) \neq \addid \tag{definition of $\supp(G(\SEQ{\pola}{\polb}))$} \\
	&\iff (G(\pola) \diamond G(\polb))(z) \neq \addid \tag{definition of $G(\SEQ{\pola}{\polb})$} \\
	&\iff \semisum{x \in \supp(G(\pola)), y \in \supp(G(\polb)), z = x \diamond y} G(\pola)(x) \semimul G(\polb)(y) \neq \addid \tag{definition of $(G(\pola) \diamond G(\polb))$} \\
	&\iff z = x \diamond y \wedge G(\pola)(x) \semimul G(\polb)(y) \neq \addid \tag{for some $x \in \supp(G(\pola)), y \in \supp(G(\polb))$, by $\omega$-cont. monoids being zerosumfree} \\
	&\iff z \in \{x \diamond y \mid G(\pola)(x) \semimul G(\polb)(y) \neq \addid \wedge x \in \supp(G(\pola)) \wedge y \in \supp(G(\polb))\} \tag{by set condition}
\end{align*}
\end{proof}

\begin{lemma}
\label{lmm:supp-seq}
    If $\pola, \polb \in \pols$, then:
\[
    \semisum{(\gsA, \gsB) \in I} S(\gsA, \gsB) \eeq \semisum{j \in J} \big(\semisum{(\gsA, \gsB) \in K_j} S(\gsA, \gsB)\big)
\]
where:
\begin{align*}
	J &\triangleq \supp(G(\SEQ{\pola}{\polb})) &K_j &\triangleq \{ (x, y) \mid (x, y) \in I \wedge x \diamond y = j\} \\
	I & \triangleq \supp(G(\pola)) \times \supp(G(\polb)) &S(x, y) &\triangleq G(\pola)(x) \semimul G(\polb)(y) \semimul \llb x \diamond y \rrb (\hista)(\histb)
\end{align*}
\end{lemma}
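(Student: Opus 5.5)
The identity is a regrouping of a countable sum along a partition of its index set, so the plan is to reduce it to the generalized associativity of countable sums in $\omega$-continuous monoids, \Cref{def:omega-comp-monoids}.\ref{def:omega-comp-monoids3}.

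First, $I = \supp(G(\pola)) \times \supp(G(\polb))$ is countable, since both supports are countable (weightings have countable support by definition). Next, guarded concatenation $\diamond$ is a partial function. Hence the sets $K_j$ for distinct $j$ are pairwise disjoint: a pair $(x,y)$ determines $x \diamond y$ uniquely. Each $K_j$ is countable as a subset of $I$, and $J$ is countable as the support of a weighting.

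The index sets $K_j$ do not in general cover all of $I$. There are two kinds of leftover pairs $(x,y)\in I$. The first kind are pairs with $x \diamond y$ undefined. For these I will read $S(x,y)$ as $\addid$, which is the convention implicit in the lifted concatenation in \Cref{fig:language}, where only pairs with $x \diamond y$ defined contribute. The second kind are pairs with $x\diamond y = z$ defined but $z \notin J$. Such a $z$ has $G(\SEQ{\pola}{\polb})(z) = \addid$, and by zero-sum-freeness (used in \Cref{lmm:supp-seq-sub}) every summand $G(\pola)(x)\semimul G(\polb)(y)$ with $x\diamond y=z$ is $\addid$. So again $S(x,y)=\addid$. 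Equivalently, \Cref{lmm:supp-seq-sub} shows directly that every pair with nonzero product lands in $J$.

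Let $I_0 = I \setminus \dot{\bigcup}_{j\in J} K_j$. Then
\[
\semisum{(x,y)\in I} S(x,y) \eeq \semisum{(x,y)\in I_0} S(x,y) \semiadd \semisum{j\in J}\semisum{(x,y)\in K_j} S(x,y).
\]
This uses \Cref{def:omega-comp-monoids}.\ref{def:omega-comp-monoids3} with the partition $\{I_0\}\cup\{K_j\}_{j\in J}$, together with \Cref{def:omega-comp-monoids}.\ref{def:omega-comp-monoids2}. The first summand is a countable sum of $\addid$'s. It equals $\addid$, since all partial sums are $\addid$ and the supremum of a constant chain is that constant. This yields the claim.

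The main obstacle is the bookkeeping around partiality of $\diamond$: making precise that $S$ is $\addid$ (or excluded) on undefined concatenations, and justifying that pairs with defined concatenation outside $J$ contribute $\addid$. I will handle the latter via \Cref{lmm:supp-seq-sub} and zero-sum-freeness rather than by computing products directly.
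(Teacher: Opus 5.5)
Your proposal is correct and takes essentially the paper's approach. The paper restricts to the set $I'$ of pairs with $S(x,y)\neq\addid$ and shows that $I'$ is the disjoint union of the nonzero parts $K'_j$ for $j\in J$, using \Cref{lmm:supp-seq-sub} and the pairwise disjointness of the $K_j$; this is the complement-side view of your argument that $S$ vanishes on $I_0$, with both finishing via the partition property of countable sums. Your convention that undefined concatenations contribute $\addid$ matches the paper's implicit treatment, since its proof only places $(x,y)$ into some $K_j$ once $S(x,y)\neq\addid$ forces $x\diamond y$ to be defined.
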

\begin{proof}
	We first consider sums over sets $I' \subseteq I$ and $K' \subseteq K$ of indices leading to non-zero terms:
	\[I' = \{ (x, y) \mid S(x, y) \neq \addid \wedge (x, y) \in I\} \qquad K_j' = \{ (x, y) \mid S(x, y) \neq \addid \wedge (x, y) \in K_j\}\]

	We now show that $I' = \bigcup_{j \in J} K'_j$. Note that $I' \subseteq \bigcup_{j \in J} K'_j$ as:
    \begin{align*}
	    (x, y) \in I'
	&\implies S(x, y) \neq \addid \wedge (x, y) \in I \tag{definition of $I'$} \\
	&\implies S(x, y) \neq \addid \wedge S(x, y) \neq \addid \wedge (x, y) \in I \tag{$A \implies A \wedge A$} \\
	&\implies S(x, y) \neq \addid \wedge G(\pola)(x) \semimul G(\polb)(y) \semimul \llb x \diamond y \rrb (\hista)(\histb) \neq \addid \wedge (x, y) \in I \tag{definition of $S(x, y)$} \\
	&\implies S(x, y) \neq \addid \wedge G(\pola)(x) \semimul G(\polb)(y) \neq \addid \wedge \llb x \diamond y \rrb (\hista)(\histb) = \mulid \wedge (x, y) \in I \tag{$\llb x \diamond y \rrb (\hista)(\histb) \in \{\addid, \mulid \}$} \\
	&\implies S(x, y) \neq \addid \wedge G(\pola)(x) \semimul G(\polb)(y) \neq \addid \wedge (x, y) \in K_{(x \diamond y)} \tag{by definition of $K_{(x \diamond y)}$, as $x \diamond y$ is defined} \\
	&\implies G(\pola)(x) \semimul G(\polb)(y) \neq \addid \wedge (x, y) \in K'_{(x \diamond y)} \tag{definition of $K'_{(x \diamond y)}$} \\
	&\implies (x \diamond y) \in \supp(G(\SEQ{\pola}{\polb})) \wedge (x, y) \in K'_{(x \diamond y)} \tag{\Cref{lmm:supp-seq-sub}} \\
	&\implies (x \diamond y) \in J \wedge (x, y) \in K'_{(x \diamond y)} \tag{definition of $J$} \\
	&\implies (x, y) \in \bigcup_{j \in J} K'_j \tag{when $j = x \diamond y$}
    \end{align*}

    Additionally, $\bigcup_{j \in J} K'_j \subseteq I'$ as:
    \begin{align*}
	    (x, y) \in \bigcup_{j \in J} K'_j
	    &\implies (x, y) \in K'_j \wedge j \in J \tag{for some $j \in J$} \\
	    &\implies (x, y) \in K'_j \tag{$A \wedge B \implies A$} \\
	    &\implies S(x, y) \neq \addid \wedge (x, y) \in K_j \tag{definition of $K'_j$} \\
	    &\implies S(x, y) \neq \addid \wedge (x, y) \in I \tag{definition of $K_j$} \\
	    &\implies (x, y) \in I' \tag{definition of $I'$}
    \end{align*}

    Therefore, $I' = \bigcup_{j \in J} K'_j$. Furthermore, note that for any $i, j \in J$, if $i \neq j$, then:
    \begin{align*}
	K_i \cap K_j
	&= \{ (x, y) \mid (x, y) \in I \wedge x \diamond y = i\} \cap \{ (x, y) \mid (x, y) \in I \wedge x \diamond y = j\} \tag{definition of $K_i$, $K_j$} \\
	&= \{ (x, y) \mid (x, y) \in I \wedge x \diamond y = i \wedge x \diamond y = j\} \tag{$\wedge$ of set conditions} \\
	&= \{ (x, y) \mid (x, y) \in I \wedge i = x \diamond y = j\} \tag{transitivity of equality} \\
	&= \varnothing \tag{as $i \neq j$}
    \end{align*}

    And as $K'_i \subseteq K_i$, then $I' = \bigcup_{j \in J} K_j' = \biguplus_{j \in J} K_j'$. Thus, by properties of $\omega$-continuous monoids, we have:
    \[\semisum{(x, y) \in I} S(x, y) \eeq \semisum{(x, y) \in I'} S(x, y) \eeq \semisum{j \in J} \big(\semisum{(x, y) \in K_j'} S(x, y) \big) \eeq \semisum{j \in J} \big(\semisum{(x, y) \in K_j} S(x, y)\big) \]
\end{proof}

\subsection{Proof of Denotational--Language Correspondence}
\label{pf:deno-lang}
\repeatlemma{thm:deno-lang}
\begin{proof}
By induction on the structure of $e \in \pols$.

Base cases:
\begin{align*}
    \sem{\ctest{\pkt}} (\hista)
    &= \sem{ \SEQ{\ctest{\pkt}}{\cass{\pkt}} } (\hista) \tag{\Cref{lmm:red-props}.\ref{lmm:red-props1}} \\
    &= \mulid \semimul \sem{\SEQ{\ctest{\pkt}}{\cass{\pkt}}} (\hista) \tag{identity of $\semimul$} \\
    &= G(\ctest{\pkt})(\SEQ{\ctest{\pkt}}{\cass{\pkt}}) \semimul \sem{ \SEQ{\ctest{\pkt}}{\cass{\pkt}} } (\hista) \tag{$G(\ctest{\pkt})(\SEQ{\ctest{\pkt}}{\cass{\pkt}}) = \mulid$} \\
    &= \semisum{\gsA \in \supp(G(\ctest{\pkt}))} G(\ctest{\pkt})(\gsA) \semimul \sem{ \gsA } (\hista) \tag{$\supp(G(\ctest{\pkt})) = \{\SEQ{\ctest{\pkt}}{\cass{\pkt}}\}$}
\end{align*}
\begin{align*}
    \sem{ \cass{\pkt} } (\hista)
    &= \sem{ \SUM{\ctest{\pkta} \in \Test} \SEQ{\ctest{\pkta}}{\cass{\pkt}} } (\hista) \tag{\Cref{lmm:red-props}.\ref{lmm:red-props4}} \\
    &= \semisum{\ctest{\pkta} \in \Test} \sem{ \SEQ{\ctest{\pkta}}{\cass{\pkt}} } (\hista) \tag{definition of $\sem{ \ADD{\pola}{\polb} }$} \\
    &= \semisum{\ctest{\pkta} \in \Test} \mulid \semimul \sem{ \SEQ{\ctest{\pkta}}{\cass{\pkt}} } (\hista) \tag{identity of $\semimul$} \\
    &= \semisum{\ctest{\pkta} \in \Test} G(\cass{\pkt})(\SEQ{\ctest{\pkta}}{\cass{\pkt}}) \semimul \sem{ \SEQ{\ctest{\pkta}}{\cass{\pkt}} } (\hista) \tag{$G(\cass{\pkt})(\SEQ{\ctest{\pkta}}{\cass{\pkt}}) \triangleq \mulid$} \\
    &= \semisum{\gsA \in \supp(G(\cass{\pkt}))} G(\cass{\pkt})(\gsA) \semimul \sem{ \gsA } (\hista) \tag{$\supp(G(\cass{\pkt})) = \{\SEQ{\ctest{\pkta}}{\cass{\pkt}} \mid \ctest{\pkta} \in \Test\}$}
\end{align*}
\begin{align*}
    \sem{ \DUP } (\hista)
    &= \sem{ \SUM{\ctest{\pkt} \in \Test} \SEQ{\SEQ{\SEQ{\ctest{\pkt}}{\cass{\pkt}}}{\DUP}}{\cass{\pkt}} } (\hista) \tag{\Cref{lmm:red-props}.\ref{lmm:red-props5}} \\
    &= \semisum{\ctest{\pkt} \in \Test} \sem{ \SEQ{\SEQ{\SEQ{\ctest{\pkt}}{\cass{\pkt}}}{\DUP}}{\cass{\pkt}} } (\hista) \tag{definition of $\sem{ \ADD{\pola}{\polb} }$} \\
    &= \semisum{\ctest{\pkt} \in \Test} \mulid \semimul \sem{  \SEQ{\SEQ{\SEQ{\ctest{\pkt}}{\cass{\pkt}}}{\DUP}}{\cass{\pkt}} } (\hista) \tag{identity of $\semimul$} \\
    &= \semisum{\ctest{\pkt} \in \Test} G(\DUP)(\SEQ{\SEQ{\SEQ{\ctest{\pkt}}{\cass{\pkt}}}{\DUP}}{\cass{\pkt}}) \semimul \sem{ \SEQ{\SEQ{\SEQ{\ctest{\pkt}}{\cass{\pkt}}}{\DUP}}{\cass{\pkt}} } (\hista) \tag{$G(\DUP)(\SEQ{\SEQ{\SEQ{\ctest{\pkt}}{\cass{\pkt}}}{\DUP}}{\cass{\pkt}}) \triangleq \mulid$}\\
    &= \semisum{\gsA \in \supp(G(\DUP))} G(\DUP)(\gsA) \semimul \sem{ \gsA } (\hista) \tag{$\supp(G(\DUP)) = \{\SEQ{\SEQ{\SEQ{\ctest{\pkt}}{\cass{\pkt}}}{\DUP}}{\cass{\pkt}} \mid \ctest{\pkt} \in \Test\}$}
\end{align*}

Inductive steps:
\begin{align*}
    \llb \WEIGH{\wta}{\pola} \rrb (\hista)
    &= \wta \semimul \llb \pola \rrb (\hista) \tag{definition of $\llb \WEIGH{\wta}{\pola} \rrb$} \\
    &= \wta \semimul \semisum{\gsA \in \supp(G(\pola))} G(\pola)(\gsA) \semimul \llb \gsA \rrb (\hista) \tag{IH on $\pola$} \\
    &= \semisum{\gsA \in \supp(G(\pola))} \wta \semimul \big(G(\pola)(\gsA) \semimul \llb \gsA \rrb (\hista)\big) \tag{\Cref{def:omega-comp-semirings}} \\
    &= \semisum{\gsA \in \supp(G(\pola))} \big(\wta \semimul G(\pola)(\gsA)\big) \semimul \llb x \rrb (\hista) \tag{associativity of $\semimul$} \\
    &= \semisum{\gsA \in \supp(G(\pola))} G(\WEIGH{\wta}{\pola})(\gsA) \semimul \llb \gsA \rrb (\hista) \tag{definition of $G(\WEIGH{\wta}{\pola})$} \\
    &= \semisum{\gsA \in \supp(G(\WEIGH{\wta}{\pola}))} G(\WEIGH{\wta}{\pola})(\gsA) \semimul \llb \gsA \rrb (\hista) \tag{\Cref{lmm:supp-weigh}}
\end{align*}
\begin{align*}
    \llb \ADD{\pola}{\polb} \rrb (\hista)
    &= \llb \pola \rrb (\hista) \semiadd \llb \polb \rrb (\hista) \tag{definition of $\llb \ADD{\pola}{\polb} \rrb$} \\
    &= \big(\semisum{gsA \in \supp(G(\pola))} G(\pola)(\gsA) \semimul \llb \gsA \rrb (\hista)\big) \semiadd \big(\semisum{\gsB \in \supp(G(\polb))} G(\polb)(\gsB) \semimul \llb \gsB \rrb (\hista)\big) \tag{IH on $\pola$ and $\polb$} \\
    &= \semisum{\gsC \in \supp(G(\pola)) \cup \supp(G(\polb))} \big(G(\pola)(\gsC) \semimul \llb \gsC \rrb (\hista)\big) \semiadd \big(G(\polb)(\gsC) \semimul \llb \gsC \rrb (\hista)\big) \tag{\Cref{lmm:supp-sum}} \\
    &= \semisum{\gsC \in \supp(G(\pola)) \cup \supp(G(\polb))} \big(G(\pola)(\gsC) \semiadd G(\polb)(\gsC)\big) \semimul \llb \gsC \rrb (\hista) \tag{dist. of $\semimul$ over $\semiadd$} \\
    &= \semisum{\gsC \in \supp(G(\ADD{\pola}{\polb}))} \big(G(\pola)(\gsC) \semiadd G(\polb)(\gsC)\big) \semimul \llb \gsC \rrb (\hista) \tag{\Cref{lmm:supp-union}} \\
    &= \semisum{\gsC \in \supp(G(\ADD{\pola}{\polb}))} G(\ADD{\pola}{\polb})(\gsC) \semimul \llb \gsC \rrb (\hista) \tag{definition of $G(\ADD{\pola}{\polb})$}
\end{align*}
\begin{align*}
    \llb \SEQ{\pola}{\polb} \rrb (\hista)
    &= \llb \pola \rrb (\hista) \bind \llb \polb \rrb \tag{definition of $\llb\SEQ{\pola}{\polb}\rrb$} \\
    &= \big(\semisum{\gsA \in \supp(G(\pola))} G(\pola)(\gsA) \semimul \llb \gsA \rrb (\hista) \big) \bind \big(\lambda \histb. \semisum{\gsB \in \supp(G(\polb))} G(\polb)(\gsB) \semimul \llb \gsB \rrb (\histb) \big) \tag{IH on $\pola$ and $\polb$} \\
    &= \semisum{\gsA \in \supp(G(\pola))} \semisum{\gsB \in \supp(G(\polb))} G(\pola)(\gsA) \semimul \llb \gsA \rrb (\hista) \bind \lambda \histb. G(\polb)(\gsB) \semimul \llb \gsB \rrb (\histb) \tag{\Cref{lmm:bind-dist-1}}\\
    &= \semisum{\gsA \in \supp(G(\pola))} \semisum{\gsB \in \supp(G(\polb))} G(\pola)(\gsA) \semimul \llb \gsA \rrb (\hista) \semimul G(\polb)(\gsB) \bind \lambda \histb. \llb \gsB \rrb (\histb) \tag{\Cref{lmm:bind-dist-2}}\\
    &= \semisum{\gsA \in \supp(G(\pola))} \semisum{\gsB \in \supp(G(\polb))} G(\pola)(\gsA) \semimul G(\polb)(\gsB) \semimul \llb \gsA \rrb (\hista) \bind \lambda \histb. \llb \gsB \rrb (\histb) \tag{$\llb \gsA \rrb (\hista)(\histb) \in \{ \addid, \mulid\}$}\\
    &= \semisum{\gsA \in \supp(G(\pola))} \semisum{\gsB \in \supp(G(\polb))} G(\pola)(\gsA) \semimul G(\polb)(\gsB) \semimul \llb \SEQ{\gsA}{\gsB} \rrb (\hista) \tag{definition of $\llb \SEQ{\gsA}{\gsB} \rrb$}\\
    &= \semisum{\gsA \in \supp(G(\pola))} \semisum{\gsB \in \supp(G(\polb))} G(\pola)(\gsA) \semimul G(\polb)(\gsB) \semimul \llb \gsA \diamond \gsB \rrb (\hista) \tag{sequecing of guarded strings}\\
    &= \semisum{\gsC \in \supp(G(\SEQ{\pola}{\polb}))} \big(\semisum{\gsA \in \supp(G(\pola)), \gsB \in \supp(G(\polb)), \gsC = \gsA \diamond \gsB} G(\pola)(\gsA) \semimul G(\polb)(\gsB) \semimul \llb \gsA \diamond \gsB \rrb (\hista)\big) \tag{\Cref{lmm:supp-seq}} \\
    &= \semisum{\gsC \in \supp(G(\SEQ{\pola}{\polb}))} \big(\semisum{\gsA \in \supp(G(\pola)), \gsB \in \supp(G(\polb)), \gsC = \gsA \diamond \gsB} G(\pola)(\gsA) \semimul G(\polb)(\gsB) \semimul \llb \gsC \rrb (\hista)\big) \tag{$\gsC = \gsA \diamond \gsB$} \\
    &= \semisum{\gsC \in \supp(G(\SEQ{\pola}{\polb}))} \big(\semisum{\gsA \in \supp(G(\pola)), \gsB \in \supp(G(\polb)), \gsC = \gsA \diamond \gsB} G(\pola)(\gsA) \semimul G(\polb)(\gsB)\big) \semimul \llb \gsC \rrb (\hista) \tag{\Cref{def:omega-comp-semirings}} \\
    &= \semisum{\gsC \in \supp(G(\SEQ{\pola}{\polb}))} (G(\pola) \diamond G(\polb))(\gsC) \semimul \llb \gsC \rrb (\hista) \tag{Definition of $G(\pola) \diamond G(\polb)$} \\
    &= \semisum{\gsC \in \supp(G(\SEQ{\pola}{\polb}))} G(\SEQ{\pola}{\polb})(\gsC) \semimul \llb \gsC \rrb (\hista) \tag{definition of $G(\SEQ{\pola}{\polb})$}
\end{align*}
\begin{align*}
    \llb \ITER{\pola} \rrb (\hista)
    &= \semisum{n \in \nats} \big( \sem{ \NFOLD{\pola} } (\hista) \big) \tag{definition of $\sem{\ITER{\pola}}$} \\
    &= \semisum{n \in \nats} \big( \semisum{\gsA \in \supp(G(\NFOLD{\pola}))} G(\NFOLD{\pola})(\gsA) \semimul \sem{\gsA}(\hista) \big) \tag{IH on $\NFOLD{\pola}$} \\
    &= \semisum{n \in \nats} \big( \semisum{\gsA \in \GS} G(\NFOLD{\pola})(\gsA) \semimul \sem{\gsA}(\hista) \big) \tag{prop. of $\omega$-cont. monoids}\\
    &= \semisum{\gsA \in \GS}  \big( \semisum{n \in \nats} G(\NFOLD{\pola})(\gsA) \semimul \sem{\gsA}(\hista) \big) \tag{assoc. of countable sum} \\
    &= \semisum{\gsA \in \GS}  \big( \semisum{n \in \nats} G(\NFOLD{\pola})(\gsA) \big) \semimul \sem{\gsA}(\hista) \tag{\Cref{def:omega-comp-semirings}} \\
    &= \semisum{\gsA \in \GS} G(\ITER{\pola})(\gsA) \semimul \sem{\gsA}(\hista) \tag{definition of $G(\ITER{\pola})$} \\
    &= \semisum{\gsA \in \supp(G(\ITER{\pola}))} G(\ITER{\pola})(\gsA) \semimul \sem{\gsA}(\hista) \tag{prop. of $\omega$-cont. monoid}
\end{align*}
\end{proof}

\section{\wnetkat Automata}
\subsection{Proof of Soundness of Thompson}
\label{pf:thompson-soundness}
\repeatlemma{thm:thompson-soundness}
\begin{proof}
    By induction on $\pola$. A full mechanization of our proof is available at \url{https://github.com/cornell-pl/wnetkat-lean/blob/pldi2026/WeightedNetKAT/Papers/PLDI2026.lean}.

\end{proof}

\section{Decidability Results for \wnetkat}
\label{appendix:decidability}

This appendix section includes the proofs of the main decidability results in
\Cref{sec:decidability}. We include a few additional definitions omitted in the
main body of the paper, beginning with the decision problems at the level of
\wnetkat automata that we show to be decidable.

\begin{definition}[Decision problems for \wnetkat automata]
\label{def:dec-problems}
    Let $\wnka$ be a \wnetkat automaton over a computable semiring. We define
    the following decision problems:
    \begin{enumerate}
        \item\label{def:dec-problems1} We say that $\wnka$ is \emph{$\wta$-safe}
            (denoted $\wnka \in \autsafe{\wta}$) iff \
            $\forall \gsA \in \GS \colon \sem{\wnka}(\gsA) \semiord \wta$
        \item\label{def:dec-problems2} We say that $\wnka$ is \emph{$\wta$-reachable}
            (denoted $\wnka \in \autreach{\wta}$) iff \
            $\exists \gsA \in \GS \colon \sem{\wnka}(\gsA) \semiordrev \wta$
    \end{enumerate}
\end{definition}

\begin{definition}[Extended family of transition functions and output weights]
\label{def:extended-fam}
    Given a \wnetkat automaton $\wnka$:
    \begin{enumerate}
        \item\label{def:extended-trans}
            We extend the family of transition functions $\wnkatrans{\pkta}{\pktb}$ to
            define an \emph{extended family of transition functions}
            $\{\wnkatransext{\gsA}\}_{\gsA \in \GS} \colon \GS \to \weightings{\wnkastates}$:
            \[
                \wnkatransext{\oldctestA\oldctestB} = \idmatrix_{\wnkastates\times\wnkastates}
                \qquad\text{and}\qquad
                \wnkatransext{\oldctestA \SEQN \oldctestB \SEQN \gsB} =
                    \wnkatrans{\oldctestA}{\oldctestB} \times \wnkatransext{\oldctestB \SEQN \gsB}~.
            \]
        \item\label{def:extended-out}
            We extend the family of output weights $\wnkaout{\oldctestA}{\oldctestB}$
            to an \emph{extended family of output weightings}
            $\{\wnkaoutext{\gsA}\}_{\gsA \in \GS} \colon \GS \to \weightings{\GS}$:
            \[
            \wnkaoutext{\oldctestA\oldctestB} \triangleq \wnkaout{\oldctestA}{\oldctestB} \qquad\qquad
            \wnkaoutext{\oldctestA\oldctestB\,\DUP\,\gsA} \triangleq \wnkatrans{\oldctestA}{\oldctestB} \times \wnkaoutext{\oldctestB\gsA}~.
            \]
    \end{enumerate}
\end{definition}

Notice that, $\wnkatransext{}$ and $\wnkaoutext{}$ are merely inductive definitions
corresponding to the weighted language of guarded strings recognized by $\wnka$, i.e.\ :
\begin{align*}
    \sem{\wnka}(\gsA = \oldctestA_0 \SEQN \oldctestA_1 \SEQN \DUP \SEQN \cdots \SEQN \oldctestA_n) &
        \triangleq \wnkainit \times \wnkatrans{\oldctestA_0}{\oldctestA_1} \times \cdots \times \wnkatrans{\oldctestA_{n-2}}{\oldctestA_{n-1}} \times \wnkaout{\oldctestA_{n-1}}{\oldctestA_n}
            &=& \ \wnkainit \times \wnkatransext{\gsA} \times \wnkaout{\oldctestA_{n-1}}{\oldctestA_{n}} \\
            &&=& \ \wnkainit \times \wnkaoutext{\gsA}
\end{align*}

We will use the extended versions of these functions to show that we can define alternate
(but equivalent) definitions for the weighted language of guarded strings recognized by a
\wnetkat automaton (\Cref{appendix:unf-sem,appendix:runs-wnka}). These alternate
definitions will be useful for proving the decidability of both $\semithresh$-safety
and $\semithresh$-reachability.

\subsection{Packet Configuration Semantics of \wnetkat Automata}
\label{appendix:unf-sem}
In this section, we give an alternate formulation of the weighted language of
guarded strings recognized by \wnetkat automaton. This alternate definition will
be useful for proving the decidability of $\semithresh$-safety. In particular, we
make the carry-on packet in the transitions of the \wnetkat automaton explicit by
internalizing it in the states of the automaton.

\begin{definition}[Packet Configuration Semantics]
    Given a \wnetkat automaton $\wnka$, we define its \emph{packet configuration semantics} $\altsem{\wnka} \colon \GS \rightarrow \semi$ by:
    \[
	\altsem{\wnka}(\oldctestA_0 \SEQN \oldctestA_1 \SEQN \DUP \SEQN \cdots \SEQN \oldctestA_n) \triangleq \wfainit \times \wfatrans{\oldctestA_0} \times \wfatrans{\oldctestA_1} \times \cdots \times \wfatrans{\oldctestA_n} \times \wfaout~,
    \]
    	where $\wfainit \colon \weightings{\wfastates}, \wfatrans{\oldctestA} \colon \weightings{\wfastates \times \wfastates}, \wfaout \colon \weightings{\wfastates}$ for $\wfastates \triangleq (\wnkastates \times \packets) + \{\wfastartstate, \wfaendstate\}$ are defined by:
	\begin{align*}
        \wfainit &\triangleq \unit(\wfastartstate)	&\quad \wfaout &\triangleq \unit(\wfaendstate)	&\quad    \wfatrans{\oldctestB}(\wfastatea, \oldctestA)(\wfastateb, \oldctestB) &\triangleq \wnkatrans{\oldctestA}{\oldctestB}(\wfastatea)(\wfastateb) \\
	    \wfatrans{\oldctestB}(\wfastartstate)(\wfastatea, \oldctestB) &\triangleq \wnkainit(\wfastatea) &\quad \wfatrans{\oldctestB}(\wfastatea, \oldctestA)(\wfaendstate) &\triangleq \wnkaout{\oldctestA}{\oldctestB}(\wfastatea) &\quad \wfatrans{\oldctestB}(-)(-) &\triangleq \addid
	\end{align*}
\end{definition}
As for the standard definition of $\wnetkat$ automata, we additionally define
an extended version of the family of alternate transition functions
$\{\wfatransext{\gsA}\}_{\gsA \in (\packets \cup \{\DUP\})^*} \colon (\packets \cup \{\DUP\})^* \to \weightings{\wfastates \times \wfastates}$:
\[
\wfatransext{\varepsilon} \triangleq \idmatrix_{\wfastates \times \wfastates} \qquad\qquad \wfatransext{\DUP\,\gsA} \triangleq \wfatransext{\gsA} \qquad\qquad \wfatransext{\oldctestA\gsA} \triangleq \wfatrans{\oldctestA} \times \wfatransext{\gsA}~.
\]

Again, $\wfatransext{\gsA}$ is merely an inductive definition corresponding to the
packet configuration semantics of a \wnetkat automaton, i.e.\ :
\begin{align*}
\altsem{\wnka}(\gsA = \oldctestA_0 \SEQN \oldctestA_1 \SEQN \DUP \SEQN \cdots \SEQN \oldctestA_n) &\triangleq \wfainit \times \wfatrans{\oldctestA_0} \times \wfatrans{\oldctestA_1} \times \cdots \times \wfatrans{\oldctestA_n} \times \wfaout &&= \wfainit \times \wfatransext{\gsA} \times \wfaout
\end{align*}

\begin{lemma}[Equivalence of packet configuration semantics of \wnetkat automata]
    \label{lmm:unf-sem}
    For all $\gsA \in \GS$:
    \[
        \sem{\wnka}(\gsA) = \altsem{\wnka}(\gsA)~.
    \]
\end{lemma}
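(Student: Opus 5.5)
The plan is to compute the matrix product defining $\altsem{\wnka}(\gsA)$ explicitly and show that it collapses, path by path, to the product defining $\sem{\wnka}(\gsA)$. Fix $\gsA = \oldctestA_0 \SEQN \oldctestA_1 \SEQN \DUP \SEQN \cdots \SEQN \oldctestA_n$ with $n \geq 1$. We expand $\wfainit \times \wfatrans{\oldctestA_0} \times \cdots \times \wfatrans{\oldctestA_n} \times \wfaout$ as a sum over sequences of states $s_0, s_1, \dots, s_{n+1} \in \wfastates$, where $s_0 = \wfastartstate$ and $s_{n+1} = \wfaendstate$ are forced by the units $\wfainit = \unit(\wfastartstate)$ and $\wfaout = \unit(\wfaendstate)$. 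This gives a finite sum, since $\wfastates$ is finite, of products $\wfatrans{\oldctestA_0}(s_0)(s_1) \semimul \cdots \semimul \wfatrans{\oldctestA_n}(s_n)(s_{n+1})$.

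Next, we use the case definition of $\wfatrans{}$ to identify which sequences contribute a non-$\addid$ weight.
\begin{itemize}
\item The only transitions out of $\wfastartstate$ go to states $(\wfastatea,\oldctestB)$ with $\oldctestB$ equal to the consumed packet. Hence $s_1 = (q_1,\oldctestA_0)$, with weight $\wnkainit(q_1)$.
\item Between pair states, $\wfatrans{\oldctestA_i}$ sends $(q,\oldctestA)$ only to $(q',\oldctestA_i)$, with weight $\wnkatrans{\oldctestA}{\oldctestA_i}(q)(q')$. By induction the packet component of $s_{i+1}$ is therefore $\oldctestA_i$.
\item The final step into $\wfaendstate$ carries weight $\wnkaout{\oldctestA_{n-1}}{\oldctestA_n}(q_n)$.
\item All other entries are $\addid$, and annihilation kills those sequences.
\end{itemize}
One bookkeeping issue must be settled. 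There are $n+1$ matrices $\wfatrans{}$, but the $\wnetkat$ product has $\wnkainit$, then $n-1$ transition matrices, then $\wnkaout{}{}$, which also gives $n+1$ factors. The step from $\wfastartstate$ consumes $\oldctestA_0$, the $n-1$ middle steps consume $\oldctestA_1,\dots,\oldctestA_{n-1}$, and the exit consumes $\oldctestA_n$. This works only if entering $\wfaendstate$ is done by $\wfatrans{\oldctestA_n}$ from $(q,\oldctestA_{n-1})$, exactly as the definition $\wfatrans{\oldctestB}(\wfastatea,\oldctestA)(\wfaendstate) = \wnkaout{\oldctestA}{\oldctestB}(\wfastatea)$ says. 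So the alignment checks out.

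The surviving sum is therefore
\[
\sum_{q_1,\dots,q_n \in \wnkastates} \wnkainit(q_1) \semimul \wnkatrans{\oldctestA_0}{\oldctestA_1}(q_1)(q_2) \semimul \cdots \semimul \wnkatrans{\oldctestA_{n-2}}{\oldctestA_{n-1}}(q_{n-1})(q_n) \semimul \wnkaout{\oldctestA_{n-1}}{\oldctestA_n}(q_n).
\]
This is exactly the expansion of $\wnkainit \times \wnkatrans{\oldctestA_0}{\oldctestA_1} \times \cdots \times \wnkaout{\oldctestA_{n-1}}{\oldctestA_n} = \sem{\wnka}(\gsA)$.

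To make this rigorous without handling index sequences directly, we would argue by induction on the length of $\gsA$ using the extended families. The invariant we would prove is
\[
\wfatransext{\oldctestB\gsB}\big((q,\oldctestA),\wfaendstate\big) = \big(\wnkatrans{\oldctestA}{\oldctestB} \times \wnkaoutext{\oldctestB\gsB}\big)(q),
\]
with the base case $\wfatransext{\oldctestB}\big((q,\oldctestA),\wfaendstate\big) = \wnkaout{\oldctestA}{\oldctestB}(q)$. The invariant follows from one row expansion of $\wfatrans{\oldctestB}$ and the fact that only pair states $(q',\oldctestB)$, or $\wfaendstate$ at the last step, are hit. The $\DUP$ symbols are skipped by $\wfatransext{\DUP\,\gsA} = \wfatransext{\gsA}$.

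Finally, one more expansion through $\wfastartstate$ yields $\altsem{\wnka}(\gsA) = \wnkainit \times \wnkaoutext{\gsA} = \sem{\wnka}(\gsA)$, using the identity stated after the definition of the extended families.

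The main obstacle is the off-by-one alignment between the extended families. In $\wnkaoutext{}$ the first packet pair $\oldctestA\oldctestB$ is consumed as a transition label. In $\wfatransext{}$, by contrast, the first packet is consumed by the entry from $\wfastartstate$. We would state the auxiliary invariant carefully on suffixes that start at a pair state, as above, before closing the argument.
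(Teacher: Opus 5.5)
Your proposal is correct and follows the paper's proof. Your invariant $\wfatransext{\oldctestB\gsB}((q,\oldctestA),\wfaendstate) = (\wnkatrans{\oldctestA}{\oldctestB}\times\wnkaoutext{\oldctestB\gsB})(q)$, together with its base case, is the paper's inductive claim $\wnkaoutext{\oldctestA\gsA\oldctestB}(q) = \wfatransext{\gsA\oldctestB}(q,\oldctestA)(\wfaendstate)$ with one step of the definition of $\wnkaoutext{}$ unfolded; it is proved by the same row expansion that discards $\wfaendstate$ as an intermediate state, and closed by the same single expansion through $\wfastartstate$ (your explicit path-sum computation is a correct unrolling of that induction).
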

\begin{proof}
    We now show by induction that for any $\wnkastatea \in \wnkastates$, and $\oldctestA\gsA\oldctestB \in \GS$, $\wnkaoutext{\oldctestA\gsA\oldctestB}(\wnkastatea) = \wfatransext{\gsA\oldctestB}(\wnkastatea, \oldctestA)(\wfaendstate)$.

    For the base case ($\gsA = \eword$) we have that:
    \begin{align*}
	\wfatransext{\oldctestB}(\wnkastatea, \oldctestA)(\wfaendstate)
	&= \wfatrans{\oldctestB}(\wnkastatea, \oldctestA)(\wfaendstate) \tag{by definition of $\wfatransext{}$} \\
	&= \wnkaout{\oldctestA}{\oldctestB}(\wnkastatea) \tag{by definition of $\wfatrans{}$} \\
	&= \wnkaoutext{\oldctestA\oldctestB}(\wfastatea) \tag{by definition of $\wnkaoutext{}$}
    \end{align*}

    For the inductive case ($\gsA = \oldctestB\,\DUP\,\gsB$) we have that:
    \begin{align*}
	\wfatransext{\oldctestB\,\DUP\,\gsB\oldctestC}(\wnkastatea, \oldctestA)(\wfaendstate)
	&= (\wfatrans{\oldctestB} \times \wfatransext{\gsB\oldctestC})(\wnkastatea, \oldctestA)(\wfaendstate) \tag{by definition of $\wfatransext{}$} \\
	&= \semisum{\wfastateb \in \wfastates} \wfatrans{\oldctestB}(\wnkastatea, \oldctestA)(\wfastateb) \times \wfatransext{\gsB\oldctestC}(\wfastateb)(\wfaendstate) \tag{applied matrix multiplication}
    \end{align*}

    Now observe that although this sum is over $\wfastateb \in \wfastates$, only a very restrictive subset of $\wfastates$ gives rise to non-zero terms. In particular, by definition of $\wfatrans{\oldctestB}$, $\wfastateb$ can be at most of the form $(\wnkastatec, \oldctestB)$ (for the fixed $\oldctestB$), or $\wfaendstate$. Furhermore, by definition of $\wfatransext{\gsB\oldctestC}$, $\wfastateb$ cannot be the state $\wfaendstate$, as otherwise $\wfatransext{\gsB\oldctestC}$ would unavoidably multiply $\wfatrans{\zeta}(\wfaendstate)(-)$, for some packet $\zeta$, which by definition would also lead to a zero term. As such this sum can be rewritten to:
    \begin{align*}
	\semisum{\wfastateb \in \wfastates} \wfatrans{\oldctestB}(\wnkastatea, \oldctestA)(\wfastateb) \times \wfatransext{\gsB\oldctestC}(\wfastateb)(\wfaendstate)
	&= \semisum{\wnkastateb \in \wnkastates} \wfatrans{\oldctestB}(\wnkastatea, \oldctestA)(\wnkastateb, \oldctestB) \times \wfatransext{\gsB\oldctestC}(\wnkastateb, \oldctestB)(\wfaendstate) \tag{by above} \\
	&= \semisum{\wnkastateb \in \wnkastates} \wnkatrans{\oldctestA}{\oldctestB}(\wnkastatea)(\wnkastateb) \times \wfatransext{\gsB\oldctestC}(\wnkastateb, \oldctestB)(\wfaendstate) \tag{by definition of $\wfatrans{}$} \\
	&= \semisum{\wnkastateb \in \wnkastates} \wnkatrans{\oldctestA}{\oldctestB}(\wnkastatea)(\wnkastateb) \times \wnkaoutext{\oldctestB\gsB\oldctestC}(\wnkastateb) \tag{by the inductive hypothesis} \\
	&= (\wnkatrans{\oldctestA}{\oldctestB} \times \wnkaoutext{\oldctestB\gsB\oldctestC})(\wnkastatea) \tag{applied matrix multiplication} \\
	&= \wnkaoutext{\oldctestA\oldctestB\,\DUP\,\gsB\oldctestC}(\wnkastatea) \tag{by definition of $\wnkaoutext{}$}
    \end{align*}

    We can then finally conclude:
    \begin{align*}
	\altsem{\wnka}(\oldctestA \SEQN \gsA \SEQN \oldctestB)
	&= \wfainit \times \wfatransext{\oldctestA\gsA\oldctestB} \times \wfaout \tag{by above} \\
	&= \wfainit \times \wfatrans{\oldctestA} \times \wfatransext{\gsA\oldctestB} \times \wfaout \tag{by definition of $\wfatransext{}$} \\
	&= (\wfainit \times \wfatrans{\oldctestA}) \times \wfatransext{\gsA\oldctestB} \times \wfaout \tag{associativity of matrix multiplication} \\
	&= \semisum{\wfastatea \in \wfastates} (\wfainit(\wfastatea) \semimul \wfatrans{\oldctestA}(\wfastatea)) \times \wfatransext{\gsA\oldctestB} \times \wfaout \tag{applied matrix multiplication} \\
	&= \wfainit(\wfastartstate) \semimul \wfatrans{\oldctestA}(\wfastartstate) \times \wfatransext{\gsA\oldctestB} \times \wfaout \tag{by definition of $\wfainit$} \\
	&= \wfatrans{\oldctestA}(\wfastartstate) \times \wfatransext{\gsA\oldctestB} \times \wfaout \tag{multiplicative identity} \\
	&= \big(\semisum{\wfastatea \in \wfastates} \wfatrans{\oldctestA}(\wfastartstate)(\wfastatea) \semimul \wfatransext{\gsA\oldctestB}(\wfastatea)\big) \times \wfaout \tag{applied matrix multiplication} \\
	&= \big(\semisum{\wnkastatea \in \wnkastates} \wfatrans{\oldctestA}(\wfastartstate)(\wnkastatea, \oldctestA) \semimul \wfatransext{\gsA\oldctestB}(\wnkastatea, \oldctestA)\big) \times \wfaout \tag{by definition of $\wfatrans{}$} \\
	&= \big(\semisum{\wnkastatea \in \wnkastates} \wnkainit(\wnkastatea) \semimul \wfatransext{\gsA\oldctestB}(\wnkastatea, \oldctestA) \big) \times \wfaout \tag{by definition of $\wfatrans{}$} \\
	&= \semisum{\wnkastatea \in \wnkastates} \wnkainit(\wnkastatea) \semimul \big( \wfatransext{\gsA\oldctestB}(\wnkastatea, \oldctestA) \times \wfaout \big) \tag{distributivity and associativity} \\
	&= \semisum{\wnkastatea \in \wnkastates} \wnkainit(\wnkastatea) \semimul \big( \semisum{\wfastateb \in \wfastates} \wfatransext{\gsA\oldctestB}(\wnkastatea, \oldctestA)(\wfastateb) \times \wfaout(\wfastateb) \big) \tag{applied matrix multiplication} \\
	&= \semisum{\wnkastatea \in \wnkastates} \wnkainit(\wnkastatea) \semimul \big(\wfatransext{\gsA\oldctestB}(\wnkastatea, \oldctestA)(\wfaendstate) \times \wfaout(\wfaendstate) \big) \tag{by definition of $\wfaout$} \\
	&= \semisum{\wnkastatea \in \wnkastates} \wnkainit(\wnkastatea) \semimul \wfatransext{\gsA\oldctestB}(\wnkastatea, \oldctestA)(\wfaendstate) \tag{multiplicative identity} \\
	&= \semisum{\wnkastatea \in \wnkastates} \wnkainit(\wnkastatea) \semimul \wnkaoutext{\oldctestA\gsA\oldctestB}(\wnkastatea) \tag{by $\wfatransext{} / \wnkaoutext{}$ conversion} \\
	&= \wnkainit \times \wnkaoutext{\oldctestA\gsA\oldctestB} \tag{applied matrix multiplication} \\
	&= \sem{\wnka}(\oldctestA \SEQN \gsA \SEQN \oldctestB) \tag{by above}
    \end{align*}
\end{proof}

\begin{lemma}
    \label{lmm:unf-gs-pk}
    \[\semisum{\gsA \in \GS} \wfainit \times \wfatransext{\gsA} \times \wfaout \eeq \semisum{\gsA \in \packets^*} \wfainit \times \wfatransext{\gsA} \times \wfaout\]
\end{lemma}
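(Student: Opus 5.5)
The plan is to reduce the claim to a reindexing of countable sums along the map that erases $\DUP$, and then to check that the extra summands on the right-hand side vanish. Let $e \colon \GS \to \packets^*$ erase every occurrence of $\DUP$, so that $e(\oldctestA_0 \SEQN \oldctestA_1 \SEQN \DUP \SEQN \cdots \SEQN \DUP \SEQN \oldctestA_n) = \oldctestA_0\oldctestA_1\cdots\oldctestA_n$. By the defining clause $\wfatransext{\DUP\,\gsA} = \wfatransext{\gsA}$, a straightforward induction on $\gsA$ gives $\wfatransext{\gsA} = \wfatransext{e(\gsA)}$. Consequently each summand on the left equals the summand on the right indexed by $e(\gsA)$.

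Next I will show that $e$ is a bijection from $\GS$ onto $\{\gsB \in \packets^* \mid |\gsB| \geq 2\}$. Since $\GS \cong \packets \cdot (\packets \cdot \DUP)^* \cdot \packets$, the $\DUP$s occur exactly between consecutive packets from the second packet onward. So a packet string $\oldctestA_0\cdots\oldctestA_n$ with $n \geq 1$ has the unique preimage obtained by inserting $\DUP$ before each $\oldctestA_i$ with $2 \leq i \leq n$. With this, the left-hand side equals $\semisum{\gsB \in \packets^*,\, |\gsB|\geq 2} \wfainit \times \wfatransext{\gsB} \times \wfaout$. This step uses that countable sums in $\omega$-continuous monoids are invariant under reindexing by a bijection, since their value is independent of the enumeration chosen.

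It then remains to show that the summands for $|\gsB| \leq 1$ are $\addid$. The split $\packets^* = \{|\gsB|\geq 2\} \,\dot\cup\, \{|\gsB|\leq 1\}$ together with the partition property of \Cref{def:omega-comp-monoids} then yields the claim.
\begin{itemize}
\item For $\gsB = \eword$ we have $\wfainit \times \idmatrix \times \wfaout = \wfainit \times \wfaout = \semisum{\wfastatea}\unit(\wfastartstate)(\wfastatea) \semimul \unit(\wfaendstate)(\wfastatea) = [\wfastartstate = \wfaendstate] = \addid$, because these are distinct states of the coproduct.
\item For $\gsB = \oldctestA$ a single packet, the product collapses to $\wfatrans{\oldctestA}(\wfastartstate)(\wfaendstate)$. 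This entry falls under the default clause $\wfatrans{\oldctestA}(-)(-) \triangleq \addid$, since from $\wfastartstate$ the only nonzero transitions go to states $(\wfastatea,\oldctestA)$.
\end{itemize}
I expect no real obstacle. The only point needing care is the bookkeeping: verifying that $e$ is a bijection onto strings of length at least two, and justifying the reindexing and partition of countable sums, which are exactly the properties listed in \Cref{def:omega-comp-monoids}.
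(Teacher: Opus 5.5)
Your proposal is correct and follows essentially the same route as the paper: the paper also uses the $\DUP$-erasing bijection $\phi\colon\GS\to\packets^{\geq 2}$, shows that the summands indexed by $\packets^0$ and $\packets^1$ are $\addid$, and reindexes the remaining sum. The only small difference is that the paper first collapses each summand to the entry $\wfatransext{\gsA}(\wfastartstate)(\wfaendstate)$ and proves invariance under $\phi$ only for that entry, whereas you prove the full matrix identity $\wfatransext{\gsA} = \wfatransext{e(\gsA)}$ directly; either works.
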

\begin{proof}
    Firstly, for convenience we show that $\wfainit \times \wfatransext{\gsA} \times \wfaout = \wfatransext{\gsA}(\wfastartstate)(\wfaendstate)$:
    \begin{align*}
	\wfainit \times \wfatransext{\gsA} \times \wfaout
	&= (\wfainit \times \wfatransext{\gsA}) \times \wfaout \tag{matrix multiplication associativity} \\
	&= \big( \semisum{\wfastatea \in \wfastates} \wfainit(\wfastatea) \semimul \wfatransext{\gsA}(\wfastatea) \big) \times \wfaout \tag{applied matrix multiplication} \\
	&= \big( \wfainit(\wfastartstate) \semimul \wfatransext{\gsA}(\wfastartstate) \big) \times \wfaout \tag{by definition of $\wfainit$} \\
	&= \wfatransext{\gsA}(\wfastartstate) \times \wfaout \tag{multiplicative identity} \\
	&= \semisum{\wfastatea \in \wfastates} \wfatransext{\gsA}(\wfastartstate)(\wfastatea) \semimul \wfaout(\wfastatea) \tag{applied matrix multiplication} \\
	&= \wfatransext{\gsA}(\wfastartstate)(\wfaendstate) \semimul \wfaout(\wfaendstate) \tag{by definition of $\wfaout$} \\
	&= \wfatransext{\gsA}(\wfastartstate)(\wfaendstate) \tag{multiplicative identity}
    \end{align*}

    Now we can show that:
    \begin{align*}
	\semisum{\gsA \in \packets^*} \wfainit \times \wfatransext{\gsA} \times \wfaout
	&= \semisum{\gsA \in \packets^*} \wfatransext{\gsA}(\wfastartstate)(\wfaendstate) \tag{by above} \\
	&= \semisum{\gsA \in \packets^0 \, \dot{\cup} \, \packets^1 \, \dot{\cup} \, \packets^{>=2}} \wfatransext{\gsA}(\wfastartstate)(\wfaendstate) \tag{$\packets^* \eeq \packets^0 \, \dot{\cup} \, \packets^1 \, \dot{\cup} \, \packets^{>=2}$} \\
	&= \semisum{\gsA \in \packets^0} \wfatransext{\gsA}(\wfastartstate)(\wfaendstate) \semiadd \semisum{\gsA \in \packets^1} \wfatransext{\gsA}(\wfastartstate)(\wfaendstate) \semiadd \semisum{\gsA \in \packets^{>=2}} \wfatransext{\gsA}(\wfastartstate)(\wfaendstate) \tag{disjoint union sum} \\
	&= \wfatransext{\eword}(\wfastartstate)(\wfaendstate) \semiadd \semisum{\oldctestA \in \packets} \wfatransext{\oldctestA}(\wfastartstate)(\wfaendstate) \semiadd \semisum{\gsA \in \packets^{>=2}} \wfatransext{\gsA}(\wfastartstate)(\wfaendstate) \tag{$\packets^0 = \{\varepsilon\}, \packets^1 = \packets$} \\
	&= \addid \semiadd \addid \semiadd \semisum{\gsA \in \packets^{>=2}} \wfatransext{\gsA}(\wfastartstate)(\wfaendstate) \tag{$\wfatransext{\eword}(\wfastartstate)(\wfaendstate) = \wfatransext{\oldctestA}(\wfastartstate)(\wfaendstate) = \addid$} \\
	&= \semisum{\gsA \in \packets^{>=2}} \wfatransext{\gsA}(\wfastartstate)(\wfaendstate) \tag{additive identity}
    \end{align*}

    Finally, note that $\GS \cong \packets^{>=2}$ with regards to $\wfatransext{\gsA}(\wfastartstate)(\wfaendstate)$. This is witnessed by the isomorphism $\phi \colon \GS \rightarrow \packets^{>=2}$ defined by:
    \[
	\phi(\oldctestA\, \gsA) \triangleq \oldctestA \cdot \phi(\gsA) \qquad\qquad \phi(\DUP\, \gsA) \triangleq \phi(\gsA)
    \]
    We can then see that for all $\gsA \in \GS$, $\wfatransext{\gsA}(\wfastartstate)(\wfaendstate) = \wfatransext{\phi(\gsA)}(\wfastartstate)(\wfaendstate)$. The base case is trivial:
    \[\wfatransext{\oldctestA\oldctestB}(\wfastartstate)(\wfaendstate) = \wfatransext{\oldctestA \cdot \phi(\oldctestB)}(\wfastartstate)(\wfaendstate) = \wfatransext{\phi(\oldctestA\oldctestB)}(\wfastartstate)(\wfaendstate)\]
    For the inductive case, we have that:
    \begin{align*}
	\wfatransext{\oldctestA \oldctestB \, \DUP \, \gsA}(\wfastartstate)(\wfaendstate)
	&= \big( \wfatrans{\oldctestA} \times \wfatrans{\oldctestB} \times \wfatransext{\gsA} \big)(\wfastartstate)(\wfaendstate) \tag{by definition of $\wfatransext{}$} \\
	&= \wfatransext{\oldctestA\oldctestB\gsA}(\wfastartstate)(\wfaendstate) \tag{by definition of $\wfatransext{}$} \\
	&= \wfatransext{\phi(\oldctestA\oldctestB \, \DUP \, \gsA)}(\wfastartstate)(\wfaendstate) \tag{by definition of $\phi$}
    \end{align*}

    An equivalent argument works to show the inverse direction. As such we can form a bijection between the terms of sums over these sets, preserving $\wfatransext{\gsA}(\wfastartstate)(\wfaendstate) = \wfainit \times \wfatransext{\gsA} \times \wfaout$, and conclude:
    \[
	\semisum{\gsA \in \GS} \wfainit \times \wfatransext{\gsA} \times \wfaout  \eeq \semisum{\gsA \in \packets^{>=2}} \wfainit \times \wfatransext{\gsA} \times \wfaout \eeq \semisum{\gsA \in \packets^*} \wfainit \times \wfatransext{\gsA} \times \wfaout
    \]
\end{proof}

\subsection{Proof of Decidability of $\wta$-safety for \wnetkat Policies}
\label{pf:verif-safety}

\repeattheorem{thm:verif-safety}
\begin{proof}
    By \Cref{cor:deno-aut}, we have that $\pola$ is $\wta$-safe
    if and only if $\forall \gsA \in \GS \colon \sem{\wnka_\pola}(\gsA) \semiord \wta$, i.e.\ :
    \[
    \begin{array}{rl}
        \wnka_\pola \in \autsafe{\wta} & (\text{\Cref{def:dec-problems}.\ref{def:dec-problems1}})~.
    \end{array}
    \]
    Further, this property is decidable at the level of \wnetkat automata
    (and we can provide a witness if it does not hold) by \Cref{thm:r-safe-dec},
    which we go over next.
\end{proof}

\begin{lemma}[Decidability of $\wta$-safety for \wnetkat automata]
\label{thm:r-safe-dec}
	Given a computable $\omega$-continuous semiring $(\semi,\, \semiord)$,
		if
		for all $\semia_1,\semia_2,\semia_3\in\semi$, we have
        $
		\semia_1 \semiadd \semia_2 \semiord \semia_3
        $
		iff
        $
		(\semia_1 \semiord \semia_3 ~\text{and}~\semia_2\semiord\semia_3),
        $
		then
        \begin{center}
            \enquote{Given $\semithresh \in \semi$ and \WNKA $\wnka$, does $\wnka \in \autsafe{\semithresh}$ hold?}
        \end{center}
        is decidable.
        Moreover, if $\semiord$ is total and $\wnka \not\in \autsafe{\semithresh}$, there is an effectively
        constructible witness, i.e., some $\gsA \in \GS$ such that $\wlang{\wnka}(\gsA) \semiordrevstrict \semithresh$.
\end{lemma}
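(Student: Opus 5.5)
The plan is to reduce the universally quantified statement $\forall \gsA\in\GS\colon \sem{\wnka}(\gsA)\semiord\semithresh$ to a single comparison $\semisum{\gsA\in\GS}\sem{\wnka}(\gsA)\semiord\semithresh$, and then to compute that countable sum in closed form via a matrix star.

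The first step is to justify the reduction, and for this we will show that the side condition on $\semiadd$ extends to countable sums. For finite sums it follows by induction from the assumed equivalence. For a countable sum $\semisum{i\in\nats}\semia_i=\cposup[n]\semisumtop{i=0}{n}\semia_i$ we argue in both directions. If every $\semia_i\semiord\semithresh$, then every partial sum is $\semiord\semithresh$ by the finite case, so the supremum is as well. Conversely, each $\semia_j$ lies below the partial sum up to index $j$, which in turn lies below the supremum.

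The second step is to express the sum in closed form. By \Cref{lmm:unf-sem} we may replace $\sem{\wnka}$ by the packet configuration semantics, and by \Cref{lmm:unf-gs-pk} we obtain
\[
\semisum{\gsA\in\GS}\sem{\wnka}(\gsA) \eeq \semisum{\gsA\in\packets^*}\wfainit\times\wfatransext{\gsA}\times\wfaout .
\]
Let $T\triangleq\semisum{\pkta\in\packets}\wfatrans{\pkta}$, which is a finite square matrix over the finite state set $\wfastates$. Grouping words by length and using distributivity of $\times$ over countable sums (\Cref{def:omega-comp-semirings}, lifted to matrices), we get $\semisum{\gsA\in\packets^n}\wfatransext{\gsA}=T^n$. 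The total sum is therefore
\[
\semisum{n}\wfainit\times T^n\times\wfaout \eeq \wfainit\times\ITER{T}\times\wfaout ,
\]
where the last equality again uses associativity of countable sums and continuity. The matrix $\ITER{T}$ is computable from $\semiadd$, $\semimul$ and the scalar star by the algorithm of \citet{bloom1993}; computability of the semiring (\Cref{def:computable_semiring}) guarantees that each of these operations is computable. With the resulting value in hand, decidability of $\semiord$ settles the question. I expect the main obstacle to be the careful justification that Bloom–Ésik's elimination-based star coincides with the $\omega$-continuous sum $\semisum{n}T^n$ in an arbitrary $\omega$-continuous semiring. I would handle this by appealing to the fact that $\omega$-continuous semirings are Conway semirings, so that the block-decomposition identities hold there.

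For the witness, assume $\semiord$ is total and safety fails. Then some $\gsA$ satisfies $\sem{\wnka}(\gsA)\not\semiord\semithresh$, which by totality means $\sem{\wnka}(\gsA)\semiordrevstrict\semithresh$. Since $\GS$ is countable and effectively enumerable by length, we enumerate it in breadth-first order. For each guarded string we compute $\sem{\wnka}(\gsA)$ as a finite matrix product and test it against $\semithresh$ using decidability of $\semiord$. Because such a $\gsA$ exists, this search terminates and produces the witness.
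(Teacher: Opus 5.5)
Your proposal is correct and follows the paper's proof essentially step for step. The side condition on $\semiadd$ reduces safety to $\semisum{\gsA\in\GS}\sem{\wnka}(\gsA)\semiord\semithresh$; this sum is computed as $\wfainit\times\ITER{M}\times\wfaout$ with $M=\semisum{\pkta\in\packets}\wfatrans{\pkta}$ via \Cref{lmm:unf-sem,lmm:unf-gs-pk}; and witnesses are found by enumerating guarded strings. Your remark that the Bloom--\'Esik elimination star agrees with $\semisum{n\in\nats}M^n$, because $\omega$-continuous semirings are Conway semirings, makes explicit a point the paper leaves to its citation.
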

\begin{proof}
	We proceed in two steps. First, we show that
	\[
		\wnka \in \autsafe{\semithresh}\qquad\text{iff}\qquad \semisum{\gsA\in\GS} \wlang{\wnka}(\gsA) \semiord \semithresh~.
	\]
	Second, we show that $\semisum{\gsA\in\GS} \wlang{\wnka}(\gsA)$ is computable, which implies the
    decidability claim by decidability of $\semiord$. Lastly, we will prove the claim on the
    effective constructibility of witnesses.

	For the first claim, consider the following:
	\begin{align*}
		& \wnka \in \autsafe{\semithresh} \\
        \text{iff}\quad& \forall \gsA \in \GS \colon \wlang{\wnka}(\gsA) \semiord \semithresh
        \tag{\Cref{def:dec-problems}}\\
        \text{iff}\quad &
            \forall n \in \nats \colon \forall \gsA \in \GS^{\leq n} \colon
                \wlang{\wnka}(\gsA) \semiord \semithresh
		\tag{$\GS^{\leq n}$ is the set of guarded strings with at most $n$ $\DUP$s}\\
        \text{iff}\quad & \forall n \in \nats \colon
            \semisum{\gsA \in \GS^{\leq n}} \wlang{\wnka}(\gsA) \semiord \semithresh
		\tag{assumption} \\
        \text{iff}\quad&
            \cposup[n \in \nats]
                \semisum{\gsA \in \GS^{\leq n}} \wlang{\wnka}(\gsA) \semiord \semithresh
		\tag{definition of suprema} \\
        \text{iff}\quad & \semisum{\gsA \in \GS} \wlang{\wnka}(\gsA) \semiord \semithresh~.
		\tag{definition of countably infinite sums}
	\end{align*}

		For the second claim, let $M = \semisum{\oldctestA \in \packets} \wfatrans{\oldctestA}$ and define the
    matrix closure of $M$ by  \[M^* = \semisum{i \in \nats} M^i~,\] which is computable \cite{bloom1993}. We show that
	\[
		\semisum{\gsA\in\GS} \sem{\wnka}(\gsA) = \wfainit \times M^* \times \wfaout~,
	\]
	which implies the claim. To see this, consider the following:
	\begin{align*}
		 & \semisum{\gsA\in\GS} \sem{\wnka}(\gsA)  \\
		 {}={}~ &\semisum{\gsA\in\GS}   \altsem{\wnka}(\gsA)
		 \tag{\Cref{lmm:unf-sem}}\\
		 {}={}~ &\semisum{\gsA\in\GS}   \wfainit \times \wfatransext{\gsA} \times \wfaout
		 \tag{definition}\\
		 {}={}~ &\semisum{\gsA\in\packets^*}   \wfainit \times \wfatransext{\gsA} \times \wfaout
		 \tag{\Cref{lmm:unf-gs-pk}}\\
		 {}={}~ &   \wfainit \times \big( \semisum{\gsA\in\packets^*}\wfatransext{\gsA} \big) \times \wfaout
		 \tag{distributivity of $\times$} \\
		 {}={}~ &   \wfainit \times \big( M^* \big) \times \wfaout
		 \tag{see below} ~.
	\end{align*}
	For the latter equality, consider the following:
	\begin{align*}
		 & \semisum{\gsA\in\packets^*}\wfatransext{\gsA}
		 ~{}={}~
		 \semisum{i\in\nats}  \semisum{\gsA\in\packets^i}\wfatransext{\gsA}
		 ~{}={}~  \semisum{i\in\nats}  M^i~,
	\end{align*}
	where $\semisum{\gsA\in\packets^i}\wfatransext{\gsA} =  M^i$ follows by induction on $i$. For $i=0$, we have
	\[
		\semisum{\gsA\in\packets^i}\wfatransext{\gsA}
		~{}={}~
		\wfatransext{\eword}
		~{}={}~
		\idmatrix_{\wfastates \times \wfastates}
		~{}={}~
		M^0~.
	\]
	For the induction step, we have
	\begin{align*}
		&\semisum{\gsA \in\packets^{i+1}}\wfatransext{\gsA}\\
		~{}={}~ &
	    \semisum{\oldctestA \in \packets} \semisum{\gsA \in \packets^i}\wfatransext{\oldctestA\gsA}
		\tag{decompose words of length $i+1$}\\
		~{}={}~ &
		\semisum{\oldctestA \in \packets} \semisum{\gsA \in \packets^i} \wfatrans{\oldctestA} \times \wfatransext{\gsA}
		\tag{definition}\\
		~{}={}~ &
		\semisum{\oldctestA \in \packets} \wfatrans{\oldctestA} \times \semisum{\gsA \in \packets^{i}} \wfatransext{\gsA}
		\tag{distributivity of $\times$}\\
		~{}={}~ &
	    \semisum{\oldctestA \in \packets} \wfatrans{\oldctestA} \times M^i
		\tag{I.H.}\\
		~{}={}~ &
		\big( \semisum{\oldctestA \in \packets} \wfatrans{\oldctestA} \big) \times M^i
		\tag{distributivity of $\times$} \\
		~{}={}~ &
		M \times M^i ~{}={} M^{i+1}~.
		\tag{definition}
	\end{align*}

	Let us now prove the effective costructibility of witnesses.
    Assume $\neg (\forall \gsA \in \GS \colon \wlang{\wnka}(\gsA) \semiord \semithresh)$. We have
	\begin{align*}
		& \neg (\forall \gsA \in \GS \wlang{\wnka}(\gsA) \semiord \semithresh) \\
		\text{iff} \quad &  \neg ( \semisum{\gsA\in\GS} \wlang{\wnka}(\gsA) \semiord \semithresh)
		\tag{see above }\\
		\text{iff} \quad &   \semisum{\gsA\in\GS} \wlang{\wnka}(\gsA) \semiordrevstrict \semithresh
		\tag{$\semiord$ is total}\\
		\text{iff} \quad &
            \cposup[n \in \nats]
                \semisum{\gsA \in \GS{\leq n}}
                    \wlang{\wnka}(\gsA) \semiordrevstrict \semithresh
		\tag{definition of countable sums}\\
		\text{iff} \quad &
            \exists n \in \nats\colon
                \semisum{\gsA \in \GS^{\leq n}}
                    \wlang{\wnka}(\gsA) \semiordrevstrict \semithresh
		\tag{definition of suprema}\\
		\text{iff} \quad &
            \exists n \in \nats\colon
                \exists \gsA \in \GS^{\leq n} \colon
                    \wlang{\wnka}(\gsA) \semiordrevstrict \semithresh
		\tag{assumption}\\
		\text{iff} \quad &
            \exists \gsA \in \GS \colon
                \wlang{\wnka}(\gsA) \semiordrevstrict \semithresh~.
		\tag{assumption}
	\end{align*}
	We can thus effectively construct a witness $\gsA$ with
    $\wlang{\wnka}(\gsA) \semiordrevstrict \semithresh$ by enumerating
    all guarded strings $\gsA$, computing $\wlang{\wnka}(\gsA)$, and deciding
    $\wlang{\wnka}(\gsA) \semiordrevstrict \semithresh$. Since such an $\gsA$
    exists, this procedure terminates.
\end{proof}

\subsection{Runs of a \wnetkat Automaton}
\label{appendix:runs-wnka}
In this section, we give another alternate formulation of the weighted
language of guarded strings recognized by a \wnetkat automaton. This
alternate definition will be useful for proving the decidability of
$\semithresh$-reachability: we describe the underlying ``graph'' structure of
the \wnetkat automaton and show that reachability can be decided by considering
only finitely many paths through the graph.

\begin{definition}[Runs of a \wnetkat automaton]
\label{def:wnka-runs}
Given a \wnetkat automaton $\wnka$, a \emph{run} $\wrun$ is a string
describing a path in the automaton. We denote by $\wnkapath$ the set
    $(\wnkastates \times (\oldTest \times \oldTest) \times \wnkastates)^* \cdot ((\oldTest \times \oldTest) \times \wnkastates)$
of all potential runs through the automaton.

Given states $\wnkastatea$ and $\wnkastateb$, the runs
$\wruns{\wnkastatea}{\wnkastateb} \colon \GS \to 2^{\wnkapath}$ from
$\wnkastatea$ to $\wnkastateb$ are defined recursively as
\begin{align*}
    \wruns{\wnkastatea}{\wnkastateb}(\oldctestA \SEQN \oldctestB) =&
	\begin{cases}
        \{ ((\oldctestA, \oldctestB), \wnkastatea) \} & \text{if $\wnkastatea=\wnkastateb$} \\
		\emptyset & \text{otherwise}
	\end{cases} \\
    \wruns{\wnkastatea}{\wnkastateb}(\oldctestA \SEQN \oldctestB \SEQN \gsB)
    =&
    \bigcup_{\{ \wnkastatec | \wnkatrans{\oldctestA}{\oldctestB}(\wnkastatea)(\wnkastatec) \neq \addid \}}
    \concatword{(\wnkastatea,(\oldctestA,\oldctestB),\wnkastatec)}
               {\wruns{\wnkastatec}{\wnkastateb}({\oldctestB \SEQN \gsB})}~.
\end{align*}
\end{definition}

\begin{definition}[Weight of a run]
\label{def:weight-run}
The \emph{weight} of a given run $\wrun$ is defined recursively on
$\wrun$ as:
\[
\begin{array}{rcl}
    \runweight(((\oldctestA,\oldctestB),\wnkastatea)) &=& \mulid \\
    \runweight(\concatword{(\wnkastatea, (\oldctestA,\oldctestB), \wnkastateb)}{\wrun}) &=&
        \wnkatrans{\oldctestA}{\oldctestB}(\wnkastatea)(\wnkastateb) \semimul \runweight(\wrun)
    ~,
\end{array}
\]
so that, crucially, we have for every $\gsA \in \GS$:\footnote{
    Recall that $\{\wnkatransext{\gsA}\} \colon \GS \to \weightings{\GS}$
    is the extended family of transition functions defined in
    \Cref{def:extended-fam}.\ref{def:extended-trans}.
}
\[
    \wnkatransext{\gsA}(\wnkastatea)(\wnkastateb) =
	\semisum{\wrun \in \wruns{\wnkastatea}{\wnkastateb}(\gsA)}\runweight(\wrun)~.
\]
\end{definition}
We denote by $\wrunscfree{\wnkastatea}{\wnkastateb}(\gsA)$ the set of \emph{cycle-free}
runs from $\wnkastatea$ to $\wnkastateb$ on $\gsA$. For the purposes of a \wnetkat
automaton, a \emph{cycle} is defined as a run where not only is a target state
$\wnkastateb$ repeated but the carry-on packet at the first occurrence of the target
state $\wnkastateb$ in the run corresponds with the carry-on packet of the next occurrence
of the target state $\wnkastateb$ in the run, i.e. a step
$(\wnkastatea,(\oldctestA,\oldctestB),\wnkastateb)$ in the run followed by another step
$(\wnkastatec,(\oldctestC,\oldctestB),\wnkastateb)$ or $((\oldctestC,\oldctestB),\wnkastateb)$.
Notice that, for any such run, we can consider a shorter run without this cycle
(which is relevant in the case where
$\forall \semia,\semia' \in \semi \colon \semia \semiordrev \semia \semimul \semia'$).
Additionally, the set of cycle-free runs between any two states is finite.
Finally, $\wruns{\wnkastatea}{\wnkastateb}$
(resp.\ $\wrunscfree{\wnkastatea}{\wnkastateb}$), denotes the set of \emph{all}
(cycle-free) runs from $\wnkastatea$ to $\wnkastateb$.

\subsection{Proof of Decidability of $\wta$-reachability for \wnetkat Policies}
\label{pf:verif-reach}

\repeattheorem{thm:verif-reach}
\begin{proof}
    By \Cref{cor:deno-aut}, $\pola$ is $\wta$-reachable if and only if
    $\exists \gsA \in \GS \colon \sem{\wnka}(\pola) \semiordrev \wta$, i.e.\ :
    \[
    \begin{array}{rl}
        \wnka_\pola \in \autreach{\wta} & \text{(\Cref{def:dec-problems}.\ref{def:dec-problems2})}~.
    \end{array}
    \]
    Further, this property is decidable at the level of \wnetkat automata
    (and we can provide a witness if it is satisfied) by \Cref{thm:r-reachable-dec},
    which we go over next.
\end{proof}

\begin{lemma}[Decidability of $\semithresh$-reachability for \wnetkat automata]
\label{thm:r-reachable-dec}
	Given a computable $\omega$-continuous semiring $(\semi,\, \semiord)$,
	if for all $\semia_1,\semia_2,\semia_3$, we have (i)
    $
	\semia_1 \semiadd \semia_2 \semiordrev \semia_3
	$
    iff
    ($\semia_1 \semiordrev \semia_3$
    or
    $\semia_2 \semiordrev \semia_3$), and (ii) $\semia_1 \semiordrev \semia_1 \semimul \semia_2$,
	then
    \begin{center}
	\enquote{Given $\semithresh \in \semi$ and a \WNKA $\wnka$,
        does $\wnka \in \autreach{\semithresh}$ hold?}
    \end{center}
	is decidable. Moreover, if $\wnka \in \autreach{\semithresh}$,
    there is an effectively constructible witness, i.e., some
    $\gsA \in \GS$ such that
    $\wlang{\wnka}(\gsA) \semiordrev \semithresh$.
\end{lemma}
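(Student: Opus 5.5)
We prove that $\wnka \in \autreach{\semithresh}$ holds if and only if some guarded string $\gsA$ of bounded length satisfies $\wlang{\wnka}(\gsA) \semiordrev \semithresh$, where the length bound depends only on $|\wnkastates|$ and $|\packets|$. Given this, decidability is immediate: we enumerate the finitely many candidates $\gsA$, compute $\wlang{\wnka}(\gsA)$ by \Cref{thm:sem-comp}, and test $\semiordrevstrict$/$\semiord$ using the decidable order. Any candidate that succeeds is the witness.

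\textbf{Step 1: unfold the weight into runs.}
Write $\wlang{\wnka}(\gsA) = \wnkainit \times \wnkaoutext{\gsA}$. For $\gsA = \oldctestA_0\oldctestA_1\DUP\cdots\oldctestA_n$, \Cref{def:weight-run} expresses this as a finite sum
\[
\semisum{\wnkastatea,\wnkastateb}\ \semisum{\wrun \in \wruns{\wnkastatea}{\wnkastateb}(\gsA)} \wnkainit(\wnkastatea) \semimul \runweight(\wrun) \semimul \wnkaout{\oldctestA_{n-1}}{\oldctestA_n}(\wnkastateb).
\]
The sum is finite because $\wnkastates$ is finite. Condition (i), applied inductively, shows that this finite sum is $\semiordrev \semithresh$ iff a single summand is. Hence $\wnka \in \autreach{\semithresh}$ iff there exist a run $\wrun$ and endpoint states whose individual product is $\semiordrev \semithresh$.

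\textbf{Step 2: cycle removal.}
Suppose such a run contains a cycle in the sense of \Cref{appendix:runs-wnka}, i.e., two steps entering the same state $\wnkastateb$ with the same carry-on packet $\oldctestB$. We excise the segment between these two steps. The result is still a valid run, since the next step's first packet component is again $\oldctestB$, matching the packet pairing. It is a run on a shorter guarded string obtained by deleting the corresponding $\oldctestB\DUP\cdots$ segment, and it keeps the same initial and final states and the same last packet pair.

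We then need the shortened product to remain $\semiordrev\semithresh$. Condition (ii) gives only $s_1 \semiordrev s_1 \semimul s_2$, i.e., right-multiplication makes things smaller. The removed segment weight $c$ sits in the middle: the weight has the form $a\semimul c\semimul b$ and we want $a\semimul b \semiordrev a \semimul c \semimul b$. I expect this to be the main obstacle. The plan is to use that all listed semirings satisfying (ii) are commutative, or alternatively to derive $\mulid \semiordrev c$ from (ii) (take $s_1=\mulid$) and then use monotonicity of $\semimul$: $a\semimul c\semimul b \semiord a\semimul\mulid\semimul b$. This second route works in general, since $\omega$-continuous semirings have $\semimul$ monotone in both arguments. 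Transitivity then gives that the shortened product is still $\semiordrev \semithresh$.

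\textbf{Step 3: termination and bound.}
Iterating the excision terminates because the length strictly decreases. It yields a cycle-free run, in which each (state, packet) pair is entered at most once, so the length is at most $|\wnkastates|\cdot|\packets|$. The associated guarded string $\gsA'$ satisfies $\wlang{\wnka}(\gsA') \semiordrev \semithresh$, again by (i): a sum dominates each of its summands, using $s_1\semiord s_1\semiadd s_2$ in $\omega$-continuous monoids. Conversely, any such $\gsA'$ trivially witnesses reachability.

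Since there are finitely many cycle-free runs (equivalently, finitely many guarded strings of bounded length), the search is finite. This decides the problem and returns a witness when one exists.
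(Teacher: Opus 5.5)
Your proposal is correct and takes essentially the same approach as the paper: you expand $\wlang{\wnka}(\gsA)$ as a finite sum of run weights, use (i) to reduce to a single run, use (ii) to show that cycle-free runs suffice, and then search the finitely many cycle-free runs (equivalently, guarded strings of bounded length). Your explicit argument for excising a cycle from the middle of a run---first getting $\mulid \semiordrev c$ from (ii), then applying monotonicity of $\semimul$ in both arguments---supplies a step that the paper justifies only by citing (ii).
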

\begin{proof}
    We prove that $\semithresh$-reachability is decidable by showing that we
    can consider only the cycle-free \emph{runs} (see
    \Cref{def:wnka-runs}) of the \wnetkat automaton.
	We have:
       \begin{align*}
               & \wnka \in \autreach{\semithresh} \\
               \text{iff}\quad&
            \exists \gsA \in \GS \colon
                \wlang{\wnka}(\gsA) \semiordrev \semithresh
        \tag{\Cref{def:dec-problems}} \\
               \text{iff}\quad&
            \exists \gsA \in \GS \colon
                \wnkainit \times \wnkatransext{\gsA} \times \wnkaout{}{} \semiordrev \semithresh
        \tag{By definition, ($\star$)}  \\
               \text{iff} \quad &
            \exists \gsA \in \GS \colon
                \semisum{\wnkastatea,\wnkastateb \in \wnkastates}
                    \semisum{\wrun \in \wruns{\wnkastatea}{\wnkastateb}(\gsA)}
                        \wnkainit(\wnkastatea) \semimul
                            \runweight(\wrun) \semimul
                                \wnkaout{}{}(\wnkastateb)
                                    \semiordrev \semithresh
               \tag{\Cref{def:weight-run}}
               \\
               \text{iff} \quad &
            \exists \gsA \in \GS \colon
                \exists \wnkastatea,\wnkastateb \in \wnkastates \colon
                    \exists \wrun \in \wruns{\wnkastatea}{\wnkastateb}(\gsA) \colon
                        \wnkainit(\wnkastatea) \semimul
                            \runweight(\wrun) \semimul
                                \wnkaout{}{}(\wnkastateb) \semiordrev \semithresh
               \tag{Assumption} \\
               \text{iff} \quad &
            \exists \gsA \in \GS \colon
                \exists \wnkastatea,\wnkastateb \in \wnkastates \colon
                    \exists \wrun \in \wrunscfree{\wnkastatea}{\wnkastateb}(\gsA) \colon
                        \wnkainit(\wnkastatea) \semimul
                            \runweight(\wrun) \semimul
                                \wnkaout{}{}(\wnkastateb) \semiordrev \semithresh
               \tag{Cycle-free runs suffice since $\semia_1 \semiordrev \semia_1 \semimul \semia_2$} \\
               \text{iff}\quad &
                \exists \wnkastatea,\wnkastateb \in \wnkastates \colon
                    \exists \wrun \in \wrunscfree{\wnkastatea}{\wnkastateb} \colon
                        \wnkainit(\wnkastatea) \semimul
                            \runweight(\wrun) \semimul
                                \wnkaout{}{}(\wnkastateb) \semiordrev \semithresh
               \tag{Every run corresponds to a guarded string, ($\star\star$)}
       \end{align*}
	which can be decided by considering all of the (finitely many) cycle-free runs from some
    $\wnkastatea\in\wnkastates$ to some $\wnkastateb\in\wnkastates$. If a corresponding run exists,
    the sought-after $\gsA \in \GS$ can be read off that run.
    Additionally, for the above proof we make the following remarks:\\
    $(\star)$: We write $\wnkaout{}{}$ for $\wnkaout{\pkta}{\pktb}$, where the
    guarded string $\gsA \in \GS$ ends in $\pkta\pktb$.\\
    $(\star\star)$: Note that for all $\wrun \in \wruns{\wnkastatea}{\wnkastateb}(\gsA)$,
    we have that $\wrun$ ends in $((\pkta,\pktb),\wnkastateb)$, so $\wnkaout{}{}$ remains
    well-defined.
\end{proof}

\section{Case Studies}
\label{appendix:case-studies}

We provide here the full details of the example \wnetkat policies given in \Cref{sec:case-studies} for the network in \Cref{fig:abilene}. We assume only the following fields: $\fieldnode$ (packet's current location), $\fielddst$ (packet's destination), $\fieldtid$ (packet's current tunnel), and $\fieldvid$ (packet's type: video or not).

\subsection{\wnetkat Policy for Tunneled Paths}

As discussed in \Cref{sec:case-studies}, certain source-destination pairs are configured in the network to use tunnels instead of following their usual forwarding logic. We leave the default forwarding logic for each node unspecified---assuming we have, e.g., $\pol{\atlanta,\defaultstr}$ for $\atlanta$---and specify below only the complete \emph{tunneling logic} of every node (as this is what is relevant for our verification questions).

\begin{longtable}{rcl}
    $\pol{\atlanta}$ &$\triangleq$&$
        \IFN \ \EQ{\fieldtid}{0} \ \THENN $\\&&$ \quad
            \IFN \ \EQ{\fielddst}{\nyc} \ \THENN \
                \ASSN{\fieldtid}{5} \
            \ELSEN \ \pol{\atlanta,\defaultstr} $\\&&$
        \ELSEN \ \IFN \ \OR{\EQ{\fieldtid}{3}}{\EQ{\fieldtid}{4}} \ \THENN \ \ASSN{\fieldtid}{0} $\\&&$
        \ELSEN \ \IFN \ \EQ{\fieldtid}{5} \ \THENN \
            \ASSN{\fieldnode}{\dc} $\\&&$
        \ELSEN \ \DROP
        $ \medskip \\
    
    $\pol{\bay}$ &$\triangleq$&$
        \IFN \ \EQ{\fieldtid}{0} \ \THENN $\\&&$ \quad
            \IFN \ \EQ{\fielddst}{\nyc} \ \THENN \
                (\ADD{\ASSN{\fieldtid}{1}}{\ASSN{\fieldtid}{2}}) \
            \ELSEN \ \pol{\bay,\defaultstr} $\\&&$
        \ELSEN \ \IFN \ \EQ{\fieldtid}{1} \ \THENN \
            \ASSN{\fieldnode}{\denver} $\\&&$
        \ELSEN \ \IFN \ \EQ{\fieldtid}{2} \ \THENN \
            \ASSN{\fieldnode}{\la} $\\&&$
        \ELSEN \ \DROP
        $ \medskip \\

    $\pol{\chicago}$ &$\triangleq$& $\pol{\chicago,\defaultstr}$

    \medskip \\

    $\pol{\dc}$ &$\triangleq$&$
        \IFN \ \EQ{\fieldtid}{0} \ \THENN \
            \pol{\dc,\defaultstr} $\\&&$
        \ELSEN \ \IFN \ \EQ{\fieldtid}{5} \ \THENN \
            \ASSN{\fieldnode}{\nyc} $\\&&$
        \ELSEN \ \DROP
        $ \medskip \\

    $\pol{\denver}$ &$\triangleq$&$
        \IFN \ \EQ{\fieldtid}{0} \ \THENN \
            \pol{\denver,\defaultstr} $\\&&$
        \ELSEN \ \IFN \ \EQ{\fieldtid}{1} \ \THENN \
            \ASSN{\fieldnode}{\kansas} $\\&&$
        \ELSEN \ \DROP
        $ \medskip \\

    $\pol{\houston}$ &$\triangleq$&$
        \IFN \ \EQ{\fieldtid}{0} \ \THENN \
            \pol{\houston,\defaultstr} $\\&&$
        \ELSEN \ \IFN \ \EQ{\fieldtid}{2} \ \THENN \
            \ASSN{\fieldnode}{\kansas} $\\&&$
        \ELSEN \ \IFN \ \EQ{\fieldtid}{3} \ \THENN \
            \ASSN{\fieldnode}{\atlanta} $\\&&$
        \ELSEN \ \DROP
        $ \medskip \\

    $\pol{\indiana}$ &$\triangleq$&$
        \IFN \ \EQ{\fieldtid}{0} \ \THENN \
            \pol{\indiana,\defaultstr} $\\&&$
        \ELSEN \ \IFN \ \EQ{\fieldtid}{4} \ \THENN \
            \ASSN{\fieldnode}{\atlanta} $\\&&$
        \ELSEN \ \DROP
        $ \medskip \\

    $\pol{\kansas}$ &$\triangleq$&$
        \IFN \ \EQ{\fieldtid}{0} \ \THENN $\\&&$ \quad
            \IFN \ \EQ{\fielddst}{\nyc} \ \THENN \
                (\ADD{\ASSN{\fieldtid}{3}}{\ASSN{\fieldtid}{4}}) \
            \ELSEN \ \pol{\kansas,\defaultstr} $\\&&$
        \ELSEN \ \IFN \ \OR{\EQ{\fieldtid}{1}}{\EQ{\fieldtid}{2}} \ \THENN \ \ASSN{\fieldtid}{0} $\\&&$
        \ELSEN \ \IFN \ \EQ{\fieldtid}{3} \ \THENN \
            \ASSN{\fieldnode}{\houston} $\\&&$
        \ELSEN \ \IFN \ \EQ{\fieldtid}{4} \ \THENN \
            \ASSN{\fieldnode}{\indiana} $\\&&$
        \ELSEN \ \DROP
        $ \medskip \\

    $\pol{\la}$ &$\triangleq$&$
        \IFN \ \EQ{\fieldtid}{0} \ \THENN \
            \pol{\la,\defaultstr} $\\&&$
        \ELSEN \ \IFN \ \EQ{\fieldtid}{2} \ \THENN \
            \ASSN{\fieldnode}{\houston} $\\&&$
        \ELSEN \ \DROP
        $ \medskip \\

    $\pol{\nyc}$ &$\triangleq$& $\pol{\nyc,\defaultstr}$

    \medskip \\

    $\pol{\seattle}$ &$\triangleq$& $\pol{\seattle,\defaultstr}$
\end{longtable}

All of these policies are put together to represent the entire Abilene network. We weight each (iterated) policy by its \textcolor{failureorange}{failure rate} to check that its tunneled paths have a failure rate of at most 10\%.
\begin{longtable}{rcl}
    $\pol{\relstr}$ &$\triangleq$&$
        \IFN \ \EQ{\fieldnode}{\atlanta} \ \THENN
    $ \\ && \quad $
            \WEIGH{\failure{1.5}}{\ITER{(\pol{\atlanta})} \SEQN \NOTEQ{\fieldnode}{\atlanta}}
    $ \\ && $
        \ELSEN \ \IFN \ \EQ{\fieldnode}{\bay} \ \THENN 
    $ \\ && \quad $
            \WEIGH{\failure{1}}{\ITER{(\pol{\bay})} \SEQN \NOTEQ{\fieldnode}{\bay}}
    $ \\ && $
      \ELSEN \ \IFN \ \EQ{\fieldnode}{\chicago} \ \THENN
    $ \\ && \quad $
            \WEIGH{\failure{1}}{\ITER{(\pol{\chicago})} \SEQN \NOTEQ{\fieldnode}{\chicago}}
    $ \\ && $
      \ELSEN \ \IFN \ \EQ{\fieldnode}{\dc} \ \THENN
    $ \\ && \quad $
            \WEIGH{\failure{1.75}}{\ITER{(\pol{\dc})} \SEQN \NOTEQ{\fieldnode}{\dc}}
    $ \\ && $
      \ELSEN \ \IFN \ \EQ{\fieldnode}{\denver} \ \THENN
    $ \\ && \quad $
            \WEIGH{\failure{1}}{\ITER{(\pol{\denver})} \SEQN \NOTEQ{\fieldnode}{\denver}}
    $ \\ && $
      \ELSEN \ \IFN \ \EQ{\fieldnode}{\houston} \ \THENN
    $ \\ && \quad $
            \WEIGH{\failure{1.75}}{\ITER{(\pol{\houston})}\SEQN \NOTEQ{\fieldnode}{\houston}}
    $ \\ && $
      \ELSEN \ \IFN \ \EQ{\fieldnode}{\indiana} \ \THENN
    $ \\ && \quad $
            \WEIGH{\failure{1.25}}{\ITER{(\pol{\indiana})} \SEQN \NOTEQ{\fieldnode}{\indiana}}
    $ \\ && $
      \ELSEN \ \IFN \ \EQ{\fieldnode}{\kansas} \ \THENN
    $ \\ && \quad $
            \WEIGH{\failure{1.5}}{\ITER{(\pol{\kansas})} \SEQN \NOTEQ{\fieldnode}{\kansas}}
    $ \\ && $
      \ELSEN \ \IFN \ \EQ{\fieldnode}{\la} \ \THENN
    $ \\ && \quad $
            \WEIGH{\failure{1.5}}{\ITER{(\pol{\la})} \SEQN \NOTEQ{\fieldnode}{\la}}
    $ \\ && $
      \ELSEN \ \IFN \ \EQ{\fieldnode}{\nyc} \ \THENN
    $ \\ && \quad $
            \WEIGH{\failure{0.5}}{\ITER{(\pol{\nyc})} \SEQN \NOTEQ{\fieldnode}{\nyc}}
    $ \\ && $
      \ELSEN \ \IFN \ \EQ{\fieldnode}{\seattle} \ \THENN
    $ \\ && \quad $
            \WEIGH{\failure{0.75}}{\ITER{(\pol{\seattle})} \SEQN \NOTEQ{\fieldnode}{\seattle}}
    $ \medskip \\ $
    \abilene_{\relstr} $&$\triangleq$&$ \ITER{(\SEQ{\pol{\relstr}}{\DUP})}
    $
\end{longtable}

We verify that all tunneled paths between $\bay$ and $\nyc$ have a failure rate of at most $10\%$ by checking that:
\[
    \SEQ{(\AND{\EQ{\fieldnode}{\bay}}{\EQ{\fielddst}{\nyc}})}{
        \SEQ{\abilene_{\relstr}}
        {(\AND{\EQ{\fieldnode}{\nyc}}{\NOTEQ{\fieldtid}{0}})}}
\]
is 0.1-safe. As discussed in \Cref{sec:case-studies}, this check would fail. The complete fixed policy for $\kansas$ so that all tunneled paths have a failure rate of at most $10\%$ is as follows:
\[
\begin{array}{rcl}
    \pol{\kansas,\mathsf{safe}} &\triangleq&
        \IFN \ \EQ{\fieldtid}{0} \ \THENN \\&& \quad
            \IFN \ \EQ{\fielddst}{\nyc} \ \THENN \
                (\ADD{\ASSN{\fieldtid}{3}}{\ASSN{\fieldtid}{4}}) \
            \ELSEN \ \pol{\kansas,\defaultstr} \\&&
        \ELSEN \ \IFN \ \AND{\EQ{\fieldtid}{2}}{\EQ{\fielddst}{\nyc}} \ \THENN \
            \ASSN{\fieldtid}{4} \\&&
        \ELSEN \ \IFN \ \OR{\EQ{\fieldtid}{1}}{\EQ{\fieldtid}{2}} \ \THENN \
        \ASSN{\fieldtid}{0} \\&&
        \ELSEN \ \IFN \ \EQ{\fieldtid}{3} \ \THENN \ \ASSN{\fieldnode}{\houston} \\&&
        \ELSEN \ \IFN \ \EQ{\fieldtid}{4} \ \THENN \ \ASSN{\fieldnode}{\indiana} \\&&
        \ELSEN \ \DROP
\end{array}
\]

\subsection{\wnetkat Policy for High-Bandwidth Tunneled Paths}

After reconfiguring the network so tunneled paths guarantee a failure rate of at most $10\%$, we show how to leverage \Cref{thm:verif-reach} to find a tunneled path with a bandwidth of at least $1000\mbps$. In particular, the policies in the previous section are modified so that the forwarding actions \emph{within tunnels} are weighted by bandwidth. We show only the policies that are modified:
\begin{longtable}{rcl}
    $\pol{\atlanta,\bandstr}$ &$\triangleq$&$
        \IFN \ \EQ{\fieldtid}{0} \ \THENN $\\&&$ \quad
            \IFN \ \EQ{\fielddst}{\nyc} \ \THENN \
                \ASSN{\fieldtid}{5} \
            \ELSEN \ \pol{\atlanta,\defaultstr} $\\&&$
        \ELSEN \ \IFN \ \OR{\EQ{\fieldtid}{3}}{\EQ{\fieldtid}{4}} \ \THENN \ \ASSN{\fieldtid}{0} $\\&&$
        \ELSEN \ \IFN \ \EQ{\fieldtid}{5} \ \THENN \
            \WEIGH{\bandwidth{1750}}{\ASSN{\fieldnode}{\dc}} $\\&&$
        \ELSEN \ \DROP
        $ \medskip \\
    
    $\pol{\bay,\bandstr}$ &$\triangleq$&$
        \IFN \ \EQ{\fieldtid}{0} \ \THENN $\\&&$ \quad
            \IFN \ \EQ{\fielddst}{\nyc} \ \THENN \
                (\ADD{\ASSN{\fieldtid}{1}}{\ASSN{\fieldtid}{2}}) \
            \ELSEN \ \pol{\bay,\defaultstr} $\\&&$
        \ELSEN \ \IFN \ \EQ{\fieldtid}{1} \ \THENN \
            \WEIGH{\bandwidth{1500}}{\ASSN{\fieldnode}{\denver}} $\\&&$
        \ELSEN \ \IFN \ \EQ{\fieldtid}{2} \ \THENN \
            \WEIGH{\bandwidth{1000}}{\ASSN{\fieldnode}{\la}} $\\&&$
        \ELSEN \ \DROP
        $ \medskip \\

    $\pol{\dc,\bandstr}$ &$\triangleq$&$
        \IFN \ \EQ{\fieldtid}{0} \ \THENN \
            \pol{\dc,\defaultstr} $\\&&$
        \ELSEN \ \IFN \ \EQ{\fieldtid}{5} \ \THENN \
            \WEIGH{\bandwidth{1500}}{\ASSN{\fieldnode}{\nyc}} $\\&&$
        \ELSEN \ \DROP
        $ \medskip \\

    $\pol{\denver,\bandstr}$ &$\triangleq$&$
        \IFN \ \EQ{\fieldtid}{0} \ \THENN \
            \pol{\denver,\defaultstr} $\\&&$
        \ELSEN \ \IFN \ \EQ{\fieldtid}{1} \ \THENN \
            \WEIGH{\bandwidth{1250}}{\ASSN{\fieldnode}{\kansas}} $\\&&$
        \ELSEN \ \DROP
        $ \medskip \\

    $\pol{\houston,\bandstr}$ &$\triangleq$&$
        \IFN \ \EQ{\fieldtid}{0} \ \THENN \
            \pol{\houston,\defaultstr} $\\&&$
        \ELSEN \ \IFN \ \EQ{\fieldtid}{2} \ \THENN \
            \WEIGH{\bandwidth{1250}}{\ASSN{\fieldnode}{\kansas}} $\\&&$
        \ELSEN \ \IFN \ \EQ{\fieldtid}{3} \ \THENN \
            \WEIGH{\bandwidth{1750}}{\ASSN{\fieldnode}{\atlanta}} $\\&&$
        \ELSEN \ \DROP
        $ \medskip \\

    $\pol{\indiana,\bandstr}$ &$\triangleq$&$
        \IFN \ \EQ{\fieldtid}{0} \ \THENN \
            \pol{\indiana,\defaultstr} $\\&&$
        \ELSEN \ \IFN \ \EQ{\fieldtid}{4} \ \THENN \
            \WEIGH{\bandwidth{950}}{\ASSN{\fieldnode}{\atlanta}} $\\&&$
        \ELSEN \ \DROP
        $ \medskip \\

    $\pol{\kansas,\bandstr}$ &$\triangleq$&$
        \IFN \ \EQ{\fieldtid}{0} \ \THENN $\\&&$ \quad
            \IFN \ \EQ{\fielddst}{\nyc} \ \THENN \
                (\ADD{\ASSN{\fieldtid}{3}}{\ASSN{\fieldtid}{4}}) \
            \ELSEN \ \pol{\kansas,\defaultstr} $\\&&$
        \ELSEN \ \IFN \ \AND{\EQ{\fieldtid}{2}}{\EQ{\fielddst}{\nyc}} \ \THENN \
            \ASSN{\fieldtid}{4} $\\&&$
        \ELSEN \ \IFN \ \OR{\EQ{\fieldtid}{1}}{\EQ{\fieldtid}{2}} \ \THENN \
        \ASSN{\fieldtid}{0} $\\&&$
        \ELSEN \ \IFN \ \EQ{\fieldtid}{3} \ \THENN \ \WEIGH{\bandwidth{1250}}{\ASSN{\fieldnode}{\houston}} $\\&&$
        \ELSEN \ \IFN \ \EQ{\fieldtid}{4} \ \THENN \ \WEIGH{\bandwidth{1750}}{\ASSN{\fieldnode}{\indiana}} $\\&&$
        \ELSEN \ \DROP
        $ \medskip \\

    $\pol{\la,\bandstr}$ &$\triangleq$&$
        \IFN \ \EQ{\fieldtid}{0} \ \THENN \
            \pol{\la,\defaultstr} $\\&&$
        \ELSEN \ \IFN \ \EQ{\fieldtid}{2} \ \THENN \
            \WEIGH{\bandwidth{1500}}{\ASSN{\fieldnode}{\houston}} $\\&&$
        \ELSEN \ \DROP
        $
\end{longtable}

With these policies, we verify in \Cref{sec:case-studies} that the tunneled path $1 \edge 3 \edge 5$ satisfies the desired bandwidth. We then modify the \wnetkat encoding to model the network being reconfigured to always forward $\nyc$-bound video traffic through this tunneled path. The only policies that change are those for $\kansas$ and $\bay$, their complete (reconfigured) policies are as follows:
\begin{longtable}{rcl}
    $\pol{\bay,\vidstr}$ &$\triangleq$&$
        \IFN \ \EQ{\fieldtid}{0} \ \THENN $\\&&$ \quad
            \IFN \ \EQ{\fielddst}{\nyc} \ \THENN $\\&&$ \qquad
                \IFN \ \EQ{\fieldvid}{\consttrue} \ \THENN \
                    \ASSN{\fieldtid}{1} \
                \ELSEN \
                    (\ADD{\ASSN{\fieldtid}{1}}{\ASSN{\fieldtid}{2}}) $\\&&$ \quad
            \ELSEN \ \pol{\bay,\defaultstr} $\\&&$
        \ELSEN \ \IFN \ \EQ{\fieldtid}{1} \ \THENN \
            \ASSN{\fieldnode}{\denver} $\\&&$
        \ELSEN \ \IFN \ \EQ{\fieldtid}{2} \ \THENN \
           \ASSN{\fieldnode}{\la} $\\&&$
        \ELSEN \ \DROP
        $ \medskip \\

    $\pol{\kansas,\vidstr}$ &$\triangleq$&$
        \IFN \ \EQ{\fieldtid}{0} \ \THENN $\\&&$ \quad
            \IFN \ \EQ{\fielddst}{\nyc} \ \THENN $\\&&$ \qquad
                \IFN \ \EQ{\fieldvid}{\consttrue} \ \THENN \
                    \ASSN{\fieldtid}{3} \
                \ELSEN \
                    (\ADD{\ASSN{\fieldtid}{3}}{\ASSN{\fieldtid}{4}}) $\\&&$ \quad
            \ELSEN \ \pol{\kansas,\defaultstr} $\\&&$
        \ELSEN \ \IFN \ \AND{\EQ{\fieldtid}{2}}{\EQ{\fielddst}{\nyc}} \ \THENN \
            \ASSN{\fieldtid}{4} $\\&&$
        \ELSEN \ \IFN \ \OR{\EQ{\fieldtid}{1}}{\EQ{\fieldtid}{2}} \ \THENN \
        \ASSN{\fieldtid}{0} $\\&&$
        \ELSEN \ \IFN \ \EQ{\fieldtid}{3} \ \THENN \ \ASSN{\fieldnode}{\houston} $\\&&$
        \ELSEN \ \IFN \ \EQ{\fieldtid}{4} \ \THENN \ \ASSN{\fieldnode}{\indiana} $\\&&$
        \ELSEN \ \DROP
    $
\end{longtable}

\fi
\end{document}
\typeout{get arXiv to do 4 passes: Label(s) may have changed. Rerun}